\numberwithin{equation}{subsection}
\newtheorem{proposition}{Proposition}[subsection]
\newtheorem{lemma}[proposition]{Lemma}
\newtheorem{corollary}[proposition]{Corollary}
\newtheorem{theorem}{Theorem}[section]
\theoremstyle{definition}
\newtheorem{definition}{Definition}[subsection]
\newtheorem{remark}{Remark}[subsection]
\newcommand{\eqdef}{\overset{\mbox{\tiny{def}}}{=}}
\newcommand{\gzerozeronorm}[1]{\underline{S}_{g_{00}+1;#1}}
\newcommand{\gzerostarnorm}[1]{\underline{S}_{g_{0*};#1}}
\newcommand{\hstarstarnorm}[1]{\underline{S}_{h_{**};#1}}
\newcommand{\gnorm}[1]{\underline{S}_{g;#1}}
\newcommand{\fluidnorm}[1]{\underline{S}_{\partial \Phi;#1}}
\newcommand{\totalnorm}[1]{\underline{\mathbf{S}}_{#1}}
\newcommand{\supgzerozeronorm}[1]{S_{g_{00}+1;#1}}
\newcommand{\supgzerostarnorm}[1]{S_{g_{0*};#1}}
\newcommand{\suphstarstarnorm}[1]{S_{h_{**};#1}}
\newcommand{\supgnorm}[1]{S_{g;#1}}
\newcommand{\supfluidnorm}[1]{S_{\partial \Phi;#1}}
\newcommand{\suptotalnorm}[1]{\mathbf{S}_{#1}}
\newcommand{\gzerozeroenergy}[1]{\underline{E}_{g_{00}+1;#1}}
\newcommand{\gzerostarenergy}[1]{\underline{E}_{g_{0*};#1}}
\newcommand{\hstarstarenergy}[1]{\underline{E}_{h_{**};#1}}
\newcommand{\genergy}[1]{\underline{E}_{g;#1}}
\newcommand{\fluidenergy}[1]{\underline{E}_{\partial \Phi;#1}}
\newcommand{\supgzerozeroenergy}[1]{E_{g_{00}+1;#1}}
\newcommand{\supgzerostarenergy}[1]{E_{g_{0*};#1}}
\newcommand{\suphstarstarenergy}[1]{E_{h_{**};#1}}
\newcommand{\supgenergy}[1]{E_{g;#1}}
\newcommand{\supfluidenergy}[1]{E_{\partial \Phi;#1}}
\newcommand{\suptotalenergy}[1]{\mathbf{E}_{#1}}
\newcommand{\Ent}{\eta}
\newcommand{\speed}{c_s}
\newcommand{\bard}{\bar{\partial}}
\newcommand{\decayparameter}{\varkappa}
\begin{document}

\title{The Nonlinear Stability of the Irrotational Euler-Einstein System with a Positive Cosmological Constant}

\author{Igor Rodnianski}
\thanks{I.R. was supported in part by NSF Grant DMS-0702270.}

\author{Jared Speck}
\thanks{J.S. was supported in part by the Commission of the European Communities, 
ERC Grant Agreement No 208007.}

\begin{abstract}
	In this article, we study small perturbations of the family of Friedmann-Lema\^{\i}tre-Robertson-Walker cosmological 
	background solutions to the Euler-Einstein system with a positive cosmological constant in $1 + 3$ dimensions. The background 
	solutions describe an initially uniform quiet fluid of positive energy density evolving in a spacetime undergoing accelerated 
	expansion. We show that under the equation of state $p = \speed^2 \rho,$ $0 < \speed^2 < 1/3,$ the background solutions are 
	globally future asymptotically stable under small irrotational perturbations. In particular, we prove that the perturbed 
	spacetimes, which have the topological structure $[0,\infty) \times \mathbb{T}^3,$ are future causally geodesically complete.
\end{abstract}

\keywords{cosmological constant; Euler-Einstein; expanding spacetime; FLRW; geodesically complete; \\ 
\indent global existence; irrotational fluid; relativistic fluid; wave coordinates}

\subjclass{Primary: 35A01; Secondary: 35L15, 35Q35, 35Q76, 83F05}

\date{Version of \today}
\maketitle
\setcounter{tocdepth}{2}

\pagenumbering{roman} 
\tableofcontents 
\newpage 
\pagenumbering{arabic}

\section{Introduction}

The irrotational Euler-Einstein system models the evolution of a dynamic spacetime\footnote{By spacetime, we mean a $4-$dimensional Lorentzian manifold $\mathcal{M}$ together with a metric $g_{\mu \nu}$ on $\mathcal{M}$ of signature $(-,+,+,+).$} $(\mathcal{M},g)$ containing a perfect fluid with vanishing vorticity. In this article, we endow this system with a positive cosmological constant $\Lambda$ and consider the equation of state $p=\speed^2 \rho,$ where $p$ is the fluid \emph{pressure}, $\rho$ is the \emph{proper energy density}, and the non-negative constant $\speed$ is the \emph{speed of sound}. As is fully discussed in Section \ref{S:IrrotationalEE}, under these assumptions, the irrotational Euler-Einstein system comprises the equations\footnote{Throughout the article, we work in units with 
$8 \pi G = c = 1,$ where $c$ is the speed of light and $G$ is Newton's universal gravitational constant.}

\begin{subequations}
\begin{align} 
	R_{\mu \nu} - \frac{1}{2}R g_{\mu \nu} + \Lambda g_{\mu \nu} & = T_{\mu \nu}^{(scalar)}, 
		&& (\mu, \nu = 0,1,2,3), \label{E:EinsteinIntro} \\
	D_{\alpha} (\sigma^{s} D^{\alpha} \Phi) & = 0, && \label{E:Fluidintro}
\end{align}
\end{subequations}
where $g_{\mu \nu}$ is the \emph{spacetime metric}, $D$ is the covariant derivative operator corresponding to $g_{\mu \nu},$
$R_{\mu \nu}$ is the \emph{Ricci curvature tensor}, $R = g^{\alpha \beta} R_{\alpha \beta}$ is the \emph{scalar curvature}, $\Phi$ is the \emph{fluid potential}\footnote{In \cite{dC2007}, Christodoulou refers to $\Phi$ as the \emph{wave function}.}, $T_{\mu \nu}^{(scalar)} = 2 \sigma^{s} (\partial_{\mu} \Phi) (\partial_{\nu} \Phi) + g_{\mu \nu} (s + 1)^{-1} \sigma^{s + 1}$ is the energy-momentum tensor of an irrotational fluid, 
$\sigma = - g^{\alpha \beta}(\partial_{\alpha} \Phi)(\partial_{\beta} \Phi)$ is the \emph{enthalpy per particle},
and $s = (1 - \speed^2)/(2\speed^2).$ The fundamental unknowns are $(\mathcal{M},g,\partial \Phi),$ while
the pressure and proper energy density can be expressed as $p = \frac{1}{s+1} \sigma^{s+1},$ $\rho = \frac{2s+1}{s+1} \sigma^{s+1}.$ In this article, we will mainly restrict our attention to the case $s > 1,$ which is equivalent to $0 < \speed < \sqrt{1/3}.$ Although we limit our discussion to the physically relevant case of $1 + 3$ dimensions, we expect that our work can be easily generalized to apply to the case of $1 + n$ dimensions, $n \geq 3.$ 

\begin{remark} \label{R:Phiremark}
	Due to topological restrictions arising in the application of Poincar\'{e}'s lemma (see Section \ref{SS:IrrotationalFluids}), 
	the function $\Phi$ may only be defined locally. However, only its spacetime partial derivatives $\partial \Phi$ enter into 
	equations \eqref{E:EinsteinIntro} - \eqref{E:Fluidintro} and into our estimates. Since the quantities that can be locally 
	expressed as $\partial \Phi$ do not suffer from the same topological obstructions as $\Phi,$ we abuse notation and 
	use the single symbol $``\partial \Phi"$ to denote them globally. In fact, we could in principle avoid introducing $\partial 
	\Phi,$ choosing instead to work with the 1-form $\beta_{\mu};$ see equation \eqref{E:betaPhirelation}. The advantage of 
	working with $\partial \Phi$ is that the wave equation \eqref{E:Fluidintro} allows us to quickly and compactly obtain energy 
	estimates using a standard integration by parts argument (see Section \ref{SS:fluidvariableEnergies}).
\end{remark}

The choice of an equation of state is necessary in order to close the relativistic Euler equations. Our choice of
$p=\speed^2 \rho$ is often made in the mathematics and cosmology literature. As is explained in Section 
\ref{S:backgroundsolution}, under such equations of state, there exists a family of Friedmann-Lema\^{\i}tre-Robertson-Walker (FLRW) solutions to \eqref{E:EinsteinIntro} - \eqref{E:Fluidintro} that are frequently used to model a fluid-filled universe
undergoing accelerated expansion; these are the solutions that we investigate in detail in this article. The cases $p=0$ and $p=(1/3) \rho,$ which are known as the ``dust'' and ``radiative'' equations of state, are of special significance in the astrophysical literature. The latter is often used as a simple model for a ``radiation-dominated'' universe, while the former for a ``matter-dominated'' universe. Unfortunately, as we will see, these two equations of state lie just outside of the scope of our main theorem. Our results can be summarized as follows. We state them roughly here; they are are stated more precisely as theorems in Sections \ref{S:GlobalExistence} and \ref{S:Asymptotics}.

\begin{changemargin}{.25in}{.25in} 
\textbf{Main Results.} \
If $0 < \speed < \sqrt{1/3}$ (i.e. $s > 1$), then the FLRW background solution $([0,\infty) \times \mathbb{T}^3, \widetilde{g}, \partial \widetilde{\Phi})$ to \eqref{E:EinsteinIntro} - \eqref{E:Fluidintro}, which describes an initially uniform quiet fluid of constant \emph{positive} proper energy density, is globally future-stable under small perturbations. In particular, small perturbations of the initial data corresponding to the background solution have maximal globally hyperbolic developments that are future\footnote{Throughout this article, $\partial_t$ is future-directed.} causally geodesically complete. Here, $\widetilde{g} = -dt^2 + e^{2 \Omega(t)}\sum_{i=1}^3 (dx^3)^2,$ and $\partial \widetilde{\Phi} \eqdef (\partial_t \widetilde{\Phi},
\partial_1 \widetilde{\Phi}, \partial_2 \widetilde{\Phi}, \partial_3 \widetilde{\Phi})$
$=(\bar{\Psi} e^{-\decayparameter \Omega(t)},0,0,0),$ where $\bar{\Psi}$ is a positive constant, $\decayparameter  = 3/(2s+1) = 3 \speed^2,$ and $\Omega(t) \sim e^{(\sqrt{\Lambda/3})t}$ is defined in \eqref{E:BigOmega}. Furthermore, in the wave coordinate system introduced in Section \ref{SS:harmoniccoodinates}, the components $g_{\mu \nu}$ of the perturbed metric, its inverse $g^{\mu \nu},$ the fluid potential spacetime gradient $\partial \Phi,$ and various coordinate derivatives of these quantities converge as $t \rightarrow  \infty.$
\end{changemargin}

\begin{remark} \label{R:futurepaper}
	We made the irrotationality assumption only for simplicity. In a future article, we will
	remove this restriction. 
\end{remark}

\begin{remark}
	The kind of stability result stated above, in which the solution is shown to converge as $t \to \infty,$ is known as
	\emph{asymptotic stability}.
\end{remark}

\begin{remark}
	Note that our results only address perturbations of fluids featuring a strictly positive proper energy density. We have 
	thus avoided certain technical difficulties, such as dealing with a free boundary, that arise when $\rho$
	vanishes.
\end{remark}

\begin{remark}
	We have only shown stability in the ``expanding'' direction ($t \to \infty).$ 
\end{remark}

\begin{remark}
	We define $\mathbb{T}^d \eqdef [-\pi,\pi]^d$ with the ends identified.
\end{remark}

We would now like to make a few remarks about the cosmological constant. We do not attempt to give a detailed account of the rich history of $\Lambda,$ but instead defer to the discussion in \cite{sC2001}; we offer only a brief introduction. While the cosmological constant was originally introduced by Einstein \cite{aE1917} to allow for static solutions to the Einstein equations in the presence of matter, the present day motivation for introducing $\Lambda > 0$ is entirely different. In 1929, Hubble discovered of the expansion of the universe \cite{eH1929}. In brief, Hubble's ``law,'' which was formulated based on measurements of \emph{redshift}, states that the velocities at which distant galaxies are receding from Earth are proportional to their distance from Earth. Furthermore, the present day explanation is that the cause of these velocity shifts is the expansion of spacetime itself. For example, a metric of the form $g = - dt^2 + a^2(t) \sum_{i=1}^3 (dx^i)^2,$ with 
$\dot{a} > 0,$ creates a redshift effect. In the 1990's, experimental evidence derived from sources such as type Ia supernovae and the Cosmic Microwave Background suggest that the universe is in fact undergoing \emph{accelerated} expansion. Our main motivation for introducing the positive cosmological constant is that it allows for spacetime solutions of \eqref{E:EinsteinIntro} - \eqref{E:Fluidintro} that feature this effect. A simple example of a solution to the Einstein-vacuum equations that features such accelerated expansion is the metric $g = - dt^2 + e^{2Ht} \sum_{i=1}^3 (dx^i)^2$ on the manifold $(-\infty,\infty) \times \mathbb{T}^3,$ where $H = \sqrt{\Lambda/3}.$ 

The introduction of a positive cosmological constant is not the only known mechanism for generating solutions to Einstein's equations with accelerated expansion. In particular, Ringstr\"{o}m's work \cite{hR2008}, which is the main precursor to this article, shows that the Einstein-scalar field system, with an appropriate nonlinearity\footnote{Ringstr\"{o}m refers to $V(\Phi)$ as a ``potential.'' However, we call $V(\Phi)$ a ``nonlinearity'' in order to avoid confusing it with our ``potential'' $\Phi,$ and in order to avoid confusing it with a term of the form $V(x) \Phi,$ which is often referred to as a ``potential'' term in the literature.} $V(\Phi),$ has an open family of solutions with accelerated expansion. More specifically, the system studied by Ringstr\"{o}m can be obtained by replacing \eqref{E:Fluidintro} with $g^{\alpha \beta} D_{\alpha} D_{\beta} \Phi = V'(\Phi)$ and setting $T_{\mu \nu}^{(scalar)} = (\partial_{\mu} \Phi)(\partial_{\nu} \Phi) - \big[\frac{1}{2} g^{\alpha \beta} (\partial_{\alpha} \Phi) (\partial_{\beta} \Phi) + V(\Phi)\big]g_{\mu \nu}$ in equation \eqref{E:EinsteinIntro}, where $D$ denotes the covariant derivative induced by $g.$ $V$ is required to satisfy $V(0) > 0, V'(0) = 0, V''(0) > 0,$ so that in effect, the influence of the cosmological constant is emulated by $V(\Phi).$ His main result, which is analogous to our main result, is a proof of the future global stability of a large class of spacetimes featuring accelerated expansion.

The \textbf{Main Results} stated above allude to both the existence of an \emph{initial value problem} formulation of the Einstein equations, and the existence of a ``maximal'' solution. These notions are fleshed out in Section \ref{SS:IVP}, but we offer a brief description here. One of the principal difficulties in analyzing the Einstein equations is the lack of a canonical coordinate system. Intimately connected to this difficulty is the fact that due to the diffeomorphism invariance of the equations, their hyperbolic nature is not readily apparent until one makes some kind of gauge choice. One way of resolving these difficulties is to work in a special coordinate system known as \emph{wave coordinates} (also known as \emph{harmonic gauge} or \emph{de Donder gauge}), in which the Einstein equations become a system of \emph{quasilinear wave equations}. One advantage of such a formulation is that local-in-time existence for a system of wave equations is immediate, because a standard hyperbolic theory based on energy estimates has been developed (consult e.g. \cite[Ch. VI]{lH1997},
\cite[Ch. 16]{mT1997III}, \cite{jSmS1998}, \cite{cS1995}). Although the use of wave coordinates is often attributed solely to Choquet-Bruhat, it should be emphasized that use of wave coordinates in the context of the Einstein equations goes back to at least 1921, where it is featured in the work of de Donder \cite{tD1921}. However, the completion of the initial value formulation of the Einstein equations is in fact due to Choquet-Bruhat \cite{cB1952}, her main contribution being a proof that the wave coordinate condition is preserved during the evolution if it is initially satisfied.

The initial data for the irrotational Euler-Einstein system consist of a $3-$dimensional Riemannian manifold $\Sigma,$ a Riemannian metric $\bar{g}$ on $\Sigma,$ a symmetric two-tensor $\bar{K}$ on $\Sigma,$ and initial data $\bard \mathring{\Phi}, \mathring{\Psi}$ for the tangential and normal derivatives of $\Phi$ along $\Sigma.$ A solution consists of a $4-$dimensional manifold $\mathcal{M},$ a Lorentzian metric $g$ and the spacetime derivatives $\partial \Phi$ (see Remark \ref{R:Phiremark}) of a scalar field on $\mathcal{M}$ satisfying \eqref{E:EinsteinIntro} - \eqref{E:Fluidintro}, and an embedding $\Sigma \subset \mathcal{M}$ such that $\bar{g}$ is the first fundamental form\footnote{Recall that the first fundamental form of $\Sigma$ 
is defined to be the metric $\bar{g}$ on $\Sigma$ such that $\bar{g}|_x(X,Y) = g|_x(X,Y)$ for all vectors $X,Y \in T_x \Sigma.$} of $\Sigma,$ $\bar{K}$ is the second fundamental form\footnote{Recall that $\bar{K}$ is defined at the point x by $\bar{K}(X,Y) \eqdef g(D_{X} \hat{N},Y)$ for all $X,Y \in T_x \Sigma,$ where $\hat{N}$ is the future-directed unit normal to $\Sigma$ at $x.$} of $\Sigma,$ and the restriction of $\bard \Phi$ and $D_{\hat{N}} \Phi$ to $\Sigma$ are $\bard \mathring{\Phi}$ and $\mathring{\Psi}$ respectively. Here $D_{\hat{N}} \Phi$ denotes the derivative of $\Phi$ in the direction of the future-directed unit normal $\hat{N}$ to $\Sigma.$ It is important to note that the initial value problem is overdetermined, and that the data are subject to the \emph{Gauss} and \emph{Codazzi} constraints:

\begin{subequations}
\begin{align}
	\bar{R} - \bar{K}_{ab} \bar{K}^{ab} + (\bar{g}^{ab} \bar{K}_{ab})^2 & = 4 \mathring{\sigma}^s
		\mathring{\Psi}^2 - \frac{2s}{s+1} \mathring{\sigma}^{s+1}, && \label{E:Gaussintro} \\
	\bar{D}^a \bar{K}_{aj} - \bar{g}^{ab}  \bar{D}_j \bar{K}_{ab}  & = 2 \mathring{\sigma}^s \mathring{\Psi} 
		\partial_j \mathring{\Phi}, && (j=1,2,3), \label{E:Codazziintro}
\end{align}
\end{subequations}
where $\bar{R}$ is the scalar curvature of $\bar{g},$ $\bar{D}$ is the covariant derivative induced by $\bar{g},$
and $\mathring{\sigma} \eqdef \sigma|_{\Sigma}.$

\begin{remark}
	In this article, we do not address the issue of solving the constraint equations for the system \eqref{E:EinsteinIntro} 
	- \eqref{E:Fluidintro}.
\end{remark}

17 years after the initial value problem formulation was understood, Choquet-Bruhat and Geroch showed \cite{cBgR1969} that every initial data set satisfying the constraints launches a unique \emph{maximal globally hyperbolic development.} Roughly speaking, this is the largest spacetime solution to the Einstein equations that is uniquely determined by the data. This result is still a local existence result in the sense that it allows for the possibility that the spacetime might contain singularities. In particular, future-directed, causal geodesics may terminate, which in physics terminology means that an observer (light ray in the case of null geodesics) may run into the end of spacetime in finite affine parameter. For spacetimes launched by initial data near that of the background solution, our main result rules out the possibility of these singularities for observers (light rays) traveling in the ``future direction.''

We offer a few additional remarks concerning wave coordinates. The classic wave coordinate condition is the algebraic relation $\Gamma^{\mu}=0,$ where the $\Gamma^{\mu}$ are the contracted Christoffel symbols of the spacetime metric. In this article, we use a version of the wave coordinate condition that is closer in spirit to the one used by Ringstr\"{o}m in \cite{hR2008},
which was itself inspired by the ideas in \cite{hFaR2000}. Namely, we set $\Gamma^{\mu} = \widetilde{\Gamma}^{\mu},$ where $\widetilde{\Gamma}^{\mu}$ is the contracted Christoffel symbol of the background solution metric. Simple computations imply that $\widetilde{\Gamma}^{\mu} = 3 \omega \delta_0^{\mu},$ where $\omega = \omega(t) \sim \sqrt{\Lambda/3}$ is defined in \eqref{E:omegadef}. It follows that in our wave coordinate system, the (geometric) wave equation $g^{\alpha \beta} D_{\alpha} D_{\beta} v = 0$ for the function $v$ is equivalent to the modified (also known as the ``reduced'') wave equation $g^{\alpha \beta} \partial_{\alpha} \partial_{\beta} v = 3 \omega (\partial_t v)^2,$ which features the dissipative source term $\omega (\partial_t v)^2.$ We provide a more detailed discussion of this modified scalar equation in Section \ref{SS:Commentsonanalysis}. Furthermore, in Section \ref{S:ReducedEquations}, we modify the irrotational Euler-Einstein system in an analogous fashion, arriving at an equivalent hyperbolic system featuring dissipative terms. More precisely, the modified system is equivalent to the Einstein equations if the data satisfy the Einstein constraint equations and the wave coordinate condition.

\subsection{Comparison with previous work}

First, it should be emphasized that the behavior of solutions the fluid equation \eqref{E:Fluidintro} on exponentially expanding backgrounds is quite different than it is in flat spacetime. In particular, if one fixes a background metric on $[0,\infty) \times \mathbb{T}^3$ near $\widetilde{g},$ then our proof shows that the fluid equation \eqref{E:Fluidintro} on this background with 
$0 < \speed < \sqrt{1/3}$ has global solutions arising from data that are close to that of an initial uniform quiet fluid state, which is represented by $\partial \widetilde{\Phi}.$ This is arguably the most interesting aspect of our main result. In contrast, Christodoulou's monograph \cite{dC2007} shows that on the Minkowski space background, shock singularities can form in solutions to the irrotational fluid equation arising from data that are arbitrarily close to that of a uniform quiet fluid state. Our original intuition for this article was that rapid spacetime expansion should smooth out the fluid and discourage the formation of shocks.

Second, we would like to compare our nonlinear stability result to the well-known \emph{linear instability} 
results of Sachs and Wolfe \cite{rSaW1967}. In this work, they consider the Euler-Einstein system with
$\Lambda = 0$ under the equations of state $p=0$ and $p=\frac{1}{3} \rho,$ which correspond to borderline cases of the equations of state that we consider in this article. Sachs-Wolfe then consider a family of background solutions to this system on the manifold $(-\infty, \infty) \times \mathbb{R}^3.$ We remark that these well-known background solutions are of FLRW type, and can be obtained as special cases\footnote{The Sachs-Wolfe solutions have spatial slices diffeomorphic to $\mathbb{R}^3,$ while our solutions have spatial slices diffeomorphic to $\mathbb{T}^3.$} of solutions we have presented in Section \ref{S:backgroundsolution}. The first main result of \cite{rSaW1967} is that the linearized\footnote{That is, the system formed by linearizing the full Euler-Einstein equations around one of the FLRW background solutions.} system is \emph{unstable}; i.e., it features solutions that grow like $t^{C},$ where $C < 1$ depends on the equation of state. Because our theorem does not cover the borderline cases $p=0, p = \frac{1}{3}  \rho,$ it is not yet entirely clear whether or not the Sachs-Wolfe instabilities are caused exclusively by the absence of a cosmological constant, or whether the borderline equations of state also play a role; see Section \ref{SS:Commentsonanalysis} for some additional remarks concerning the borderline cases. As a side remark, we mention the most well-known result of the Sachs-Wolfe paper: they showed that the growing density perturbations couple back into the metric. The resulting variations in the metric lead to anisotropies in the Cosmic Microwave Background (CMB). In particular, the amount by which photons are gravitationally shifted varies with direction in the sky. The theoretical predictions of this effect, which is known was the Sachs-Wolfe effect, are consistent with the variations in the CMB detected by the Mather-Smoot team's COBE satellite in 1992 \cite{nBjMrW1992}.

Next, we note that Brauer, Rendall, and Reula have shown \cite{uBaRoR1994} a Newtonian analogue of our main result. More specifically, they studied Newtonian cosmological models\footnote{Their models were based on Newton-Cartan theory, which is a slight generalization of ordinary Newtonian gravitational theory that can be endowed with a highly geometric interpretation.} with a positive cosmological constant and with perfect fluid sources under the equation of state $p = C \rho_{Newt}^{\gamma},$ where $\rho_{Newt} \geq 0$ is the Newtonian mass density, and $\gamma > 1.$ They showed that small perturbations of a uniform quiet fluid state of constant positive density lead to a global solution. It is of particular interest to note that they do not require the fluid to be irrotational. This suggests that our main result can be extended to allow for (small) non-vanishing vorticity. As discussed in Remark \ref{R:futurepaper} we will address this issue in an upcoming article.

We also note a curious anti-correlation between our results and some well-known stability arguments for the Euler-Poisson system (a non-relativistic system with vanishing cosmological constant) which may be found e.g. in Chapter XIII of Chandrasekhar's book \cite{sC1961}. Chandrasekhar considers a simple model for an isolated body in equilibrium, namely a static compactly supported solution to the Euler-Poisson equations under the equation of state\footnote{More correctly, his equation of state is $p= \widetilde{C} \rho_{Newt}^{\gamma}.$ However, in the case of the Euler-Poisson equations, $\rho_{Newt}$ is proportional to the number density $n;$ i.e., $p= C n^{\gamma}$ is equivalent to $p = \widetilde{C} \rho_{Newt}^{\gamma}.$} $p = C n^{\gamma},$ where $n$ denotes the fluid element \emph{number density}, $C > 0$ is a constant, and $\gamma > 0.$ He uses virial identity arguments to suggest that such a configuration is stable if $\gamma > 4/3.$ However, since \eqref{E:rhopnrelation} implies that the equation of state $p=\speed^2 \rho$ (here $\rho$ denotes the proper energy density, a relativistic quantity) is equivalent to $p = C n^{1 + \speed^2},$ our main results show that our background solution is stable under irrotational perturbations if $1 < 1 + \speed^2 < 4/3;$ i.e., our results seem to anti-correlate with the aforementioned non-relativistic one\footnote{We temper this statement by noting that
our problem differs in several key ways from that of Chandrasekhar; e.g., Chandrasekhar studied compactly supported data for a non-relativistic system on a flat background, while here we study relativistic fluids of everywhere positive energy density on an expanding background.}.

In addition to the previously mentioned work of Ringstr\"{o}m, we would also like to mention other contributions related to the question of nonlinear stability for the Einstein equations with a positive cosmological constant. The first author to obtain stability results in this direction was Helmut Friedrich, first in vacuum spacetime \cite{hF1986a} in $1 + 3$ dimensions, and then later for the Einstein-Maxwell and Einstein-Yang-Mills equations \cite{hF1991}. Anderson then extended the vacuum result to cover the case of $1 + n$ dimensions, $n$ odd \cite{mA2005}. Their work was based on the \emph{conformal method}, which reduces the question of global stability for the Einstein equations to the much simpler question of local-in-time stability\footnote{The conformal field equations are symmetric hyperbolic, and for such systems, local-in-time stability is a standard result.} for the \emph{conformal field equations}, which were developed by Friedrich. In contrast, Ringstr\"{o}m has stated that one of his main motivations for his wave coordinate approach in \cite{hR2008} is that the conformal method cannot necessarily be easily adapted to handle matter models other than Maxwell and Yang-Mills fields. Our work can be viewed as an example of the robustness of his methods.

Finally, we compare our work here to the body of work on the stability of Minkowski space, which is the most well-known solution to the Einstein-vacuum equations in the case $\Lambda = 0$. This groundbreaking work, which was initiated by Christodoulou and Klainerman \cite{dCsK1993}, covered the case of the Einstein-vacuum equations in $1 + 3$ dimensions. 
Their proof, which is manifestly covariant, relied upon several geometric foliations of spacetime, including maximal $t=const$ slices and also a family outgoing null cones. In particular, it was believed that wave coordinates were unstable and therefore unsuitable for studying the global stability for Minkowski space. However, Lindblad and Rodnianski have recently devised yet another proof for the Einstein-vacuum and Einstein-scalar field systems \cite{hLiR2004}, \cite{hLiR2005}, which is much shorter but less precise, and which shows stability in the wave coordinate gauge $\Gamma^{\mu} = 0.$ In particular, Lindblad and Rodnianski were the first authors to show that a wave coordinate system can be used to prove global stability results for the Einstein equations. As we will explain in the next section, our result was technically simpler to achieve than either of these results. More specifically, in $1 + 3$ dimensions with $\Lambda = 0,$ the Einstein-vacuum equations involve nonlinear terms that are on the boarder\footnote{They contain nonlinear terms that, on the basis of their order alone, might be expected to produce blow-up. However, these terms satisfy a version of the \emph{null condition}, which means that they have a special algebraic structure that allows for small-data global existence.} of what can be expected to allow for global existence. As we will see, the addition of $\Lambda > 0$ to the Einstein equations, together with our previously mentioned wave coordinate choice, will lead to the presence of energy dissipation terms. Consequently in the parameter range $0 < \speed < \sqrt{\frac{1}{3}},$ we do not have to contend with the difficult ``borderline integrals'' that appear in the proofs of the stability of Minkowski space. A more thorough comparison of the proofs of the stability of Minkowski space to the proofs of the stability of exponentially expanding solutions can be found in the introduction of \cite{hR2008}. Moreover, we remark that
readers interested in results related to those of Christodoulou-Klainerman and Lindblad-Rodnianski can consult \cite{lBnZ2009}, \cite{sKfN2003}, \cite{jL2008}, \cite{nZ2000}.

\subsection{Comments on the analysis} \label{SS:Commentsonanalysis}

The basic idea behind our global existence argument can be seen by analyzing the inhomogeneous
wave equation $g^{\alpha \beta} D_{\alpha} D_{\beta} v = F$ for the model metric $g = -dt^2 + e^{2t} \sum_{i=1}^3 (dx^i)^2$
on the manifold-with-boundary $\mathcal{M} = [0,\infty) \times \mathbb{T}^3.$ Here, we are using standard local coordinates on $\mathbb{T}^3.$ An omitted computation implies that relative to this coordinate system, the wave equation can be expressed as follows:

\begin{align} \label{E:modelequation}
	- \partial_t^2 v + e^{-2t} \delta^{ab} \partial_a \partial_b v = 3 (\partial_t v)^2 + F.
\end{align}

To estimate solutions to \eqref{E:modelequation}, one can define the ``usual'' energy
$E^2(t) = \frac{1}{2} \int_{\mathbb{T}^3} (\partial_t v)^2 + e^{-2t} \delta^{ab} (\partial_a v)(\partial_b v) \, d^3 x,$ and a standard integration by parts argument leads to the inequality

\begin{align} \label{E:modelenergyinequality}
	\frac{d}{dt} E \leq -E + \|F\|_{L^2}.
\end{align}
It is clear from \eqref{E:modelenergyinequality} that sufficient estimates of $\|F\|_{L^2}$ in terms of $E$ will lead to 
energy decay\footnote{In our work below, we work with rescaled energies that are approximately constant in time.}. 

Our estimates for the modified irrotational Euler-Einstein equations are in the spirit of the above argument. The bulk of the work lies in estimating the inhomogeneous terms and in ensuring that the perturbed solution remains close to that of the background solution. We remark that the main tools used for estimating the inhomogeneous terms are Sobolev-Moser type estimates,
which we have placed in the Appendix for convenience. Our analysis of the spacetime metric components closely parallels the work \cite{hR2008} of Ringstr\"{o}m. In particular, based on Ringstr\"{o}m's work, we provide (in Section \ref{S:BootstrapConsequences}) a \emph{Counting Principle} based on the number of spatial indices upstairs and downstairs in a product of metric components, Christoffel symbols, and first derivatives of these quantities. This heuristic mechanism can be used to quickly (and roughly, but good enough to prove small-data local existence) determine the rates of decay of products of such terms. Despite the availability of this tool, we carefully derive many of the estimates in complete detail. We remark that the Counting Principle is not precise enough to detect the improved decay estimates derived in Section \ref{S:Asymptotics}. 

Although Ringstr\"{o}m's framework is useful for analyzing the metric components, our analysis of the fluid variables 
$\partial_{\mu} \Phi,$ $(\mu = 0,1,2,3),$ involves additional complications beyond those encountered in his analysis. We would now like to briefly discuss these complications, and also to indicate why we make the assumption $0 < \speed < \sqrt{1/3}.$ We believe that the breakdown of our proof in the case $\speed = 0$ is merely an artifact of our methods; we cannot address the equation of state $p=0$ here because we use the pressure as an unknown in our equations. In a future article, we will investigate this case using alternate methods. On the other hand, for a technical example of where the proof breaks down in the case $\speed \geq 1/3,$ see Section \ref{SS:Breakdown}. We expect that stability can be shown for the
equation of state $p = (1/3) \rho$ using the Friedrich's conformal method; we plan to investigate this issue in a future article. On the other hand, the question of what to expect in the case $\speed > 1/3$ is open.

The principal difficulty we encounter in our analysis is that our background solution fluid variable $\partial \widetilde{\Phi} = (\bar{\Psi} e^{-\decayparameter \Omega},0,0,0)$ and the associated quantity $\widetilde{\sigma} = - \widetilde{g}^{\alpha \beta}(\partial_{\alpha} \widetilde{\Phi})(\partial_{\beta} \widetilde{\Phi})$ both exponentially decay to $0$ as $t \to \infty.$ Therefore, the quantity $\sigma$ corresponding to a slightly perturbed solution will also decay. Furthermore, by examining the fluid equation \eqref{E:Fluidintro}, we see that $\sigma = 0$ corresponds to a degeneracy. In order to deal with these difficulties, and in order to close our bootstrap argument for global existence, we have to ensure that $\sigma$ doesn't decay too quickly, and that it never becomes $0$ in finite time. In fact, for the global solutions that we construct,
we show that the perturbed $\sigma$ will decay at the same rate as $\widetilde{\sigma}.$ Since this fact plays a key role in our estimates for the fluid equation, we now heuristically outline our approach its proof. Assuming that the perturbed $g_{\mu \nu}$ is near that of $\widetilde{g}_{\mu \nu},$ $(\mu,\nu= 0,1,2,3),$ it follows that $\sigma \approx -(\partial_t \Phi)^2 + e^{-2\Omega(t)} \delta^{ab}(\partial_a \Phi)(\partial_b \Phi).$ Thus, to conclude that $\sigma \approx \widetilde{\sigma},$ we \textbf{i)} show that $\partial_t \Phi \approx \partial_t \widetilde{\Phi},$ and \textbf{ii)} introduce the quantities $z_j \eqdef e^{- \Omega(t)} \partial_j \Phi/\partial_t \Phi,$ $(j=1,2,3),$ which we assume are small initially, and then show that they remain small. In fact, to close our bootstrap argument, we show that they exponentially decay to $0$ in $L^{\infty}.$ Combining \textbf{i)} and \textbf{ii)}, we deduce that $\sigma \approx \widetilde{\sigma}.$

The above mathematical conditions have a physical interpretation. To elaborate, we first note that the four-velocity $u$ of the fluid is connected to the fluid potential via \eqref{E:betadef} and \eqref{E:betaPhirelation}, which imply that $u_{\mu} = - \sigma^{-1/2} \partial_{\mu} \Phi.$ From this fact and the previous remarks, it thus follows that the smallness/decay condition on $z_j$ ensures that the fluid decays towards the  ``low velocity'' regime, in which the time component $u^0$ is dominant over the spatial components $u^j.$ 

Finally, to understand the assumption $0 < \speed < \sqrt{1/3},$ we have to explain some of the details of our estimates. If one tries estimate $z_j$ in $L^{\infty}$ using only energy estimates plus Sobolev embedding, then one can only conclude that $\| z_j \|_{L^{\infty}}$ is of size $C \epsilon$ during the time of existence when the rescaled energy is of size $\epsilon.$
Unfortunately, this estimate is not strong enough to close our bootstrap argument. However, if we use the bootstrap estimates\footnote{These estimates will follow easily from the assumption that our rescaled energy is of size $\epsilon.$} $\Big\| \frac{1}{\partial_t \Phi} \Big \|_{L^{\infty}} \leq C e^{\decayparameter \Omega},$ 
$\| \partial_t \partial_j \Phi \|_{L^{\infty}} \leq C \epsilon e^{-\decayparameter \Omega},$ and the initial condition $\| \partial_j \Phi(0) \|_{L^{\infty}} \leq C \epsilon,$ then we can estimate $\| z_j \|_{L^{\infty}}$ by first estimating 
$\| \partial_j \Phi \|_{L^{\infty}}$ through time integration using the bound on $\| \partial_t \partial_j \Phi \|_{L^{\infty}}.$ This leads to the estimate $\| z_{j} \|_{L^{\infty}} \leq C \epsilon e^{(\decayparameter - 1) \Omega(t)},$ which decays in time if $\decayparameter < 1.$ The condition $0 < \decayparameter < 1$ holds exactly when $1 < s < \infty,$ which is equivalent to $0 < \speed^2 < 1/3;$ this parameter range of stability is precisely the one mentioned in the \textbf{Main Results} statement above.

\subsection{Applications to spatial topologies other than \texorpdfstring{$\mathbb{T}^3$}{}}

The model metric $g = -dt^2 + e^{2t} \sum_{i=1}^3 (dx^i)^2$ has another feature that is of crucial relevance for possible extensions of our work. To illustrate our point, let us consider $g$ to be a metric on $[0,\infty) \times \mathbb{R}^3,$ a Lorentzian manifold with boundary that has the Cauchy hypersurface $\Sigma \eqdef \lbrace t=0 \rbrace.$ Simple computations imply that the causal past of the causal future of a point intersects $\Sigma$ in a \emph{precompact}\footnote{For example, in this model spacetime, the causal past of the causal future of the origin is contained in the set $\lbrace (t,x^1,x^2,x^3) \ | \ t \geq 0, \sum_{i=1}^3 (x^i)^2 \leq 4 \rbrace.$} set. This is in stark contrast to the situation encountered in Minkowski space, where the causal future of a point is the forward null cone emanating from that point, and the causal past of this null cone includes the \emph{entire} Cauchy hypersurface $\lbrace t = 0 \rbrace.$ One consequence of this fact is that the study of solutions to wave equations on exponentially expanding spacetimes is a ``very'' local problem; i.e., if we makes assumptions about the data in large enough ball in $\Sigma,$ then we can control the solution in a non-compact region of spacetime that includes a cylinder of the form $[0,\infty) \times B,$ where $B$ is a spatial-coordinate ball. 

Using these observations, Ringstr\"{o}m was able to prove the future-stability of various solutions to the Einstein-scalar field system for many spatial topologies in addition to $\mathbb{T}^3.$ The main idea is to choose local coordinate patches on the spatial slices on which the problem is quantitatively close to the study of case of $\Sigma = \mathbb{T}^3,$ and to piece together the future development of these patches into a global spacetime. The most difficult part of his argument is the global existence theorem on $\mathbb{T}^3.$ However, his patching argument requires that one use cut-off functions, which introduces regions in which the Einstein constraint equations are not satisfied. To deal with this difficulty, he constructs his modified system of equations in such a manner that one can still conclude global existence, even if the constraint equations are not satisfied in the cut-off regions. Finally, after patching, these artificially-introduced regions are of course ``discarded'' and are not part of the spacetime. 

The modified system \eqref{E:finalg00equation} - \eqref{E:finalfluidequation} that we study is similar to Ringstr\"{o}m's modified equations in that our global existence argument depends only on a smallness condition on the data, and not on whether or not the constraint equations are satisfied. As noted above, this is the main step in Ringstr\"{o}m's work. For these reasons, it is very likely that his patching arguments can be used to extend our result to other spatial topologies. However, for sake of brevity, we do not explore this issue in this article.

\subsection{Outline of the structure of the paper} 

\begin{itemize}
	\item In Section \ref{S:Notation}, we describe our conventions for indices and introduce some notation for
		differential operators and Sobolev norms of tensor-fields.
	\item In Section \ref{S:IrrotationalEE}, we introduce the irrotational Euler-Einstein system.
	\item In Section \ref{S:backgroundsolution}, we use a standard ODE ansatz to derive a well-known family of 
		background Friedmann-Lema\^{\i}tre-Robertson-Walker (FLRW) solutions to the irrotational Euler-Einstein system.
	\item In Section \ref{S:ReducedEquations} we introduce wave coordinates and use algebraic identities 
		valid in such a coordinate system to construct a modified version of the irrotational Euler-Einstein system. We then
		discuss how to construct data for the modified system from data for the un-modified system. Finally, we discuss 
		classical local existence for the modified system, including the continuation principle that is used in
		Section \ref{S:GlobalExistence}.
	\item In Section \ref{S:NormsandEnergies}, we introduce the relevant norms and the related energies for the modified system 
		that we use in our global existence argument. We also provide a preliminary analysis of the derivatives of the energies, 
		but the inhomogeneous terms are not estimated until Section \ref{S:BootstrapConsequences}.
	\item In Section \ref{S:LinearAlgebra}, we introduce some bootstrap assumptions on the spacetime metric $g_{\mu \nu}.$
		We then use these assumptions to prove some linear-algebraic lemmas that are useful for
		analyzing $g_{\mu \nu}$ and the inverse metric $g^{\mu \nu}.$
	\item In Section \ref{S:BootstrapAssumptions}, we introduce our main bootstrap assumption, which is a smallness condition
		on $\suptotalnorm{N},$ a norm of difference between the perturbed solution and the background solution. We also define
		the positive constants $q$ and $\eta_{min},$ which play a fundamental role in the technical estimates of the following
		sections.
	\item Section \ref{S:BootstrapConsequences} contains most of the technical estimates. We assume the bootstrap assumptions 
		from the previous sections and use them to deduce estimates for $g_{\mu \nu}, g^{\mu \nu}, m^{\mu \nu},$ and for the 
		nonlinearities appearing in the modified equations. Here, $m^{\mu \nu}$ denotes the \emph{reciprocal acoustical metric}, 
		which is the effective metric for the irrotational fluid equation.
	\item Section \ref{S:EnergyNormEquivalence} is a very short section in which we show that the Sobolev norms we have defined 
		are equivalent to the energy norms. 
	\item In Section \ref{S:GlobalExistence}, we use the estimates from the previous sections to prove our main theorem, which is 
		a small-data global (where ``small'' means close to the background solution) existence result for the modified equations. 
		We then discuss the breakdown of our proof in the case $\speed \geq 1/3.$
		Finally, we use the global existence theorem to prove a related theorem, which states that initial data satisfying the 
		irrotational Euler-Einstein constraints, the wave coordinate condition, and a smallness condition lead to a 
		future-geodesically complete solution of the irrotational Euler-Einstein system. 
	\item In Section \ref{S:Asymptotics}, we prove that the global solution from the main theorem converges as $t \to \infty.$
		The main idea is that once we have a global small solution to the modified system, we can revisit the modified equations
		and upgrade some of the estimates proved in Section \ref{S:GlobalExistence}.
\end{itemize}

\section{Notation} \label{S:Notation}

In this section, we briefly introduce some notation that we use in this article.

\subsection{Index conventions}
Greek indices $\alpha, \beta, \cdots$ take on the values $0,1,2,3,$ while Latin indices $a,b,\cdots$ 
(which we sometimes call ``spatial indices'') take on the values $1,2,3.$ Pairs of repeated indices, with one raised and one lowered, are summed (from $0$ to $3$ if they are Greek, and from $1$ to $3$ if they are Latin). We raise and lower indices with the spacetime metric $g_{\mu \nu}$ and its inverse $g^{\mu \nu}.$ The only exceptions are equations \eqref{E:Gaussintro} - \eqref{E:Codazziintro} and \eqref{E:Gauss} - \eqref{E:Codazzi}, in which we use the $3-$metric $\bar{g}_{jk}$
and its inverse $\bar{g}^{jk}$ to raise and lower indices, and in Section \ref{S:Asymptotics}, in which 
all indices are raised and lowered with $g_{\mu \nu}$ and $g^{\mu \nu}$ except for the
$3-$metric $g_{jk}^{(\infty)},$ which has $g_{(\infty)}^{jk}$ as its corresponding inverse metric. 

\subsection{Coordinate systems and differential operators}
Throughout this article, we work in a standard local coordinate system $x^1,x^2,x^3$ on $\mathbb{T}^3.$ 
Although strictly speaking this coordinate system is not globally well-defined, the vectorfields $\partial_j \eqdef \frac{\partial}{\partial x^{j}}$ are globally well-defined. This coordinate system extends to a coordinate system $x^0,x^1,x^2,x^3$ on manifolds with boundary of the form $\mathcal{M} =[0,T) \times \mathbb{T}^3,$ and we often write $t$ instead of $x^0.$ We write $\partial_{\mu}$ to denote the coordinate derivative $\frac{\partial}{\partial x^{\mu}},$ and we often write $\partial_t$ instead of $\partial_0.$ Throughout the article, we will perform all of our computations with respect to the frame $\big\lbrace \partial_{\mu} \big\rbrace_{\mu = 0,1,2,3}.$

If $\vec{\alpha} = (n_1,n_2,n_3)$ is a triplet of non-negative integers, then we define the spatial multi-index coordinate differential operator $\partial_{\vec{\alpha}}$ by $\partial_{\vec{\alpha}} \eqdef \partial_1^{n_1} \partial_2^{n_2} \partial_3^{n_3}.$ We denote the order of $\vec{\alpha}$ by $|\vec{\alpha}|,$ where $|\vec{\alpha}| \eqdef n_1 + n_2 + n_3.$

We write $D_{\mu} T_{\mu_1 \cdots \mu_s}^{\nu_1 \cdots \nu_r} = 
\partial_{\mu} T_{\mu_1 \cdots \mu_s}^{\nu_1 \cdots \nu_r} + 
\sum_{a=1}^r \Gamma_{\mu \ \alpha}^{\ \nu_{a}} T_{\mu_1 \cdots \mu_s}^{\nu_1 \cdots 
\nu_{a-1} \alpha \nu_{a+1} \nu_r} 
- \sum_{a=1}^s \Gamma_{\mu \ \mu_{a}}^{\ \alpha} T_{\mu_1 \cdots \mu_{a-1} \alpha \mu_{a+1} \mu_s}^{\nu_1 \cdots \nu_r}
$ to denote the components of the covariant derivative of a 
tensorfield on $\mathcal{M}$ with components $T_{\mu_1 \cdots \mu_s}^{\nu_1 \cdots \nu_r}.$ 

We write $\partial^{(N)} T_{\mu_1 \cdots \mu_s}^{\nu_1 \cdots \nu_r}$ to denote the array
containing of all of the $N^{th}$ order \emph{spacetime} coordinate derivatives (including time derivatives) of the component $T_{\mu_1 \cdots \mu_s}^{\nu_1 \cdots \nu_r}.$ Similarly, we write $\bard^{(N)} T_{\mu_1 \cdots \mu_s}^{\nu_1 \cdots \nu_r}$ to denote the array containing of all $N^{th}$ order \emph{spatial coordinate} derivatives of the component $T_{\mu_1 \cdots \mu_s}^{\nu_1 \cdots \nu_r}.$ We omit
the superscript $^{(N)}$ when $N=1.$

\subsection{Norms}
All of the Sobolev norms we use are relative to the coordinate system $x^1,x^2,x^3$ introduced above. We remark
that they are not coordinate invariant quantities. If $T$ is a tensorfield on $\mathcal{M}$ with components $T_{\mu_1 \cdots \mu_s}^{\nu_1 \cdots \nu_r}$ relative to this coordinate system, then we define the following norm of $T_{\mu_1 \cdots \mu_s}^{\nu_1 \cdots \nu_r}$ on the $t=cont$ slices of $\mathcal{M}:$
\begin{align} \label{E:Tensornormdef}
	\big\| T_{\mu_1 \cdots \mu_s}^{\nu_1 \cdots \nu_r}  \big\|_{H^N(U)} \eqdef
		\bigg(\sum_{\mu_1, \cdots \mu_s = 0}^3 \sum_{\nu_1, \cdots \nu_r = 0}^3 \sum_{|\vec{\alpha}| \leq N}
		\int_{x(U)} \big|\partial_{\vec{\alpha}} T_{\mu_1 \cdots \mu_s}^{\nu_1 \cdots \nu_r} \circ x^{-1} \big| dx^1dx^2dx^3 \bigg)^{1/2}.
\end{align}

Note that the above norm is in general a function $t.$ If the set $U$ is omitted, then it is understood that
$U = \mathbb{T}^3.$ This is a slight abuse of notation since the coordinate system $x^1,x^2,x^3$ is not globally well-defined on $\mathbb{T}^3.$ Furthermore, when writing integral expressions similar to the one on the right-hand side of \eqref{E:Tensornormdef}, we suppress the coordinate map $(x^1,x^2,x^3): \mathcal{M} \rightarrow \mathbb{T}^3.$ This should cause no confusion since we are working relative to the frame $\big\lbrace \partial_{\mu} \big\rbrace_{\mu = 0,1,2,3}.$

Using the above notation, we can write the $N^{th}$ order homogeneous Sobolev norm of the component 
$T_{\mu_1 \cdots \mu_s}^{\nu_1 \cdots \nu_r}$ as

\begin{align}
	\big\| \bard^{(N)} T_{\mu_1 \cdots \mu_s}^{\nu_1 \cdots \nu_r}  \big\|_{L^2} \eqdef
		\sum_{|\vec{\alpha}| = N} \big\| \partial_{\vec{\alpha}} T_{\mu_1 \cdots \mu_s}^{\nu_1 \cdots \nu_r} \big\|_{L^2}.
\end{align}

If $\mathfrak{K} \subset \mathbb{R}^n$ or $\mathfrak{K} \subset \mathbb{T}^n,$ then $C^N_b(\mathfrak{K})$ denotes the set of $N-$times continuously differentiable functions (either scalar or array-valued, depending on context) on Int$(\mathfrak{K})$ with bounded derivatives up to order $N$ that extend continuously to the closure of $\mathfrak{K}.$ The norm of a function $F 
    \in C^N_b(\mathfrak{K})$ is defined by
    \begin{equation} \label{E:CbMNormDef}
        |F|_{N,\mathfrak{K}} \eqdef \sum_{|\vec{I}|\leq N} \sup_{\cdot \in \mathfrak{K}}
        |\partial_{\vec{I}}F(\cdot)|,
    \end{equation}
    where $\partial_{\vec{I}}$ is a multi-indexed operator representing repeated partial differentiation
    with respect to the arguments $\cdot$ of $F,$ which may be either spacetime coordinates or state-space variables
    depending on context. When $N=0,$ we also use the notation
    
    \begin{align} 
    	|F|_{\mathfrak{K}} \eqdef \sup_{\cdot \in \mathfrak{K}} |F(\cdot)|.
    \end{align}
    Furthermore, we use the notation
    \begin{align} 
    	|F^{(N)}|_{\mathfrak{K}} \eqdef \sum_{|\vec{I}| = N}|\partial_{\vec{I}} F|_{\mathfrak{K}}.
    \end{align}
    The quantity $|F^{(N)}|_{\mathfrak{K}}$ is a measure of the size of the $N^{th}$ order derivatives of $F$ 
    on the set $\mathfrak{K}.$

		If $I \subset \mathbb{R}$ is an interval and $X$ is a normed function space, then we use the notation
    $C^N(I,X)$ to denote the set of $N$-times continuously differentiable maps from $I$ into $X.$

\subsection{Running constants} \label{SS:runningconstants}
We use $C$ to denote a running constant that is free to vary from line to line.
In general, it can depend on $N$ (see \eqref{E:Ndef}) and $\Lambda,$ but can be chosen to be independent of all functions
$(g_{\mu \nu}, \partial_{\mu} \Phi),$ $(\mu, \nu = 0,1,2,3),$ that are sufficiently close to the background solution $(\widetilde{g}_{\mu \nu}, \partial_{\mu} \widetilde{\Phi})$ of Section \ref{S:backgroundsolution}. We sometimes use notation such as $C(N)$ to indicate the dependence of $C$ on quantities that are peripheral to the main argument. Occasionally, we use $c, C_*,$ etc., to denote a constant that plays a distinguished role in the analysis. We remark that many of the constants blow-up as $\Lambda \rightarrow 0^+.$

\subsection{A warning on the sign of \texorpdfstring{$\hat{\Square}_g$}{}}

Although we often choose notation that agrees with the notation used by Ringstr\"{o}m in \cite{hR2008}, our reduced wave operator $\hat{\Square}_g \eqdef g^{\alpha \beta} \partial_{\alpha} \partial_{\beta}$ has the opposite sign of the one in \cite{hR2008}.

\section{The Irrotational Euler-Einstein System} \label{S:IrrotationalEE}

The Einstein equations connect the \emph{Einstein tensor} $R_{\mu \nu} - \frac{1}{2}g_{\mu \nu} R,$
which depends on the geometry of the spacetime manifold $\mathcal{M},$ to the \emph{energy-momentum-stress-density tensor} (energy-momentum tensor for short) $T_{\mu \nu},$ which models the matter content of spacetime. In $1 + 3$ dimensions, 
they can be expressed as
\begin{align} \label{E:EinsteinField}
	R_{\mu \nu} - \frac{1}{2}R g_{\mu \nu} + \Lambda g_{\mu \nu} = T_{\mu \nu}, && (\mu, \nu = 0,1,2,3),
\end{align}
where $R_{\mu \nu}$ is the \emph{Ricci curvature tensor}, $R$
is the \emph{scalar curvature}, and $\Lambda$ is the \emph{cosmological constant}. We remark that \textbf{the stability results proved in this article heavily depend upon the assumption} $\Lambda > 0.$ Recall that the Ricci curvature tensor and scalar curvature are defined in terms of $R_{\mu \alpha \nu}^{\ \ \ \ \beta},$ the \emph{Riemann curvature tensor}\footnote{Under our sign convention, $D_{\mu} D_{\nu} X_{\alpha} - D_{\nu} D_{\mu} X_{\alpha} = R_{\mu \nu \alpha}^{\ \ \ \ \beta} X_{\beta}.$}, which can be expressed in terms of $\Gamma_{\mu \ \nu}^{\ \alpha},$ the \emph{Christoffel symbols} of the metric. In a local coordinate system, these quantities can be expressed as follows:

\begin{subequations}
\begin{align}
R_{\mu \alpha \nu}^{\ \ \ \ \beta} & \eqdef 
	\partial_{\alpha} \Gamma_{\mu \ \nu}^{\ \beta}
	- \partial_{\mu} \Gamma_{\alpha \ \nu}^{\ \beta}
  + \Gamma_{\alpha \ \lambda}^{\ \beta} \Gamma_{\mu \ \nu}^{\ \lambda}
	- \Gamma_{\mu \ \lambda}^{\ \beta} \Gamma_{\alpha \ \nu}^{\ \lambda},	&& (\alpha,\beta, \mu, \nu = 0,1,2,3),
	\label{E:Riemanndef} \\
R_{\mu \nu} & \eqdef R_{\mu \alpha \nu}^{\ \ \ \ \alpha} 
	= \partial_{\alpha} \Gamma_{\mu \ \nu}^{\ \alpha}
	- \partial_{\mu} \Gamma_{\alpha \ \nu}^{\ \alpha}
  + \Gamma_{\alpha \ \lambda}^{\ \alpha} \Gamma_{\mu \ \nu}^{\ \lambda}
	- \Gamma_{\mu \ \lambda}^{\ \alpha} \Gamma_{\alpha \ \nu}^{\ \lambda}, && (\mu, \nu = 0,1,2,3),
	\label{E:Riccidef} \\
R & \eqdef g^{\alpha \beta} R_{\alpha \beta}, && \label{E:Rdef} \\
\Gamma_{\mu \ \nu}^{\ \alpha} & \eqdef \frac{1}{2} g^{\alpha \lambda}(\partial_{\mu} g_{\lambda \nu} 
	+ \partial_{\nu} g_{\mu \lambda} - \partial_{\lambda} g_{\mu \nu}), && (\alpha, \mu, \nu = 0,1,2,3). 	
	\label{E:EMBIChristoffeldef}
\end{align}
\end{subequations}

The Bianchi identities (see \cite{rW1984}) imply that the left-hand side of \eqref{E:EinsteinField} is divergence free, which leads to the following equations satisfied by $T^{\mu \nu}:$ 
\begin{align} \label{E:Tdivergence}
	D_{\nu} T^{\mu \nu} = 0, && (\mu = 0,1,2,3).
\end{align}

By contracting each side of \eqref{E:EinsteinField} with $g^{\mu \nu}$, we deduce that
$R = 4 \Lambda - T,$ where $T \eqdef g^{\alpha \beta} T_{\alpha \beta}$ is the trace of $T_{\mu \nu}.$ From this fact, it easily
follows that \eqref{E:EinsteinField} is equivalent to

\begin{align} \label{E:EinsteinFieldequivalent}
	R_{\mu \nu} = \Lambda g_{\mu \nu} + T_{\mu \nu} - \frac{1}{2}T g_{\mu \nu}, && (\mu,\nu = 0,1,2,3).
\end{align}

It is not our aim to give a complete discussion of the notion of a perfect fluid. A thorough introduction to the subject, including a discussion of its history, can be found in Christodoulou's survey article \cite{dC2007b}. We confine ourselves here to a brief introduction. In general, the energy-momentum tensor for a perfect fluid is
\begin{align} \label{E:Tlowerfluid}	
	T_{(fluid)}^{\mu \nu} = (\rho + p) u^{\mu} u^{\nu} + p g^{\mu \nu}, && (\mu,\nu = 0,1,2,3),
\end{align}
where $\rho \geq 0$ is the \emph{proper energy density}, $p \geq 0$ is the \emph{pressure}, and $u$
is the \emph{four velocity}, a unit-length (i.e., $u_{\alpha} u^{\alpha} = -1)$ future-directed vectorfield
on $\mathcal{M}.$ The \emph{Euler equations}, which are the laws of motion for a perfect relativistic fluid, are the four equations \eqref{E:Tdivergence} together with a conservation law \eqref{E:Eulernumberdensity} for the number of fluid elements. In a local coordinate system, they can be expressed as follows:

\begin{subequations}
\begin{align}	
  D_{\nu} T_{(fluid)}^{\mu \nu} & = 0, && (\mu = 0,1,2,3), \label{E:Euler} \\
	D_{\nu} (n u^{\nu}) & = 0, && \label{E:Eulernumberdensity}
\end{align}
\end{subequations}
where $n$ is the \emph{proper number density} of the fluid elements. We also introduce the thermodynamic variable $\Ent,$ the \emph{proper entropy density}, which we will discuss below.

Unfortunately, even in a prescribed spacetime $(\mathcal{M},g),$ the equations \eqref{E:Euler} - \eqref{E:Eulernumberdensity} do not form a closed system. The standard means of closing the equations is to appeal the laws of thermodynamics, which imply the following relationships between the state-space variables (see e.g. \cite{yGsTZ1998}):    
    
    \begin{enumerate}
  		\item $\rho \geq 0$ is a function of $n \geq 0$ and ${\Ent} \geq 0.$
  		\item $p \geq 0$ is defined by
        \begin{align} \label{E:rhopnrelation}
            p= n \left. \frac{\partial \rho}{\partial n} \right|_{\Ent} - \rho,                                                      		\end{align}
        where the notation $\left. \right|_{\cdot}$ indicates partial differentiation with $\cdot$ held
        constant.
        \item A perfect fluid satisfies
        \begin{align}
        \left. \frac{\partial \rho}{\partial n} \right|_{\Ent} >0, \left. \frac{\partial p}{\partial
        n} \right|_{\Ent}>0, \left. \frac{\partial \rho}{\partial {\Ent}} \right|_n \geq 0 \
        \mbox{with} \ ``='' \ \mbox{iff} \ \Ent=0.                             \label{E:EOSAssumptions}
        \end{align}
        As a consequence, we have that $\zeta,$ the speed
        of sound\footnote{In general, $\zeta$ is not constant. However, for the equations of
        state we study in this article, $\zeta$ is equal to the constant $\speed.$} in the fluid, is always real for $\Ent > 0:$
            \begin{align}
                \zeta^2 \overset{def}{=} \left.\frac{\partial p}{\partial
                \rho}\right|_{\Ent} = \frac{{\partial p / \partial n}|_{\Ent}}{{\partial \rho / \partial
                n}|_{\Ent}} > 0.                                                                         \label{E:SpeedofSound}
            \end{align}
        \item We also demand that the speed of sound is positive and less than the speed of light
        whenever $n > 0$ and $\Ent > 0$:
            \begin{align} \label{E:Causality}
                n>0 \ \mbox{and} \ \Ent > 0 \implies 0 < \zeta < 1.
            \end{align}
    \end{enumerate}
    
   	Postulates (1) - (3) express the laws of thermodynamics and fundamental thermodynamic assumptions, while Postulate 
    (4) ensures that at each $x \in \mathcal{M},$ vectors that are causal with respect to the sound cone\footnote{The
    sound cone is defined to be the subset of tangent vectors $X \in T_x \mathcal{M}$ such that $m_{\alpha \beta} X^{\alpha} 
    X^{\beta} = 0,$ where $m_{\mu \nu}$ is the \emph{acoustical metric}. The matrix $m_{\mu \nu}$ 
    is the inverse matrix of the \emph{reciprocal acoustical metric} $m^{\mu \nu}$, which is introduced in Section 
    \ref{SS:Decomposition}; i.e., $m_{\mu \nu}$ is not obtained by lowering the indices of $m^{\mu \nu}$ with $g_{\mu \nu}.$} 
    in $T_x \mathcal{M}$ are necessarily causal with respect to the gravitational null cone\footnote{By gravitational null cone 
    at $x,$ we mean the subset of tangent vectors $X \in T_x \mathcal{M}$ such that $g_{\alpha \beta} X^{\alpha} X^{\beta} = 
    0.$} in $T_x \mathcal{M}.$ See \cite{js2008a} for a more detailed analysis of the geometry of the sound cone and the 
    gravitational null cone.

    We note that the assumptions $\rho \geq 0, p \geq 0$ together imply that
    the energy-momentum tensor \eqref{E:Tlowerfluid} satisfies both the \emph{weak energy 
    condition} ($T_{\alpha \beta}^{(fluid)} X^{\alpha} X^{\beta} \geq 0$ holds whenever $X$ is timelike and future-directed
    with respect to the gravitational null cone) 
    and the \emph{strong energy condition}
    ($[T_{\mu \nu}^{(fluid)} - 1/2 g^{\alpha \beta}T_{\alpha \beta}^{(fluid)} g_{\mu \nu}]X^{\mu}X^{\nu} \geq 0$ holds whenever 
    $X$ is timelike and future-directed with respect to the gravitational null cone). Furthermore, if we assume that the 
    equation of state is such that $p=0$ when
    $\rho = 0,$ then \eqref{E:SpeedofSound} and \eqref{E:Causality} guarantee that $p \leq \rho.$ It is then easy to check
    that $0 \leq p \leq \rho$ implies the \emph{dominant energy condition}
    ($-g^{\mu \alpha}T_{\alpha \nu}^{(fluid)} X^{\nu}$ is causal and future-directed whenever $X$ is causal and future-directed 
    with respect to the gravitational null cone).

Postulate $(1)$ from above is equivalent to making a choice of an \emph{equation of state}, which is a function 
that expresses $p$ in terms of $\Ent$ and $\rho.$ An equation of state is not necessarily a fundamental law of nature, but can instead be an empirical relationship between the fluid variables. In this article, we consider the case of an \emph{irrotational, barotropic} fluid under the equation of state $p = \speed^2 \rho,$ where $0 < \speed < \sqrt{\frac{1}{3}},$ and according to \eqref{E:SpeedofSound}, the constant $\speed$ is the \emph{speed of sound}. A barotropic fluid is one for which $p$ is a function of $\rho$ alone. Because $\Ent$ plays no role in the analysis of such fluids, this quantity is absent from the remainder of our article. As discussed in the following section, these assumptions imply that the tensor $T_{(fluid)}^{\mu \nu}$ defined in \eqref{E:Tlowerfluid} is equal to the tensor $T_{(scalar)}^{\mu \nu}$ defined in \eqref{E:Tslowerperfectfluid}, and that the equations \eqref{E:Tdivergence} are equivalent to \eqref{E:ELsigmas}, a single quasilinear wave equation for a scalar function $\Phi$
(see Remark \ref{R:Phiremark}). As a consequence, it follows that $T_{(scalar)}^{\mu \nu}$ also satisfies the weak, strong, and dominant energy conditions.

\subsection{Irrotational fluids} \label{SS:IrrotationalFluids}
In this section, we introduce the notion of an irrotational fluid. We assume that the fluid is barotropic, be we do not yet impose the particular equation of state $p = \speed^2 \rho.$ The scalar formulation of an irrotational fluid in a curved spacetime goes back to at least 1937 \cite{jS1937}, but in this article, we use modern terminology found e.g. in \cite{dC2007}.
We begin by defining the quantity $\sqrt{\sigma},$ which is known as the \emph{enthalpy per particle}:
\begin{align} \label{E:enthaplydef}
	\sqrt{\sigma} \eqdef \frac{\rho + p}{n},
\end{align}
where $\sigma \geq 0.$ We note for future use that we can differentiate each side of 
\eqref{E:enthaplydef} with respect to $\sqrt{\sigma},$ and use \eqref{E:rhopnrelation} and the chain rule to conclude that
\begin{align} \label{E:enthalpypnrelation}
	\frac{dp}{d \sqrt{\sigma}} = n.
\end{align}
In the above formula we are viewing $p$ as a function of $\sigma.$

We also introduce the following one form:
\begin{align} \label{E:betadef}
	\beta_{\mu} \eqdef - \sqrt{\sigma} u_{\mu}, && (\mu = 0,1,2,3).
\end{align}
The \emph{fluid vorticity} $v$ is then defined to be $d \beta,$ where $d$ denotes the exterior derivative operator. In local coordinates, we have that

\begin{align}
	v_{\mu \nu} \eqdef \partial_{\mu} \beta_{\nu} - \partial_{\nu} \beta_{\mu}, && (\mu, \nu = 0,1,2,3). 
\end{align}

An \emph{irrotational fluid} is defined to be a fluid for which $v_{\mu \nu}$ vanishes everywhere. In this case, by 
Poincar\'{e}'s lemma, there \emph{locally} exists (see Remark \ref{R:Phiremark}) a scalar function $\Phi$ such that
\begin{align} \label{E:betaPhirelation}
	\beta_{\mu} = \partial_{\mu} \Phi, && (\mu = 0,1,2,3),
\end{align}
which implies that the four velocity is connected to $\partial \Phi$ via the equation

\begin{align} \label{E:uintermsofPhi}
	u_{\mu} & = - \sigma^{-1/2} \partial_{\mu} \Phi, && (\mu = 0,1,2,3),
\end{align}
and the enthalpy per particle is connected to $\partial \Phi$ via 

\begin{align} \label{E:sigmadef}
	\sigma = - (D^{\alpha} \Phi)(D_{\alpha} \Phi) = - g^{\alpha \beta} (D_{\alpha}\Phi) (D_{\beta} \Phi).
\end{align}

We now show that under the assumption of irrotationality, the equations \eqref{E:Euler} reduce to
a single quasilinear wave equation for $\Phi.$ We begin by identifying the Lagrangian for the purported wave equation with $p:$

\begin{align} \label{E:Lequalsp}
	\mathscr{L} = \mathscr{L}(\sigma) = p. 
\end{align}
From \eqref{E:enthalpypnrelation} and \eqref{E:Lequalsp}, it follows that
\begin{align} \label{E:Lsigmanrelation}
	2 \frac{\partial \mathscr{L}}{\partial \sigma} & = \frac{n}{\sqrt{\sigma}} = \frac{n^2}{\rho + p}.
\end{align}

We also recall that the Euler-Lagrange equation for $\Phi$ corresponding to the Lagrangian \eqref{E:Lequalsp} is
\begin{align} \label{E:ELfluid}
	D_{\alpha} \bigg[\frac{\partial \mathscr{L}}{\partial \sigma} D^{\alpha} \Phi \bigg] = 0.
\end{align}
Using \eqref{E:uintermsofPhi} and \eqref{E:Lsigmanrelation}, we conclude that for an irrotational fluid, \eqref{E:ELfluid}
is equivalent to the continuity equation $D_{\nu}(n u^{\nu}) = 0$ from \eqref{E:Eulernumberdensity}. 

To show that the remaining equations \eqref{E:Euler} follow from \eqref{E:ELfluid}, we first
recall that the energy-momentum tensor for a Lagrangian scalar-field theory can be expressed as

\begin{align} \label{E:Tlower}
	T_{\mu \nu}^{(scalar)} & = -2 \frac{\partial \mathscr{L}}{ \partial g^{\mu \nu}} + g_{\mu \nu} \mathscr{L},
\end{align}	
\noindent and that if $\partial \Phi$ is a solution to \eqref{E:ELfluid}, then $T_{\mu \nu}^{(scalar)}$ is symmetric
and divergence free:
\begin{align}
	T^{\mu \nu} & = T^{\nu \mu}, && (\mu, \nu = 0,1,2,3), \\
	D_{\nu} T_{(scalar)}^{\mu \nu} & = 0, && (\mu = 0,1,2,3).
\end{align}

For future use, we remark that if $\partial \Phi$ is a solution to the inhomogeneous equation

\begin{align} \label{E:ELfluidinhomogeneous}
	D_{\alpha} \bigg[\frac{\partial \mathscr{L}}{\partial \sigma} D^{\alpha} \Phi \bigg] + I_{\partial \Phi} = 0,
\end{align}
then

\begin{align} \label{E:Inhomogeneousenergymomentumdivergence}
	D_{\nu} T_{(scalar)}^{\mu \nu} = - 2I_{\partial \Phi} D^{\mu} \Phi, && (\mu=0,1,2,3).
\end{align}
In the case of the Lagrangian \eqref{E:Lequalsp}, one can check that \eqref{E:Tlower} is equivalent to

\begin{align} \label{E:Tlowerequivalent}
	T_{\mu \nu}^{(scalar)} \eqdef  2 \frac{\partial \mathscr{L}}{ \partial \sigma} 
	(\partial_{\mu} \Phi)(\partial_{\nu} \Phi) + g_{\mu \nu} \mathscr{L}.
\end{align}	
Using \eqref{E:uintermsofPhi}, \eqref{E:Lequalsp}, \eqref{E:Tlower}, and \eqref{E:Tlowerequivalent}, we have that 
\begin{align}
	T_{\mu \nu}^{(scalar)} = \frac{n}{\sqrt{\sigma}}(\partial_{\mu} \Phi)(\partial_{\nu} \Phi) + g_{\mu \nu} p
		= (\rho + p)u_{\mu} u_{\nu} + p g_{\mu \nu}.
\end{align}

By examining \eqref{E:Tlowerfluid}, we observe that $T_{\mu \nu}^{(scalar)} = T_{\mu \nu}^{(fluid)}.$ To summarize, we have shown that if $\partial \Phi$ is a solution to \eqref{E:ELfluid}, and if $p, n, u,$ and $\rho$ are defined through $\partial \Phi$ via equations \eqref{E:Lequalsp}, \eqref{E:enthalpypnrelation}, \eqref{E:uintermsofPhi}, and \eqref{E:Lsigmanrelation} respectively, then it follows all 5 equations from \eqref{E:Euler} - \eqref{E:Eulernumberdensity} are necessarily satisfied. Thus, it follows that for an irrotational fluid, \emph{the entire content of the Euler equations is contained in the single scalar equation} \eqref{E:ELfluid}.

We conclude with a summary of the constraints that $\partial \Phi$ and $\mathscr{L}(\sigma)$ must satisfy in order to have an irrotational fluids interpretation. First, since $\sigma \geq 0,$ \eqref{E:sigmadef} implies that 

\begin{align}
	- g^{\alpha \beta}(\partial_{\alpha} \Phi)(\partial_{\beta} \Phi) \geq 0.
\end{align}
Similarly, since $p$ is non-negative, \eqref{E:enthalpypnrelation} implies that

\begin{align}
	\mathscr{L}(\sigma) \geq 0.
\end{align}
Next, \eqref{E:enthalpypnrelation} and \eqref{E:Lequalsp} together imply that $n = \frac{d \mathscr{L}}{d \sqrt{\sigma}},$ so that the requirement $n \geq 0$ is equivalent to

\begin{align}
	\frac{d \mathscr{L}}{d \sqrt{\sigma}} \geq 0.
\end{align}
Furthermore, since \eqref{E:rhopnrelation} implies that
$\frac{d}{d \sqrt{\sigma}}(\mathscr{L}/\sigma) = \sqrt{\sigma} \frac{d \mathscr{L}}{d \sqrt{\sigma}} - p = \rho,$ and
since $\rho \geq 0,$ we must have 

\begin{align}
	\frac{d}{d \sqrt{\sigma}}\Big(\frac{\mathscr{L}}{\sigma}\Big) \geq 0.
\end{align}
Finally, since \eqref{E:EOSAssumptions} implies that $\frac{dp}{dn} > 0,$ and since \eqref{E:enthalpypnrelation} and the chain rule imply that $\frac{dp}{dn} = \frac{d \mathscr{L}}{d \sqrt{\sigma}} / \frac{d n}{d \sqrt{\sigma}}
= n / \Big( \frac{d}{d \sqrt{\sigma}} (\frac{d \mathscr{L}}{d \sqrt{\sigma}}) \Big),$
we must have

\begin{align}
	\frac{d^2 \mathscr{L}}{d \sqrt{\sigma}^2} > 0.
\end{align}

\subsection{The initial value problem for the irrotational Euler-Einstein system} \label{SS:IVP}

In this section, we discuss various aspects of the initial value problem for the Einstein equations, including the initial data and the notion of the \emph{maximal globally hyperbolic development} of the data. We assume that we are given a Lagrangian $\mathscr{L} = \mathscr{L}(\sigma)$ that is subject to the constraints discussed at the end of the previous section;
this is equivalent to fixing an equation of state $p=p(\rho).$ We remark that the discussion in this section is very standard, and we provide it only for convenience.

\subsubsection{Summary of the irrotational Euler-Einstein system}

We first summarize the results of the previous sections by stating that the irrotational Euler-Einstein system
is the following system of equations:

\begin{subequations}
\begin{align} 
	R_{\mu \nu} - \frac{1}{2}R g_{\mu \nu} + \Lambda g_{\mu \nu} & = T_{\mu \nu}^{(scalar)}, && (\mu, \nu =0,1,2,3), 
		\label{E:EinsteinFieldsummary} \\
	D_{\alpha} \bigg[ \frac{\partial \mathscr{L}}{\partial \sigma} D^{\alpha} \Phi \bigg] & = 0,  && 
		\label{E:FluidEquationSummary}
\end{align}
\end{subequations}
where $\mathscr{L} = \mathscr{L}(\sigma),$ $\sigma = - (D^{\alpha} \Phi)(D_{\alpha} \Phi),$ and
$T_{\mu \nu}^{(scalar)} = 2 \frac{\partial \mathscr{L}}{ \partial \sigma} (\partial_{\mu} \Phi)(\partial_{\nu} \Phi) + g_{\mu \nu} \mathscr{L}.$

\subsubsection{Initial data for the irrotational Euler-Einstein system} \label{SSS:InitialDataOriginalSystem}

The initial value problem formulation of the Einstein equations goes back to the seminal work \cite{cB1952} by Choquet-Bruhat. Initial data for the system \eqref{E:EinsteinFieldsummary} - \eqref{E:FluidEquationSummary} consist of a $3-$dimensional manifold $\Sigma$ together with the following fields on $\Sigma:$ a Riemannian metric $\bar{g},$ a covariant two-tensor $\bar{K},$ and functions $\bard \mathring{\Phi}, \mathring{\Psi}.$ By $\bard \mathring{\Phi},$ we mean that
these three components, which are required to be globally defined on $\Sigma,$ should be the spatial gradient of 
functions $\Phi$ that are locally defined on $\Sigma$ (see Remark \ref{R:Phiremark}).

It is well-known that one cannot consider arbitrary data for the Einstein equations. The data are in fact subject to the following constraints, where $\bar{D}$ is the covariant derivative induced by $\bar{g}:$

\begin{subequations}
\begin{align}
	\bar{R} - \bar{K}_{ab} \bar{K}^{ab} + (\bar{g}^{ab} \bar{K}_{ab})^2 & = 2T_{00}^{(scalar)}|_{\Sigma}, && \label{E:Gauss} \\
	\bar{D}^a \bar{K}_{aj} - \bar{g}^{ab}  \bar{D}_j \bar{K}_{ab}  & = 
		T_{0j}^{(scalar)}|_{\Sigma}, && (j=1,2,3). \label{E:Codazzi}
\end{align}
\end{subequations}
We remark that in the case of the equation of state $p = \speed^2 \rho,$ the calculations in Section \ref{SS:Calculations} will imply that $T_{00}^{(scalar)}|_{\Sigma} = 2\mathring{\sigma}^s \mathring{\Psi}^2 - \frac{s}{s+1} \mathring{\sigma}^{s+1}$ and 
$T_{0j}^{(scalar)}|_{\Sigma}=2 \mathring{\sigma}^s \mathring{\Psi} \partial_j \mathring{\Phi},$ where
$\mathring{\sigma} = \mathring{\Psi}^2 - \bar{g}^{ab} (\partial_a \mathring{\Phi})(\partial_b \mathring{\Phi}).$

The constraints \eqref{E:Gauss} - \eqref{E:Codazzi} are manifestations of the \emph{Gauss} and \emph{Codazzi} equations
respectively. These equations relate the geometry of the ambient Lorentzian spacetime $(\mathcal{M},g)$ (which has to be constructed) to the geometry inherited by an embedded Riemannian hypersurface (which will be $(\Sigma,\bar{g})$ after construction). Without providing the rather standard details (see e.g. \cite{dC2008}), we remark that they are consequences of the following assumptions:
\begin{itemize}
	\item $\Sigma$ is a submanifold of the spacetime manifold $\mathcal{M}$
	\item $\bar{g}$ is the first fundamental form of $\Sigma$
	\item $\bar{K}$ is the second fundamental form of $\Sigma$
	\item The irrotational Euler-Einstein system is satisfied along $\Sigma$
	\item We are using a coordinate system $(x^0 =t, x^1, x^2, x^3)$ on $\mathcal{M}$ such that $\Sigma = 
		\lbrace x \in \mathcal{M} \ | \ t = 0 \rbrace,$ 
	and along $\Sigma,$ $g_{00} = -1,$ $g_{0j} = 0,$ $g_{jk} = \bar{g}_{jk},$ $\partial_t g_{jk} = 2\bar{K}_{jk},$ 
	$(j,k=1,2,3),$ $\bard \Phi = \bard \mathring{\Phi},$ and $\partial_t \Phi = \mathring{\Psi}.$
\end{itemize}

We recall that $\bar{g}$ is the Riemann metric on $\Sigma$ defined by
\begin{align}
	\bar{g}|_x(X,Y) & = g|_x(X,Y),  & & \forall X,Y \in T_x \Sigma,
\end{align}
and that $\bar{K}$ is the symmetric tensorfield on $\Sigma$ that can be expressed as

\begin{align}
	\bar{K}|_x(X,Y) & = g|_x(D_X \hat{N},Y) = g|_x(D_Y \hat{N},X) = \frac{1}{2} (D_{\hat{N}} g)|_x(X,Y), 
	& & \forall X,Y \in T_x \Sigma,
\end{align}
where $\hat{N}$ is the future-directed normal\footnote{Under the above assumptions, it follows that at every point $x \in \Sigma,$ $\hat{N}^{\mu} = (1,0,0,0).$} to $\Sigma$ at $x,$ and $D$ is the covariant derivative induced by
$g.$

\subsubsection{The definition of a solution to the irrotational Euler-Einstein system}
In this section, we define the notion of a solution to the Einstein equations. We begin with the following definition, which describes the maximal region in which a solution is determined by its values on a set $S.$

\begin{definition}
Given any set $S \in \mathcal{M},$ we define $\mathcal{D}(S),$ the Cauchy development of $S,$ to be the union
$\mathcal{D}(S) = \mathcal{D}^+(S) \cup \mathcal{D}^-(S),$ where $\mathcal{D}^+(S)$ is the set of all points $p \in \mathcal{M}$ such that every past inextendible causal curve through $p$ intersects $S,$ and $\mathcal{D}^-(S)$ 
is the set of all points $p \in \mathcal{M}$ such that every future inextendible\footnote{A curve $\gamma :[s_0, s_{max}) \rightarrow \mathcal{M}$ is said to be \emph{future-inextendible} if $\lim_{s \to s_{max}}$ does not exist. Past inextendibility is defined in an analogous manner.} causal curve through $p$ intersects $S.$
\end{definition}

We also rigorously define a \emph{Cauchy hypersurface}:

\begin{definition}
	A Cauchy hypersurface in a Lorentzian manifold $\mathcal{M}$ is a hypersurface that is intersected exactly once by every 	
	inextendible timelike curve in $\mathcal{M}.$ 
\end{definition}

It is well-known that if $\Sigma \subset \mathcal{M}$ is a Cauchy hypersurface, then $\mathcal{D}(\Sigma) = \mathcal{M}$ (see e.g. \cite{bO1983}).

Given sufficiently smooth initial data $(\Sigma, \bar{g}, \bar{K}, \bard \mathring{\Phi}, \mathring{\Psi})$ as described in 
Section \ref{SSS:InitialDataOriginalSystem}, a (classical) solution to the system \eqref{E:EinsteinFieldsummary} - \eqref{E:FluidEquationSummary} is a $4$ dimensional manifold $\mathcal{M},$ a Lorentzian metric $g,$ the spacetime derivatives
$\partial \Phi$ of a locally defined function $\Phi$ (see Remark \ref{R:Phiremark}), and an embedding\footnote{Here, we suppress the embedding $\iota: \Sigma \rightarrow \mathcal{M},$ and identify $\Sigma$ with $\iota(\Sigma).$} $\Sigma \subset \mathcal{M}$ subject to the following conditions:

	\begin{itemize}
		\item $g$ is a $C^2$ tensorfield and and $\partial \Phi$ is a $C^1$ tensorfield
		\item $\Sigma$ is a Cauchy hypersurface in $(\mathcal{M},g)$
		\item $\bar{g}$ is the first fundamental form of $\Sigma$
		\item $\bar{K}$ is the second fundamental form of $\Sigma$ 
		\item $D_{\hat{N}} \Phi = \mathring{\Psi},$ where $\hat{N}$ is the future-directed 
			normal to $\Sigma.$
	\end{itemize}
	The triple $(\mathcal{M}, g, \partial \Phi)$ is called a \emph{globally hyperbolic development} of the initial data.

\subsubsection{Maximal globally hyperbolic development}

In this section, we state a fundamental abstract existence result of Choquet-Bruhat and Geroch \cite{cBgR1969}, which states that for initial data of \emph{arbitrary} regularity, there is a unique ``largest'' spacetime determined by it. The following definition captures the notion of this ``largest'' spacetime. 

\begin{definition}
	Given initial data for the irrotational Euler-Einstein system \eqref{E:EinsteinFieldsummary} - \eqref{E:FluidEquationSummary}, a \emph{maximal globally hyperbolic development} of the data is a globally hyperbolic development $(\mathcal{M},g,\partial \Phi)$ together with an embedding 
$\iota: \Sigma \rightarrow \mathcal{M}$ with the following property: if $(\mathcal{M}',g',\partial \Phi')$ is any other globally hyperbolic development of the same data with embedding $\iota': \Sigma \rightarrow \mathcal{M}',$ then there is a map
$\psi: \mathcal{M}' \rightarrow \mathcal{M}$ that is a diffeomorphism onto its image such that 
$\psi^* g = g', \psi^* \partial \Phi = \partial \Phi',$ and $\psi \circ \iota' = \iota.$ Here, $\psi^*$ denotes the pullback by
$\psi.$
\end{definition}

Before we can state the theorem, we also need the following definition, which captures the notion of having two different representations of the same spacetime.

\begin{definition}
The developments $(\mathcal{M},g,\partial \Phi)$ and $(\mathcal{M}',g',\partial \Phi')$ are said to be \emph{isometrically isomorphic} if the map $\psi$ from the previous definition is a diffeomorphism from $\mathcal{M}$ to $\mathcal{M}'.$
\end{definition}

We now state the aforementioned theorem. 

\begin{theorem} (Choquet-Bruhat and Geroch, 1969)
Given any initial data for the irrotational Euler-Einstein system \eqref{E:EinsteinFieldsummary} - \eqref{E:FluidEquationSummary}, there is a maximal globally hyperbolic development of the data which is unique up to isometry. 
\end{theorem}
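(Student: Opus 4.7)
The plan is to follow the classical Choquet-Bruhat--Geroch argument, which rests on Zorn's lemma applied to the collection of globally hyperbolic developments. The starting point is a \emph{local existence} result for the reduced system obtained by imposing the wave coordinate condition $\Gamma^\mu = \widetilde{\Gamma}^\mu$: this converts \eqref{E:EinsteinFieldsummary}--\eqref{E:FluidEquationSummary} into a quasilinear system of wave equations for which standard hyperbolic theory yields, from any smooth initial data satisfying the constraints \eqref{E:Gauss}--\eqref{E:Codazzi} and the gauge condition, a short-time globally hyperbolic development. Choquet-Bruhat's 1952 argument, based on the twice-contracted Bianchi identities, shows that the wave coordinate condition is propagated by the flow, so every such solution of the reduced system is genuinely a development of the geometric irrotational Euler-Einstein system.

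Next I would consider the collection $\mathcal{C}$ of all globally hyperbolic developments of the given data, partially ordered by the relation $(\mathcal{M}, g, \partial\Phi, \iota) \preceq (\mathcal{M}', g', \partial\Phi', \iota')$ whenever there exists an isometric embedding $\psi: \mathcal{M} \hookrightarrow \mathcal{M}'$ with $\psi^\ast g' = g$, $\psi^\ast \partial\Phi' = \partial\Phi$, and $\psi \circ \iota = \iota'$. To apply Zorn's lemma one must check that every chain has an upper bound: the obvious candidate is the direct limit of the chain along its embeddings, equipped with the inherited metric and scalar field. The routine checks are that the limit is a smooth Hausdorff manifold, that it satisfies the field equations, and that $\Sigma$ remains a Cauchy hypersurface --- the last because any inextendible timelike curve in the limit must descend to one in some member of the chain, where it meets $\Sigma$ exactly once.

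The core of the proof, and its main obstacle, is the \emph{common extension lemma}: given any two developments $(\mathcal{M}_1, g_1, \partial\Phi_1)$ and $(\mathcal{M}_2, g_2, \partial\Phi_2)$ of the same data, there exists a third development into which both embed isometrically. The strategy is to consider pairs $(U, \psi)$ consisting of an open neighborhood $U \subset \mathcal{M}_1$ of $\iota_1(\Sigma)$ together with an isometric embedding $\psi: U \to \mathcal{M}_2$ compatible with the initial embeddings, partially ordered by extension, and invoke Zorn's lemma again to produce a maximal such $(U, \psi)$. Appealing to the local uniqueness theorem for the reduced wave system in wave coordinates adapted to boundary points of $U$, one shows that $\psi$ must extend across the Cauchy horizon of its domain; otherwise maximality is contradicted. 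The output is then the pushout of $\mathcal{M}_1$ and $\mathcal{M}_2$ along the graph of the maximal $\psi$.

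The genuinely hard step is verifying that this pushout manifold is Hausdorff, since a priori two points in $\mathcal{M}_1$ and $\mathcal{M}_2$ might fail to be separated if the isometry $\psi$ could be continued in inequivalent ways past some boundary point. Geroch's key observation is that global hyperbolicity together with the uniqueness of solutions to the reduced system --- which forces any two local extensions of $\psi$ to agree on the overlap of their domains --- rules out such branching. Once the common extension lemma is established, Zorn's lemma applied to $(\mathcal{C}, \preceq)$ produces a maximal element, and uniqueness of this maximal element up to isometric isomorphism is automatic: any two maximal developments would each embed into their common extension, forcing both embeddings to be global diffeomorphisms onto the extension.
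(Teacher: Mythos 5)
The paper does not prove this theorem: it is quoted as a black-box result with a citation to Choquet-Bruhat and Geroch \cite{cBgR1969}, so there is no in-paper argument to compare against. Your sketch is a faithful reconstruction of the classical proof: local existence and gauge propagation for the reduced system, Zorn's lemma on the collection of globally hyperbolic developments, the common-extension lemma via a second Zorn argument on partial isometries, and the Hausdorff verification for the pushout. You have correctly identified the Hausdorff step as the genuine difficulty; it is worth knowing that this is precisely where the original 1969 argument was later found to need repair (the set on which two extensions of $\psi$ agree must be shown both open and closed in a suitable boundary portion, and the ``corner'' points of $\partial U$ require a separate treatment — see Sbierski's 2016 reworking, which also eliminates Zorn's lemma entirely). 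Two smaller caveats specific to this matter model: first, the fluid equation \eqref{E:FluidEquationSummary} is only hyperbolic where $\sigma > 0$ (the reciprocal acoustical metric degenerates as $\sigma \to 0$), so ``any initial data'' must implicitly mean data with $\mathring{\sigma} > 0$ and sufficient regularity for the coupled reduced system to admit the standard local theory; second, since $\Phi$ is only locally defined (Remark \ref{R:Phiremark}), the isometric embeddings in your partial order should be required to pull back $\partial\Phi'$ to $\partial\Phi$ as a one-form rather than matching potentials, exactly as the paper's definition of isometric isomorphism does. Neither point disturbs the architecture of your argument.
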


\subsection{Calculations for the equation of state \texorpdfstring{$p= \speed^2 \rho$}{}} \label{SS:Calculations}

For the remainder of this article, we restrict our attention to equations of state of the form

\begin{align} \label{E:EOS}
	p = \speed^2 \rho,
\end{align}
where by equation \eqref{E:SpeedofSound}, $\speed$ is the \emph{speed of sound}. As mentioned in the introduction to this article, our stability results are limited to the following parameter range:

\begin{align}
	0 < \speed < \sqrt{\frac{1}{3}}.
\end{align}
Equations \eqref{E:rhopnrelation}, \eqref{E:Lequalsp}, \eqref{E:Lsigmanrelation}, and \eqref{E:EOS} imply that there
exist constants $C > 0$ and $\widetilde{C} > 0$ such that
\begin{align}
	p & = C n^{1 + \speed^2} = \widetilde{C} \sigma^{s + 1},
\end{align}
where	

\begin{align} \label{E:sArelations}
	s & = \frac{1 - \speed^2}{2\speed^2}, && \speed^2 = \frac{1}{2s + 1}.
\end{align}
Choosing a convenient normalization constant, it follows that under the equation of state \eqref{E:EOS}, the Lagrangian \eqref{E:Lequalsp} is given by
\begin{align} \label{E:Lintermsofsigma}
	\mathscr{L} = (s + 1)^{-1} \sigma^{s + 1}.
\end{align}
Furthermore, in the case of the Lagrangian \eqref{E:Lintermsofsigma}, the Euler-Lagrange equation \eqref{E:ELfluid}
is

\begin{align} \label{E:ELsigmas}
	D_{\alpha} (\sigma^{s} D^{\alpha} \Phi) = 0,
\end{align}
while the energy-momentum tensor is given by

\begin{align} \label{E:Tslowerperfectfluid}
	T_{\mu \nu}^{(scalar)} = 2 \sigma^{s} (\partial_{\mu} \Phi) (\partial_{\nu} \Phi) + g_{\mu \nu} (s + 1)^{-1} \sigma^{s + 1}.
\end{align}

For future reference, we record here the following two identities, which follow easily from \eqref{E:Tslowerperfectfluid}:
\begin{align}
	T^{(scalar)} & \eqdef g^{\alpha \beta} T_{\alpha \beta}^{(scalar)} = \bigg[\frac{2(1-s)}{s + 1} \bigg] \sigma^{s + 1}, 
		&& \\
	T_{\mu \nu}^{(scalar)} - \frac{1}{2} T^{(scalar)} g_{\mu \nu} & = 2 \sigma^s (\partial_{\mu} \Phi) (\partial_{\nu} \Phi) 
		+ \Big(\frac{s}{s + 1}\Big) \sigma^{s + 1}  g_{\mu \nu}, && (\mu,\nu = 0,1,2,3).
		\label{E:TmunuminushalfTgmunu}
\end{align}

\subsubsection{Summary of the irrotational Euler-Einstein system under the equation of state $p=\speed^2 \rho$}

To summarize, we note that under the equation of state $p = \speed^2 \rho,$ the irrotational Euler-Einstein system comprises the equations

\begin{subequations}
\begin{align} \label{E:EulerEinsteinIrrotational}
	R_{\mu \nu} - \Lambda g_{\mu \nu} - T_{\mu \nu}^{(scalar)} + \frac{1}{2}T^{(scalar)} g_{\mu \nu} = 0, 
		&& (\mu, \nu = 0,1,2,3),
\end{align}
where $T_{\mu \nu}^{(scalar)} = 2 \sigma^{s} (\partial_{\mu} \Phi) (\partial_{\nu} \Phi) + g_{\mu \nu} (s + 1)^{-1} \sigma^{s + 1}$ is as in \eqref{E:Tslowerperfectfluid}, together with \eqref{E:ELsigmas}, the equation of motion for an irrotational fluid: 

\begin{align} \label{E:irrotationalfluid}
	D_{\alpha} (\sigma^{s} D^{\alpha} \Phi) = 0.
\end{align}
\end{subequations}

\section{Background Solutions} \label{S:backgroundsolution}

Our main results concern the future-stability (with respect to irrotational perturbations) of a class of background solutions $([0,\infty) \times \mathbb{T}^3,\widetilde{g}, \partial \widetilde{\phi})$ to the system \eqref{E:EulerEinsteinIrrotational} - \eqref{E:irrotationalfluid}. These background solutions, which are of 
FLRW type\footnote{Technically, the term ``FLRW'' is usually reserved for a class of solutions that have spatial slices diffeomorphic to $\mathbb{S}^3,$ $\mathbb{R}^3,$ or hyperbolic space (see e.g. \cite{rW1984}).}, physically represent the evolution of an initially \emph{uniform} \emph{quiet} fluid in a spacetime that is undergoing rapid expansion. To find these solutions, we follow a procedure outlined in chapter 5 of \cite{rW1984} which, under appropriate ansatzes, reduces the Euler-Einstein equations to ODEs. Although our goal is to find ODE solutions to the irrotational equations \eqref{E:EulerEinsteinIrrotational} - \eqref{E:irrotationalfluid}, the procedure we follow will produce ODE solutions
to the full Euler-Einstein system \eqref{E:EinsteinField} $+$ \eqref{E:Tlowerfluid}. However, these ODE solutions will turn out to be irrotational. Thus, as discussed in Section \ref{SS:IrrotationalFluids}, these solutions can also be interpreted as solutions to the irrotational system. We remark that the derivation of these solutions is very well-known, and that we have provided it only for convenience.

\subsection{Derivation of the background solution}

To proceed, we first make the ansatz that the background metric $\widetilde{g}$ is of the form

\begin{align} \label{E:backgroundmetricform}
	\widetilde{g} = -dt^2 + a^2(t) \sum_{i=1}^3 (dx^i)^2,
\end{align}
from which it follows that the only corresponding non-zero Christoffel symbols are
\begin{align} \label{E:BackgroundChristoffel}
	\widetilde{\Gamma}_{j \ k}^{\ 0} = \widetilde{\Gamma}_{k \ j}^{\ 0}  = a \dot{a} \delta_{jk}, 
		&& \widetilde{\Gamma}_{j \ 0}^{\ k} = \widetilde{\Gamma}_{0 \ j}^{\ k}  = \omega \delta_j^k, && (j,k=1,2,3), 
\end{align}
where 

\begin{align}
	\omega \eqdef \frac{\dot{a}}{a} \label{E:omegadef},
\end{align}
and $\dot{a} \eqdef \frac{d}{dt} a.$ Using definitions \eqref{E:Riccidef} and \eqref{E:Rdef}, together with \eqref{E:BackgroundChristoffel}, we compute that

\begin{subequations}
\begin{align} \label{E:BackgroundEinsteintensor}
	\widetilde{R}_{00} - \frac{1}{2} \widetilde{R} \widetilde{g}_{00} & = 3 \Big(\frac{\dot{a}}{a}\Big)^2, && \\
	\widetilde{R}_{0j} - \frac{1}{2} \widetilde{R} \widetilde{g}_{0j} & = 0,  && (j=1,2,3),\\
	\widetilde{R}_{jk} - \frac{1}{2} \widetilde{R} \widetilde{g}_{jk} & = - (2a \ddot{a} + \dot{a}^2)\delta_{jk},
		&& (j,k=1,2,3). 
\end{align}
\end{subequations}
We then assume that $\widetilde{\rho} = \widetilde{\rho}(t), \widetilde{p} = \widetilde{p}(t),$ and 
$\widetilde{u}^{\mu} \equiv (1,0,0,0),$ which implies that 
$\partial \widetilde{\Phi} = (\partial_t \widetilde{\Phi}(t),0,0,0).$ We also assume that the equation of state
\eqref{E:EOS} holds. Inserting these ansatzes into the Einstein equations \eqref{E:EinsteinField} $+$ \eqref{E:Tlowerfluid}, we deduce (as in \cite{rW1984}) the following equations, which are known as the \emph{Friedmann equations} in the cosmology
literature:

\begin{subequations}
\begin{align}
	\widetilde{\rho} a^{3(1 + \speed^2)} & \equiv \bar{\rho}  \mathring{a}^{3(1 + \speed^2)} \eqdef \mathring{\kappa},
		\label{E:backgroundrhoafact} \\
	\dot{a} = a \sqrt{\frac{\Lambda}{3} + \frac{\widetilde{\rho}}{3}} & = a \sqrt{\frac{\Lambda}{3} 
		+ \frac{\mathring{\kappa}}{3a^{3(1 + \speed^2)}}}, \label{E:backgroundaequation}
\end{align} 
\end{subequations}
where the \emph{positive} constant $\bar{\rho}$ denotes the initial (uniform) energy density, and
the \emph{positive} constant $\mathring{a}$ is defined by $\mathring{a} \eqdef a(0).$ Observe that the rapid expansion of the background spacetime can be easily deduced from the ODE \eqref{E:backgroundaequation}, which suggests that the asymptotic behavior $a(t) \sim e^{Ht}, H \eqdef \sqrt{\Lambda/3}.$ A more detailed analysis of $a(t)$ is given in Lemma \ref{L:backgroundaoftestimate}.

With $\widetilde{\sigma} \eqdef - \widetilde{g}^{\alpha \beta} (\partial_{\alpha} \widetilde{\Phi})(\partial_{\beta} \widetilde{\Phi}) = (\partial_t \widetilde{\Phi})^2,$ we use \eqref{E:Lequalsp}, \eqref{E:sArelations}, \eqref{E:Lintermsofsigma}, and \eqref{E:backgroundrhoafact} to compute that

\begin{align} \label{E:firstwidetildePhiarelation}
	\speed^2 \mathring{\kappa} a^{-3(1 + \speed^2)} = \speed^2 \widetilde{\rho} = \widetilde{p} = \frac{2\speed^2}{1 + \speed^2} 
		\widetilde{\sigma}^{(1+\speed^2)/(2\speed^2)} = \frac{2\speed^2}{1 + \speed^2} (\partial_t \widetilde{\Phi})^{(1+\speed^2)/(\speed^2)}.
\end{align}
The equalities in \eqref{E:firstwidetildePhiarelation} imply that
\begin{align} \label{E:partialtwidetildePhiisolated}
	\partial_t \widetilde{\Phi} = \bar{\Psi} a^{-3/(2s + 1)}
		= \bar{\Psi} e^{-\decayparameter \Omega},
\end{align}
where

\begin{align}
	\bar{\Psi} &  \eqdef \Big(\bar{\rho} \frac{s+1}{2s+1}\Big)^{1/(2s + 2)} \mathring{a}^\decayparameter, \label{E:barPsidef} \\
	\Omega(t) & \eqdef \ln \Big(a(t)\Big), \label{E:BigOmega} \\
	\decayparameter & \eqdef \frac{3}{2s + 1} = 3 \speed^2.
\end{align}

\begin{remark}
	$\Omega(t)$ has been introduced solely for cosmetic purposes.
\end{remark}

For future use, we note the following consequences of the above discussion:

\begin{subequations}
\begin{align}
	3 \omega^2 - \Lambda & = \widetilde{\rho} = \mathring{\kappa} e^{-2\decayparameter(s+1) \Omega}
		= \Big(\frac{2s+1}{s+1}\Big) \widetilde{\sigma}^{s+1}, \label{E:3omegaSquaredminusLambdaidentity} \\
	- 3 \dot{\omega} & =  3 \omega^2 - \Lambda +  \mathring{\kappa} \Big(\frac{s+2}{2s + 1} \Big) e^{-2\decayparameter(s+1) 
		\Omega} = 3 \mathring{\kappa} \Big(\frac{s+1}{2s + 1} \Big) e^{-2\decayparameter(s+1) \Omega}
		= 3 \widetilde{\sigma}^{s+1}. \label{E:omegadotidentity}
\end{align}
\end{subequations}

\subsection{Analysis of Friedmann's equation}

In the following lemma, we analyze the asymptotic behavior of solutions to the ODE \eqref{E:backgroundaequation}.

\begin{lemma} \label{L:backgroundaoftestimate}
	Let $\mathring{a}, \mathring{\kappa}, P > 0,$ and let $a(t)$ be the solution to the following ODE: 
	\begin{align}
		\frac{d}{dt}a & = a \sqrt{\frac{\Lambda}{3} + \frac{\mathring{\kappa}}{3a^{P}}}, && a(0) = \mathring{a}.
	\end{align}
	Then with $H \eqdef \sqrt{\Lambda/3},$ the solution $a(t)$ is given by
	
	\begin{align}
		a(t) & = \bigg\lbrace\mbox{sinh}\Big(\frac{P H t}{2}\Big) 
			\sqrt{\frac{\mathring{\kappa}}{3H^2} + \mathring{a}^{P}}
		 	+ \mathring{a}^{P/2} \mbox{cosh}\Big(\frac{P H t}{2} \Big) \bigg\rbrace^{2/P}, 
	\end{align}
	and for all integers $N \geq 0,$ there exists a constant $C_N > 0$ such that for all $t \geq 0,$ with \\
	$A \eqdef \bigg\lbrace \frac{1}{2} \Big(\sqrt{\frac{\mathring{\kappa}}{3H^2} + \mathring{a}^{P}} + 
	\mathring{a}^{P/2} \Big) \bigg\rbrace^{2/P},$
	we have that
	
	\begin{subequations}
	\begin{align}
		(1/2)^{2/P} \mathring{a} e^{Ht} \leq a(t) & \leq A e^{Ht}, \\
		\Big| e^{-Ht} \frac{d^N}{dt^N}a(t) - A H^N \Big| & \leq C_Ne^{-PHt}.
	\end{align}
	\end{subequations}
	
	Furthermore, for all integers $N \geq 0,$ there exists a constant $\widetilde{C}_N > 0$ such that
	for all $t \geq 0,$ with 
	\begin{align} 
		\omega \eqdef \frac{\dot{a}}{a}, 
	\end{align}	
	 we have that
	
	\begin{subequations}
	\begin{align}
		H \leq \omega(t) & \leq \sqrt{H^2 + \frac{\mathring{\kappa}}{3\mathring{a}^{P}}}, \\
		\Big| \frac{d^N}{dt^N}\big(\omega(t) - H\big) \Big| & \leq \widetilde{C}_N e^{-P Ht}.
	\end{align}
	\end{subequations}
\end{lemma}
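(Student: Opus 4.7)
The plan is to solve the ODE explicitly via a change of variables and then extract asymptotics directly from the closed-form solution. Setting $u(t) \eqdef a(t)^{P/2}$ and using $\dot{a} = a\sqrt{H^2 + \mathring{\kappa}/(3a^P)}$ (with $H = \sqrt{\Lambda/3}$), the chain rule gives
\begin{align*}
\dot{u} = \tfrac{P}{2} a^{P/2-1} \dot{a} = \tfrac{P}{2}\sqrt{H^2 u^2 + \tfrac{\mathring{\kappa}}{3}}.
\end{align*}
This ODE separates; substituting $u = H^{-1}\sqrt{\mathring{\kappa}/3}\,\sinh\theta$ converts it into $\dot\theta = PH/2$, so $u(t) = H^{-1}\sqrt{\mathring{\kappa}/3}\,\sinh(PHt/2 + C)$ for an integration constant $C$. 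Expanding via the addition formula and matching the initial condition $u(0) = \mathring{a}^{P/2}$ pins down $\sinh C = \mathring{a}^{P/2}\sqrt{3H^2/\mathring{\kappa}}$ and $\cosh C = \sqrt{1 + 3H^2\mathring{a}^P/\mathring{\kappa}}$, and a brief computation recovers exactly the formula for $a(t) = u(t)^{2/P}$ stated in the lemma.

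Next I would derive the pointwise bounds on $a(t)$ directly from the explicit expression. Writing $\sinh x = (e^x - e^{-x})/2$, $\cosh x = (e^x + e^{-x})/2,$ the quantity $u(t)$ takes the form $B e^{PHt/2} + B' e^{-PHt/2}$ with $B = \frac{1}{2}\big(\mathring{a}^{P/2} + \sqrt{\mathring{\kappa}/(3H^2) + \mathring{a}^P}\big) > 0$ and $B' = \frac{1}{2}\big(\mathring{a}^{P/2} - \sqrt{\mathring{\kappa}/(3H^2) + \mathring{a}^P}\big) \leq 0$ with $|B'| \le B$. Dropping the negative $B' e^{-PHt/2}$ term yields the upper bound $u(t) \leq B e^{PHt/2}$, hence $a(t) \leq A e^{Ht}$ with $A = B^{2/P}$; the lower bound $a(t) \geq (1/2)^{2/P}\mathring{a} e^{Ht}$ comes from keeping just the $\mathring{a}^{P/2}\cosh(PHt/2) \geq \tfrac12 \mathring{a}^{P/2} e^{PHt/2}$ piece and raising to the $2/P$ power.

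For the derivative estimates I would write
\begin{align*}
a(t) = A e^{Ht}\, F(t), \qquad F(t) \eqdef \bigl(1 + (B'/B)e^{-PHt}\bigr)^{2/P},
\end{align*}
valid for all $t \geq 0$ (one can check $|B'/B| < 1$ so there is no singularity). Leibniz gives
\begin{align*}
e^{-Ht}\tfrac{d^N a}{dt^N} = A\sum_{k=0}^{N}\binom{N}{k} H^{N-k} F^{(k)}(t),
\end{align*}
and since $F(t) - 1 = O(e^{-PHt})$ and each $F^{(k)}$ for $k \geq 1$ inherits a factor $e^{-PHt}$ (by Faà di Bruno or a direct induction on $N$, noting that differentiating the inner function $(B'/B)e^{-PHt}$ reproduces a factor $e^{-PHt}$), one obtains $\big|e^{-Ht}\tfrac{d^N a}{dt^N} - AH^N\big| \leq C_N e^{-PHt}.$

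Finally, the estimates on $\omega$ follow with essentially no extra work. The ODE itself states $\omega(t) = \sqrt{H^2 + \mathring{\kappa}/(3 a(t)^P)}$, so since $a$ is strictly increasing from $\mathring{a}$, the quantity $\omega$ decreases monotonically from $\omega(0) = \sqrt{H^2 + \mathring{\kappa}/(3\mathring{a}^P)}$ toward $H$, yielding the two-sided bound. Writing $\omega - H = \frac{\mathring{\kappa}/(3a^P)}{\omega + H}$ and differentiating, the derivatives of $\omega - H$ are polynomial expressions in $(\omega+H)^{-1}$ and the derivatives of $a^{-P}$; since $a \geq c e^{Ht}$, each derivative $\frac{d^N}{dt^N} a^{-P}$ decays like $e^{-PHt}$, giving the claimed $\widetilde C_N e^{-PHt}$ estimate. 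There is no real analytic obstacle here; the main bookkeeping issue is organizing the Leibniz/Faà di Bruno expansions cleanly enough to extract the uniform constants $C_N, \widetilde C_N$ for all $N$, which is most easily done by induction on $N$.
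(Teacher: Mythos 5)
Your proposal is correct and complete. The paper itself offers no proof of this lemma (it states only ``We leave the elementary analysis of this ODE to the reader''), and your argument --- substituting $u = a^{P/2}$ to linearize the problem, solving by the hyperbolic substitution to recover the stated closed form, and then reading off the bounds from the decomposition $u(t) = Be^{PHt/2} + B'e^{-PHt/2}$ with $B>0$, $B'\le 0$, $|B'|<B$ --- is exactly the kind of elementary analysis being alluded to. The two points that need care are the ones you already flag: that $F(t) = (1+(B'/B)e^{-PHt})^{2/P}$ stays bounded away from zero on $t\ge 0$ so that Fa\`a di Bruno applies with uniform constants, and that every derivative of the inner exponential reproduces a factor of $e^{-PHt}$, which is what propagates the decay rate through all orders $N$ for both $a$ and $\omega$.
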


\begin{remark}
	Because of equation \eqref{E:backgroundaequation}, we will assume for the remainder of the article that
	$P = 3(1 + \speed^2).$
\end{remark}

\begin{proof}
	We leave the elementary analysis of this ODE to the reader.
\end{proof}

\section{The Modified Irrotational Euler Einstein System} \label{S:ReducedEquations}
In this section, we introduce our version of wave coordinates. We then use algebraic identities that are valid in wave
coordinates to construct a modified version of the irrotational Einstein equations, which are a system of quasilinear wave equations containing energy-dissipative terms. Next, to facilitate our analysis in later sections, we algebraically decompose the modified system into principal terms and error terms.  Finally, we show that the modified system is equivalent to the un-modified system if the Einstein constraint equations and the wave coordinate condition are both satisfied along the Cauchy hypersurface $\Sigma.$

\subsection{Wave coordinates} \label{SS:harmoniccoodinates}
To hyperbolize the Einstein equations, we use a version of the well-known family of \emph{wave coordinate} systems. More specifically, we use a coordinate system in which the contracted Christoffel symbols 
$\Gamma^{\mu} \eqdef g^{\alpha \beta} \Gamma_{\alpha \ \beta}^{\ \mu}$ of the spacetime metric $g$ are equal to the 
contracted Christoffel symbols $\widetilde{\Gamma}^{\mu} \eqdef \widetilde{g}^{\alpha \beta} \widetilde{\Gamma}_{\alpha \ \beta}^{\ \mu}$ of the background metric $\widetilde{g}.$ This condition is known as a \emph{wave coordinate} condition since $\Gamma^{\mu} \equiv \widetilde{\Gamma}^{\mu}$ if and only if the coordinate functions\footnote{The $x^{\mu}$ are scalar-valued functions, despite the fact that they have indices.} 
$x^{\mu}$ are solutions to the wave equation $g^{\alpha \beta} D_{\alpha} D_{\beta} x^{\mu} + \widetilde{\Gamma}^{\mu} = 0.$ Using \eqref{E:backgroundmetricform} and \eqref{E:BackgroundChristoffel}, we compute that in such a coordinate system

\begin{align} \label{E:HarmonicGauge}
	\Gamma^{\mu} = \widetilde{\Gamma}^{\mu} = 3 \omega \delta_{0}^{\mu}, && \Gamma_{\mu} = 
		g_{\mu \alpha} \Gamma^{\alpha} = 3 \omega g_{0 \mu}, && (\mu=0,1,2,3),
\end{align}
where $\omega(t)$ is defined in \eqref{E:omegadef}.

We now introduce the tensorfields

\begin{subequations}
\begin{align}
	P^{\mu} & \eqdef 3 \omega \delta_0^{\mu}, && P_{\mu} = 3 \omega g_{0 \mu}, && (\mu=0,1,2,3), \\
	Q^{\mu} & \eqdef P^{\mu} - \Gamma^{\mu}, && Q_{\mu} = P_{\mu} - \Gamma_{\mu}, && (\mu=0,1,2,3). \label{E:Qdef}
\end{align}
\end{subequations}

The idea behind wave coordinates is to work in a coordinate system in which $Q^{\mu} \equiv 0,$ 
so that whenever it is convenient, we may replace $\Gamma^{\mu}$ with $3 \omega \delta_{0}^{\mu}$ (and vice-versa) without altering the content of the Einstein equations. The existence of such a coordinate system is nontrivial, and 
it was only in 1952 that Choquet-Bruhat \cite{cB1952} first showed that they exist in general. With this idea in mind, we define the \emph{modified Ricci tensor} $\hat{R}_{\mu \nu}$ by

\begin{align} \label{E:modifiedRicci}
	\hat{R}_{\mu \nu} & \eqdef R_{\mu \nu} + \frac{1}{2} \big(D_{\mu} Q_{\nu} + D_{\nu} Q_{\mu} \big)  \\
	& = -\frac{1}{2} \hat{\Square}_g g_{\mu \nu} + \frac{1}{2} \big(D_{\mu} P_{\nu} + D_{\nu}P_{\mu} \big)
		+ g^{\alpha \beta} g^{\gamma \delta} (\Gamma_{\alpha \gamma \mu} \Gamma_{\beta \delta \nu}
		+ \Gamma_{\alpha \gamma \mu} \Gamma_{\beta \nu \delta}
		+ \Gamma_{\alpha \gamma \nu} \Gamma_{\beta \mu \delta}), \notag
\end{align}
where

\begin{align} \label{E:reducedwaveoperator}
	\hat{\Square}_g \eqdef g^{\alpha \beta} \partial_{\alpha} \partial_{\beta}
\end{align}
is the \emph{reduced wave operator} for the metric $g.$

We now replace the $R_{\mu \nu}$ with $\hat{R}_{\mu \nu}$ in \eqref{E:EulerEinsteinIrrotational}, 
expand the covariant differentiation in \eqref{E:irrotationalfluid}, and
add additional inhomogeneous terms $I_{\mu \nu}$ and $I_{\partial \Phi}$ to the left-hand sides of
\eqref{E:EulerEinsteinIrrotational} and \eqref{E:irrotationalfluid} respectively, arriving at the 
\emph{modified irrotational Euler-Einstein system}:

\begin{subequations}
\begin{align}
	\hat{R}_{\mu \nu} - \Lambda g_{\mu \nu} - T_{\mu \nu}^{(scalar)} + \frac{1}{2} T^{(scalar)} g_{\mu \nu} + I_{\mu \nu} & = 0,
		&& (\mu, \nu = 0,1,2,3), \label{E:FirstmodifiedRicci} \\
	\big[\sigma g^{\alpha \beta} - 2s (\partial^{\alpha} \Phi)(\partial^{\beta} \Phi) \big] 
		\partial_{\alpha} \partial_{\beta} \Phi
		- \sigma \Gamma^{\alpha} \partial_{\alpha} \Phi
		+ 2s \Gamma^{\alpha \lambda \beta} (\partial_{\alpha} \Phi)(\partial_{\lambda} \Phi)  
		(\partial_{\beta} \Phi) + I_{\partial \Phi} & = 0. && \label{E:Firstmodifiedfluid}
\end{align}
\end{subequations}

Here, the additional terms are defined to be

\begin{subequations}
\begin{align} 
	I_{00} & \eqdef -2 \omega Q^0 = 2 \omega(\Gamma^0 - 3 \omega), \label{E:gaugetermI00} \\
	I_{0j} & = I_{j0} \eqdef 2 \omega Q_j = 2 \omega(3 \omega g_{0j} - \Gamma_j), & & (j=1,2,3), \label{E:gaugetermI0j} \\
	I_{jk} & = I_{jk} \eqdef 0, & & (j,k = 1,2,3), \label{E:gaugetermIjk} \\
	I_{\partial \Phi} & \eqdef - \sigma g^{\alpha \beta} Q_{\alpha} \partial_{\beta} \Phi
		= \sigma \Gamma^{\alpha} \partial_{\alpha} \Phi - 3 \omega \sigma \partial_t \Phi. \label{E:gaugetermIpartialPhi}
\end{align}
\end{subequations}

We have several important remarks to make concerning the modified system \eqref{E:FirstmodifiedRicci} - \eqref{E:Firstmodifiedfluid}. First, because the principal term on the left-hand side of 
\eqref{E:FirstmodifiedRicci} is $-\frac{1}{2} \hat{\Square}_g g_{\mu \nu},$ the modified system comprises a 
quasilinear system of wave equations and is of hyperbolic character. Second, the gauge terms $I_{\mu \nu}, I_{\partial \Phi}$
have been added to the system in order to produce an energy dissipation effect that is analogous to the 
effect created by the $3(\partial_t v)^2$ term on the right-hand side of the model equation \eqref{E:modelequation}. These dissipation-inducing terms play a key role in the global existence theorem of Section \ref{S:GlobalExistence}. Finally, in Section \ref{SS:ClassicalLocalExistence}, we will show that if the initial data satisfy the Gauss and Codazzi constraints \eqref{E:Gauss} - \eqref{E:Codazzi}, 
and if the wave coordinate condition $Q_{\mu}|_{t=0} = 0$ is satisfied,
then $Q_{\mu}, I_{\mu \nu}, I_{\partial \Phi} \equiv 0,$ and $\hat{R}_{\mu \nu} \equiv R_{\mu \nu};$ i.e., under these conditions, the solution to \eqref{E:FirstmodifiedRicci} - \eqref{E:Firstmodifiedfluid} is also a solution to the irrotational Euler-Einstein system \eqref{E:EulerEinsteinIrrotational}.

\subsection{Summary of the modified irrotational Euler-Einstein system for the equation of state 
\texorpdfstring{$p = \speed^2 \rho$}{}}

For convenience, we summarize (with the help of \eqref{E:TmunuminushalfTgmunu}) the results of the previous section by listing the modified irrotational Euler-Einstein system: 

\begin{subequations}
\begin{align}
	\hat{R}_{00} + 2 \omega \Gamma^0 - 6 \omega^2 - \Lambda g_{00} - 2 \sigma^{s} (\partial_t \Phi)^2 
		- \Big(\frac{s}{s + 1} \Big) \sigma^{s + 1} g_{00} & = 0, \label{E:Rhat00} \\
	\hat{R}_{0j} + 2 \omega (3 \omega g_{0j} - \Gamma_j) - \Lambda g_{0j} - 2 \sigma^{s} (\partial_t \Phi)
		(\partial_j \Phi) - \Big(\frac{s}{s + 1} \Big) \sigma^{s + 1} g_{0j} & = 0, \qquad(j=1,2,3), \label{E:Rhat0j} \\
	\hat{R}_{jk} - \Lambda g_{jk} - 2 \sigma^{s} (\partial_{j} \Phi) (\partial_{k} \Phi) 
		- \Big(\frac{s}{s + 1} \Big) \sigma^{s + 1} g_{jk} & = 0, \qquad (j,k=1,2,3), \label{E:Rhatjk} \\
	\big[\sigma g^{\alpha \beta} - 2s (\partial^{\alpha} \Phi)(\partial^{\beta} \Phi) \big] 
		\partial_{\alpha} \partial_{\beta} \Phi
		- 3 \omega \sigma \partial_t \Phi
		+ 2s \Gamma^{\alpha \lambda \beta} (\partial_{\alpha} \Phi)(\partial_{\lambda} \Phi)  
			(\partial_{\beta} \Phi) & = 0. \label{E:Irrotationalfluidsummary}
\end{align}
\end{subequations}

\subsection{Construction of initial data for the modified system} \label{SS:IDReduced}

In general, we may consider arbitrary data for the modified equations \eqref{E:Rhat00} - \eqref{E:Irrotationalfluidsummary}.
However, if the solution of the modified system is also to be a solution of the Einstein equations \eqref{E:EinsteinFieldsummary}, then we cannot choose the data arbitrarily. In this section, we assume that we are given initial data $(\Sigma, \bar{g}, \bar{K}, \bard \mathring{\Phi}, \mathring{\Psi})$ for the irrotational Euler-Einstein equations \eqref{E:EulerEinsteinIrrotational} - \eqref{E:irrotationalfluid} as described in Section \ref{SSS:InitialDataOriginalSystem}. In particular, we assume that it satisfies the constraints \eqref{E:Gauss} - \eqref{E:Codazzi}. We will use this data to construct initial data for the modified equations that lead to a solution $(\mathcal{M}, g, \partial \Phi)$ of both the modified system and the Einstein equations; recall that a solution solves both systems $\iff Q_{\mu} \equiv 0,$ where $Q_{\mu}$ is defined in \eqref{E:Qdef}.

To supply data for the modified equations, we must specify along $\Sigma = \lbrace t=0 \rbrace$ the full spacetime metric components $g_{\mu \nu}|_{t=0},$ their (future-directed) normal derivatives $\partial_t g_{\mu \nu}|_{t=0},$ $(\mu,\nu = 0,1,2,3),$ the spatial derivatives $\bard \Phi|_{t=0}$ of the fluid potential, and the normal derivative $\partial_t \Phi|_{t = 0}$ of the fluid potential. To satisfy the requirements
\begin{itemize}
	\item $\Sigma = \lbrace t=0 \rbrace$
	\item $\bar{g}$ is the first fundamental form of $\Sigma$
	\item $\partial_t$ is normal to $\Sigma$
	\item $\bar{K}$ is the second fundamental form of $\Sigma$
	\item $\bard \Phi|_{\Sigma} = \bard \mathring{\Phi}$
	\item $D_{\hat{N}} \Phi|_{\Sigma} = \mathring{\Psi},$ $D_{\hat{N}} \eqdef$ differentiation in the direction of
		the future-directed normal to $\Sigma,$
\end{itemize}
we set (for $j,k=1,2,3$)

\begin{align}
	g_{00}|_{t=0} & = -1, & g_{0j}|_{t=0} & = 0, & g_{jk}|_{t=0} & = \bar{g}_{jk}, \\
	\bard \Phi|_{t=0} & = \bard \mathring{\Phi}, & \partial_t \Phi|_{t=0} & = \mathring{\Psi}, & \partial_t g_{jk}|_{t=0} & = 2 
	\bar{K}_{jk}.
\end{align}

Furthermore, we need to satisfy the wave coordinate condition $Q_{\mu}|_{t=0} = 0,$ $(\mu = 0,1,2,3).$
To meet this need, we first calculate that

\begin{align}
	\Gamma_0|_{t=0} & = - \frac{1}{2} (\partial_{t} g_{00})|_{t=0} - \bar{g}^{ab} \bar{K}_{ab}, &&
		\label{E:Gamma0attequals0} \\
	\Gamma_j|_{t=0} & = -\partial_t g_{0j}|_{t=0} + \frac{1}{2} \bar{g}^{ab}(2 \partial_a \bar{g}_{bj}
		- \partial_j \bar{g}_{ab}), && (j=1,2,3). \label{E:Gammajattequals0}
\end{align}
Using \eqref{E:Gamma0attequals0} and \eqref{E:Gammajattequals0}, the condition $Q_{\mu}|_{t=0} = 0$ is easily seen to be 
equivalent to the following relations, where $\omega$ is defined in \eqref{E:omegadef}:

\begin{align}
	\partial_{t} g_{00}|_{t=0} & = 2 (- 3 \omega|_{t=0} \underbrace{g_{00}|_{t=0}}_{-1} - \bar{g}^{ab} 
		\bar{K}_{ab}) =  2 \big(3 \omega(0)  - \bar{g}^{ab} \bar{K}_{ab}\big), && \\
	\partial_t g_{0j}|_{t=0} & = - 3 \omega|_{t=0} \underbrace{g_{0 j}|_{t=0}}_{0} + \frac{1}{2} \bar{g}^{ab}(2 
		\partial_a \bar{g}_{bj} - \partial_j \bar{g}_{ab}) = 
		\bar{g}^{ab}\Big(\partial_a \bar{g}_{bj} - \frac{1}{2} \partial_j \bar{g}_{ab}\Big), && (j=1,2,3).
\end{align}
This completes our specification of the data for the modified equations.

\subsection{Decomposition of the modified irrotational Euler-Einstein system in wave coordinates} \label{SS:Decomposition}

Naturally, the key to our global existence theorem involves a careful analysis of the nonlinear terms. In order to better see their structure, we dedicate this section to a decomposition of the modified system \eqref{E:Rhat00} - \eqref{E:Irrotationalfluidsummary} into principal terms and error terms, which we denote by variations of the symbol $\triangle.$ The estimates of Section \ref{S:BootstrapConsequences} will justify the claim that the $\triangle$ terms are in fact error terms. We begin by recalling the previously mentioned rescaling $h_{jk}$ of the spatial indices of the metric:

\begin{align} \label{E:hjkdef}
	h_{jk} \eqdef e^{-2 \Omega} g_{jk}, && (j,k = 1,2,3).
\end{align}
The decomposition is captured in the next proposition.

\begin{proposition} \label{P:Decomposition} \textbf{(Decomposition of the modified equations)}
	The equations \eqref{E:Rhat00} - \eqref{E:Rhatjk} can be written as follows:

\begin{subequations}
\begin{align}
	\hat{\Square}_{g} (g_{00} + 1) & = 5 H \partial_t g_{00} + 6 H^2 (g_{00} + 1) + \triangle_{00},
		&& \label{E:finalg00equation} \\
	\hat{\Square}_{g} g_{0j} & = 3 H \partial_t g_{0j} + 2 H^2 g_{0j} - 2Hg^{ab}\Gamma_{a j b} + \triangle_{0j}, 
		&& (j=1,2,3), \label{E:finallg0jequation} \\
	\hat{\Square}_{g} h_{jk} & = 3H \partial_t h_{jk} + \triangle_{jk}, 
		&& (j,k = 1,2,3), \label{E:finalhjkequation}
\end{align}
\end{subequations}
where $H \eqdef \sqrt{\Lambda/3},$ and the error terms $\triangle_{\mu \nu}$ are defined in \eqref{E:triangle00} - \eqref{E:trianglejk} below.

	Furthermore, the fluid equation \eqref{E:Irrotationalfluidsummary} can be written as

\begin{align} \label{E:finalfluidequation} 
	\hat{\Square}_m \Phi	& = \decayparameter \omega \partial_t \Phi + \triangle_{\partial \Phi}, 
\end{align}
where $\omega(t)$ is defined in \eqref{E:omegadef}, $m^{\mu \nu}$ ($\mu, \nu = 0,1,2,3$) is the 
\textbf{reciprocal acoustical metric},
\\ 
$\hat{\Square}_m \eqdef -\partial_t^2 + m^{ab} \partial_a \partial_b + 2m^{0a} \partial_t \partial_a$
is the reduced wave operator corresponding to $m^{\mu \nu},$ the error term $\triangle_{\partial \Phi}$ is 
defined below in \eqref{E:triangledef}, 
		
		\begin{subequations}	
		\begin{align}
		\decayparameter & \eqdef \frac{3}{1 + 2s} = 3\speed^2, \label{E:wdef} \\
		m^{00} & = - 1, \\
		m^{jk} & = \frac{g^{jk} - \triangle_{(m)}^{jk}}
			{(1 + 2s) + \triangle_{(m)}}, \label{E:mjkdef} \\
		m^{0j} & = - \frac{g^{00}g^{0j}(1 + 2s) + 2e^{\Omega}\big[(s + 1)g^{0j}g^{0a} + sg^{00}g^{aj}\big]z_a   
			+ e^{2 \Omega}(g^{0j}g^{ab} + 2sg^{aj}g^{0b})z_a z_b }{(1 + 2s) + \triangle_{(m)}},  \label{E:m0jdef} \\
		z_j & \eqdef \frac{e^{-\Omega} \partial_j \Phi}{\partial_t \Phi},  \label{E:zjdef}
		\end{align}
		
		\begin{align}
		\triangle_{(m)} & = (1 + 2s)(g^{00} + 1)(g^{00} - 1)
			+ 2(1 + 2s) e^{\Omega} g^{00} g^{0a} z_a + e^{2 \Omega}(g^{00}g^{ab} + 2s g^{0a}g^{0b})z_a z_b, \label{E:trianglemdef} \\
		\triangle_{(m)}^{jk} & = (g^{00} + 1)g^{jk} + 2s g^{0j} g^{0k}
			+ 2e^{\Omega}(g^{jk} g^{0a} + 2s g^{0k} g^{aj})z_a
			+ e^{2 \Omega}(g^{jk} g^{ab} + 2s g^{aj}g^{bk})z_a z_b,  \label{E:trianglemjkdef} \\
		\triangle_{\partial \Phi} & = \omega \Big\lbrace\frac{3}{(1 + 2s) + \triangle_{(m)}} - \decayparameter 
		 	\Big\rbrace\partial_t \Phi \label{E:triangledef} \\
		& \ \ - \bigg\lbrace \frac{1}{(1 + 2s) + \triangle_{(m)}} \bigg\rbrace 
			\bigg\lbrace 3\omega (g^{00} + 1) \partial_t \Phi  + 6 \omega e^{\Omega} g^{0a}z_a \partial_t \Phi 
			+ (3 - 2s) \omega e^{2\Omega}g^{ab}z_a z_b \partial_t \Phi \notag \\
		& \ \ \ \ \ \ \ + 2s\Big(\triangle_{(\Gamma)}^{000} \partial_t \Phi
			+ [\triangle_{(\Gamma)}^{a00} + \triangle_{(\Gamma)}^{0a0} + \triangle_{(\Gamma)}^{00a}] \partial_a \Phi 
			+ e^{2 \Omega} [\triangle_{(\Gamma)}^{0ab} + \triangle_{(\Gamma)}^{a0b} + \triangle_{(\Gamma)}^{ab0}] 
				z_a z_b \partial_t \Phi 
			+ e^{2 \Omega} \triangle_{(\Gamma)}^{abc} z_a z_b \partial_c \Phi\Big) \bigg\rbrace, \notag
		\end{align}
		the $\triangle_{(\Gamma)}^{\mu \alpha \nu}$ ($\mu, \alpha, \nu = 0,1,2,3$) 
		are defined below in Lemma \ref{L:raisedChristoffeldecomposition},

\begin{align}
	\frac{1}{2} \triangle_{00} & = \triangle_{A,00} + \triangle_{C,00}
		+ [f(\partial \widetilde{\Phi}) - f(\partial \Phi)] 
		- (g_{00} + 1)  f(\partial \widetilde{\Phi}) - \frac{s}{s+1}(g_{00} + 1) \sigma^{s+1} \label{E:triangle00} \\
	& \ \ \ + \frac{5}{2} (\omega - H) \partial_t g_{00} + 3 (\omega^2 - H^2)(g_{00} + 1), \notag \\
	\frac{1}{2} \triangle_{0j} & = \triangle_{A,0j} + \triangle_{C,0j}
		+ \frac{s-1}{2(s+1)} \widetilde{\sigma}^{s+1} g_{0j} - 2 \sigma^{s} (\partial_t \Phi)(\partial_j \Phi) - \frac{s}{s + 1} 
		\sigma^{s+1} g_{0j} \label{E:triangle0j} \\
	& \ \ \ + \frac{3}{2} (\omega - H) \partial_t g_{0j} + (\omega^2 - H^2) g_{0j} - 
		(\omega - H)g^{ab}\Gamma_{a j b}, \notag \\
	\frac{1}{2} \triangle_{jk} & = e^{-2 \Omega} \triangle_{A,jk} - (g^{00} + 1) \widetilde{\sigma}^{s+1} h_{jk} 
			- 2 \omega g^{0a} \partial_{a} h_{jk} - 2e^{-2 \Omega} \sigma^s (\partial_j \Phi)(\partial_k \Phi) \label{E:trianglejk} \\ 
		& \ \ \  + \frac{s}{s+1} (\widetilde{\sigma}^{s+1} - \sigma^{s+1}) h_{jk} + \frac{3}{2}(\omega - H) \partial_t h_{jk}, 
			\notag \\
	f(\partial \Phi) & \eqdef 2 \sigma^s (\partial_t \Phi)^2 - \frac{s}{s+1} \sigma^{s+1}, \label{E:ffirstdef}
\end{align}
\end{subequations}
the $\triangle_{A,\mu \nu}$ are defined in \eqref{E:triangleA00def} - \eqref{E:triangleAjkdef}, and 
$\triangle_{C,00}, \triangle_{C,0j}$ are defined in \eqref{E:triangleC00def} - \eqref{E:triangleC0jdef}. In the above expressions, quantities associated to the background solution of Section \ref{S:backgroundsolution}
are decorated with the symbol $\ \widetilde \ \ .$ 
\end{proposition}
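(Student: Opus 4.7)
The proposition is essentially a carefully organized algebraic identity, so my plan is to verify the three metric equations and the fluid equation separately, in each case isolating the principal linear part and defining the remainder as the $\triangle$ error term. The overall strategy is to substitute the formula \eqref{E:modifiedRicci} for $\hat{R}_{\mu \nu}$ to expose the wave-operator principal part $-\frac{1}{2}\hat{\Square}_g g_{\mu \nu}$, then to move the background-driven quantities $2\omega \Gamma^0$, $6\omega^2$, $\Lambda g_{\mu \nu}$, and $\frac{1}{2}(D_\mu P_\nu + D_\nu P_\mu)$ to the right-hand side, and finally to expand $\omega = H + (\omega - H)$ and $\omega^2 = H^2 + (\omega^2 - H^2)$ so that the truly linear pieces involving $H$ are kept as the principal terms while all the $(\omega-H)$ and $(\omega^2-H^2)$ corrections (which decay by Lemma \ref{L:backgroundaoftestimate}) are absorbed into $\triangle_{\mu \nu}$.

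For \eqref{E:finalg00equation}, I would rewrite \eqref{E:Rhat00} after substituting the $00$ component of \eqref{E:modifiedRicci}, collect the terms involving $P_\mu$ and $\Gamma^{\alpha \beta \gamma}$ into what the paper denotes $\triangle_{A,00}$ and $\triangle_{C,00}$, then use $\partial \widetilde{\Phi}$ to split the matter term $2\sigma^s(\partial_t \Phi)^2 + \frac{s}{s+1}\sigma^{s+1}g_{00}$ as $f(\partial \Phi) = f(\partial \widetilde{\Phi}) + [f(\partial \Phi)-f(\partial \widetilde{\Phi})]$ so that the ``background'' contribution combines with $-6\omega^2 - \Lambda g_{00}$ via the Friedmann identity \eqref{E:3omegaSquaredminusLambdaidentity} to produce the clean $6H^2(g_{00}+1)$ term. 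The equation for $g_{0j}$ is handled identically, where the $-2Hg^{ab}\Gamma_{ajb}$ term on the right emerges from the $\frac{1}{2}(D_0 P_j + D_j P_0)$ piece of $\hat{R}_{0j}$ after replacing $\omega$ with $H$. For \eqref{E:finalhjkequation}, I would apply the product rule $\hat{\Square}_g(e^{2\Omega} h_{jk}) = e^{2\Omega}\hat{\Square}_g h_{jk} + 2e^{2\Omega}(\omega^2 + \dot{\omega})h_{jk}g^{00} + 4\omega e^{2\Omega}(g^{00}\partial_t h_{jk} + g^{0a}\partial_a h_{jk})$, insert into the rescaled version of \eqref{E:Rhatjk}, use the identities \eqref{E:3omegaSquaredminusLambdaidentity}-\eqref{E:omegadotidentity} to cancel $\Lambda h_{jk}$ against the $\omega^2 + \dot{\omega}$ correction up to terms of order $\widetilde{\sigma}^{s+1}$, and then replace the remaining $\omega$'s by $H$'s, pushing the residuals into $\triangle_{jk}$.

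The fluid equation \eqref{E:finalfluidequation} is the main algebraic obstacle, since one must both extract the acoustical structure and handle the degeneracy as $\sigma \to 0$ carefully. My plan is to divide \eqref{E:Irrotationalfluidsummary} by the coefficient $\sigma(1+2s) + \text{corrections}$ of $-\partial_t^2 \Phi$; concretely, expand $\sigma = (\partial_t \Phi)^2\bigl[1 + (g^{00}+1) + 2e^{\Omega}g^{0a}z_a + e^{2\Omega}g^{ab}z_a z_b\bigr]$ using the definition \eqref{E:zjdef} of $z_j$, and similarly expand $2s(\partial^\alpha \Phi)(\partial^\beta \Phi)$. The coefficient of $-\partial_t^2 \Phi$ is then $(\partial_t \Phi)^2\bigl[(1+2s) + \triangle_{(m)}\bigr]$, which gives the denominators appearing in \eqref{E:mjkdef}-\eqref{E:m0jdef}, and the coefficients of $\partial_a \partial_b \Phi$ and $\partial_t \partial_a \Phi$ yield $m^{ab}$ and $2m^{0a}$ after the same algebraic rearrangement. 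The dissipative first-order term $\decayparameter \omega \partial_t \Phi$ arises by extracting $-3\omega \partial_t \Phi / \bigl[(1+2s)+\triangle_{(m)}\bigr]$ from the $-3\omega \sigma \partial_t \Phi$ source, writing $3/[(1+2s)+\triangle_{(m)}] = \decayparameter + (\decayparameter - 3/[(1+2s)+\triangle_{(m)}])$, and grouping the remainder into $\triangle_{\partial \Phi}$ together with the cubic contraction $2s\Gamma^{\alpha\lambda\beta}(\partial_\alpha \Phi)(\partial_\lambda \Phi)(\partial_\beta \Phi)$, which is decomposed using the $\triangle^{\mu\alpha\nu}_{(\Gamma)}$ notation of Lemma \ref{L:raisedChristoffeldecomposition}.

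The hard part is purely bookkeeping: one must track how each $(g^{00}+1)$, $g^{0a}$, $z_a$, $(\omega-H)$, $(\sigma-\widetilde{\sigma})$ appears and ensure all the error pieces precisely match the definitions \eqref{E:trianglemdef}-\eqref{E:ffirstdef} in the statement. I would handle this by writing each equation symbolically, peeling off the principal parts one by one, and defining the $\triangle$'s residually as ``whatever is left over.'' Since the proposition is simply a labeling of terms after re-grouping, no genuinely new analytic input is needed beyond the Friedmann relations \eqref{E:3omegaSquaredminusLambdaidentity}-\eqref{E:omegadotidentity} and the algebraic identities for $\hat{R}_{\mu \nu}$; the content of the proposition is the choice of which terms count as principal and which as error, a choice that pays off only in the later estimates of Section \ref{S:BootstrapConsequences}.
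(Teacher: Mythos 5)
Your proposal follows the same route as the paper's (sketched) proof: expand $\hat{R}_{\mu \nu}$ via the modified-Ricci decomposition of Lemmas \ref{L:modifiedRiccidecomposition}--\ref{L:AmunuplusImunudecomposition}, use the Friedmann identities \eqref{E:3omegaSquaredminusLambdaidentity}--\eqref{E:omegadotidentity} to trade $3\omega^2 - \Lambda$ and $\dot{\omega}$ for $\widetilde{\sigma}^{s+1}$, split $\omega = H + (\omega - H)$, and for the fluid equation divide by the coefficient $-(\partial_t \Phi)^2\big[(1+2s) + \triangle_{(m)}\big]$ of $\partial_t^2 \Phi$ after expanding $\sigma$ and $(\partial^{\alpha}\Phi)(\partial^{\beta}\Phi)$ in terms of $g^{00}+1$, $g^{0a}$, and $z_a$. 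One small slip worth correcting in the actual bookkeeping: in your product-rule identity for $\hat{\Square}_g(e^{2\Omega} h_{jk})$ the zeroth-order term should be $2e^{2\Omega}(2\omega^2 + \dot{\omega})g^{00}h_{jk}$, since $\partial_t^2 e^{2\Omega} = (4\omega^2 + 2\dot{\omega})e^{2\Omega}$, not $2e^{2\Omega}(\omega^2 + \dot{\omega})g^{00}h_{jk}$; this is harmless given that you define the $\triangle$ terms residually and then check them against \eqref{E:trianglejk}, but it would throw off the cancellation with $\Lambda h_{jk}$ if carried through uncorrected.
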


\begin{remark}
	Note that there are two metrics in this system of PDEs: the inverse spacetime metric $g^{\mu \nu},$ and 
	the reciprocal acoustical metric $m^{\mu \nu}.$
\end{remark}

\begin{remark}
	As discussed in Section \ref{SS:Commentsonanalysis}, the variable $z_j$ has been introduced to facilitate our
	analysis of the ratio of the size of the spatial derivatives of $\Phi$ to its time derivative.
\end{remark}

\begin{proof}
	The proof involves a series of tedious computations, some of which are contained in Lemmas \ref{L:modifiedRiccidecomposition} 
	- \ref{L:raisedChristoffeldecomposition} below. We sketch the proof of \eqref{E:finalhjkequation} and leave the remaining
	equations to the reader. To obtain \eqref{E:finalhjkequation}, we first use equation \eqref{E:Rhatjk}, Lemma 	
	\ref{L:modifiedRiccidecomposition}, and Lemma \ref{L:Amunudecomposition} to obtain the following equation 
	for $h_{jk}=e^{-2\Omega}g_{jk}:$

	\begin{align} \label{E:firsthjkequation}
		\hat{\Square}_g h_{jk} & = 3 \omega \partial_t h_{jk} - 2(g^{00} + 1)(\partial_t \omega) h_{jk} 
			- 4 \omega g^{0a} \partial_{a} h_{jk} + 2 \big[ e^{-2 \Omega} \triangle_{A,jk} - 2 e^{-2 \Omega} \sigma^s (\partial_j 
			\Phi)(\partial_k \Phi) \big] \\
		& \ \ \ + 2 \Big\lbrace (3 \omega^2 - \Lambda + \partial_t \omega)h_{jk} - \frac{s}{s+1} \sigma^{s+1} h_{jk} \Big\rbrace. 
			\notag 
	\end{align}
	We now use \eqref{E:3omegaSquaredminusLambdaidentity} - \eqref{E:omegadotidentity} to substitute
	in equation \eqref{E:firsthjkequation} for $3 \omega^2 - \Lambda$ and $\partial_t \omega$ in terms of $\widetilde{\sigma},$ 
	arriving at the following equation:
	
	\begin{align} \label{E:secondhjkequation}
		\hat{\Square}_g h_{jk} & = 3 \omega \partial_t h_{jk} + 2(g^{00} + 1) \widetilde{\sigma}^{s+1} h_{jk} 
			- 4 \omega g^{0a} \partial_{a} h_{jk} + 2 \big[ e^{-2 \Omega} \triangle_{A,jk} - 2 e^{-2 \Omega} \sigma^s (\partial_j 
			\Phi)(\partial_k \Phi) \big] \\ 
		& \ \ \  + 2 \frac{s}{s+1} (\widetilde{\sigma}^{s+1} - \sigma^{s+1}) h_{jk}. \notag
	\end{align}
	Equation \eqref{E:finalhjkequation} now easily follows from \eqref{E:secondhjkequation}. We remark that the proofs of 
	\eqref{E:finalg00equation} and \eqref{E:finallg0jequation} require the use of Lemma \ref{L:AmunuplusImunudecomposition},
	and that the proof of \eqref{E:finalfluidequation} requires the use of Lemma \ref{L:raisedChristoffeldecomposition}. To
	obtain \eqref{E:finalfluidequation}, it is also useful to note that
	$- (\partial_t \Phi)^2 \big\lbrace (1 + 2s) + \triangle_{(m)} \big\rbrace$ is the coefficient of 
	the differential operator $\partial_t^2$ in equation \eqref{E:Irrotationalfluidsummary}.
\end{proof}

We now state the following four lemmas, which are needed for the proof of Proposition \ref{P:Decomposition}.

\begin{lemma} \label{L:modifiedRiccidecomposition}
	The modified Ricci tensor from \eqref{E:modifiedRicci} can be decomposed as follows:
	\begin{align}
		\hat{R}_{\mu \nu} & = - \frac{1}{2} \hat{\Square}_g g_{\mu \nu}
			+ \frac{3}{2}(g_{0 \mu} \partial_{\nu} \omega + g_{0 \nu} \partial_{\mu} \omega)
			+ \frac{3}{2} \omega \partial_t g_{\mu \nu} + A_{\mu \nu}, &&  (\mu, \nu = 0,1,2,3), \\
			\mbox{where} \notag \\
		A_{\mu \nu} & = g^{\alpha \beta} g^{\kappa \lambda}
			\big[(\partial_{\alpha} g_{\nu \kappa})(\partial_{\beta} g_{\mu \lambda})
			- \Gamma_{\alpha \nu \kappa} \Gamma_{\beta \mu \lambda}\big], && (\mu, \nu = 0,1,2,3).	\label{E:Amunudef}
	\end{align}
	
\end{lemma}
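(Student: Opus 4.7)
The plan is to start from the second equality in \eqref{E:modifiedRicci}, which expresses $\hat{R}_{\mu\nu}$ as $-\frac{1}{2}\hat{\Square}_g g_{\mu\nu}$ plus two groups of terms: a symmetrized ``covariant derivative'' of the (formal) 1-form $P_\mu = 3\omega g_{0\mu}$, and a quadratic Christoffel expression. Since the $-\frac{1}{2}\hat{\Square}_g g_{\mu\nu}$ piece carries over verbatim, the lemma reduces to two independent algebraic identities: (i) converting the $P$-block into the $\omega$-dependent block of the statement, and (ii) converting the triple Christoffel sum into $A_{\mu\nu}$.

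For (i), I would expand $D_\mu P_\nu = \partial_\mu(3\omega g_{0\nu}) - \Gamma_{\mu\ \nu}^{\ \lambda}(3\omega g_{0\lambda})$ by the product rule, obtaining the piece $3(\partial_\mu\omega)g_{0\nu}$ together with $3\omega\bigl(\partial_\mu g_{0\nu} - g_{0\lambda}\Gamma_{\mu\ \nu}^{\ \lambda}\bigr)$. Metric compatibility ($Dg=0$) rewrites $\partial_\mu g_{0\nu}$ as a sum of two lowered Christoffels, while $g_{0\lambda}\Gamma_{\mu\ \nu}^{\ \lambda}$ is itself a lowered Christoffel; one of the two pieces from $\partial_\mu g_{0\nu}$ cancels it, leaving a single lowered Christoffel carrying a $0$-index that I denote schematically by $\Gamma_{\mu 0 \nu}$. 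Symmetrizing over $\mu \leftrightarrow \nu$, the key input is the consequence $\Gamma_{\mu 0 \nu} + \Gamma_{\nu 0 \mu} = \partial_t g_{\mu\nu}$ of $Dg=0$, which yields
\[
\frac{1}{2}(D_\mu P_\nu + D_\nu P_\mu) = \frac{3}{2}(g_{0\mu}\partial_\nu\omega + g_{0\nu}\partial_\mu\omega) + \frac{3}{2}\omega\partial_t g_{\mu\nu},
\]
exactly the middle block of the lemma.

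For (ii), I would run the argument in reverse, starting from $A_{\mu\nu}$ and replacing each factor $\partial_\alpha g_{\nu\kappa}$, $\partial_\beta g_{\mu\lambda}$ by its expansion as a sum of two lowered Christoffels via $Dg=0$. The resulting product splits into four terms; after contraction with $g^{\alpha\beta}g^{\kappa\lambda}$, one of them coincides with $g^{\alpha\beta}g^{\kappa\lambda}\Gamma_{\alpha\nu\kappa}\Gamma_{\beta\mu\lambda}$ and is therefore cancelled by the subtracted term inside $A_{\mu\nu}$. The remaining three pieces must be matched to the three summands of the Christoffel triple on the left-hand side of the target identity. The key is that the contraction tensor $g^{\alpha\beta}g^{\kappa\lambda}$ is invariant under the independent swaps $\alpha \leftrightarrow \beta$ and $\kappa \leftrightarrow \lambda$, and the lowered Christoffels inherit a symmetry from $\Gamma_{\mu\ \nu}^{\ \alpha} = \Gamma_{\nu\ \mu}^{\ \alpha}$. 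Combining these symmetries with appropriate relabeling of the dummy indices, each of the three remaining pieces can be brought into one of the three asserted forms.

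The only genuine obstacle is careful bookkeeping: one must be consistent about which pair of indices of the fully lowered $\Gamma_{\alpha\beta\gamma}$ is symmetric, and then deploy that symmetry together with the relabeling freedom supplied by $g^{\alpha\beta}g^{\kappa\lambda}$ in the correct order. There is no analytic content---the entire proof is a routine exercise in coordinate calculus built on the product rule, $Dg=0$, and the symmetry of $\Gamma$ in its lower indices.
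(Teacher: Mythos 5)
Your proposal is correct, and it is the standard direct computation: the paper itself does not carry it out but simply cites Lemma 4 of \cite{hR2008}, where exactly this verification (expanding $\frac{1}{2}(D_{\mu}P_{\nu}+D_{\nu}P_{\mu})$ via $Dg=0$, and converting the quadratic Christoffel block into $A_{\mu\nu}$ by writing $\partial_{\alpha}g_{\nu\kappa}=\Gamma_{\alpha\nu\kappa}+\Gamma_{\alpha\kappa\nu}$ and using the swap symmetries of $g^{\alpha\beta}g^{\kappa\lambda}$) is performed. The only caveat is the one you already flag: in the paper's convention \eqref{E:EMBIChristoffeldef} the middle slot of $\Gamma_{\mu\alpha\nu}$ is the lowered one and the outer two are symmetric, so the surviving term in step (i) is $3\omega\Gamma_{\mu\nu 0}$ and the identity you need is $\Gamma_{\mu\nu 0}+\Gamma_{\nu\mu 0}=\partial_t g_{\mu\nu}$ (not the version with $0$ in the middle slot); with that bookkeeping fixed, both halves of your argument go through exactly as described.
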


\begin{proof}
	These computations are carried out in Lemma 4 of \cite{hR2008}.
\end{proof}

\begin{lemma} \label{L:Amunudecomposition}
	
	The term $A_{\mu \nu}$ ($\mu, \nu = 0,1,2,3$) defined in \eqref{E:Amunudef} can be decomposed into principal terms and error 
	terms $\triangle_{A,\mu \nu}$ as follows:
	\begin{subequations}
	\begin{align}
		A_{00} & = 3 \omega^2 - \omega g^{ab}\partial_t g_{ab} + 2 \omega g^{ab} \partial_a g_{0b} + \triangle_{A,00}, \\
		A_{0j} & = 2 \omega g^{00} \partial_t g_{0j} - 2 \omega^2 g^{00}g_{0j} - \omega g^{00} \partial_j g_{00} 
			+ \omega g^{ab}\Gamma_{ajb} + \triangle_{A,0j}, && (j=1,2,3), \\
		A_{jk} & = 2 \omega g^{00} \partial_t g_{jk} - 2 \omega^2 g^{00} g_{jk} + \triangle_{A,jk}, && (j,k = 1,2,3),
	\end{align}
	\end{subequations}
	where
	
	\begin{subequations}
	\begin{align}
		\triangle_{A,00} & = (g^{00})^2 \Big\lbrace(\partial_t g_{00})^2 - (\Gamma_{000})^2 \Big\rbrace 
			+ g^{00}g^{0a} \Big\lbrace 2 (\partial_t g_{00})(\partial_t g_{0a} + \partial_a g_{00}) 
			- 4 \Gamma_{000} \Gamma_{00a} \Big\rbrace \label{E:triangleA00def} \\
		& \ \ \ + g^{00}g^{ab} \Big\lbrace (\partial_t g_{0a})(\partial_t g_{0b}) 
				+ (\partial_a g_{00}) (\partial_b g_{00}) 
				- 2 \Gamma_{00a} \Gamma_{00b} \Big\rbrace \notag \\
		& \ \ \ + g^{0a} g^{0b} \Big\lbrace 2(\partial_t g_{00})(\partial_a g_{0b}) + 2(\partial_t g_{0b})(\partial_a g_{00}) 
				- 2 \Gamma_{000} \Gamma_{a0b} - 2 \Gamma_{00b} \Gamma_{00a} \Big\rbrace \notag \\
		& \ \ \ + g^{ab} g^{0l} \Big\lbrace 2(\partial_t g_{0a})(\partial_l g_{0b}) + 2(\partial_b g_{00})(\partial_a g_{0l}) 
				- 4\Gamma_{00a} \Gamma_{l0b}  \Big\rbrace \notag \\
		& \ \ \ + g^{ab}g^{lm}(\partial_a g_{0l})(\partial_b g_{0m}) 
				+ \frac{1}{2} g^{lm}(\underbrace{g^{ab} \partial_t g_{al} - 2\omega \delta_l^b}_{
				e^{2 \Omega} g^{ab} \partial_t h_{al} - 2 \omega g^{0b}g_{0l}})(\partial_b g_{0m} + \partial_m g_{0b}) \notag \\
		& \ \ \ - \frac{1}{4} g^{ab} g^{lm}(\partial_a g_{0l} + \partial_l g_{0a})(\partial_b g_{0m} + \partial_m g_{0b}) 
			- \frac{1}{4}(\underbrace{g^{ab} \partial_t g_{al} - 2\omega \delta_l^b}_{
				e^{2 \Omega} g^{ab} \partial_t h_{al} - 2 \omega g^{0b}g_{0l}}) 
				(\underbrace{g^{lm} \partial_t g_{bm} - 2 \omega \delta_b^l}_{e^{2 \Omega} g^{lm} \partial_t h_{bm} 
				- 2 \omega g^{0l}g_{0b}}), \notag \\
		\triangle_{A,0j} & =  (g^{00})^2 \Big\lbrace (\partial_t g_{00}) (\partial_t g_{0j}) - \Gamma_{000} \Gamma_{0j0} \Big\rbrace
			\label{E:triangleA0jdef} \\
		& \ \ \ + g^{00}g^{0a} \Big\lbrace (\partial_t g_{00})(\partial_t g_{aj} + \partial_a g_{0j}) 
			+(\partial_t g_{0j})(\partial_t g_{0a} + \partial_a g_{00}) 
			- 2 \Gamma_{000} \Gamma_{0ja} - 2 \Gamma_{0j0} \Gamma_{00a} \Big\rbrace \notag \\
		& \ \ \ + g^{00} (\underbrace{g^{ab} \partial_t g_{bj} - 2 \omega \delta_j^a}_{e^{2 \Omega} g^{ab} \partial_t h_{bj}
			- g^{0a}g_{0j}}) \Big(\partial_t g_{0a} - \frac{1}{2} \partial_a g_{00}\Big) 
			+ \frac{1}{2} g^{00} g^{ab}(\partial_a g_{00})(\partial_b g_{0j} + \partial_j g_{0b}) \notag \\
		& \ \ \ + g^{0a} g^{0b} \Big\lbrace (\partial_t g_{00})(\partial_a g_{bj}) + (\partial_t g_{0b})(\partial_a g_{0j})  			
			+ (\partial_a g_{00})(\partial_t g_{bj}) + (\partial_a g_{0b})(\partial_t g_{0j})  \notag  \\
		& \hspace{3.5in} - \Gamma_{000} \Gamma_{ajb} - 2 \Gamma_{00b} \Gamma_{0ja} - \Gamma_{a0b} \Gamma_{0j0} \Big\rbrace \notag \\
		& \ \ \ + g^{ab} g^{0l} \Big\lbrace (\partial_t g_{0a})(\partial_l g_{bj}) + (\partial_l g_{0a})(\partial_t g_{bj}) 
			+ (\partial_b g_{00})(\partial_a g_{lj}) + (\partial_b g_{0l})(\partial_a g_{0j}) - 2 \Gamma_{00a} \Gamma_{ljb} \Big\rbrace \notag \\
		& \ \ \ - g^{ab} g^{0l} \Big\lbrace (\partial_l g_{0a} + \partial_a g_{0l})\Gamma_{0jb} 
			-\frac{1}{2}(\partial_t g_{la})(\partial_b g_{0j} - \partial_j g_{0b}) \Big\rbrace \notag \\
		& \ \ \ + \omega g^{0a}(\underbrace{\partial_t g_{aj} - 2 \omega g_{aj}}_{e^{2 \Omega} \partial_t h_{aj}})
			+ \frac{1}{2}g^{0l}(\underbrace{g^{ab} \partial_t g_{la} - 2\omega \delta_{l}^b}_{e^{2 \Omega} g^{ab} \partial_t h_{la} 
				- 2 \omega g^{0b}g_{0l}})\partial_t g_{bj} \notag \\
		& \ \ \ + g^{ab} g^{lm} \Big\lbrace (\partial_a g_{0l})(\partial_b g_{mj}) 
			- \frac{1}{2} (\partial_a g_{0l} + \partial_l g_{0a})\Gamma_{bjm} \Big\rbrace 
			+ \frac{1}{2} g^{ab} (\underbrace{g^{lm} \partial_t g_{la} - 2\omega \delta_a^m}_{e^{2 \Omega} g^{lm} \partial_t h_{la} - 
			g^{0m}g_{0a}})\Gamma_{bjm}, \notag \\
		\triangle_{A,jk} & = (g^{00})^2 \Big\lbrace (\partial_t g_{0j}) (\partial_t g_{0k}) - \Gamma_{0j0} \Gamma_{0k0} \Big\rbrace
			\label{E:triangleAjkdef} \\
		& \ \ \ + g^{00}g^{0a} \Big\lbrace (\partial_t g_{0j})(\partial_t g_{ak} + \partial_a g_{0k}) 
			+ (\partial_t g_{0k})(\partial_t g_{aj} + \partial_a g_{0j}) 
			- 2 \Gamma_{0j0} \Gamma_{0ka} - 2 \Gamma_{0k0} \Gamma_{0ja} \Big\rbrace \notag \\
		& \ \ \ + g^{00}g^{ab} \Big\lbrace (\partial_a g_{0j})(\partial_b g_{0k})  
			- \frac{1}{2}(\partial_a g_{0j} - \partial_j g_{0a})(\partial_b g_{0k} - \partial_k g_{0b}) \Big\rbrace	\notag \\
		& \ \ \ - \frac{1}{2} g^{00} \Big\lbrace (\underbrace{g^{ab} \partial_t g_{aj} - 2 \omega \delta_j^b}_{
			e^{2 \Omega} g^{ab} \partial_t h_{aj} - 2 \omega g^{0b}g_{0j}})(\partial_b g_{0k} - \partial_k g_{0b})
			+ (\underbrace{g^{ab} \partial_t g_{bk} - 2 \omega \delta_k^a}_{e^{2 \Omega}g^{ab} \partial_t 
			h_{bk} - 2 \omega g^{0a}g_{0k}})(\partial_a g_{0j} - \partial_j g_{0a}) \Big\rbrace \notag \\
		& \ \ \ + \omega g^{00}(\underbrace{g_{bk}g^{ab} - \delta_k^a}_{-g_{0k}g^{0a}})\partial_t g_{aj} 
			+ \frac{1}{2} g^{00}(\underbrace{g^{ab}\partial_t g_{aj} - 2 \omega \delta_j^b}_{e^{2 \Omega} g^{ab} \partial_t 
			h_{aj} - 2 \omega g^{0b}g_{0j}})(\underbrace{\partial_t g_{bk} - 2 \omega g_{bk}}_{e^{2 \Omega}\partial_t h_{bk}}) 
			\notag \\
		& \ \ \ + g^{0a} g^{0b} \Big\lbrace (\partial_t g_{0j})(\partial_a g_{bk}) + (\partial_t g_{bj})(\partial_a g_{0k})  			+ 
			(\partial_a g_{0j})(\partial_t g_{bk}) + (\partial_a g_{bj})(\partial_t g_{0k})  \notag  \\
		& \hspace{3.5in} - \Gamma_{0j0} \Gamma_{akb} - 2 \Gamma_{0jb} \Gamma_{0ka} - \Gamma_{ajb} \Gamma_{0k0} \Big\rbrace \notag \\
		& \ \ \ + g^{ab} g^{0l} \Big\lbrace (\partial_t g_{aj})(\partial_l g_{bk}) + (\partial_l g_{aj})(\partial_t g_{bk}) 
			+ (\partial_b g_{0j})(\partial_a g_{lk}) + (\partial_b g_{lj})(\partial_a g_{0k}) 
			- 2 \Gamma_{0ja} \Gamma_{lkb} - 2 \Gamma_{lja} \Gamma_{0kb} \Big\rbrace \notag \\
		& \ \ \ + g^{ab} g^{ml} \Big\lbrace (\partial_a g_{lj})(\partial_b g_{mk}) - \Gamma_{ajl} \Gamma_{bkm} \Big\rbrace. 
		\notag
	\end{align}
	\end{subequations}
	
\end{lemma}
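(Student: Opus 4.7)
The plan is to start from the definition
\begin{align*}
A_{\mu\nu} = g^{\alpha\beta}g^{\kappa\lambda}\big[(\partial_\alpha g_{\nu\kappa})(\partial_\beta g_{\mu\lambda}) - \Gamma_{\alpha\nu\kappa}\Gamma_{\beta\mu\lambda}\big]
\end{align*}
and mechanically split each of the four summation indices $\alpha,\beta,\kappa,\lambda$ into its temporal ($=0$) and spatial ($=a,b,\ldots$) contributions, substituting $\Gamma_{\alpha\mu\nu} = \tfrac{1}{2}(\partial_\alpha g_{\mu\nu} + \partial_\nu g_{\alpha\mu} - \partial_\mu g_{\alpha\nu})$ where convenient. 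The desired decomposition is essentially a rearrangement, so the problem reduces to identifying \emph{which} pieces of this expansion are the ``principal'' $\omega$-terms and lumping the rest into $\triangle_{A,\mu\nu}$.

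The key rewriting identity driving the extraction of principal terms is obtained from $g_{jk} = e^{2\Omega}h_{jk}$ together with $g^{a\alpha}g_{\alpha j} = \delta_j^a$, namely
\begin{align*}
g^{ab}\partial_t g_{bj} \;=\; 2\omega\,\delta_j^a \;-\; 2\omega\, g^{0a}g_{0j} \;+\; e^{2\Omega}g^{ab}\partial_t h_{bj},
\end{align*}
and its symmetric counterpart $g^{ab}\partial_t g_{aj} - 2\omega\delta_j^b = -2\omega g^{0b}g_{0j} + e^{2\Omega}g^{ab}\partial_t h_{aj}$. These are exactly the underbraced expressions appearing in \eqref{E:triangleA00def}–\eqref{E:triangleAjkdef}, which is why each principal $\omega$-piece on the right-hand side of the lemma has an underbraced ``correction'' waiting to be absorbed into $\triangle_{A,\mu\nu}$. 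I would apply this identity each time a contraction of the form $g^{ab}\partial_t g_{b*}$ emerges from a Christoffel symbol involving a mixed time-space index.

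With this identity in hand, I would process the three cases in turn. For $A_{00}$, the relevant principal terms come from the $g^{00}(g^{\alpha\beta}\Gamma_{\alpha 00}\Gamma_{\beta 00})$ contractions (producing the $3\omega^2$ piece once the background $\Gamma_{j\,k}^{\ 0} = a\dot{a}\delta_{jk}$ from \eqref{E:BackgroundChristoffel} is identified), and from the $g^{ab}g^{lm}(\partial_t g_{al})(\partial_t g_{bm})$ and Christoffel contractions where $\partial_t g_{ab}$ gets traded for $2\omega g_{ab}$, yielding the $-\omega g^{ab}\partial_t g_{ab}$ and $2\omega g^{ab}\partial_a g_{0b}$ pieces. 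The case $A_{0j}$ is parallel: the principal pieces are those in which exactly one of the four metric factors sits in the temporal slot and produces an $\omega g^{00}$ or $\omega g^{ab}\Gamma_{ajb}$ factor, whose explicit form comes from $\Gamma_{j\,0}^{\ 0}\sim \omega g_{0j}$. The case $A_{jk}$ is the simplest: only the pure time-time contraction $(g^{00})^2\Gamma_{0j0}\Gamma_{0k0}$ and the $g^{00}g^{ab}$ mixed terms survive as principal, producing $2\omega g^{00}\partial_t g_{jk}$ and $-2\omega^2 g^{00}g_{jk}$.

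Once these principal pieces are separated, what remains is by \emph{definition} $\triangle_{A,\mu\nu}$, so the decomposition is tautological once the bookkeeping is finished. The main obstacle is not conceptual but combinatorial: each of $A_{00}, A_{0j}, A_{jk}$ produces on the order of $16$ index-type groups after the $0/\text{spatial}$ split, and within each group one must symmetrize the two factors $(\partial_\alpha g_{\nu\kappa})(\partial_\beta g_{\mu\lambda})$ against $\Gamma_{\alpha\nu\kappa}\Gamma_{\beta\mu\lambda}$, which involves carefully tracking cancellations between the $\partial g$ pieces inside the Christoffel symbols. No analytic estimates are required at this stage; correctness is verified by formal algebra (indeed, the statement of the lemma is an identity valid for \emph{any} Lorentzian metric $g_{\mu\nu}$ with the rescaling $g_{jk}=e^{2\Omega}h_{jk}$ and no use is made of the Einstein equations).
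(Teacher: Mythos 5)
Your overall strategy is the right one, and it is the only one available: the lemma is a purely algebraic identity for an arbitrary Lorentzian metric with the rescaling $g_{jk}=e^{2\Omega}h_{jk}$, it is verified by brute-force expansion of \eqref{E:Amunudef} over temporal/spatial index values, and the paper itself simply defers the computation to Lemma 5 of \cite{hR2008}. Your key rewriting identity
\begin{align*}
g^{ab}\partial_t g_{bj} \;=\; 2\omega\,\delta_j^a \;-\; 2\omega\, g^{0a}g_{0j} \;+\; e^{2\Omega}g^{ab}\partial_t h_{bj}
\end{align*}
is correctly derived from $\partial_t g_{jk}=2\omega g_{jk}+e^{2\Omega}\partial_t h_{jk}$ and $g^{ab}g_{bj}=\delta_j^a-g^{0a}g_{0j}$, and it is indeed the sole mechanism by which every principal $\omega$-term is peeled off, with the underbraced remainders absorbed into $\triangle_{A,\mu\nu}$. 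Since $\triangle_{A,\mu\nu}$ is by construction whatever is left over, an honest execution of your bookkeeping would reproduce the lemma.

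However, your roadmap for \emph{where} the principal terms arise is wrong in two places, and following it as written would send you looking in the wrong part of the expansion. First, the $3\omega^2$ and $-\omega g^{ab}\partial_t g_{ab}$ in $A_{00}$ do \emph{not} come from the contraction $g^{00}g^{\alpha\beta}\Gamma_{\alpha 00}\Gamma_{\beta 00}$ (those pieces involve only $\partial g_{00}$ and $\partial g_{0j}$ and sit entirely in the first line of \eqref{E:triangleA00def}), and they have nothing to do with the \emph{background} Christoffel symbols $\widetilde{\Gamma}_{j\ k}^{\ 0}=a\dot a\,\delta_{jk}$ --- the decomposition never references the background metric, only the functions $\omega(t)$, $\Omega(t)$. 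They come from the purely spatial contraction $-g^{ab}g^{lm}\Gamma_{a0l}\Gamma_{b0m}$, whose $-\frac{1}{4}g^{ab}g^{lm}(\partial_t g_{al})(\partial_t g_{bm})$ piece, upon writing each factor as $(g^{ab}\partial_t g_{al}-2\omega\delta_l^b)+2\omega\delta_l^b$, yields exactly $-3\omega^2+\omega g^{ab}\partial_t g_{ab}$ plus the last error term of \eqref{E:triangleA00def}; the sign of the overall minus in front of $\Gamma\Gamma$ then produces $+3\omega^2-\omega g^{ab}\partial_t g_{ab}$. Second, in $A_{jk}$ the contraction $(g^{00})^2\Gamma_{0j0}\Gamma_{0k0}$ is an error term (first line of \eqref{E:triangleAjkdef}), not a principal one; the terms $2\omega g^{00}\partial_t g_{jk}-2\omega^2 g^{00}g_{jk}$ emerge from the $g^{00}g^{ab}$ contractions in which a $\partial_t g_{a*}$ inside a Christoffel symbol is traded for $2\omega\delta+\mathrm{small}$. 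Correct these misattributions and your plan goes through.
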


\begin{proof}
	These computations are carried out in Lemma 5 of \cite{hR2008}.
\end{proof}

\begin{lemma} \label{L:AmunuplusImunudecomposition}
	The sums $A_{00} + I_{00}$ and $A_{0j} + I_{0j},$ $(j=1,2,3),$ can be decomposed into principal terms and error terms as 
	follows, where $I_{00}, I_{0j}$ are defined in \eqref{E:gaugetermI00} - \eqref{E:gaugetermI0j}; $A_{00},A_{0j}$ are 
	defined in \eqref{E:Amunudef}; and $\triangle_{A,00}, \triangle_{A,0j}$ are defined in \eqref{E:triangleA00def} - 
	\eqref{E:triangleA0jdef}:
	
	\begin{subequations}
	\begin{align}
		A_{00} + 2 \omega \Gamma^0 - 6 \omega^2 & = \omega \partial_t g_{00} + 3 \omega^2(g_{00} + 1)
			+ 3 \omega^2 g_{00} + \triangle_{A,00} + \triangle_{C,00}, && \\
		A_{0j} + 2\omega(3 \omega g_{0j} - \Gamma_j) & = 4 \omega^2 g_{0j} - \omega g^{ab} \Gamma_{ajb} + \triangle_{A,0j} 
			+ \triangle_{C,0j}, && (j=1,2,3), 
	\end{align}
	\end{subequations}
	where
	\begin{subequations}
	\begin{align}
		\triangle_{C,00} & = -6 (g_{00})^{-1} \omega^2 \Big\lbrace (g_{00} + 1)^2 - g^{0a}g_{0a} \Big\rbrace 
			- \omega (g^{00} + 1)(\underbrace{g^{ab} \partial_t g_{ab} - 6 \omega}_{e^{2\Omega}g^{ab} \partial_t h_{ab} - 2 \omega 
			g^{0a}g_{0a}}) + 2\omega(g^{00} + 1)g^{ab} \partial_a g_{0b} && 
				\label{E:triangleC00def} \\
		& \ \ \ + \omega(g^{00} + 1)(g^{00} - 1) \partial_t g_{00} + 2\omega g^{00} g^{0a}(\Gamma_{0a0} + 2 \Gamma_{00a})
			+ 4\omega g^{0a} g^{0b} \Gamma_{0ab} + 2 \omega g^{ab} g^{0l} \Gamma_{alb}, \notag \\
		\triangle_{C,0j} & = 2 \omega^2(g^{00} + 1) g_{0j}
		 - 2 \omega g^{0a} \Big\lbrace (\underbrace{\partial_t g_{aj} - 2\omega g_{aj}}_{e^{2 \Omega} \partial_t h_{aj}}) + 
		 \partial_a g_{0j} - \partial_j g_{0a} \Big\rbrace, && (j=1,2,3). \label{E:triangleC0jdef}
	\end{align}
	\end{subequations}
\end{lemma}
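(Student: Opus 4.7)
The plan is a direct algebraic computation, building on Lemma~\ref{L:Amunudecomposition}. I first note that by \eqref{E:gaugetermI00}--\eqref{E:gaugetermI0j} one has $2\omega\Gamma^{0}-6\omega^{2}=I_{00}$ and $2\omega(3\omega g_{0j}-\Gamma_{j})=I_{0j}$, so the left-hand sides of the two displayed identities are exactly $A_{00}+I_{00}$ and $A_{0j}+I_{0j}$. Since Lemma~\ref{L:Amunudecomposition} already splits each $A_{\mu\nu}$ into explicit principal terms plus the error $\triangle_{A,\mu\nu}$, the proof reduces to a computation that exhibits the claimed principal terms coming from $I_{\mu\nu}$ (together with the non-$\triangle_{A}$ pieces of $A_{\mu\nu}$) and identifies the remainder with the prescribed $\triangle_{C,\mu\nu}$ defined in \eqref{E:triangleC00def}--\eqref{E:triangleC0jdef}.

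For the $(0,0)$ entry I would expand $\Gamma^{0}=g^{00}\Gamma_{0}+g^{0a}\Gamma_{a}$ using the Christoffel formula \eqref{E:EMBIChristoffeldef}, writing $\Gamma_{0}$ and $\Gamma_{a}$ as linear combinations of $\partial_{t}g_{00}$, $\partial_{a}g_{00}$, $\partial_{(\cdot)}g_{0(\cdot)}$, and $\partial_{t}g_{ab}$ weighted by components of $g^{-1}$. Adding to the decomposition of $A_{00}$, the principal contribution $\omega\partial_{t}g_{00}$ comes from the $\omega(g^{00})^{2}\partial_{t}g_{00}$ piece of $2\omega g^{00}\Gamma_{0}$ after invoking the identity $g^{00}g_{00}+g^{0a}g_{0a}=1$ to trade one factor of $g^{00}$ for $g_{00}$ (and to write $g^{00}=1+[g^{00}-1]$); the contribution $6\omega^{2}g_{00}+3\omega^{2}$ arises from combining $-\omega g^{ab}\partial_{t}g_{ab}$ with $-6\omega^{2}$ by means of $\partial_{t}g_{ab}=2\omega g_{ab}+e^{2\Omega}\partial_{t}h_{ab}$ (which is \eqref{E:hjkdef} differentiated) and the trace relation $g^{ab}g_{ab}=3-g^{0a}g_{0a}$. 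After regrouping, every term that is neither $\triangle_{A,00}$ nor a principal term assembles into $\triangle_{C,00}$; in particular, the factor $(g_{00})^{-1}$ appearing in the first term of \eqref{E:triangleC00def} emerges naturally when one re-expresses the combination $g^{00}+1$ via $g^{00}+1=[(g_{00}+1)-g^{0a}g_{0a}]/g_{00}$.

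For the $(0,j)$ entry the procedure is parallel: expand $\Gamma_{j}=\tfrac12g^{\alpha\beta}(\partial_{\alpha}g_{j\beta}+\partial_{\beta}g_{\alpha j}-\partial_{j}g_{\alpha\beta})$, add to the $A_{0j}$ decomposition from Lemma~\ref{L:Amunudecomposition}, and isolate the two principal terms. The piece $\omega g^{ab}\Gamma_{ajb}$ already present in $A_{0j}$ combines with the corresponding spatial-index piece of $-2\omega\Gamma_{j}$ to produce exactly $-\omega g^{ab}\Gamma_{ajb}$ plus errors supported at indices containing a $0$; the $-2\omega^{2}g^{00}g_{0j}$ from $A_{0j}$ together with $6\omega^{2}g_{0j}$ from $I_{0j}$ gives $4\omega^{2}g_{0j}+2\omega^{2}(g^{00}+1)g_{0j}$ after writing $-2\omega^{2}g^{00}=-2\omega^{2}+2\omega^{2}(1-g^{00})$, and the trailing $2\omega^{2}(g^{00}+1)g_{0j}$ is the first term of \eqref{E:triangleC0jdef}. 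The remaining terms, after using the inverse-metric identities above together with $g^{0a}g_{aj}+g^{00}g_{0j}=0$ and the same rewrite of $\partial_{t}g_{ab}$, collapse into the precise form \eqref{E:triangleC0jdef}.

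The main obstacle is purely bookkeeping: both $\triangle_{C,00}$ and $\triangle_{C,0j}$ are displayed in a highly specific form, so at several stages one must choose whether to keep a factor of $g^{00}$, rewrite it as $-(g_{00})^{-1}g^{0a}g_{0a}+(g_{00})^{-1}$, or regroup it as $(g^{00}+1)-1$, in order to land on the prescribed expressions (rather than some equivalent but different one). No new analytic ingredients are needed beyond the Christoffel formula, the inverse-metric identities $g^{00}g_{00}+g^{0a}g_{0a}=1$ and $g^{00}g_{0j}+g^{0a}g_{aj}=0$, and the spatial rescaling \eqref{E:hjkdef}; the computation is a careful adaptation of the corresponding calculation in \cite{hR2008} to our gauge-correction terms \eqref{E:gaugetermI00}--\eqref{E:gaugetermI0j}.
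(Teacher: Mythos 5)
Your proposal is correct and takes the same route as the paper, which simply defers this identity to the direct algebraic verification in Lemma 6 of \cite{hR2008}: expand $\Gamma^{0}$ and $\Gamma_{j}$ via \eqref{E:EMBIChristoffeldef}, add to the decompositions of Lemma \ref{L:Amunudecomposition}, and regroup using $g^{00}g_{00}+g^{0a}g_{0a}=1,$ $g^{00}g_{0j}+g^{0a}g_{aj}=0,$ and $\partial_{t}g_{ab}=2\omega g_{ab}+e^{2\Omega}\partial_{t}h_{ab}.$ One intermediate attribution is slightly off — $-2\omega^{2}g^{00}g_{0j}+6\omega^{2}g_{0j}$ alone yields $8\omega^{2}g_{0j}-2\omega^{2}(g^{00}+1)g_{0j},$ and the correct coefficients $4\omega^{2}g_{0j}+2\omega^{2}(g^{00}+1)g_{0j}$ only emerge after the $-4\omega^{2}g^{0a}g_{aj}$ contribution from rewriting $-2\omega g^{0a}\partial_{t}g_{aj}$ is folded in — but since you explicitly route the remaining terms through exactly those identities, the bookkeeping closes as claimed.
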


\begin{proof}
	These computations are carried out in Lemma 6 of \cite{hR2008}.
\end{proof}

\begin{lemma} \label{L:raisedChristoffeldecomposition}
	The fully raised Christoffel symbols $\Gamma^{\mu \alpha \nu}$ ($\mu, \alpha, \nu = 0,1,2,3$) 
	can be decomposed into principal terms and
	error terms $\triangle_{(\Gamma)}^{\mu \alpha \nu}$ as follows:
	
	\begin{subequations}
	\begin{align}
		\Gamma^{000} & = \triangle_{(\Gamma)}^{000},  \\
		\Gamma^{j00} & = \Gamma^{00j} = \triangle_{(\Gamma)}^{j00}, && (j=1,2,3),   \\
		\Gamma^{0j0} & = \triangle_{(\Gamma)}^{0j0}, && (j=1,2,3), \\
		\Gamma^{0jk} & = \Gamma^{kj0} = - \omega g^{jk} + \triangle_{(\Gamma)}^{0jk}, && (j,k=1,2,3),   \\
		\Gamma^{j0k} & = \Gamma^{k0j} = \omega g^{jk}  + \triangle_{(\Gamma)}^{j0k}, && (j,k=1,2,3),   \\
		\Gamma^{ijk} & = \triangle_{(\Gamma)}^{ijk}, && (i,j,k = 1,2,3), 
\end{align}
\end{subequations}
where
	
	\begin{subequations}
	\begin{align}
		\triangle_{(\Gamma)}^{000} & = \frac{1}{2} (g^{00})^3 \partial_t g_{00}
			+ \frac{1}{2} g^{00} g^{0a} g^{0b} (\partial_t g_{ab} + 2 \partial_a g_{0b}) \label{E:triangleGamma000} \\
		& \ \ \ + \frac{1}{2} (g^{00})^2 g^{0a} (2\partial_t g_{0a} + \partial_{a}g_{00})
			+ \frac{1}{2} g^{0a}g^{0b}g^{0l} \partial_a g_{bl}, \notag \\
		\triangle_{(\Gamma)}^{j00} & = \frac{1}{2} (g^{00})^2 (g^{0j} \partial_t g_{00} + g^{aj}\partial_a g_{00})
			+ g^{00} g^{0j} g^{a0} \partial_t g_{0a} + g^{00} g^{aj} g^{0b} \partial_a g_{0b}
			+ \frac{1}{2} g^{aj} g^{0b} g^{0l} \partial_a g_{bl}, \label{E:triangleGammaj00}  \\
		\triangle_{(\Gamma)}^{0j0} & = \frac{1}{2}(g^{00})^2(g^{0j} \partial_t g_{00} + 2 g^{aj} \partial_t g_{0a} 
			- g^{aj} \partial_a g_{00}) \label{E:triangleGamma0j0} \\ 
		& \ \ \ + g^{00}(g^{aj}g^{0b} \partial_t g_{ab}
			+ g^{a0} g^{0j} \partial_a g_{00} + g^{a0} g^{bj} \partial_a g_{0b} 
			- g^{aj} g^{0b} \partial_a g_{0b}) \notag \\
		& \ \ \ + g^{0a} g^{bj} g^{0l} \partial_a g_{bl} - \frac{1}{2} g^{0a} g^{bj} g^{0l} \partial_b g_{al}, \notag \\
		\triangle_{(\Gamma)}^{0jk} & =  \frac{1}{2} g^{00}\Big\lbrace g^{0j} g^{0k} \partial_t g_{00}
			+ g^{0j} g^{ak} \partial_a g_{00} + g^{aj} g^{0k} (2\partial_t g_{0a} - \partial_a g_{00})
			+ g^{aj} g^{kb}(\partial_b g_{0a} - \partial_a g_{0b}) \Big\rbrace \label{E:triangleGamma0jk}  \\
		& \ \ \ + \frac{1}{2} g^{0a}\Big\lbrace g^{0j} g^{0k} \partial_a g_{00} + g^{0j} g^{bk}(\partial_a g_{0b} + 
			\partial_b g_{0a} - \partial_t g_{ab})  \notag \\
		& \hspace{1in} + g^{bj} g^{0k} (\partial_t g_{ab} + \partial_a g_{0b} - \partial_b g_{0a})
			+ g^{bj} g^{lk}(\partial_a g_{bl} + \partial_l g_{ab} - \partial_b g_{al}) \Big\rbrace \notag \\ 
		& \ \ \ + \frac{1}{2} g^{00} \big(\underbrace{g^{aj} g^{bk} \partial_t g_{ab} - 2\omega g^{jk}}_{
			g^{bk}(e^{2 \Omega} g^{aj}\partial_t h_{ab} - 2 \omega g^{0j}g_{0b})} \big)
			+ \omega (g^{00} + 1) g^{jk}, \notag \\
		\triangle_{(\Gamma)}^{j0k} & = \frac{1}{2}g^{00} \Big\lbrace g^{0j}g^{0k} \partial_t g_{00}
			+ g^{0j} g^{ak} \partial_a g_{00} + g^{aj} g^{0k} \partial_a g_{00} 
			+ g^{aj} g^{bk} (\partial_a g_{0b} + \partial_b g_{0a}) \Big\rbrace \label{E:triangleGammaj0k}  \\ 
		& \ \ \ + \frac{1}{2} g^{0a} \Big\lbrace g^{0j}g^{0k}(2\partial_t g_{0a} - \partial_a g_{00}) 
			+ g^{0j}g^{bk}(\partial_t g_{ab} + \partial_b g_{0a} - \partial_a g_{0b})
			 \notag \\
		&	\hspace{1in} + g^{bj} g^{0k} (\partial_t g_{ab} + \partial_b g_{0a} - \partial_a g_{0b})
			+ g^{bj} g^{lk}(\partial_b g_{al} + \partial_l g_{ab} - \partial_a g_{bl}) \Big\rbrace \notag \\
		& \ \ \ - \frac{1}{2} g^{00} \big(\underbrace{g^{aj} g^{bk} \partial_t g_{ab} - 2\omega g^{jk}}_{
			g^{bk}(e^{2 \Omega} g^{aj}\partial_t h_{ab} - 2 \omega g^{0j} g_{0b})} \big)
		  - \omega (g^{00} + 1) g^{jk}, \notag \\
		\triangle_{(\Gamma)}^{ijk} & = \frac{1}{2}g^{0i} g^{0j} g^{0k} \partial_t g_{00}
			+ \frac{1}{2}\big(g^{ai} g^{0j} g^{0k} + g^{0i} g^{0j} g^{ak} - g^{0i} g^{aj} g^{0k}\big) \partial_a g_{00}
			+ g^{0i} g^{aj} g^{0k} \partial_t g_{0a} \label{E:triangleGammaijk} \\
		& \ \ \ + \frac{1}{2}\big(g^{ai} g^{bj} g^{0k} + g^{ai} g^{0j} g^{bk} + g^{bi} g^{0j} g^{ak} + g^{0i} g^{bj} g^{ak}
			- g^{bi} g^{aj} g^{0k} - g^{0i} g^{aj} g^{bk}\big)\partial_a g_{0b} \notag \\
		& \ \ \ + \frac{1}{2}\big(g^{0i} g^{aj} g^{bk} + g^{ai} g^{bj} g^{0k} - g^{ai} g^{0j} g^{bk}\big)\partial_t g_{ab} 
		  + \frac{1}{2}\big(g^{ai} g^{bj} g^{lk} + g^{bi} g^{lj} g^{ak} - g^{bi} g^{aj} g^{lk}\big) \partial_a g_{bl}. \notag 
	\end{align}
	\end{subequations}
\end{lemma}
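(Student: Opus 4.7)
The plan is to expand the fully-raised Christoffel symbol $\Gamma^{\mu\alpha\nu}$ directly from its definition by splitting each contracted index into its time component and its spatial components, and then isolating those contributions that reproduce the background-value structure.

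I would begin from the identity
\[
\Gamma^{\mu \alpha \nu} = g^{\mu \mu'} g^{\alpha \alpha'} g^{\nu \nu'} \Gamma_{\mu' \alpha' \nu'}, \qquad \Gamma_{\mu \alpha \nu} = \tfrac{1}{2}\bigl(\partial_{\mu} g_{\alpha \nu} + \partial_{\nu} g_{\mu \alpha} - \partial_{\alpha} g_{\mu \nu}\bigr),
\]
and make use of the symmetry $\Gamma_{\mu \alpha \nu} = \Gamma_{\nu \alpha \mu}$ to reduce the number of independent cases from nine to the six listed. For each index configuration $(\mu,\nu)$ appearing in the lemma, I would write $g^{\mu\mu'} = g^{\mu 0}\delta_0^{\mu'} + g^{\mu a}\delta_a^{\mu'}$ (and similarly for the other two inverse metric factors), thereby expanding the triple sum into pieces labeled by which of the three contracted dummy indices take the value $0$ and which take spatial values. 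In each piece one substitutes the explicit expression for $\Gamma_{\mu'\alpha'\nu'}$ and regroups by the number and placement of raised spatial indices.

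The only source of the principal parts $-\omega g^{jk}$ in $\Gamma^{0jk}$ and $+\omega g^{jk}$ in $\Gamma^{j0k}$ is the contraction of a time-derivative $\partial_t g_{ab}$ inside $\Gamma_{\mu'\alpha'\nu'}$ against a product of two spatial-spatial inverse metrics $g^{ja}g^{kb}$. Using the identity
\[
\partial_t g_{ab} = 2\omega g_{ab} + e^{2\Omega}\partial_t h_{ab},
\]
which is immediate from definition \eqref{E:hjkdef}, together with the elementary consequence $g^{ja}g^{kb}g_{ab} = g^{jk} - g^{0j}g_{0a}g^{ka}$ of the relation $g^{k\lambda}g_{\lambda a} = \delta^k_a$, one extracts a factor of $g^{00}\omega g^{jk}$ from this contraction; the further split $g^{00} = -1 + (g^{00}+1)$ then produces the desired principal $\pm\omega g^{jk}$ term, with the remainder appearing either as the explicit $\omega(g^{00}+1)g^{jk}$ correction in $\triangle_{(\Gamma)}^{0jk}$ and $\triangle_{(\Gamma)}^{j0k}$ or as the underbraced $e^{2\Omega}\partial_t h_{ab}$ substitutions in \eqref{E:triangleGamma0jk}--\eqref{E:triangleGammaj0k}. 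All other contributions to the expansion involve either a derivative of $g_{00}$ or $g_{0j}$, an off-diagonal factor $g^{0j}$, or a purely spatial Christoffel piece; each such contribution vanishes on the background and is collected verbatim into the corresponding $\triangle_{(\Gamma)}^{\mu\alpha\nu}$.

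The main obstacle is purely organizational: the expansion produces a great many terms, and matching them to the stated forms of $\triangle_{(\Gamma)}^{000},\ldots,\triangle_{(\Gamma)}^{ijk}$ demands disciplined bookkeeping. No new analytic idea is required. I would carry out the cases in the order $(000),(j00),(0j0),(0jk),(j0k),(ijk)$, exploiting $\Gamma_{\mu\alpha\nu}=\Gamma_{\nu\alpha\mu}$ to pair cases, and deferring the dissipative-principal-term extraction (via the $\partial_t h_{ab}$ identity) to the $(0jk)$ and $(j0k)$ cases alone. In each case the concluding step is a direct identification of the collected remainder with the stated expression for $\triangle_{(\Gamma)}^{\mu\alpha\nu}$.
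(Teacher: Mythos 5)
Your proposal is correct and is precisely the computation the paper itself invokes: direct expansion of $\Gamma^{\mu\alpha\nu} = \frac{1}{2}g^{\mu\kappa}g^{\alpha\lambda}g^{\nu\sigma}(\partial_{\kappa}g_{\lambda\sigma}+\partial_{\sigma}g_{\kappa\lambda}-\partial_{\lambda}g_{\kappa\sigma})$ by splitting the dummy indices into time and space parts, followed by extraction of the principal terms $\mp\omega g^{jk}$ via $\partial_t g_{ab}=2\omega g_{ab}+e^{2\Omega}\partial_t h_{ab}$, $g^{ja}g^{kb}g_{ab}=g^{jk}-g^{0j}g_{0a}g^{ka}$, and $g^{00}=-1+(g^{00}+1)$. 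Your identification of where the corrections $\omega(g^{00}+1)g^{jk}$ and the underbraced substitutions in \eqref{E:triangleGamma0jk}--\eqref{E:triangleGammaj0k} originate is accurate, and the remaining work is exactly the bookkeeping the paper describes as tedious.
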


\begin{proof}
	The proof is again a series of tedious computations that follow from the formula \\
	$\Gamma^{\mu \alpha \nu} = \frac{1}{2} g^{\mu \kappa} g^{\alpha \lambda} g^{\nu \sigma}
	(\partial_{\kappa} g_{\lambda \sigma} + \partial_{\sigma} g_{\kappa \lambda} - \partial_{\lambda} g_{\kappa \sigma}).$ 
\end{proof}

\subsection{Classical local existence} \label{SS:ClassicalLocalExistence}

In this section, we discuss classical local existence results for the modified system of PDEs \eqref{E:finalg00equation} - \eqref{E:finalfluidequation}. The theorems in this section are stated without proof; they can be proved using standard methods described in e.g. \cite[Ch. VI]{lH1997}, \cite[Ch. 16]{mT1997III}. Also see \cite[Proposition 1]{hR2008}.

\begin{theorem} \label{T:LocalExistence} \textbf{(Local existence)}
	Let $N \geq 3$ be an integer. Let $\mathring{g}_{\mu \nu} = g_{\mu \nu}|_{t=0},$ $2 \mathring{K}_{\mu \nu} = (\partial_t 
	g_{\mu \nu})|_{t=0},$ $(\mu, \nu = 0,1,2,3),$ $\bard \mathring{\Phi} = \bard \Phi|_{t=0},$ and $\mathring{\Psi} = (\partial_t 
	\Phi)|_{t=0}$ be initial data (not necessarily satisfying the Einstein constraints) on the manifold 
	$\Sigma = \mathbb{T}^3$ for the modified system \eqref{E:finalg00equation} - \eqref{E:finalfluidequation} 
	satisfying (for $j,k = 1,2,3$)

	\begin{subequations}
	\begin{align}
		\mathring{g}_{00} + 1 \in H^{N+1}, && \mathring{g}_{0j} \in H^{N+1}, && \bard \mathring{g}_{jk} \in H^{N}, \\
		\mathring{K}_{00} \in H^{N}, && \mathring{K}_{0j} \in H^{N}, && \mathring{K}_{jk} - \omega(0)e^{2 
			\Omega(0)}\mathring{g}_{jk} \in H^{N}, \\ 
		\ \bard \mathring{\Phi} \in H^{N}, && \mathring{\Psi} - \bar{\Psi} \in H^N, && 
	\end{align}
	\end{subequations}
	where $\bar{\Psi} > 0$ is a constant. Assume further that there are constants $C_1, C_2 > 1$ 
	and $C_3 > 0$ such that
	
	\begin{align} \label{E:localexistencemathringgjklowerequivalenttostandardmetric}
		C_1^{-1} \delta_{ab} X^a X^b & \leq \mathring{g}_{ab}X^a X^b \leq C_1 \delta_{ab} X^a X^b,
			&& \forall(X^1,X^2,X^3) \in \mathbb{R}^3, \\
		C_2^{-1} \delta^{ab} X_a X_b & \leq \mathring{m}^{ab}X_a X_b \leq C_2 \delta^{ab} X_a X_b,
			&& \forall(X_1,X_2,X_3) \in \mathbb{R}^3, \\
		\mathring{g}_{00} & \leq - C_3, &&
	\end{align}
	where the $m^{jk},$ $(j,k=1,2,3)$ are the spatial components of the reciprocal acoustical metric (see \eqref{E:mjkdef}).
	Then these data launch a unique classical solution $(g_{\mu \nu}, \partial_{\mu} \Phi),$ $(\mu, \nu = 0,1,2,3),$ to the 
	modified system existing on an interval $(T_-, T_+),$ with $T_- < 0 < T_+,$ such that
	
	\begin{align}
		g_{\mu \nu} \in C_b^{N-1}((T_-, T_+) \times \mathbb{T}^3), && \partial_{\mu} 
		\Phi \in C_b^{N-2}((T_-, T_+) \times \mathbb{T}^3),
	\end{align}
	such that $g_{00} < 0,$ and such that the eigenvalues of the $3 \times 3$ matrices $g_{jk}$ and $m^{jk}$ are uniformly bounded
	below from $0$ and from above. 
	
	The solution has the following regularity properties:
	
	\begin{subequations}
	\begin{align}
		\ \ g_{00} + 1 \in C^0((T_-, T_+),H^{N+1}), &&  \ \ g_{0j} \in C^0((T_-, T_+),H^{N+1}),
			&& \bard g_{jk} \in C^0((T_-, T_+),H^{N}), \\
		\partial_t g_{00} \in C^0((T_-, T_+),H^{N}), && \partial_t g_{0j} \in C^0((T_-, T_+),H^{N}),	
			&& \partial_t g_{jk} - 2\omega(t) g_{jk} \in C^0((T_-, T_+),H^{N}), \\ 
		\bard \Phi \in C^0((T_-, T_+),H^{N}), && e^{\decayparameter \Omega}\partial_t \Phi - \bar{\Psi} \in C^0((T_-, T_+),H^{N}).
	\end{align}
	\end{subequations}
	Furthermore, $g_{\mu \nu}$ is a smooth Lorentz metric on $(T_-,T_+) \times \mathbb{T}^3,$
	and the sets $\lbrace t \rbrace \times \mathbb{T}^3$ are Cauchy hypersurfaces in the Lorentzian manifold 
	$(\mathcal{M} \eqdef (T_-, T_+) \times \mathbb{T}^3,g)$ for $t \in (T_-, T_+).$
	
	In addition, there exists an open neighborhood $\mathcal{O}$ of
	$(\mathring{g}_{\mu \nu}, \mathring{K}_{\mu \nu}, \bard \mathring{\Phi}, \mathring{\Psi})$
	such that all data belonging to $\mathcal{O}$ launch solutions that also exist on the interval $(T_-, T_+)$ and that have the 
	same regularity properties as $(g,\partial \Phi).$ Furthermore, on $\mathcal{O},$ the map from the initial data to the 
	solution is continuous.\footnote{By continuous, we mean continuous relative to the norms on the data and the norms on the 
	solution that are stated in the hypotheses and above conclusions of the theorem.}
	
	Finally, if, as described in Section \ref{SS:IDReduced}, the data for the modified system are constructed from data for the 
	irrotational Einstein-Euler system satisfying the constraints \eqref{E:Gauss} - \eqref{E:Codazzi} on an open subset 
	$S \subset \mathbb{T}^3,$ and if the wave coordinate condition $Q_{\mu}|_{S} = 0$ holds, then $(g_{\mu \nu}, 
	\partial_{\mu} \Phi)$ is also a solution to the un-modified equations \eqref{E:EinsteinFieldsummary} - 
	\eqref{E:FluidEquationSummary} on 
	$\mathcal{D}(S),$ the Cauchy development of $S.$

\end{theorem}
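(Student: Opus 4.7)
The plan is to recast \eqref{E:finalg00equation}--\eqref{E:finalfluidequation} as a quasilinear hyperbolic system with $H^N$ data, apply a standard local existence result for such systems, and then separately propagate the wave coordinate condition via a derived linear hyperbolic equation for $Q_{\mu}$. First, I would introduce shifted unknowns
\[
v \eqdef \big(g_{00} + 1,\ g_{0j},\ g_{jk} - e^{2\Omega(0)}\delta_{jk},\ \bard \Phi,\ e^{\varkappa \Omega}\partial_t\Phi - \bar{\Psi}\big),
\]
whose initial components lie in Sobolev spaces by hypothesis (on the compact manifold $\mathbb{T}^3$, a bounded function whose first spatial derivatives lie in $H^N$ itself lies in $H^{N+1}$, which handles the $g_{jk}$ piece). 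By Proposition \ref{P:Decomposition}, each metric component satisfies a wave equation with principal part $\hat{\Square}_g$, while $\Phi$ satisfies a wave equation with principal part $\hat{\Square}_m$; in both cases all lower-order terms are smooth functions of $(v, \partial v)$ and $t$. The hypothesis $\mathring{g}_{00} \leq -C_3$ combined with the uniform positive-definiteness of the spatial block of $\mathring{g}$ implies via Cramer's rule that $g^{\mu\nu}$ is Lorentzian at $t = 0$; the analogous hypothesis on $\mathring{m}^{ab}$ together with $m^{00} = -1$ makes $m^{\mu\nu}$ Lorentzian at $t = 0$. Both properties persist on a short time interval by continuity.

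Next, a standard existence theorem for quasilinear wave equations on $\mathbb{T}^3$ (see \cite[Ch.~VI]{lH1997} or \cite[Ch.~16]{mT1997III}; cf.\ \cite[Proposition 1]{hR2008} for a closely analogous setup) produces a unique classical solution with the claimed $H^{N+1}/H^N$ regularity on some interval $(T_-, T_+)$, together with continuous dependence on data in an $H^N$-neighborhood $\mathcal{O}$ and the $C_b^{N-1}$, $C_b^{N-2}$ bounds via Sobolev embedding (using $N \geq 3$). The slices $\{t\} \times \mathbb{T}^3$ are Cauchy hypersurfaces because $\mathbb{T}^3$ is compact and $dt$ remains everywhere timelike (which follows from $g^{00} < 0$, preserved from the data), so every inextendible timelike curve meets each slice exactly once.

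The principal obstacle, and the heart of the argument, is the final claim: when the data come from Section \ref{SS:IDReduced} with $Q_\mu|_S = 0$ and the Gauss--Codazzi constraints \eqref{E:Gauss}--\eqref{E:Codazzi} satisfied on $S$, then $Q_\mu \equiv 0$ on $\mathcal{D}(S)$, and hence $(g, \partial\Phi)$ solves \eqref{E:EinsteinFieldsummary}--\eqref{E:FluidEquationSummary} there. I would execute the classical Choquet-Bruhat propagation-of-gauge argument, adapted for the fluid source: substituting $R_{\mu\nu} = \hat{R}_{\mu\nu} - D_{(\mu}Q_{\nu)}$ into the twice-contracted Bianchi identity $D^\nu(R_{\mu\nu} - \tfrac12 R g_{\mu\nu}) = 0$, eliminating $\hat{R}_{\mu\nu}$ by means of \eqref{E:FirstmodifiedRicci}, and computing $D^\nu T^{(scalar)}_{\mu\nu}$ via the modified fluid equation \eqref{E:Firstmodifiedfluid}---which, by \eqref{E:Inhomogeneousenergymomentumdivergence} together with the fact that $I_{\partial\Phi}$ is linear in $Q$ by \eqref{E:gaugetermIpartialPhi}, contributes only terms linear in $(Q, \partial Q)$---one derives a closed linear second-order hyperbolic system of the form $\hat{\Square}_g Q_\mu = \mathcal{L}_\mu(Q, \partial Q)$ with smooth coefficients. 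By construction $Q_\mu|_S = 0$; the Gauss--Codazzi constraints then force $\partial_t Q_\mu|_S = 0$ (evaluate \eqref{E:FirstmodifiedRicci} on $S$, reduce $D_{(\mu}Q_{\nu)}|_S$ to $\tfrac12(\delta^0_\mu \partial_t Q_\nu + \delta^0_\nu \partial_t Q_\mu)|_S$ using $Q|_S = 0$, and compare the $\nu=0$ components with the constraints). Uniqueness for the linear hyperbolic system then gives $Q_\mu \equiv 0$ on $\mathcal{D}(S)$, whence $I_{\mu\nu} \equiv 0$, $I_{\partial\Phi} \equiv 0$, and $\hat{R}_{\mu\nu} \equiv R_{\mu\nu}$, completing the reduction. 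The technical subtlety lies in tracking precisely how $I_{\partial\Phi}$ enters the derivation so as to preserve the linearity of the $Q$-equation; this is exactly what the deliberate construction of the gauge corrections in Section \ref{S:ReducedEquations} was designed to ensure.
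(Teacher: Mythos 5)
Your proposal is correct and follows essentially the same route as the paper: the local existence, regularity, and continuous-dependence claims are handled by citing the standard quasilinear hyperbolic theory (the paper states them without proof, referencing the same sources \cite[Ch.~VI]{lH1997}, \cite[Ch.~16]{mT1997III}, and \cite[Proposition 1]{hR2008}), while the final claim about recovering the un-modified system is exactly the content of Proposition \ref{P:Preservationofgauge}. Your Bianchi-identity derivation of the linear hyperbolic system for $Q_{\mu}$ and the use of the Gauss--Codazzi constraints to force $\partial_t Q_{\mu}|_{S} = 0$ match the paper's argument in Section \ref{SS:PreservationofHarmonicGauge}.
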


\begin{remark}
	The hypotheses in Theorem \ref{T:LocalExistence} have been stated in a manner that allows us to apply to it initial data 
	near that of the background solution of Section \ref{S:backgroundsolution}. Furthermore, we remark that
	the assumptions and conclusions concerning the metric components $g_{jk}$ would appear more natural if expressed in terms of
	the variables $h_{jk} \eqdef e^{-2 \Omega} g_{jk}$; these rescaled quantities are the ones that we use in our global 
	existence proof.
\end{remark}

\begin{remark}
	The fact that $(g_{\mu \nu},\partial_{\mu} \Phi)$ is also a solution to the modified equations if the constraints and the 
	wave coordinate condition $Q_{\mu}|_{\Sigma} = 0$ are satisfied is discussed 
	more fully in Section \ref{SS:PreservationofHarmonicGauge}.
\end{remark}

In our proof of Theorem \ref{T:GlobalExistence}, we will use the following continuation principle, which provides
criteria that are sufficient to ensure that a solution to the modified equations exists globally in time.
See \cite[Theorem 6.4.11]{lH1997} for the ideas behind a proof.

\begin{theorem} \label{T:ContinuationCriterion} \textbf{(Continuation principle)}
	Let $T_{max}$ be the supremum over all times $T_+$ such that the solution $(g_{\mu \nu}, \partial_{\mu} \Phi),$ 
	$(\mu,\nu=0,1,2,3),$ exists on the 
	interval $[0,T_+)$ and has the properties stated in the conclusions of Theorem \ref{T:LocalExistence}. Let $m^{\mu \nu}$ be 
	the reciprocal acoustical metric, which is defined above in Proposition \ref{P:Decomposition}. 
	Then if $T_{max} < \infty,$ one of the following four possibilities must occur:
	\begin{enumerate}
		\item There is a sequence $(t_n,x_n) \in [0,T_{max}) \times \mathbb{T}^3$ such that
			$\lim_{n \to \infty} g_{00}(t_n,x_n) = 0.$
		\item There is a sequence $(t_n,x_n) \in [0,T_{max}) \times \mathbb{T}^3$ such that the smallest eigenvalue of
			the $3 \times 3$ matrix $g_{jk}(t_n,x_n)$ converges to $0$ as $n \to \infty.$ 
		\item There is a sequence $(t_n,x_n) \in [0,T_{max}) \times \mathbb{T}^3$ such that the smallest eigenvalue of
			the $3 \times 3$ matrix $m^{jk}(t_n,x_n)$ converges to $0$ as $n \to \infty.$ 
		\item $\lim_{t \to T_{max}} \sup_{0 \leq \tau \leq t} \Bigg\lbrace 
			\sum_{\mu,\nu =0}^3 \| g_{\mu \nu} \|_{C_b^1}
			+ \sum_{\mu,\nu =0}^3 \| m^{\mu \nu} \|_{C_b^1}
			+ \sum_{\mu = 0}^3 \| \partial_{\mu} \Phi \|_{L^{\infty}} \Bigg\rbrace = \infty.$
		\end{enumerate}
		Similar results hold for an interval of the form $(T_{min},0].$
\end{theorem}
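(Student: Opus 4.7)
The plan is to argue by contradiction: assume $T_{max} < \infty$ and that none of the four listed conclusions holds, then show that the solution can in fact be extended past $T_{max},$ contradicting the maximality. First I would translate the negation of conclusions (1)--(4) into uniform quantitative statements on $[0,T_{max}) \times \mathbb{T}^3:$ there exist constants $c, C, M > 0$ such that $g_{00} \leq -c,$ the eigenvalues of the $3 \times 3$ matrices $g_{jk}$ and $m^{jk}$ lie in $[c,C],$ and the combined norm $\|g\|_{C_b^1} + \|m\|_{C_b^1} + \|\partial \Phi\|_{L^\infty}$ is bounded by $M.$ The first three bounds express uniform hyperbolicity of the two reduced wave operators $\hat{\Square}_g$ and $\hat{\Square}_m$ appearing in the decomposition of Proposition \ref{P:Decomposition}, so standard energy identities in those operators will be available.

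The core analytic step is to promote the $C^1$ and uniform hyperbolicity bounds to uniform bounds on the full Sobolev norm $\suptotalnorm{N}$ of the solution. To this end I would differentiate equations \eqref{E:finalg00equation}--\eqref{E:finalfluidequation} with $\partial_{\vec{\alpha}}$ for each $|\vec{\alpha}| \leq N,$ and run the standard energy estimate adapted to the operators $\hat{\Square}_g$ and $\hat{\Square}_m$ as in Section \ref{S:NormsandEnergies}. Commuting $\partial_{\vec{\alpha}}$ past these quasilinear wave operators generates commutator terms and the differentiated right-hand sides $\triangle_{\mu \nu}, \triangle_{\partial \Phi}.$ These can be estimated pointwise and in $L^2$ by the Sobolev--Moser product and composition estimates collected in the Appendix: every top-order factor produces a term of the form $C(M) \cdot \suptotalnorm{N},$ while sub-top-order factors are absorbed via the uniform $C^1_b$ bound coming from step 3. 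Crucially, the $C^1_b$ control on $m^{\mu \nu}$ implicitly controls $\partial^2 \Phi$ in $L^\infty$ (since $m^{\mu \nu}$ depends on $\partial \Phi$ through $z_j$ and on $g$), which is precisely what the Moser estimates require to close. The result is a linear differential inequality of the schematic form $\frac{d}{dt} \suptotalnorm{N}^2 \leq C(M,c) \cdot \suptotalnorm{N}^2 + C(M),$ valid on $[0,T_{max}).$

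A Gronwall argument then yields a uniform bound $\suptotalnorm{N}(t) \leq K$ for all $t \in [0, T_{max}),$ together with analogous bounds on $\partial_t$ of the various solution components via the evolution equations. By standard weak compactness and the fundamental theorem of calculus, $(g_{\mu \nu}, \partial_{\mu} \Phi)$ extends continuously to $t = T_{max}$ in the Sobolev spaces listed in the hypotheses of Theorem \ref{T:LocalExistence}, and the limiting data at $t = T_{max}$ still satisfy the same positivity and invertibility hypotheses (by continuity from the uniform bounds $g_{00} \leq -c,$ eigenvalues of $g_{jk}, m^{jk}$ in $[c,C]$). Picking $t^\ast < T_{max}$ close to $T_{max}$ and invoking Theorem \ref{T:LocalExistence} together with its continuous dependence statement, the data at time $t^\ast$ launch a classical solution on a time interval of length bounded below by a positive constant depending only on $(K,c,C,M);$ shrinking $T_{max} - t^\ast,$ this solution extends the original solution past $T_{max},$ yielding the desired contradiction.

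The main obstacle is the Sobolev energy step: the fluid equation has its \emph{own} reduced wave operator $\hat{\Square}_m$ whose coefficients depend on the spacetime metric and on $\partial \Phi,$ so the energy estimates for $g$ and for $\Phi$ are coupled, and one must verify that commutator terms involving top-order derivatives of $m^{\mu \nu}$ against $\partial^2 \Phi$ are actually absorbable. This works because the uniform $C^1$ control on $m^{\mu \nu}$ supplies the $L^\infty$ bound on $\partial^2 \Phi$ needed in the Moser inequalities, and because the uniform lower bound on the eigenvalues of $m^{jk}$ keeps the denominator $(1+2s)+\triangle_{(m)}$ in \eqref{E:mjkdef}--\eqref{E:m0jdef} safely away from zero. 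Everything else is mechanical once these two structural observations are in place.
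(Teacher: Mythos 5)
The paper does not actually prove this theorem: it is stated without proof, with a pointer to H\"ormander's Theorem 6.4.11 ``for the ideas behind a proof.'' Your overall strategy --- negate (1)--(4) to get uniform quantitative hyperbolicity and $C_b^1$ bounds, close a Gronwall estimate for $\suptotalnorm{N}$, and then re-apply Theorem \ref{T:LocalExistence} with a uniform lower bound on the local existence time to push past $T_{max}$ --- is indeed the standard route, and the first and last steps of your sketch are fine.

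The gap is in the middle step, and you have put your finger on it and then waved it away. The claimed linear inequality $\frac{d}{dt}\suptotalnorm{N}^2 \leq C(M,c)\suptotalnorm{N}^2 + C(M)$ does not follow from the negation of (1)--(4) via the Moser estimates of the Appendix. The commutator $[\hat{\Square}_g,\partial_{\vec{\alpha}}]g_{\mu\nu}$ is estimated through Proposition \ref{P:SobolevMissingDerivativeProposition}, which unavoidably produces a term of the schematic form $\|\partial^2 g_{\mu\nu}\|_{L^{\infty}}\,\|\bard g\|_{H^{N-1}}$, and likewise $[\hat{\Square}_m,\partial_{\vec{\alpha}}]\Phi$ produces $\|\partial^2\Phi\|_{L^{\infty}}\,\|\bard m\|_{H^{N-1}}$. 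Neither $\|\partial^2 g_{\mu\nu}\|_{L^{\infty}}$ nor $\|\partial^2\Phi\|_{L^{\infty}}$ is among the quantities controlled by the negation of (1)--(4): criterion (4) bounds only first derivatives of $g_{\mu\nu}$ and $m^{\mu\nu}$ and zeroth derivatives of $\partial\Phi$. Your assertion that the $C_b^1$ bound on $m^{\mu\nu}$ ``implicitly controls $\partial^2\Phi$ in $L^{\infty}$'' is not justified: $\bard m^{\mu\nu}$ depends on $\partial^2\Phi$ only through $\bard z_j$, contracted against metric factors and divided by $\partial_t\Phi$, and there is no pointwise inversion of this relation; moreover nothing at all in (1)--(4) addresses $\partial^2 g_{\mu\nu}$. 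If instead you bound these $L^{\infty}$ norms by Sobolev embedding into $H^{N-1}$, the offending terms become quadratic in the energy and the differential inequality becomes of Riccati type, $\frac{d}{dt}E \lesssim C(M)E + CE^2$, which does not yield a uniform bound on a prescribed finite interval $[0,T_{max})$ and hence does not give the contradiction. This is precisely why H\"ormander's Theorem 6.4.11 --- the source the authors cite --- assumes $L^{\infty}$ bounds on derivatives up to order \emph{two} of the unknown. To repair your argument you must either explain how second-derivative $L^{\infty}$ control is recovered from (1)--(4) (for instance by a preliminary estimate at a lower derivative level combined with the fact that the coefficients of $\hat{\Square}_g$ depend on $g_{\mu\nu}$ undifferentiated, or by an interpolation/continuous-induction argument), or strengthen the breakdown criterion; as written, the central step fails.
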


\begin{remark}   \label{R:mLorentziancondition}
	If one the first two possibilities occurs, then the hyperbolicity of equations 
	\eqref{E:finalg00equation} - \eqref{E:finalhjkequation} breaks down. Similarly, if the third possibility occurs, then the 
	hyperbolicity of equation \eqref{E:finalfluidequation} breaks down.
\end{remark}

\subsection{Preservation of the wave coordinate condition} \label{SS:PreservationofHarmonicGauge}

In Section \ref{SS:IDReduced}, from given initial data for the Einstein equations, we constructed initial data for the modified equations that in particular satisfy the wave coordinate condition along the Cauchy hypersurface $\Sigma;$ i.e., $Q_{\mu}|_{t=0} = 0.$ As mentioned in the statement of Theorem \ref{T:LocalExistence}, these data launch a solution of both the modified equations and the Einstein equations. As mentioned in Section \ref{S:ReducedEquations}, this fact follows from the fact that $Q_{\mu} = 0$ in $\mathcal{D}(\Sigma).$ In the next proposition, we prove this fact.

\begin{proposition} \label{P:Preservationofgauge} \textbf{(Preservation of wave coordinates)}
	Let $\big(\Sigma = \lbrace x \in \mathcal{M} \ | \ t=0 \rbrace, \bar{g}_{jk}, \bar{K}_{jk}, \bard \mathring{\Phi}, 
	\mathring{\Psi}\big),$ \\
	$(j,k=1,2,3),$ be initial data for the Einstein equations \eqref{E:EinsteinFieldsummary} 
	- \eqref{E:FluidEquationSummary} that satisfy 
	the constraints \eqref{E:Gauss} - \eqref{E:Codazzi}. Let $\big( g_{\mu \nu}|_{t=0}, \partial_t g_{\mu \nu}|_{t=0}, 
	\bard \mathring{\Phi}, \mathring{\Psi}\big),$ $(\mu,\nu=0,1,2,3),$ be initial data for the modified equations 
	\eqref{E:FirstmodifiedRicci} - \eqref{E:Firstmodifiedfluid} that are constructed from the data for 
	the Einstein equations as described in Section \ref{SS:IDReduced}. In particular, we recall that the construction of Section 
	\ref{SS:IDReduced} leads the fact that $Q_{\mu}|_{t=0} = 0$ where $Q_{\mu}$
	is defined in \eqref{E:Qdef}. Let $(\mathcal{M},g_{\mu \nu}, \partial_{\mu} \Phi)$ be the maximal 
	globally hyperbolic development of the data for the modified equations. Then $Q_{\mu} = 0$ in $\mathcal{D}(\Sigma).$
\end{proposition}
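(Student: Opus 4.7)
The plan is the classical Choquet-Bruhat argument, adapted to the fact that both the Einstein part and the fluid part of our system have been modified in a $Q$-dependent way. Specifically, I will derive a homogeneous linear wave-type system for $Q_\mu$ whose Cauchy data on $\Sigma$ vanish, and then invoke uniqueness for such systems to conclude $Q_\mu \equiv 0$ on $\mathcal{D}(\Sigma).$

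First, I would rewrite the modified Einstein equations \eqref{E:FirstmodifiedRicci} in terms of the ordinary Einstein tensor $G_{\mu\nu} = R_{\mu\nu} - \frac{1}{2}Rg_{\mu\nu}.$ Using $\hat{R}_{\mu\nu} = R_{\mu\nu} + \frac{1}{2}(D_\mu Q_\nu + D_\nu Q_\mu),$ tracing with $g^{\mu\nu},$ and re-substituting, a straightforward rearrangement yields
\begin{align*}
G_{\mu\nu} + \Lambda g_{\mu\nu} - T_{\mu\nu}^{(scalar)}
= -I_{\mu\nu} + \tfrac{1}{2}I g_{\mu\nu}
- \tfrac{1}{2}\bigl(D_\mu Q_\nu + D_\nu Q_\mu\bigr) + \tfrac{1}{2}(D_\alpha Q^\alpha) g_{\mu\nu},
\end{align*}
where $I \eqdef g^{\alpha\beta} I_{\alpha\beta}.$ Apply $D^\mu$ to both sides. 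The left side loses the first two terms by the twice-contracted Bianchi identity; the $T^{(scalar)}$ term contributes $-D^\mu T_{\mu\nu}^{(scalar)} = 2 I_{\partial\Phi} D_\nu \Phi$ by \eqref{E:Inhomogeneousenergymomentumdivergence}, since \eqref{E:Firstmodifiedfluid} is exactly of the form \eqref{E:ELfluidinhomogeneous}. On the right side, the key commutator $[D^\mu, D_\nu]Q_\mu$ produces a Ricci term $R_\nu{}^\alpha Q_\alpha,$ so that after rearranging one obtains
\begin{align*}
\hat{\Square}_g Q_\nu = \mathcal{F}_\nu\bigl(g, \partial\Phi; Q, DQ\bigr),
\end{align*}
where, crucially, $\mathcal{F}_\nu$ depends \emph{linearly and homogeneously} on the components of $Q_\mu$ and $D_\mu Q_\nu$: the $I_{\mu\nu}$ terms are linear in $Q$ by \eqref{E:gaugetermI00}--\eqref{E:gaugetermIjk}, the $I_{\partial\Phi}D_\nu\Phi$ term is linear in $Q$ by \eqref{E:gaugetermIpartialPhi}, and the remaining contributions are either $R_\nu{}^\alpha Q_\alpha$ or explicit first derivatives of $Q.$ The coefficients of this linear system are smooth functions of the solution $(g, \partial\Phi)$ on $\mathcal{D}(\Sigma).$

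Second, I must verify that $(Q_\mu, \partial_t Q_\mu)|_\Sigma = 0.$ The vanishing of $Q_\mu|_\Sigma$ is built into the initial data of Section \ref{SS:IDReduced}. For the time derivative, I evaluate the displayed equation above at $t=0$: the left-hand side vanishes by the Gauss--Codazzi constraints \eqref{E:Gauss}--\eqref{E:Codazzi} (for the $\mu=0$ components) --- here I use the fact that with $Q_\mu|_\Sigma = 0,$ the modified energy-momentum structure and the ordinary one agree along $\Sigma.$ All $I_{\mu\nu}$ terms on the right vanish since they are linear in $Q.$ Since all spatial derivatives of $Q_\mu$ also vanish along $\Sigma$ and at $t=0$ we have $g_{00} = -1,$ $g_{0j} = 0,$ a short computation reduces the $\nu = 0$ component to $-\tfrac{1}{2}\partial_t Q_0 = 0$ and the $\nu = j$ component to $-\tfrac{1}{2}\partial_t Q_j = 0,$ giving $\partial_t Q_\mu|_\Sigma = 0$ as desired.

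Finally, since $(Q_\mu)$ satisfies a linear, homogeneous, diagonal-principal-part hyperbolic system on $\mathcal{D}(\Sigma)$ with vanishing Cauchy data on the Cauchy hypersurface $\Sigma,$ standard uniqueness results for linear wave equations on globally hyperbolic Lorentzian manifolds (see e.g.\ \cite[Ch.~VI]{lH1997}) yield $Q_\mu \equiv 0$ throughout $\mathcal{D}(\Sigma).$ The main technical obstacle will be the careful bookkeeping in deriving $\mathcal{F}_\nu$ and, in particular, verifying that the $I_{\partial\Phi} D_\nu\Phi$ contribution is genuinely of the form ``(coefficient depending on $g, \partial\Phi$)$\,\cdot\, Q$'' with no stray inhomogeneous terms --- a fact that relies essentially on the specific choices \eqref{E:gaugetermI00}--\eqref{E:gaugetermIpartialPhi} made for the modification.
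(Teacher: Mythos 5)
Your proposal follows essentially the same route as the paper's proof: derive the identity relating the Einstein tensor to the symmetrized derivatives of $Q_\mu$ and the gauge terms $I_{\mu\nu}, I_{\partial\Phi}$, take its divergence using the Bianchi identity and \eqref{E:Inhomogeneousenergymomentumdivergence} to obtain a homogeneous linear hyperbolic system for $Q_\mu$, use the constraints together with $Q_\mu|_\Sigma = 0$ and the normal contractions to deduce $\partial_t Q_\mu|_\Sigma = 0$, and conclude by uniqueness. The argument is correct as proposed, up to routine bookkeeping of constant factors.
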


\begin{remark}
	Although we assume that the equation of state satisfies the assumptions of Section \ref{S:IrrotationalEE}, 
	we do not assume here that it is of the form $p= \speed^2 \rho.$ 
\end{remark}

\begin{proof}

To prove Proposition \ref{P:Preservationofgauge}, we first compute that for a solution of the modified equations, it follows that $Q_{\mu}$ is a solution to the \emph{hyperbolic} system
\begin{align} \label{E:Qwaveequation}
	g^{\alpha \beta} D_{\alpha} D_{\beta} Q_{\mu} + R_{\mu}^{\ \alpha} Q_{\alpha} + 2 g^{\alpha \beta} D_{\alpha} I_{\mu \beta} - 
	g^{\alpha \beta} D_{\mu} I_{\alpha \beta} = - 4 I_{\partial \Phi} D_{\mu} \Phi, && (\mu=0,1,2,3).
\end{align}
where $I_{\mu \nu}$ and $I_{\partial \Phi},$ which depend linearly on the $Q_{\mu},$ are defined in \eqref{E:gaugetermI00} - \eqref{E:gaugetermIpartialPhi}. More specifically, to obtain equation \eqref{E:Qwaveequation}, apply $D^{\nu}$ to each side of equation \eqref{E:ModifiedEE} below, and use the Bianchi identity $D^{\nu}(R_{\mu \nu} - \frac{1}{2}R g_{\mu \nu}) = 0,$ equation \eqref{E:Inhomogeneousenergymomentumdivergence}, and the curvature relation $D_{\mu} D^{\alpha} Q_{\alpha} =
D^{\alpha} D_{\mu} Q_{\alpha} - R_{\mu}^{\ \alpha} Q_{\alpha}.$

Since \eqref{E:Qwaveequation} is a system of wave equations and is of hyperbolic character, the fact that $Q_{\mu} = 0$ in $\mathcal{D}(\Sigma)$ would follow from a standard uniqueness theorem for such systems (see e.g. \cite[Ch. VI]{lH1997}, \cite[Ch.16]{mT1997III}), together with the knowledge that \emph{both} $Q_{\mu}|_{\Sigma} = 0$ \emph{and} $\partial_t Q_{\mu}|_{\Sigma} = 0$ hold. However, in constructing the data for the modified equations, we have already exhausted our gauge freedom. Although the construction of Section \ref{SS:IDReduced} has led to the condition $Q_{\mu}|_{\Sigma} = 0,$ it seems that we have no way to enforce the condition $\partial_t Q_{\mu}|_{\Sigma} = 0.$ The remarkable fact, first exploited by Choquet-Bruhat in \cite{cB1952}, is that the assumption that the original data for the Einstein equations satisfies the constraints \eqref{E:Gauss} - \eqref{E:Codazzi}, together with the assumption
$Q_{\mu}|_{\Sigma} = 0,$ \emph{automatically imply} that $\partial_t Q_{\mu}|_{\Sigma} = 0.$ The remainder of the proof is dedicated to proving this fact.

First, using definition \eqref{E:modifiedRicci}, we compute that for a solution of the modified equation 
\eqref{E:FirstmodifiedRicci}, the following identity holds:

\begin{align} \label{E:ModifiedEE}
	R_{\mu \nu} - \frac{1}{2}R g_{\mu \nu} + \Lambda g_{\mu \nu} 
		- T_{\mu \nu}^{(scalar)} & = - \frac{1}{2} \big(D_{\mu} Q_{\nu} + D_{\nu} Q_{\mu} \big)
		+ \frac{1}{2} (D^{\alpha} Q_{\alpha}) g_{\mu \nu} - I_{\mu \nu} + \frac{1}{2} g^{\alpha \beta} I_{\alpha \beta} 
		g_{\mu \nu}, && (\mu,\nu=0,1,2,3).
\end{align}

The left-hand side of \eqref{E:ModifiedEE} is simply the difference of the left and right sides of the Einstein equations 
\eqref{E:EinsteinFieldsummary}. Since the initial data $(\Sigma, \bar{g}, \bar{K}, \bard \mathring{\Phi}, \mathring{\Psi})$ for the Einstein equations are assumed to satisfy the constraints \eqref{E:Gauss} - \eqref{E:Codazzi}, as described in Section \ref{SSS:InitialDataOriginalSystem}, it follows that the Einstein equations are satisfied along $\Sigma;$ i.e., the left-hand side of \eqref{E:ModifiedEE} vanishes at $t=0.$ Furthermore, since $Q_{\mu}|_{t=0} = 0,$ it follows from 
definitions \eqref{E:gaugetermI00} - \eqref{E:gaugetermIpartialPhi} that $I_{\mu \nu}|_{t=0}=0,$ $(\mu, \nu = 0,1,2,3),$ 
and $I_{\partial \Phi}|_{t=0}=0.$ Using these facts and \eqref{E:ModifiedEE}, we conclude that the following equation holds:

\begin{align} \label{E:partialtQmuequals0}
	- \frac{1}{2} \big(D_{\mu} Q_{\nu} + D_{\nu} Q_{\mu} \big)|_{t=0}
		+ \frac{1}{2} (D^{\alpha} Q_{\alpha})|_{t=0}  g_{\mu \nu}|_{t=0} = 0, && (\mu, \nu = 0,1,2,3). 
\end{align}

Now let $\hat{N}^{\mu}=\delta_0^{\mu}$ denote the future-directed unit normal to $\Sigma,$ and let 
$X_{(j)}^{\nu} = \delta_j^{\nu},$ a vectorfield tangent to $\Sigma.$ Contracting \eqref{E:partialtQmuequals0} 
with $\hat{N}^{\mu} X^{\nu},$ setting $t=0,$ using the facts that $Q_{\mu}|_{t=0} = 0$ and 
$(X_{(j)}^{\nu} \partial_{\nu} Q_{\mu})|_{t=0} = 0,$ we have that

\begin{align} \label{E:partialtQjvanishesinitially}
	0 & = -\frac{1}{2}\hat{N}^{\mu}X_{(j)}^{\nu}(\partial_{\mu} Q_{\nu} + \partial_{\nu} Q_{\mu})|_{t=0} = 
		\partial_t Q_j|_{t=0}, && (j = 1,2,3). 
\end{align}
We also contract \eqref{E:partialtQmuequals0} with $\hat{N}^{\mu} \hat{N}^{\nu}$ set $t=0,$ and use \eqref{E:partialtQjvanishesinitially}, the fact that $Q_{\mu}|_{t=0} = 0,$ and the fact that $g^{00}|_{t=0} = -1,$ obtaining

\begin{align} \label{E:partialtQ0vanishesinitially}
	\partial_t Q_0|_{t=0} = 0.
\end{align}
From \eqref{E:partialtQjvanishesinitially} and \eqref{E:partialtQ0vanishesinitially},
we conclude that the data for the system \eqref{E:Qwaveequation} are trivial.

\end{proof}

\section{Norms and Energies} \label{S:NormsandEnergies}

In this section, we define the Sobolev norms\footnote{Technically, $\supfluidnorm{N}$ is a norm of the difference of $\partial \Phi$ and the derivatives $\partial \widetilde \Phi$ of the background potential.} and energies that will play a central role in our global existence theorem of Section \ref{S:GlobalExistence}. Let us make a few comments on them. First, we remark that in Section \ref{S:EnergyNormEquivalence}, we will show that if the norms are sufficiently small, then they are equivalent to the energies; i.e., the energies can be used to control Sobolev norms of solutions. Next, we recall that the background solution variable $\partial \widetilde{\Phi}$ satisfies $\partial_t \widetilde{\Phi} = \bar{\Psi} e^{-\decayparameter \Omega},$ $\bard \widetilde{\Phi} = 0,$ where $\bar{\Psi} > 0$ is the constant defined in \eqref{E:barPsidef}. The quantity $\supfluidnorm{N},$ 
which is introduced below in \eqref{E:Fluidsupnorm}, measures the difference of the perturbed variable $(\partial_t \Phi, \bard \Phi)$ from the background $(\partial_t \widetilde{\Phi}, 0).$ We also follow Ringstr\"{o}m by introducing scalings by $e^{\alpha \Omega},$ where $\alpha$ is a number, in the definitions of the norms and energies. The effect of these scalings is that in our proof of global existence, a convenient and viable bootstrap assumption to make for these quantities is that they are of size $\epsilon,$ where $\epsilon$ is sufficiently small. Finally, we remark that the positive number $q$ that appears in this section and throughout this article is defined in \eqref{E:qdef} below, and we remind the reader that $h_{jk} \eqdef e^{-2\Omega} g_{jk},$ ($j,k = 1,2,3$).

\subsection{Norms for \texorpdfstring{$g$ and $\partial \Phi$}{}} \label{SS:NormsforgandpartialPhi}

In this section, we introduce the weighted Sobolev norms that will be used in Section \ref{S:BootstrapConsequences} 
to estimate the terms appearing in the modified equations. The weights are designed in order to make the bootstrap argument of Section \ref{S:GlobalExistence} easy to close.

\begin{definition} \label{D:Norms}
We define the norms $\gzerozeronorm{N}(t),$ $\gzerostarnorm{N}(t),$ $\hstarstarnorm{N}(t),$
$\gnorm{N}(t),$ $\fluidnorm{N}(t),$ $\totalnorm{N}(t),$ $\supgzerozeronorm{N}(t),$ $\supgzerostarnorm{N}(t),$ $\suphstarstarnorm{N}(t),$ $\supgnorm{N}(t),$ and $\supfluidnorm{N}(t)$ as follows:

\begin{subequations}
\begin{align}
	\gzerozeronorm{N} & \eqdef e^{q \Omega} 
			\| \partial_t g_{00} \|_{H^N} +  e^{q \Omega} \| g_{00} + 1 \|_{H^N}
			+ e^{(q - 1)\Omega} \| \bard g_{00} \|_{H^N}, \label{E:mathfrakSMg00} \\
	\gzerostarnorm{N} & \eqdef 
		\sum_{j=1}^3 \Big( e^{(q - 1) \Omega} \| \partial_t g_{0j} \|_{H^N} 
			+  e^{(q - 1) \Omega} \| g_{0j} \|_{H^N} 
			+ e^{(q - 2)\Omega} \| \bard g_{0j} \|_{H^N} \Big),  	
			\label{E:mathfrakSMg0*} \\
	\hstarstarnorm{N} & \eqdef \sum_{j,k=1}^3 \Big( e^{q \Omega} 
		\| \partial_t h_{jk} \|_{H^N} + \underbrace{\| \bard h_{jk} \|_{H^{N-1}}}_{\mbox{absent if} \ N=0} 
		+ e^{(q - 1)\Omega} \| \bard h_{jk} \|_{H^N} \Big), \label{E:mathfrakSMh**} \\
	%\gnorm{N} & \eqdef \gzerozeronorm{N} + \gzerostarnorm{N} + 
		%\hstarstarnorm{N},	\label{E:totalgnorm} \\
	\fluidnorm{N} & \eqdef \| e^{\decayparameter \Omega} \partial_t \Phi - \bar{\Psi} \|_{H^N} + 
		e^{(\decayparameter-1)\Omega} \| \bard \Phi \|_{H^N}, \label{E:Fluidnorm} \\
	%\underline{\mathbf{S}}_N & \eqdef \gnorm{N} + \fluidnorm{N}, 	\label{E:totalnorm} 
\end{align}
\end{subequations}	

\begin{subequations}
\begin{align}
	\supgzerozeronorm{N}(t) & \eqdef \sup_{0 \leq \tau \leq t} 
			\gzerozeronorm{N}(\tau),   \label{E:mathfrakSMsupg00} \\
	\supgzerostarnorm{N}(t) & \eqdef \sup_{0 \leq \tau \leq t} 
		\gzerostarnorm{N}(\tau), \label{E:mathfrakSMsupg0*} \\
	\suphstarstarnorm{N}(t) & \eqdef \sup_{0 \leq \tau \leq t} 
		\hstarstarnorm{N}(\tau), \label{E:mathfrakSMsuph**} \\
	\supgnorm{N} & \eqdef \supgzerozeronorm{N} + \supgzerostarnorm{N} + \suphstarstarnorm{N}		
		\label{E:totalsupgnorm}, \\
	\supfluidnorm{N}(t) & \eqdef \sup_{0 \leq \tau \leq t} \fluidnorm{N}(\tau), \label{E:Fluidsupnorm} \\
	\suptotalnorm{N} & \eqdef \supgnorm{N} + \supfluidnorm{N}. 	\label{E:totalsupnorm}
\end{align}
\end{subequations}

\end{definition}

\subsection{Energies for the metric \texorpdfstring{$g$}{}}

\subsubsection{The building block energy for $g$}

The energies for the metric components will be built from the quantities defined in the following lemma. They are designed 
with equations \eqref{E:finalg00equation} - \eqref{E:finalhjkequation} in mind.

\begin{lemma} \label{L:buildingblockmetricenergy}
 
Let $v$ be a solution to the scalar equation
\begin{align} \label{E:vscalar}
	\hat{\Square}_g v = \alpha H \partial_t v + \beta H^2 v + F,
\end{align}

\noindent where $	\hat{\Square}_g = g^{\lambda \kappa} \partial_{\lambda} \partial_{\kappa},$ 
$\alpha > 0$ and $\beta \geq 0.$ For any constants $\gamma \geq 0, \delta \geq 0,$ we define

\begin{align} \label{E:mathcalEdef}
		\mathcal{E}_{(\gamma,\delta)}^2[v,\partial v] \eqdef \frac{1}{2} \int_{\mathbb{T}^3} 
		\lbrace -g^{00} (\partial_t v)^2 + g^{ab}(\partial_a v)(\partial_b v) - 2 \gamma H g^{00} v \partial_t v
		+ \delta H^2 v^2 \rbrace \, d^3 x.
\end{align}

Then there exist constants 
$\eta > 0, C >0, \delta \geq 0,$ and $\gamma \geq 0,$ with $\eta$ and $C$ depending 
on $\alpha, \beta, \gamma$ and $\delta,$ such that 

\begin{align}
	|g^{00} + 1| \leq \eta
\end{align}
implies that

\begin{align} \label{E:mathcalEfirstlowerbound}
	\mathcal{E}_{(\gamma,\delta)}^2[v,\partial v] \geq C \int_{\mathbb{T}^3} (\partial_t v)^2 
		+ g^{ab}(\partial_a v)(\partial_b v) + C_{(\gamma)} v^2 \, d^3x,
\end{align} where $C_{(\gamma)} = 0$ if $\gamma = 0$ and $C_{(\gamma)} = 1$ if $\gamma > 0.$ Furthermore, if
$\beta = 0,$ then $\gamma = \delta = 0.$ Finally, we have that

\begin{align} \label{E:mathcalEtimederivativebound}
	\frac{d}{dt} (\mathcal{E}_{(\gamma,\delta)}^2[v,\partial v]) & \leq - \eta H \mathcal{E}_{(\gamma,\delta)}^2[v,\partial v] 
		+ \int_{\mathbb{T}^3} \Big\lbrace - (\partial_t v + \gamma H v)F + \triangle_{\mathcal{E};(\gamma, \delta)}[v,\partial v] \Big\rbrace \, d^3 x,
\end{align}
where

\begin{align} \label{E:trianglemathscrEdef}
	\triangle_{\mathcal{E};(\gamma, \delta)}[v,\partial v] & = - \gamma H (\partial_a g^{ab}) v \partial_b v
		- 2 \gamma H (\partial_a g^{0a}) v \partial_t v - 2 \gamma H g^{0a}(\partial_a v)(\partial_t v) \\
	& \ \ - (\partial_a g^{0a})(\partial_t v)^2 - (\partial_a g^{ab})(\partial_b v)(\partial_t v)
		- \frac{1}{2}(\partial_t g^{00})(\partial_t v)^2 \notag \\
	& \ \ + \bigg(\frac{1}{2} \partial_t g^{ab} + \omega g^{ab} \bigg) (\partial_a v) (\partial_b v)
		+ \big(H - \omega \big) g^{ab} (\partial_a v) (\partial_b v) \notag \\
	& \ \ - \gamma H (\partial_t g^{00}) v \partial_t v - \gamma H (g^{00} + 1)(\partial_t v)^2. \notag
\end{align}

\end{lemma}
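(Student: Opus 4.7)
The proof is a systematic energy computation. The plan is to test equation \eqref{E:vscalar} against the multiplier $-(\partial_t v + \gamma H v)$ and integrate over $\mathbb{T}^3$, converting the reduced wave operator into a total time derivative of $\mathcal{E}^2_{(\gamma,\delta)}$ plus error integrals. Testing against $-\partial_t v$ and integrating by parts in $\partial_a$ yields
\begin{equation*}
\int -(\partial_t v)\hat{\Square}_g v \, d^3x = \frac{d}{dt}\bigg[\frac{1}{2}\int\{-g^{00}(\partial_t v)^2 + g^{ab}(\partial_a v)(\partial_b v)\}\, d^3 x\bigg] + \int \mathcal{R}_0[v,\partial v]\, d^3x,
\end{equation*}
where $\mathcal{R}_0 = -\tfrac{1}{2}(\partial_t g^{00})(\partial_t v)^2 + \tfrac{1}{2}(\partial_t g^{ab})(\partial_a v)(\partial_b v) - (\partial_a g^{0a})(\partial_t v)^2 - (\partial_a g^{ab})(\partial_t v)(\partial_b v)$. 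The crucial algebraic fact is that the $g^{ab}(\partial_a\partial_t v)(\partial_b v)$ term produced by integration-by-parts on $-g^{ab}(\partial_t v)\partial_a\partial_b v$ cancels exactly the corresponding contribution from $\frac{d}{dt}[\tfrac{1}{2}g^{ab}(\partial_a v)(\partial_b v)]$. Testing against $-\gamma H v$ and integrating by parts analogously produces $-\gamma H \frac{d}{dt}\int g^{00} v\partial_t v + \gamma H \int\{g^{00}(\partial_t v)^2 + g^{ab}(\partial_a v)(\partial_b v)\}$ plus divergence-type errors weighted by $\gamma H$. Combining both identities with the RHS of \eqref{E:vscalar} and appending the time derivative of $\tfrac{1}{2}\delta H^2 v^2$ gives
\begin{align*}
\frac{d\mathcal{E}^2_{(\gamma,\delta)}}{dt} &= -\alpha H \int (\partial_t v)^2 - \gamma H \int g^{00}(\partial_t v)^2 - \gamma H \int g^{ab}(\partial_a v)(\partial_b v) - \beta\gamma H^3 \int v^2 \\
&\quad + (\delta - \beta - \alpha\gamma) H^2 \int v\partial_t v - \int (\partial_t v + \gamma H v) F \, d^3x + \int(\text{residual})\,d^3x.
\end{align*}

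The key choice is to set $\delta = \beta + \alpha\gamma$, which eliminates the sign-indefinite cross term $\int v\partial_t v$. Using the splittings $g^{00} = -1 + (g^{00}+1)$ and $\tfrac{1}{2}\partial_t g^{ab} = \big(\tfrac{1}{2}\partial_t g^{ab} + \omega g^{ab}\big) + (H-\omega)g^{ab} - H g^{ab}$, the principal dissipation that I extract is $-(\alpha - \gamma) H \int (\partial_t v)^2 - (1+\gamma) H \int g^{ab}\partial_a v \partial_b v - \beta\gamma H^3 \int v^2$, and every remaining summand matches precisely one of the summands of $\triangle_{\mathcal{E};(\gamma,\delta)}[v,\partial v]$ in \eqref{E:trianglemathscrEdef}, including the contribution $-\gamma H (g^{00}+1)(\partial_t v)^2$ from the $g^{00}$ splitting and the $-\gamma H (\partial_t g^{00}) v\partial_t v$ coming from differentiating the $-2\gamma H g^{00} v\partial_t v$ summand of $\mathcal{E}^2_{(\gamma,\delta)}$. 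For coercivity \eqref{E:mathcalEfirstlowerbound}, note that $|g^{00}+1|\leq \eta$ yields $-g^{00}\geq 1-\eta$; the sign-indefinite cross term $-2\gamma H g^{00} v\partial_t v$ is controlled by Young's inequality at the cost of $\varepsilon(\partial_t v)^2 + \varepsilon^{-1}\gamma^2 H^2 v^2$, and for $\beta > 0$ the choice $\delta = \beta + \alpha\gamma$ is taken large enough compared to $\gamma^2$ to absorb this. When $\beta = 0$, no $v^2$-coercivity is required, and $\gamma = \delta = 0$ is admissible.

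To conclude \eqref{E:mathcalEtimederivativebound}, I would choose $\gamma \in (0,\alpha)$ small enough that $\delta = \beta + \alpha\gamma$ satisfies the coercivity constraint, then pick $\eta > 0$ strictly below $\min\big(2(\alpha-\gamma),\, 2(1+\gamma),\, 2\beta\gamma/\delta\big)$, where the last factor is vacuous when $\beta = 0$; these bounds guarantee that the principal dissipation dominates $\eta H \mathcal{E}^2_{(\gamma,\delta)}$ termwise. The main obstacle is the careful bookkeeping required to match the algebraic residues after the splittings of $g^{00}$ and $\partial_t g^{ab}$ with the exact form of \eqref{E:trianglemathscrEdef}; the underlying calculation is otherwise routine integration by parts on the compact torus, where no boundary terms arise.
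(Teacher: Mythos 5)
Your proposal is correct and follows essentially the same route as the paper, which multiplies \eqref{E:vscalar} by $-(\partial_t v + \gamma H v),$ integrates by parts over $\mathbb{T}^3$ to obtain the identity \eqref{E:mathcalEtimederivativeformula} (deferring the details and the choice of $\gamma, \delta, \eta$ to Lemma 15 of \cite{hR2008}), and your reconstruction of the coefficients $-(\alpha - \gamma)H,$ $(\delta - \beta - \gamma\alpha)H^2,$ $-\beta\gamma H^3,$ and $-(1+\gamma)H g^{ab}$ matches the paper's quoted formula, with the choice $\delta = \beta + \alpha\gamma$ and small $\gamma, \eta$ filling in what the paper leaves implicit. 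The only blemish is that the signs of your displayed $\mathcal{R}_0$ are those of the residual \emph{after} it has been moved to the right-hand side of the final energy identity (i.e., they coincide with the corresponding summands of $\triangle_{\mathcal{E};(\gamma,\delta)}$), whereas under the convention of the intermediate identity you state they should all be reversed; this is a uniform bookkeeping slip that does not affect the conclusion.
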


\begin{proof}
	A proof based on a standard integration by parts argument \big(multiply both sides of equation \eqref{E:vscalar} by
	$- (\partial_t v + \gamma H v)$ before integrating over $\mathbb{T}^3$)\big)
	is given in Lemma 15 of \cite{hR2008}. In particular, we quote the following identity:
	
	\begin{align} \label{E:mathcalEtimederivativeformula}
		\frac{d}{dt} (\mathcal{E}_{(\gamma,\delta)}^2[v,\partial v]) & = \int_{\mathbb{T}^3} 
		\Big\lbrace -(\alpha - \gamma) H (\partial_t v)^2 + (\delta - \beta - \gamma \alpha) H^2 v \partial_t v 
		- \beta \gamma H^3 v^2 \\ 
	& \hspace{1in} - (1 + \gamma) H g^{ab}(\partial_a v)(\partial_b v) - (\partial_t v + \gamma H v)F 
		+ \triangle_{\mathcal{E};(\gamma, \delta)}[v,\partial v] \Big\rbrace \, d^3 x. \notag
	\end{align}
\end{proof}

\subsubsection{Energies for the components of $g$}
In this section, we will use re-scaled versions of energies of the form \eqref{E:mathcalEdef} to construct energies for the components of $g.$

\begin{definition}  \label{D:energiesforg}
We define the \emph{non-negative} energies $\gzerozeroenergy{N}(t),$ 
$\supgzerozeroenergy{N}(t),$ $\gzerostarenergy{N}(t),$ $\supgzerostarenergy{N}(t),$
$\hstarstarenergy{N}(t),$ $\suphstarstarenergy{N}(t),$ $\genergy{N}(t),$ and
$\supgenergy{N}(t)$ as follows:

 \begin{subequations}
\begin{align}
	\gzerozeroenergy{N}^2 & \eqdef \sum_{|\vec{\alpha}| \leq N}
		e^{2q \Omega} \mathcal{E}_{(\gamma_{00},\delta_{00})}^2[\partial_{\vec{\alpha}} (g_{00} + 1),
		\partial (\partial_{\vec{\alpha}} g_{00})], \label{E:g00energydef} \\
	\gzerostarenergy{N}^2 & \eqdef \sum_{|\vec{\alpha}| \leq N} \sum_{j=1}^3 
		e^{2(q-1) \Omega}\mathcal{E}_{(\gamma_{0*},\delta_{0*})}^2[\partial_{\vec{\alpha}} g_{0j},
			\partial (\partial_{\vec{\alpha}} g_{0j})], \label{E:g0*energydef} \\
	\hstarstarenergy{N}^2 & \eqdef \sum_{|\vec{\alpha}| \leq N} 
		\Big\lbrace \sum_{j,k=1}^3 e^{2q \Omega} 
		\mathcal{E}_{(0,0)}^2[0,\partial (\partial_{\vec{\alpha}} h_{jk})] 
		+ \frac{1}{2} \int_{\mathbb{T}^3} c_{\vec{\alpha}} H^2 \big(\partial_{\vec{\alpha}} h_{jk}(\tau)\big)^2 \, d^3 x 
		\Big\rbrace,	 \label{E:h**energydef} \\
	\genergy{N}^2 & \eqdef \gzerozeroenergy{N}^2 + \gzerostarenergy{N}^2 + 	
		\hstarstarenergy{N}^2, \label{E:gtotalenergydef}
\end{align}
\end{subequations}

\begin{subequations}		
\begin{align}
	\supgzerozeroenergy{N}(t) & \eqdef \sup_{0 \leq \tau \leq t} \gzerozeroenergy{N}(\tau),
		\label{E:g00supenergydef} \\
	\supgzerostarenergy{N}(t) & \eqdef \sup_{0 \leq \tau \leq t} 
		\gzerostarenergy{N}(\tau), \label{E:g0*supenergydef} \\
	\suphstarstarenergy{N}(t) & \eqdef \hstarstarenergy{N}(\tau),	
		\label{E:h**supenergydef} \\
	\supgenergy{N}^2 & \eqdef \supgzerozeroenergy{N}^2 + \supgzerostarenergy{N}^2 + \suphstarstarenergy{N}^2
	\label{E:gtotalsupenergydef},
\end{align}
\end{subequations}
where

\begin{subequations}
\begin{align}
	h_{jk} & \eqdef e^{-2 \Omega} g_{jk}, && (j,k = 1,2,3), \\
	c_{\vec{\alpha}} & \eqdef 0, && \mbox{if} \ |\vec{\alpha}| = 0, \\
	c_{\vec{\alpha}} & \eqdef 1, && \mbox{if} \ |\vec{\alpha}| > 0, 
\end{align}
\end{subequations}
and $(\gamma_{00}, \delta_{00}), (\gamma_{0*}, \delta_{0*}),$ and $(\gamma_{**}, \delta_{**})=(0,0)$ are the constants generated by applying Lemma \ref{L:buildingblockmetricenergy} to equations \eqref{E:finalg00equation} - \eqref{E:finalhjkequation} respectively.

\end{definition}

In the next lemma, we provide a preliminary estimate of the time derivative of these energies.

\begin{lemma} \label{L:metricfirstdiferentialenergyinequality}
	
	Assume that $(g_{\mu \nu}, \partial_{\mu} \Phi),$ $(\mu,\nu=0,1,2,3),$ is a solution to the modified equations 
	\eqref{E:finalg00equation} - \eqref{E:finalhjkequation}, and let $\gzerozeroenergy{N}, \gzerostarenergy{N},$ and
	$\hstarstarenergy{N}$ be as in Definition \ref{D:energiesforg}. Let 
	$[\hat{\Square}_g ,\partial_{\vec{\alpha}}]$ denote the commutator of the operators $\hat{\Square}_g$ and
	$\partial_{\vec{\alpha}}.$ Then the following differential inequalities are satisfied, where 
	$\triangle_{\mathcal{E};(\gamma, \delta)}[\cdot, \partial(\cdot )]$ is defined in \eqref{E:trianglemathscrEdef}, and the
	constants $(\gamma_{00}, \delta_{00}), (\gamma_{0*}, \delta_{0*}),$ and $(\gamma_{**}, \delta_{**})=(0,0)$ 
	are defined in Definition \ref{D:energiesforg}:
	
	\begin{subequations}
	\begin{align}
		\frac{d}{dt}(\gzerozeroenergy{N}^2) & \leq (2q - \eta_{00})H\gzerozeroenergy{N}^2 
			+ 2q(\omega - H) \gzerozeroenergy{N}^2 \\
		& \ \ \ - \sum_{|\vec{\alpha}| \leq N } \int_{\mathbb{T}^3} e^{2q \Omega} \big\lbrace \partial_t \partial_{\vec{\alpha}}(g_{00}+1)
		  + \gamma_{00} H \partial_{\vec{\alpha}}(g_{00}+1) \big\rbrace \big\lbrace \partial_{\vec{\alpha}} \triangle_{00} 
		  + [\hat{\Square}_g ,\partial_{\vec{\alpha}}](g_{00}+1) \big\rbrace \, d^3 x \notag \\
		& \ \ \ + \sum_{|\vec{\alpha}| \leq N } \int_{\mathbb{T}^3} 
			e^{2q \Omega} \triangle_{\mathcal{E};(\gamma_{00}, \delta_{00})}[\partial_{\vec{\alpha}}(g_{00}+1)
				,\partial (\partial_{\vec{\alpha}} g_{00})]   \,d^3 x, \notag \\
		\frac{d}{dt}(\gzerostarenergy{N}^2) & \leq [2(q-1) - \eta_{0*}]H\gzerostarenergy{N}^2 
			+ 2(q-1)(\omega - H) \gzerostarenergy{N}^2 \\
		& \ \ \ - \sum_{|\vec{\alpha}| \leq N } \sum_{j=1}^3 \int_{\mathbb{T}^3} e^{2(q-1) \Omega} \big\lbrace \partial_t 
			\partial_{\vec{\alpha}} g_{0j}
		  + \gamma_{0*} H \partial_{\vec{\alpha}} g_{0j} \big\rbrace \big\lbrace -2H \partial_{\vec{\alpha}} (g^{ab}\Gamma_{a j b})
		  + \partial_{\vec{\alpha}} \triangle_{0j} 
		  + [\hat{\Square}_g ,\partial_{\vec{\alpha}}] g_{0j} \big\rbrace \, d^3 x \notag \\
		& \ \ \ + \sum_{|\vec{\alpha}| \leq N } \sum_{j=1}^3 \int_{\mathbb{T}^3} 
			e^{2(q-1)\Omega} \triangle_{\mathcal{E};(\gamma_{0*}, \delta_{0*})}[\partial_{\vec{\alpha}} g_{0j},
			\partial( \partial_{\vec{\alpha}} g_{0j})]   \,d^3 x, \notag \\
		\frac{d}{dt}(\hstarstarenergy{N}^2) & \leq (2q - \eta_{**})H\hstarstarenergy{N}^2 
			+ 2q(\omega - H) \hstarstarenergy{N}^2 \\
		& \ \ \ - \sum_{|\vec{\alpha}| \leq N } \sum_{j,k=1}^3 \int_{\mathbb{T}^3} e^{2q \Omega} \big\lbrace \partial_t 
			\partial_{\vec{\alpha}} h_{jk}
		  + \underbrace{\gamma_{**}}_0 H \partial_{\vec{\alpha}} h_{jk} \big\rbrace \big\lbrace \partial_{\vec{\alpha}} 
		  	\triangle_{jk} 
		  + [\hat{\Square}_g ,\partial_{\vec{\alpha}}] h_{jk} \big\rbrace \, d^3 x \notag \\
		& \ \ \ + \sum_{|\vec{\alpha}| \leq N } \sum_{j,k=1}^3 \int_{\mathbb{T}^3} 
			e^{2q \Omega} \triangle_{\mathcal{E};(0, 0)}[0,\partial (\partial_{\vec{\alpha}} h_{jk})]  \,d^3 x   
		+ \sum_{1 \leq |\vec{\alpha}| \leq N } \int_{\mathbb{T}^3}  
			H^2 (\partial_{\vec{\alpha}} \partial_t h_{jk})(\partial_{\vec{\alpha}} h_{jk}) \, d^3 x. \notag
	\end{align}	
	\end{subequations}
\end{lemma}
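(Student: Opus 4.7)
The plan is to differentiate the modified equations \eqref{E:finalg00equation}–\eqref{E:finalhjkequation} by the spatial multi-index operator $\partial_{\vec{\alpha}}$, apply the building-block energy identity \eqref{E:mathcalEtimederivativebound} from Lemma \ref{L:buildingblockmetricenergy} to each resulting scalar wave equation, sum over $|\vec{\alpha}|\leq N$, and finally account for the exponential rescalings $e^{2q\Omega}$, $e^{2(q-1)\Omega}$ built into Definition \ref{D:energiesforg}. Concretely, for each equation I commute $\partial_{\vec{\alpha}}$ with $\hat{\Square}_g$; e.g.\ for $g_{00}+1$ I obtain
\begin{align*}
\hat{\Square}_g\bigl(\partial_{\vec{\alpha}}(g_{00}+1)\bigr) = 5H\,\partial_t\partial_{\vec{\alpha}}(g_{00}+1) + 6H^2\,\partial_{\vec{\alpha}}(g_{00}+1) + \partial_{\vec{\alpha}}\triangle_{00} + [\hat{\Square}_g,\partial_{\vec{\alpha}}](g_{00}+1),
\end{align*}
and analogously for $g_{0j}$ (with the extra principal source $-2Hg^{ab}\Gamma_{ajb}$) and for $h_{jk}$ (with $\beta=0$, forcing $\gamma_{**}=\delta_{**}=0$).

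Next, I identify each differentiated equation with the template \eqref{E:vscalar}, taking $v=\partial_{\vec{\alpha}}(g_{00}+1)$, $v=\partial_{\vec{\alpha}}g_{0j}$, or $v=\partial_{\vec{\alpha}}h_{jk}$, and reading off the inhomogeneity $F$ as the sum of $\partial_{\vec{\alpha}}$ applied to the original inhomogeneity plus the commutator term. Applying the identity \eqref{E:mathcalEtimederivativeformula} with the constants $(\gamma_{00},\delta_{00})$, $(\gamma_{0*},\delta_{0*})$, $(\gamma_{**},\delta_{**})=(0,0)$ produced by Lemma \ref{L:buildingblockmetricenergy}, and absorbing the combination $-(\alpha-\gamma)H(\partial_t v)^2+(\delta-\beta-\gamma\alpha)H^2 v\partial_t v-\beta\gamma H^3 v^2-(1+\gamma)Hg^{ab}(\partial_a v)(\partial_b v)$ into a term of the form $-\eta H\,\mathcal{E}_{(\gamma,\delta)}^2$ via the coercivity bound \eqref{E:mathcalEfirstlowerbound}, yields the corresponding integral differential inequality for each summand. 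The remaining inhomogeneous contribution in \eqref{E:mathcalEtimederivativebound} produces the terms $-\int\{\partial_t\partial_{\vec{\alpha}}v+\gamma H\partial_{\vec{\alpha}}v\}\{\partial_{\vec{\alpha}}(\text{source})+[\hat{\Square}_g,\partial_{\vec{\alpha}}]v\}\,d^3x$ and $\int\triangle_{\mathcal{E};(\gamma,\delta)}\,d^3x$ appearing in the lemma statement.

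The final ingredient is the bookkeeping for the exponential weights. For a quantity of the form $e^{2p\Omega(t)}\mathcal{E}^2$, the product rule gives
\begin{align*}
\frac{d}{dt}\bigl(e^{2p\Omega}\mathcal{E}^2\bigr) = 2p\omega\,e^{2p\Omega}\mathcal{E}^2 + e^{2p\Omega}\frac{d}{dt}\mathcal{E}^2 = 2pH\,e^{2p\Omega}\mathcal{E}^2 + 2p(\omega-H)\,e^{2p\Omega}\mathcal{E}^2 + e^{2p\Omega}\frac{d}{dt}\mathcal{E}^2,
\end{align*}
where I split $\omega=H+(\omega-H)$ so that the former produces the explicit $2pH$ constant (from which the dissipative $-\eta H$ bound subtracts to yield the net $(2p-\eta)H$ coefficient), while the latter is recorded as an error term (small by Lemma \ref{L:backgroundaoftestimate}). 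Choosing $p=q$ for $g_{00}$ and $h_{jk}$, and $p=q-1$ for $g_{0j}$, gives the coefficients displayed in the statement. The only extra wrinkle is the auxiliary zeroth-order piece $\frac{1}{2}\int_{\mathbb{T}^3} c_{\vec{\alpha}}H^2(\partial_{\vec{\alpha}}h_{jk})^2\,d^3x$ in $\hstarstarenergy{N}^2$, which was inserted precisely because $\gamma_{**}=0$ leaves \eqref{E:mathcalEdef} without control of $v^2$; differentiating it in time produces the last term $\int H^2(\partial_{\vec{\alpha}}\partial_t h_{jk})(\partial_{\vec{\alpha}}h_{jk})\,d^3x$ listed in the lemma.

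The main obstacle will be purely organizational rather than analytic: verifying that when one sums the identity \eqref{E:mathcalEtimederivativeformula} over $|\vec{\alpha}|\leq N$, multiplies by the chosen exponential weights, and cleans up with \eqref{E:mathcalEfirstlowerbound}, the dissipative negative contributions really do consolidate into a single $-\eta_\bullet H$ term multiplying the full rescaled energy (rather than separate pieces that fail to dominate the growth $2pH$ or $2(p-1)H$). Choosing $\eta_{00},\eta_{0*},\eta_{**}>0$ to be the minimum over $|\vec{\alpha}|\leq N$ of the $\eta$ produced by Lemma \ref{L:buildingblockmetricenergy} resolves this; all genuinely nonlinear estimates are deferred to Section \ref{S:BootstrapConsequences} via the $\partial_{\vec{\alpha}}\triangle_{\mu\nu}$, commutator, and $\triangle_{\mathcal{E}}$ terms kept explicitly on the right-hand side.
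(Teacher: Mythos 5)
Your proposal is correct and follows essentially the same route as the paper: the paper's own proof simply invokes Definition \ref{D:energiesforg} together with the building-block inequality \eqref{E:mathcalEtimederivativebound}, which is exactly the differentiate-commute-apply-and-reweight argument you spell out. Your bookkeeping of the $\omega = H + (\omega - H)$ split for the exponential weights and of the extra $c_{\vec{\alpha}}H^2$ term in $\hstarstarenergy{N}^2$ matches the stated inequalities.
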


\begin{proof}
	Lemma \ref{L:metricfirstdiferentialenergyinequality} follows easily from definitions \eqref{E:g00energydef} - \eqref{E:gtotalsupenergydef}, and from \eqref{E:mathcalEtimederivativebound}.
\end{proof}

The following corollary follows easily from Lemma \ref{L:metricfirstdiferentialenergyinequality}, 
definitions \eqref{E:g00energydef} - \eqref{E:h**energydef}, and the Cauchy-Schwarz inequality for integrals.

\begin{corollary} \label{C:metricfirstdiferentialenergyinequality}
	Under the assumptions of Lemma \ref{L:metricfirstdiferentialenergyinequality}, we have that
	
	\begin{subequations}
	\begin{align}
		\frac{d}{dt}(\gzerozeroenergy{N}^2) & \leq (2q - \eta_{00})H\gzerozeroenergy{N}^2 
			+ 2q(\omega - H) \gzerozeroenergy{N}^2 
			+ \gzerozeronorm{N} e^{q \Omega}  \| \triangle_{00} \|_{H^N}\\
		& \ \ \ + \gzerozeronorm{N} \sum_{|\vec{\alpha}| \leq N} 
		 	e^{q \Omega} \| [\hat{\Square}_g ,\partial_{\vec{\alpha}}](g_{00}+1) \|_{L^2} 	
			+ \sum_{|\vec{\alpha}| \leq N } e^{2q \Omega} \| \triangle_{\mathcal{E};(\gamma_{00}, 
			\delta_{00})}[\partial_{\vec{\alpha}}(g_{00}+1),\partial (\partial_{\vec{\alpha}} g_{00})] \|_{L^1}, \notag \\
	 	\frac{d}{dt}(\gzerostarenergy{N}^2) & \leq [2(q-1) - \eta_{0*}]H\gzerostarenergy{N}^2 
			+ 2(q-1)(\omega - H) \gzerostarenergy{N}^2 	\label{E:underlinemathfrakEg0*firstdifferential} \\
		& \ \ \ + 2H \gzerostarnorm{N} \sum_{j=1}^3 e^{(q-1) \Omega} \| g^{ab} \Gamma_{ajb} \|_{H^N}
		  + \gzerostarnorm{N} \sum_{j=1}^3 e^{(q-1) \Omega} \| \triangle_{0j} \|_{H^N}
			\notag \\
		&	\ \ \ + \gzerostarnorm{N} 
			\sum_{|\vec{\alpha}| \leq N } \sum_{j=1}^3 e^{(q-1) \Omega} \| [\hat{\Square}_g ,\partial_{\vec{\alpha}}] g_{0j} \|_{L^2}
			+ \sum_{|\vec{\alpha}| \leq N } \sum_{j=1}^3 e^{2(q-1) \Omega} 
			\| \triangle_{\mathcal{E};(\gamma_{0*}, \delta_{0*})}[\partial_{\vec{\alpha}} g_{0j}
			,\partial(\partial_{\vec{\alpha}} g_{0j})] \|_{L^1}, \notag \\
		\frac{d}{dt}(\hstarstarenergy{N}^2) & \leq (2q - \eta_{**})H\hstarstarenergy{N}^2 
			+ 2q(\omega - H) \hstarstarenergy{N}^2 + \hstarstarnorm{N} \sum_{j,k=1}^3 \| \triangle_{jk} \|_{H^N} \\
		& \ \ \ + \hstarstarnorm{N} \sum_{|\vec{\alpha}| \leq N } \sum_{j,k=1}^3 
				\| [\hat{\Square}_g ,\partial_{\vec{\alpha}}] h_{jk} \|_{L^2}
			+ \sum_{|\vec{\alpha}| \leq N } \sum_{j,k=1}^3 
				\| \triangle_{\mathcal{E};(0, 0)}[0,\partial(\partial_{\vec{\alpha}} h_{jk}) \|_{L^1}
			+ H^2 e^{-q \Omega}\hstarstarnorm{N}^2, \notag
	\end{align}	
	\end{subequations}
	where the norms $\gzerozeronorm{N}, \gzerostarnorm{N}, 
	\hstarstarnorm{N}$ are defined in Definition \ref{D:Norms}.
\end{corollary}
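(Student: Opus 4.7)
The plan is to derive the three differential inequalities directly from Lemma \ref{L:metricfirstdiferentialenergyinequality} by applying the Cauchy--Schwarz inequality for integrals to each source-term integral on the right-hand sides, and then repackaging the resulting exponentially weighted $L^2$ norms into the norms $\gzerozeronorm{N}$, $\gzerostarnorm{N}$, $\hstarstarnorm{N}$ of Definition \ref{D:Norms}. The argument is uniform across the three equations and is essentially bookkeeping of exponential weights.

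For the $(g_{00}+1)$ inequality, the key integral from Lemma \ref{L:metricfirstdiferentialenergyinequality} is
\[
\int_{\mathbb{T}^3} e^{2q\Omega}\left\{\partial_t\partial_{\vec{\alpha}}(g_{00}+1) + \gamma_{00}H\,\partial_{\vec{\alpha}}(g_{00}+1)\right\}\left\{\partial_{\vec{\alpha}}\triangle_{00} + [\hat{\Square}_g,\partial_{\vec{\alpha}}](g_{00}+1)\right\}d^3x.
\]
Cauchy--Schwarz splits this as a product of two $L^2$ norms, each weighted by $e^{q\Omega}$. The triangle inequality bounds the first factor by $e^{q\Omega}\|\partial_t\partial_{\vec{\alpha}}(g_{00}+1)\|_{L^2}+\gamma_{00}H\,e^{q\Omega}\|\partial_{\vec{\alpha}}(g_{00}+1)\|_{L^2}$; summing over $|\vec{\alpha}|\leq N$ and comparing with definition \eqref{E:mathfrakSMg00}, both exponentially weighted sums are majorized by $\gzerozeronorm{N}$ (up to the harmless constant $\gamma_{00}H$, which is absorbed into the running constant), producing the single external factor $\gzerozeronorm{N}$. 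Splitting the second Cauchy--Schwarz factor in the same way and summing yields the inhomogeneity term $e^{q\Omega}\|\triangle_{00}\|_{H^N}$ and the commutator sum $\sum_{|\vec{\alpha}|\leq N} e^{q\Omega}\|[\hat{\Square}_g,\partial_{\vec{\alpha}}](g_{00}+1)\|_{L^2}$, exactly as stated. The $\triangle_{\mathcal{E};(\gamma_{00},\delta_{00})}$ integrals are left in their integral form, since Lemma \ref{L:metricfirstdiferentialenergyinequality} already presents them that way and they will be estimated separately.

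The same template handles \eqref{E:underlinemathfrakEg0*firstdifferential}, with the additional wrinkle that the principal term $-2H\,\partial_{\vec{\alpha}}(g^{ab}\Gamma_{ajb})$ from \eqref{E:finallg0jequation} is grouped inside the second Cauchy--Schwarz factor alongside $\partial_{\vec{\alpha}}\triangle_{0j}$ and the commutator, producing the additional summand $2H\gzerostarnorm{N}\sum_{j=1}^3 e^{(q-1)\Omega}\|g^{ab}\Gamma_{ajb}\|_{H^N}$ after triangle inequality and summation. The $h_{jk}$ inequality follows identically (with $(\gamma_{**},\delta_{**})=(0,0)$, so no zero-order piece appears in the Cauchy--Schwarz), but one must additionally dispose of the leftover term $\sum_{1\leq|\vec{\alpha}|\leq N}\int_{\mathbb{T}^3}H^2(\partial_{\vec{\alpha}}\partial_t h_{jk})(\partial_{\vec{\alpha}} h_{jk})\,d^3x$ coming from differentiating the $c_{\vec{\alpha}}H^2 h_{jk}^2$ summand in $\hstarstarenergy{N}^2$. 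By Cauchy--Schwarz in $\vec{\alpha}$ this is bounded by $H^2\|\partial_t h_{jk}\|_{H^N}\|h_{jk}\|_{H^{N-1}}$; inserting the exponential weights $e^{q\Omega}$ and $1$ that definition \eqref{E:mathfrakSMh**} assigns to these two factors respectively produces the residual bound $H^2 e^{-q\Omega}\hstarstarnorm{N}^2$ (the lone $e^{-q\Omega}$ is what remains after the two factors of $\hstarstarnorm{N}$ have absorbed the weights $e^{q\Omega}$ and $1$).

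The only delicate point is matching each emerging $L^2$ norm against the precise exponential prefactor that Definition \ref{D:Norms} attaches to it, so that Cauchy--Schwarz produces exactly one power of $\gzerozeronorm{N}$, $\gzerostarnorm{N}$, or $\hstarstarnorm{N}$ and no stray weight; I expect no substantive obstacle beyond this routine bookkeeping.
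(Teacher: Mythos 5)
Your proposal is correct and is exactly the paper's argument: the corollary is obtained from Lemma \ref{L:metricfirstdiferentialenergyinequality} by Cauchy--Schwarz on the source-term integrals (splitting $e^{2q\Omega}$, resp.\ $e^{2(q-1)\Omega}$, symmetrically), the triangle inequality, and matching the weighted $L^2$ factors against Definition \ref{D:Norms}, with the leftover $H^2\int(\partial_{\vec{\alpha}}\partial_t h_{jk})(\partial_{\vec{\alpha}}h_{jk})$ term handled precisely as you describe. The only nitpick is that in the last step the second factor should be written $\|\bard h_{jk}\|_{H^{N-1}}$ rather than $\|h_{jk}\|_{H^{N-1}}$ (the sum runs over $1\leq|\vec{\alpha}|$, and only the former carries weight $1$ in \eqref{E:mathfrakSMh**}), which is clearly what you intended.
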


\subsection{Energies for the fluid variable \texorpdfstring{$\partial \Phi$}{}} \label{SS:fluidvariableEnergies}
In this section, we define the energies that we will use to study the irrotational fluid equation \eqref{E:finalfluidequation}.
We begin by stating their definitions.

\begin{definition}
	Let $\bar{\Psi}$ be the positive constant defined in \eqref{E:barPsidef}. Then we define the following
	\emph{non-negative} energies $\fluidenergy{N}(t),\supfluidenergy{N}(t)$ for $\partial \Phi$
	as follows:
	
	\begin{subequations}
	\begin{align} \label{E:fluidenergydef}
		\underline{E}_{\partial \Phi;0}^2 & \eqdef \frac{1}{2} \int_{\mathbb{T}^3} (e^{\decayparameter \Omega}\partial_t \Phi - 
			\bar{\Psi})^2 + e^{2\decayparameter \Omega} m^{ab} (\partial_a \Phi)(\partial_b \Phi) \,d^3 x, && \\
		\fluidenergy{N}^2 & \eqdef \underline{E}_0^2 + \sum_{1 \leq |\vec{\alpha}| \leq N} \frac{1}{2} 
			\int_{\mathbb{T}^3} e^{2 \decayparameter \Omega}(\partial_t \partial_{\vec{\alpha}}\Phi)^2 
			+ e^{2\decayparameter \Omega} m^{ab} (\partial_a \partial_{\vec{\alpha}} \Phi)(\partial_b \partial_{\vec{\alpha}} \Phi) 
			\,d^3x, && (N \geq 1), \\
		\supfluidenergy{N}(t) & \eqdef \sup_{0 \leq \tau \leq t} \fluidenergy{N}(\tau). && 
			\label{E:fluidsupenergydef} 
	\end{align}
	\end{subequations}
	
\end{definition}

In the next lemma, we provide a preliminary estimate of the time derivative of the fluid energies.
 
\begin{lemma} \label{L:fluidenergytimederivative}
	Let $\partial \Phi$ be a solution to
	\begin{align} \label{E:linearizedPhi}
		\hat{\Square}_{m} \Phi  & = \decayparameter \omega(t) \partial_t \Phi
			 + \triangle_{\partial \Phi}, 
	\end{align}
	
	\noindent where
	
	\begin{align}
		\hat{\Square}_{m} & \eqdef - \partial_t^2 + m^{ab} \partial_a \partial_b 
			+ 2m^{0a} \partial_t \partial_a 
	\end{align}
	is the reduced wave operator corresponding to the reciprocal acoustical metric $m^{\mu \nu},$ ($\mu, \nu = 0,1,2,3$), and
	\begin{align}
		\decayparameter & = \frac{3}{1 + 2s} = 3 \speed^2.
	\end{align}
	Let $[\hat{\Square}_m ,\partial_{\vec{\alpha}}]$ denote the commutator of the operators $\hat{\Square}_m$ and
	$\partial_{\vec{\alpha}}.$ Then  	
	\begin{align} \label{E:fluidenergytimederivative}
		\frac{d}{dt}(\fluidenergy{N}^2) & = - \int_{\mathbb{T}^3}  
			(\partial_a m^{0a})( e^{\decayparameter \Omega} \partial_t \Phi - \bar{\Psi})^2 \, d^3x
			\ - \sum_{1 \leq |\vec{\alpha}| \leq N} \int_{\mathbb{T}^3} e^{2\decayparameter \Omega}  
			(\partial_a m^{0a}) (\partial_t \partial_{\vec{\alpha}} \Phi)^2 \, d^3x \\
		& \ \ - \int_{\mathbb{T}^3} e^{\decayparameter \Omega} 
			(\partial_a m^{ab})(e^{\decayparameter \Omega} \partial_t \Phi - \bar{\Psi})(\partial_b \Phi) \, d^3x
			\ - \sum_{1 \leq |\vec{\alpha}| \leq N} \int_{\mathbb{T}^3} e^{2\decayparameter \Omega}  
			(\partial_b m^{ab}) (\partial_t \partial_{\vec{\alpha}} \Phi) (\partial_b \partial_{\vec{\alpha}} \Phi) \, d^3x \notag \\
		& \ \ - \int_{\mathbb{T}^3} e^{\decayparameter \Omega} (\triangle_{\partial \Phi}) (e^{\decayparameter \Omega} \partial_t 
			\Phi - \bar{\Psi}) \,d^3x 
		- \sum_{1 \leq |\vec{\alpha}| \leq N} \int_{\mathbb{T}^3} e^{2\decayparameter \Omega}  (\partial_{\vec{\alpha}} 
			\triangle_{\partial \Phi}) \partial_t \partial_{\vec{\alpha}} \Phi \,d^3x \notag \\
		& \ \ - \sum_{1 \leq |\vec{\alpha}| \leq N} \int_{\mathbb{T}^3} e^{2\decayparameter \Omega} 
		([\hat{\Square}_m, \partial_{\vec{\alpha}}] \Phi)
			\partial_t \partial_{\vec{\alpha}} \Phi \,d^3x 
		+ \frac{1}{2} \sum_{|\vec{\alpha}| \leq N} \int_{\mathbb{T}^3} 
		e^{2\decayparameter \Omega}(\partial_t m^{ab} + 2 \decayparameter \omega m^{ab})(\partial_a \partial_{\vec{\alpha}} \Phi)
			(\partial_b \partial_{\vec{\alpha}} \Phi) \,d^3 x. \notag 
\end{align}
	
\end{lemma}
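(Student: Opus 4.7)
The plan is to obtain the identity by a direct computation: differentiate $\fluidenergy{N}^2$ under the integral sign, eliminate the second time derivatives of $\Phi$ and $\partial_{\vec{\alpha}} \Phi$ using the wave equation \eqref{E:linearizedPhi}, and then integrate by parts in the spatial variables to convert the remaining second-order spatial terms into the divergence-of-$m$ terms and into cancellations with the pure-spatial portion of the energy.

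First I would treat the base energy $\underline{E}_0^2$. Differentiating in $t,$ one gets, after collecting the $\decayparameter \omega$ factors from $\partial_t e^{\decayparameter \Omega},$
\begin{align*}
\frac{d}{dt} \underline{E}_0^2 = \int_{\mathbb{T}^3} \big( e^{\decayparameter \Omega} \partial_t \Phi - \bar{\Psi}\big) \big( \decayparameter \omega e^{\decayparameter \Omega} \partial_t \Phi + e^{\decayparameter \Omega} \partial_t^2 \Phi \big) \, d^3 x + \frac{1}{2}\int_{\mathbb{T}^3} e^{2 \decayparameter \Omega} (\partial_t m^{ab} + 2 \decayparameter \omega m^{ab})(\partial_a \Phi)(\partial_b \Phi) \, d^3 x + \int_{\mathbb{T}^3} e^{2\decayparameter \Omega} m^{ab} (\partial_a \partial_t \Phi)(\partial_b \Phi) \, d^3 x.
\end{align*}
Using \eqref{E:linearizedPhi} to replace $\partial_t^2 \Phi$ by $m^{ab} \partial_a \partial_b \Phi + 2 m^{0a} \partial_t \partial_a \Phi - \decayparameter \omega \partial_t \Phi - \triangle_{\partial \Phi}$ cancels the $\decayparameter \omega e^{\decayparameter \Omega}\partial_t \Phi$ term exactly, leaving the contribution $e^{\decayparameter \Omega}(e^{\decayparameter \Omega}\partial_t \Phi - \bar{\Psi})\big(m^{ab} \partial_a \partial_b \Phi + 2 m^{0a} \partial_t \partial_a \Phi - \triangle_{\partial \Phi}\big)$. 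The $m^{ab}$ piece is integrated by parts in $x^a,$ producing $-e^{2\decayparameter \Omega}(\partial_a \partial_t \Phi) m^{ab}(\partial_b \Phi)$ (which annihilates the last term in the display above) and the divergence term $-e^{\decayparameter \Omega}(e^{\decayparameter \Omega}\partial_t \Phi - \bar{\Psi})(\partial_a m^{ab})(\partial_b \Phi).$ For the $m^{0a}$ piece, the clean trick is to observe $\partial_t \partial_a \Phi = e^{-\decayparameter \Omega} \partial_a(e^{\decayparameter \Omega}\partial_t \Phi - \bar{\Psi})$ since $\bar{\Psi}$ and $\Omega$ are $x$-independent, so the integrand becomes $m^{0a} \partial_a[(e^{\decayparameter \Omega}\partial_t \Phi - \bar{\Psi})^2],$ which integrates by parts to yield $-(\partial_a m^{0a})(e^{\decayparameter \Omega}\partial_t \Phi - \bar{\Psi})^2.$

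For $1 \leq |\vec{\alpha}| \leq N$ I would apply $\partial_{\vec{\alpha}}$ to \eqref{E:linearizedPhi}; since $\omega$ and $\Omega$ are functions of $t$ only, this gives $\hat{\Square}_m \partial_{\vec{\alpha}} \Phi = \decayparameter \omega \partial_t \partial_{\vec{\alpha}} \Phi + \partial_{\vec{\alpha}} \triangle_{\partial \Phi} + [\hat{\Square}_m, \partial_{\vec{\alpha}}] \Phi.$ The computation then exactly mirrors the $N=0$ case with $\Phi$ replaced by $\partial_{\vec{\alpha}} \Phi$ and an extra forcing $[\hat{\Square}_m, \partial_{\vec{\alpha}}] \Phi,$ except that instead of $(e^{\decayparameter \Omega}\partial_t \Phi - \bar{\Psi})$ one differentiates $\tfrac12 e^{2 \decayparameter \Omega}(\partial_t \partial_{\vec{\alpha}} \Phi)^2,$ so the weight-induced $\decayparameter \omega$ contribution again cancels the $\decayparameter \omega \partial_t \partial_{\vec{\alpha}} \Phi$ source, leaving the $\triangle_{\partial \Phi},$ commutator, and $m^{ab}\partial_a\partial_b\partial_{\vec{\alpha}}\Phi,$ $m^{0a}\partial_t \partial_a \partial_{\vec{\alpha}} \Phi$ terms. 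Spatial integration by parts on the latter two yields the $-(\partial_b m^{ab})$ and $-(\partial_a m^{0a})$ summands in the statement, plus the cancellation of $e^{2\decayparameter \Omega} m^{ab}(\partial_a \partial_t \partial_{\vec{\alpha}} \Phi)(\partial_b \partial_{\vec{\alpha}} \Phi)$ with the corresponding term arising from $\partial_t$ of the spatial part of the energy. Summing over $\vec{\alpha}$ produces \eqref{E:fluidenergytimederivative}.

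There is no real obstacle, only bookkeeping: the main point to be careful about is that the $\decayparameter \omega$-factors from differentiating the rescaling $e^{\decayparameter \Omega}$ must exactly cancel the $\decayparameter \omega \partial_t \Phi$ source on the right-hand side of \eqref{E:linearizedPhi} (this is precisely why the weight $e^{\decayparameter \Omega}$ was inserted in the definitions \eqref{E:fluidenergydef}), and that the $\partial_t m^{ab}$ contribution must be grouped with $2\decayparameter \omega m^{ab}$ to produce the clean symmetric quantity $\partial_t m^{ab} + 2 \decayparameter \omega m^{ab}$ appearing in the final line of the statement. The identity $\partial_t \partial_a \Phi = e^{-\decayparameter \Omega}\partial_a (e^{\decayparameter \Omega}\partial_t\Phi - \bar\Psi)$ (and its higher-derivative analog $\partial_t \partial_a \partial_{\vec{\alpha}}\Phi = \partial_a \partial_t \partial_{\vec{\alpha}}\Phi$) is what allows the $m^{0a}$-terms to be packaged as a perfect divergence, producing the $-(\partial_a m^{0a})$ structure on the first line. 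No boundary terms appear because the integrals are over $\mathbb{T}^3.$
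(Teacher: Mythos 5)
Your proposal is correct and is essentially the paper's own argument: differentiate the energy under the integral, use \eqref{E:linearizedPhi} (applied to $\partial_{\vec{\alpha}} \Phi$, with the commutator as an extra source) to eliminate $\partial_t^2 \partial_{\vec{\alpha}} \Phi$, and integrate by parts in space so that the $m^{ab}$ and $m^{0a}$ second-order terms produce the divergence terms and cancel against the spatial part of the energy. The only cosmetic difference is that the paper packages the $\decayparameter\omega$-cancellation by rewriting the differentiated equation in terms of $-\partial_t \partial_{\vec{\alpha}}(e^{\decayparameter \Omega}\partial_t \Phi - \bar{\Psi})$ before multiplying and integrating, whereas you track that cancellation explicitly; the computations are identical.
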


\begin{proof}
	This is a standard integration by parts lemma that can be proved using the ideas of Lemma \ref{L:buildingblockmetricenergy}.
	We provide a sketch of the proof. We begin by differentiating under the integral in the definition of $\fluidenergy{N}^2$ to 
	conclude that
	
	\begin{align} \label{E:EMpreliminarytimederivative}
		\frac{d}{dt}(\fluidenergy{N}^2)	& = \sum_{|\vec{\alpha}| \leq N} \int_{\mathbb{T}^3} 
			\big\lbrace \partial_{\vec{\alpha}} (e^{\decayparameter \Omega}\partial_t \Phi - \bar{\Psi}) \big\rbrace
			\partial_t \partial_{\vec{\alpha}} (e^{\decayparameter \Omega}\partial_t \Phi) 
			+ e^{2\decayparameter \Omega} m^{ab} (\partial_t \partial_a \partial_{\vec{\alpha}} \Phi)
			(\partial_b \partial_{\vec{\alpha}} \Phi) \, d^3 x \\
			& \ \ \ + \sum_{|\vec{\alpha}| \leq N} \int_{\mathbb{T}^3}
				e^{2\decayparameter \Omega}\Big(\frac{1}{2} \partial_t m^{ab} + \decayparameter \omega m^{ab}\Big)
				(\partial_a \partial_{\vec{\alpha}} \Phi)(\partial_b \partial_{\vec{\alpha}}\Phi)\,d^3 x. \notag
	\end{align}
	For each fixed $\vec{\alpha},$ we will now eliminate the highest derivatives of $\Phi$ in 
	\eqref{E:EMpreliminarytimederivative} (i.e., the derivatives of order $|\vec{\alpha}| + 2$). 
	To this end, we first differentiate equation \eqref{E:linearizedPhi} with 
	$\partial_{\vec{\alpha}}$ and multiply both sides of the equation by $e^{\decayparameter \Omega},$
	which allows us to express the resulting equality as
	
	\begin{align} \label{E:linearizedPhirewritten}
		- \partial_t \partial_{\vec{\alpha}} (e^{\decayparameter \Omega}\partial_t \Phi- \bar{\Psi}) 
			+ e^{\decayparameter \Omega} m^{ab} \partial_a \partial_b \partial_{\vec{\alpha}} \Phi
			= e^{\decayparameter \Omega} \partial_{\vec{\alpha}} \triangle_{\partial \Phi} 
				+ e^{\decayparameter \Omega} [\hat{\Square}_m, \partial_{\vec{\alpha}}] \Phi
			- 2 e^{\decayparameter \Omega} m^{0a}\partial_t \partial_a \partial_{\vec{\alpha}} \Phi.
	\end{align}
	We then multiply both sides of \eqref{E:linearizedPhirewritten} by 
	$- \partial_{\vec{\alpha}} (e^{\decayparameter \Omega}\partial_t \Phi- \bar{\Psi}),$ integrate over $\mathbb{T}^3,$
	and integrate by parts. Inserting the identity that arises into \eqref{E:EMpreliminarytimederivative}, we
	arrive at \eqref{E:fluidenergytimederivative}.
\end{proof}

We now state the following corollary, which follows easily from definition \eqref{E:Fluidnorm}, Lemma \ref{L:fluidenergytimederivative}, and the Cauchy-Schwarz inequality for integrals.

\begin{corollary} \label{C:fluidenergytimederivative}
 Under the hypotheses of Lemma \ref{L:fluidenergytimederivative}, we have that
	
	\begin{align} \label{E:fluidenergytimederivativeCauchySchwarz}
		\frac{d}{dt}(\fluidenergy{N}^2) & \leq 
			\fluidnorm{N}^2 \| \partial_a m^{0a} \|_{L^{\infty}} 
			+ \fluidnorm{N}^2 \sum_{b=1}^3 e^{\Omega} \| \partial_a m^{ab} \|_{L^{\infty}} 
			+ \fluidnorm{N}  e^{\decayparameter \Omega} \|\triangle_{\partial \Phi} \|_{H^N}   \\
		& \ \ \ + \fluidnorm{N}  \sum_{1 \leq |\vec{\alpha}| \leq N} e^{\decayparameter \Omega} \| 
			[\hat{\Square}_m,\partial_{\vec{\alpha}}] \Phi \|_{L^2} 
			+ \frac{1}{2} \fluidnorm{N}^2 \sum_{a,b=1}^3 e^{2 \decayparameter \Omega} 
				\| \partial_t m^{ab} + 2 \omega m^{ab} \|_{L^{\infty}} \notag \\
		& \ \ \ + (\decayparameter - 1) \omega \sum_{|\vec{\alpha}| \leq N} \int_{\mathbb{T}^3} 
			e^{2 \decayparameter \Omega} m^{ab} (\partial_a \partial_{\vec{\alpha}} \Phi) (\partial_b \partial_{\vec{\alpha}} \Phi) 
			\,d^3 x, \notag
	\end{align}
	where $\fluidnorm{N}$ is defined in \eqref{E:Fluidnorm}.
\end{corollary}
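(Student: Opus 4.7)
The proof will be a direct Cauchy-Schwarz/Hölder reduction of the exact identity stated in Lemma \ref{L:fluidenergytimederivative}. The plan is to treat each of the six groups of terms on the right-hand side of \eqref{E:fluidenergytimederivative} separately, applying an $L^\infty$-$L^2$-$L^2$ splitting in the integrals and then absorbing the resulting $L^2$ pieces into $\fluidnorm{N}$ via the definition \eqref{E:Fluidnorm}, which yields $\|e^{\decayparameter \Omega}\partial_t \Phi - \bar{\Psi}\|_{H^N} \leq \fluidnorm{N}$ and $e^{(\decayparameter-1)\Omega}\|\bard \Phi\|_{H^N} \leq \fluidnorm{N}$.

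First, I would rearrange the final line of \eqref{E:fluidenergytimederivative} by writing $\partial_t m^{ab} + 2 \decayparameter \omega m^{ab} = (\partial_t m^{ab} + 2 \omega m^{ab}) + 2(\decayparameter - 1)\omega m^{ab}$. The piece involving $(\decayparameter - 1)\omega m^{ab}$ is already in the form stated in the last line of the corollary, so it is simply retained. The piece involving $\partial_t m^{ab} + 2 \omega m^{ab}$ is pointwise estimated by pulling $\|\partial_t m^{ab} + 2 \omega m^{ab}\|_{L^\infty}$ out of the integral and bounding the leftover $e^{2\decayparameter \Omega}(\partial_a \partial_{\vec{\alpha}} \Phi)(\partial_b \partial_{\vec{\alpha}} \Phi)$ via the norm definitions.

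Next, for the first two terms of \eqref{E:fluidenergytimederivative}, I would exploit that $\bar{\Psi}$ is constant and $e^{\decayparameter \Omega}$ depends only on $t,$ so that for $|\vec{\alpha}| \geq 1$ one has $e^{\decayparameter \Omega}\partial_t \partial_{\vec{\alpha}}\Phi = \partial_{\vec{\alpha}}(e^{\decayparameter \Omega}\partial_t \Phi - \bar{\Psi})$. The two integrals therefore merge into a single sum over $|\vec{\alpha}| \leq N$ of $\int (\partial_a m^{0a})\bigl[\partial_{\vec{\alpha}}(e^{\decayparameter \Omega}\partial_t \Phi - \bar{\Psi})\bigr]^2\,d^3x,$ which is immediately controlled by $\|\partial_a m^{0a}\|_{L^\infty}\fluidnorm{N}^2$. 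The third and fourth integrals, carrying $\partial_a m^{ab}$, are estimated by applying Cauchy-Schwarz to bound them by $\|\partial_a m^{ab}\|_{L^\infty}$ times products of $L^2$ norms of $\partial_t\partial_{\vec{\alpha}}\Phi$ and $\bard \partial_{\vec{\alpha}}\Phi$; the two exponential weights combine (using $e^{\decayparameter \Omega} \cdot e^{(1-\decayparameter)\Omega} = e^{\Omega}$) to give the factor $e^{\Omega}$ in the stated bound.

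Finally, the two inhomogeneity integrals and the commutator integral are each handled by an immediate Cauchy-Schwarz: $\int e^{\decayparameter \Omega} F \cdot (e^{\decayparameter \Omega}\partial_t \Phi - \bar{\Psi})$ is bounded by $e^{\decayparameter \Omega}\|F\|_{L^2} \fluidnorm{N}$, and an analogous bound controls the sums over $|\vec{\alpha}| \geq 1$ by $e^{\decayparameter \Omega}\|\partial_{\vec{\alpha}} F\|_{L^2}\fluidnorm{N}$, which merge into the $e^{\decayparameter \Omega}\|\triangle_{\partial \Phi}\|_{H^N}\fluidnorm{N}$ and commutator terms in the conclusion. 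Since every step is Cauchy-Schwarz plus rewriting via the norm definitions, there is no conceptual obstacle; the only thing that needs care is the exponential-weight bookkeeping, where one must consistently absorb factors of $e^{\decayparameter \Omega}$ arising from the energy into the rescaled quantities $e^{\decayparameter \Omega}\partial_t \Phi - \bar{\Psi}$ and $e^{(\decayparameter - 1)\Omega}\bard \Phi$ that define $\fluidnorm{N}$.
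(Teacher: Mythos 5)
Your proposal is correct and is exactly the argument the paper intends: the corollary is obtained from the identity \eqref{E:fluidenergytimederivative} by the Cauchy--Schwarz inequality, the observation that $e^{\decayparameter \Omega}\partial_t \partial_{\vec{\alpha}}\Phi = \partial_{\vec{\alpha}}(e^{\decayparameter \Omega}\partial_t \Phi - \bar{\Psi})$ for $|\vec{\alpha}| \geq 1,$ the splitting $\partial_t m^{ab} + 2\decayparameter \omega m^{ab} = (\partial_t m^{ab} + 2\omega m^{ab}) + 2(\decayparameter - 1)\omega m^{ab},$ and the definition \eqref{E:Fluidnorm}.

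One bookkeeping point that you gesture at but do not carry out: for the $(\partial_t m^{ab} + 2\omega m^{ab})$ piece, the definition \eqref{E:Fluidnorm} only yields $\| \bard \partial_{\vec{\alpha}} \Phi \|_{L^2} \leq e^{(1-\decayparameter)\Omega} \fluidnorm{N},$ so Cauchy--Schwarz produces the weight $e^{2\decayparameter \Omega} \cdot e^{2(1-\decayparameter)\Omega} = e^{2\Omega}$ in front of $\|\partial_t m^{ab} + 2\omega m^{ab}\|_{L^{\infty}} \fluidnorm{N}^2,$ rather than the $e^{2\decayparameter \Omega}$ printed in \eqref{E:fluidenergytimederivativeCauchySchwarz}. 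Since $0 < \decayparameter < 1$ and $\Omega \geq 0,$ your route gives a slightly weaker inequality than the one stated; this is a typo in the printed statement rather than a defect in your argument, and it is harmless in the sequel because \eqref{E:partialtmjkplusomegamjkLinfinity} gives $e^{2\Omega}\|\partial_t m^{ab} + 2\omega m^{ab}\|_{L^{\infty}} \leq C e^{-q\Omega} \suptotalnorm{N},$ which is all that is used in the proof of Proposition \ref{P:IntegralEnergyInequalities}.
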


\subsection{The total energy \texorpdfstring{$\suptotalenergy{N}$}{}} \label{SS:TotalEnergy}

\begin{definition}
Let $\supgenergy{N}$ and $\supfluidenergy{N}$ be the metric component and fluid energies defined in \eqref{E:gtotalsupenergydef}
and \eqref{E:fluidsupenergydef} respectively. We define $\suptotalenergy{N},$ the total energy associated to $(g_{\mu \nu}, \partial_{\mu} \Phi),$ $(\mu,\nu = 0,1,2,3),$ as follows:
	
	\begin{align} \label{E:totalenergy}
		\suptotalenergy{N} & \eqdef \supgenergy{N} + \supfluidenergy{N}.
	\end{align}
\end{definition}

\section{Linear-Algebraic Estimates of \texorpdfstring{$g_{\mu \nu}$ and $g^{\mu \nu},$ $(\mu,\nu=0,1,2,3)$}{}} \label{S:LinearAlgebra}
\setcounter{equation}{0}

In this section, we provide some linear-algebraic estimates of $g_{\mu \nu}$ and $g^{\mu \nu}.$
In addition to providing some rough $L^{\infty}$ estimates that we will use in Sections \ref{S:BootstrapConsequences} and \ref{S:EnergyNormEquivalence}, the lemmas that we prove will guarantee that $g_{\mu \nu}$ is a Lorentzian metric. We remark that we already made use of this fact in our statement of the conclusions of Theorem \ref{T:LocalExistence}. The estimates are based on the following rough assumptions, which we will upgrade during our global existence argument.

\begin{center}
	\large{Rough Bootstrap Assumptions for $g_{\mu \nu}:$}
\end{center}

In this section, we assume that there are constants $\eta > 0$ and $c_1 \geq 1$ such that

\begin{subequations}
\begin{align}
	|g_{00} + 1| & \leq \eta, && \label{E:metricBAeta} \\
	c_1^{-1} \delta_{ab} X^a X^b 
		& \leq e^{-2 \Omega} g_{ab}X^{a}X^{b} 
		\leq c_1 \delta_{ab} X^a X^b, && \forall (X^1,X^2,X^3) \in \mathbb{R}^3, \label{E:gjkBAvsstandardmetric}   \\
	\sum_{a=1}^3 |g_{0a}|^2 & \leq \eta c_1^{-1} e^{2(1 - q) \Omega}. && \label{E:g0jBALinfinity}
\end{align}
\end{subequations}
For our global existence argument, we will assume that $\eta = \eta_{min},$ where $\eta_{min}$ 
is defined in Section \ref{SS:etamin}.

\begin{lemma} \label{L:ginverseformluas}
	Let $g_{\mu \nu}$ be a symmetric $4 \times 4$ matrix of real numbers. 
	Let $(g_{\flat})_{jk}$ be the $3 \times 3$ matrix defined by $(g_{\flat})_{jk} = g_{jk},$ let $(g_{\flat}^{-1})^{jk}$
	be the $3 \times 3$ inverse of $(g_{\flat})_{jk}.$ Assume that $g_{00} < 0$ and that $(g_{\flat})_{jk}$ is positive definite. 
	Then $g_{\mu \nu}$ is a Lorentzian metric with inverse $g^{\mu \nu},$ $g^{00} < 0,$ and the $3 \times 3$ matrix 
	$(g^{\#})^{jk}$ defined by $(g^{\#})^{jk} \eqdef g^{jk}$ is positive definite. Furthermore, the following formulas hold:
	
	\begin{subequations}
	\begin{align}
		g^{00} & = \frac{1}{g_{00} - d^2}, \\
		\frac{g_{00}}{g_{00} - d^2} (g_{\flat}^{-1})^{ab} X_{a}X_{b} 
			& \leq (g^{\#})^{ab} X_{a}X_{b} \leq (g_{\flat}^{-1})^{ab} X_{a}X_{b}, && \forall (X_1,X_2,X_3) \in \mathbb{R}^3, \\
		g^{0j} & = \frac{1}{d^2 - g_{00}} (g_{\flat}^{-1})^{aj} g_{0a}, && (j=1,2,3),
	\end{align}
	\end{subequations}
	where 
	
	\begin{align}
		d^2 = (g_{\flat}^{-1})^{ab} g_{0a} g_{0b}.
	\end{align}
	
\end{lemma}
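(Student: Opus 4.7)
The plan is to treat $g_{\mu\nu}$ as a $2\times 2$ block matrix with scalar $(1,1)$ block $g_{00}$ and positive-definite $3\times 3$ lower-right block $(g_{\flat})_{jk},$ and to invert it via the Schur complement formula. First I would observe that under the standing assumptions $g_{00}<0$ and $(g_{\flat})_{jk}$ positive definite, the Schur complement $g_{00}-d^2=g_{00}-(g_{\flat}^{-1})^{ab}g_{0a}g_{0b}$ is strictly negative (it is the sum of a negative number and a non-positive one), which already guarantees invertibility of $g_{\mu\nu}.$ The standard block-inverse formula then immediately yields $g^{00}=1/(g_{00}-d^2),$ the displayed formula $g^{0j}=(g_{\flat}^{-1})^{aj}g_{0a}/(d^2-g_{00}),$ and the rank-one correction
\[
g^{jk}=(g_{\flat}^{-1})^{jk}+\frac{(g_{\flat}^{-1})^{ja}g_{0a}\,(g_{\flat}^{-1})^{kb}g_{0b}}{g_{00}-d^2}.
\]
The negativity of $g^{00}$ is immediate from $g_{00}-d^2<0.$

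Next, to establish the sandwich bound on $(g^{\#})^{jk},$ I would test the above rank-one formula quadratically against an arbitrary vector $X=(X_1,X_2,X_3).$ Since $g_{00}-d^2<0,$ the correction term is negative semi-definite, giving the upper bound $(g^{\#})^{ab}X_aX_b\leq (g_{\flat}^{-1})^{ab}X_aX_b$ for free. For the lower bound, I would apply the Cauchy--Schwarz inequality with respect to the positive-definite inner product $(g_{\flat}^{-1})^{ab}$ to the vectors with components $X_a$ and $g_{0a},$ obtaining
\[
\bigl[(g_{\flat}^{-1})^{ja}g_{0a}X_j\bigr]^2 \;\leq\; d^2\cdot (g_{\flat}^{-1})^{jk}X_jX_k.
\]
Substituting this into the rank-one formula and using the algebraic identity $1-d^2/(d^2-g_{00})=g_{00}/(g_{00}-d^2)$ yields exactly the required lower bound.

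Finally, the Lorentzian signature follows from Sylvester's law of inertia applied to the block decomposition: with the congruence $P=\bigl(\begin{smallmatrix}1 & 0\\-(g_{\flat}^{-1})v & I\end{smallmatrix}\bigr)$ (where $v_a=g_{0a}$), one computes $P^{T}gP=\mathrm{diag}(g_{00}-d^2,\,(g_{\flat})_{jk}),$ whose signature is $(-,+,+,+)$ because $g_{00}-d^2<0$ and $(g_{\flat})_{jk}$ is positive definite. Positive-definiteness of $(g^{\#})^{jk}$ is automatic from the already-established lower bound, since the coefficient $g_{00}/(g_{00}-d^2)$ is a quotient of two negative numbers and hence strictly positive.

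I anticipate no genuine mathematical obstacle here; the whole statement is a purely linear-algebraic block inversion. The only real hazard is bookkeeping with signs, since the lemma carefully distinguishes between $g_{00}-d^2$ (negative) and $d^2-g_{00}$ (positive) in different formulas, and one must route each factor to the correct location when invoking Schur-complement identities and rearranging the Cauchy--Schwarz estimate.
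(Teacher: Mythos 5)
Your proof is correct and complete. The paper itself offers no argument here---it simply cites Lemmas 1 and 2 of \cite{hR2008}---so your self-contained Schur-complement/block-inversion argument, with the Cauchy--Schwarz step in the $(g_{\flat}^{-1})^{ab}$ inner product for the lower bound and Sylvester's law for the signature, supplies exactly the standard linear-algebra proof that is being outsourced, and all the sign bookkeeping (in particular $g_{00}-d^2<0$ and the identity $1-d^2/(d^2-g_{00})=g_{00}/(g_{00}-d^2)$) checks out.
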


\begin{proof}
	Lemma \ref{L:ginverseformluas} is a combining of Lemmas 1 and 2 of \cite{hR2008}.
\end{proof}

\begin{lemma} \label{L:ginverseestimates}
	Let $g_{\mu \nu}$ be a symmetric $4 \times 4$ matrix of real numbers satisfying \eqref{E:metricBAeta} - 
	\eqref{E:g0jBALinfinity}, where $\Omega \geq 0$ and $0 \leq q < 1.$ 
	Then $g$ is a Lorentzian metric, and there exists a constant $\eta_0 > 0$ 
	such that $0 \leq \eta \leq \eta_0$ implies that the following estimates hold for the 
	its inverse $g^{\mu \nu}:$ 
	
	\begin{subequations}
	\begin{align}
		|g^{00} + 1| & \leq 4 \eta, && \label{E:g00upperplusoneroughestimate} \\
		\sqrt{\sum_{a=1}^3 |g^{0a}|^2} & \leq \eta c_1^{-1} e^{-2 \Omega} \sqrt{\sum_{a=1}^3 |g_{0a}|^2}, \\
		|g^{0a} g_{0a}| & \leq 2 c_1 e^{-2 \Omega} \sum_{a=1}^3 |g_{0a}|^2, && \\
		\frac{2}{3c_1} \delta^{ab}X_{a} X_{b}
		& \leq e^{2 \Omega} g^{ab}X_{a}X_{b} 
			\leq \frac{3c_1}{2}\delta^{ab}X_{a}X_{b}, && \forall (X_1,X_2,X_3) \in \mathbb{R}^3. 
				\label{E:gjkuppercomparetostandard}  
	\end{align}
	\end{subequations}
\end{lemma}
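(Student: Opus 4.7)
My plan is to apply Lemma \ref{L:ginverseformluas} to obtain explicit formulas for $g^{00}, g^{0j}, g^{jk}$ and then estimate each component using the three hypotheses. First, I would verify that the hypotheses of Lemma \ref{L:ginverseformluas} are satisfied: the condition $g_{00} < 0$ follows from $|g_{00} + 1| \leq \eta$ provided $\eta_0 < 1$, and the positive definiteness of $(g_\flat)_{jk}$ follows immediately from \eqref{E:gjkBAvsstandardmetric} together with $c_1 \geq 1$ and $\Omega \geq 0$. Consequently $g_{\mu\nu}$ is Lorentzian and the formulas of Lemma \ref{L:ginverseformluas} are available.

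The pivotal auxiliary quantity is $d^2 = (g_\flat^{-1})^{ab} g_{0a} g_{0b}$. From \eqref{E:gjkBAvsstandardmetric} one obtains the dual two-sided bound
\begin{equation*}
c_1^{-1} e^{-2\Omega} \delta^{ab} X_a X_b \leq (g_\flat^{-1})^{ab} X_a X_b \leq c_1 e^{-2\Omega} \delta^{ab} X_a X_b,
\end{equation*}
which when combined with the smallness assumption \eqref{E:g0jBALinfinity} yields
\begin{equation*}
0 \leq d^2 \leq c_1 e^{-2\Omega} \sum_{a=1}^3 |g_{0a}|^2 \leq \eta\, e^{-2q\Omega} \leq \eta,
\end{equation*}
using $\Omega \geq 0$ and $q \geq 0$. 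This controls the key denominator $g_{00} - d^2$ from above and below (in absolute value) by $1 \pm O(\eta)$.

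With the $d^2$ estimate in hand, each of the four conclusions reduces to direct algebra. For the first bound, the identity $g^{00} + 1 = [(g_{00} + 1) - d^2]/(g_{00} - d^2)$ has numerator of size at most $2\eta$ and denominator of magnitude at least $1 - 2\eta$, yielding $|g^{00} + 1| \leq 4\eta$ once $\eta_0$ is small. The third bound is immediate from $g^{0a}g_{0a} = d^2/(d^2 - g_{00})$ combined with the upper estimate on $d^2$ and $|d^2 - g_{00}| \geq 1/2$. The second bound comes from the formula $g^{0j} = (d^2 - g_{00})^{-1}(g_\flat^{-1})^{aj} g_{0a}$ upon applying the operator-norm estimate $\|g_\flat^{-1}\|_{\mathrm{op}} \leq c_1 e^{-2\Omega}$ to the vector $g_{0\cdot}$. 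The final two-sided estimate on $e^{2\Omega} g^{ab}$ follows by feeding the dual bound above into the pinching inequality of Lemma \ref{L:ginverseformluas}, noting that $g_{00}/(g_{00} - d^2) \geq 1 - C\eta$ for some absolute $C$; for $\eta_0$ sufficiently small this multiplicative loss is absorbed into the ratio $2/3$.

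I do not expect a serious obstacle beyond careful constant-chasing. The only delicate points are choosing $\eta_0$ so small that simultaneously (i) the denominators $g_{00} - d^2$ and $d^2 - g_{00}$ stay uniformly bounded away from zero, (ii) the factor $g_{00}/(g_{00} - d^2)$ appearing in the spatial pinching is close enough to $1$ to preserve the stated constants $2/(3c_1)$ and $3c_1/2$, and (iii) the crude numerator bound of $2\eta$ in the $g^{00}+1$ estimate divided by the slightly-less-than-unit denominator yields at most $4\eta$. Each of these is a scalar condition on $\eta_0$ independent of $c_1$ and $\Omega$, so a single small universal choice suffices.
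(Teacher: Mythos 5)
Your route — invoke Lemma \ref{L:ginverseformluas} for the explicit formulas for $g^{00}$, $g^{0j}$, $(g^{\#})^{jk}$, bound $d^2 = (g_\flat^{-1})^{ab}g_{0a}g_{0b} \leq \eta e^{-2q\Omega} \leq \eta$ via the dual of \eqref{E:gjkBAvsstandardmetric} together with \eqref{E:g0jBALinfinity}, and then read off each estimate — is exactly the argument the paper is pointing to when it cites Lemma 7 of \cite{hR2008} (the paper itself gives no proof). Your treatment of the first, third, and fourth inequalities is correct: the identity $g^{00}+1 = [(g_{00}+1)-d^2]/(g_{00}-d^2)$ with denominator of magnitude at least $1-\eta$ gives $|g^{00}+1|\leq 4\eta$; $g^{0a}g_{0a} = d^2/(d^2-g_{00})$ gives the third bound; and $g_{00}/(g_{00}-d^2) = |g_{00}|/(|g_{00}|+d^2) \geq (1-\eta)/(1+2\eta) \geq 2/3$ for small $\eta$ gives the spatial pinching with the stated constants.

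The one place your argument does not deliver what you claim is the second inequality. The operator-norm estimate applied to $g^{0j} = (d^2-g_{00})^{-1}(g_\flat^{-1})^{aj}g_{0a}$ yields
\begin{align*}
\sqrt{\textstyle\sum_{a}|g^{0a}|^2} \;\leq\; \frac{c_1 e^{-2\Omega}}{1-\eta}\sqrt{\textstyle\sum_{a}|g_{0a}|^2},
\end{align*}
whose prefactor $c_1/(1-\eta)$ is not $\eta c_1^{-1}$; for $c_1\geq 1$ and small $\eta$ these differ by the large factor $c_1^2/\eta$, so you cannot pass from one to the other. In fact the inequality as printed cannot hold: since $(g_\flat^{-1})$ has operator norm at least $c_1^{-1}e^{-2\Omega}$ from below and $(d^2-g_{00})^{-1}\geq (1+2\eta)^{-1}$, one has $\sqrt{\sum_a|g^{0a}|^2} \geq c_1^{-1}e^{-2\Omega}(1+2\eta)^{-1}\sqrt{\sum_a|g_{0a}|^2}$ whenever $g_{0\cdot}\neq 0$ (the flat diagonal case $g_{00}=-1$, $g_{jk}=e^{2\Omega}\delta_{jk}$ makes this explicit), which contradicts an upper bound with prefactor $\eta c_1^{-1}$ for small $\eta$. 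So the stated constant is a misprint; the bound your argument produces, $C c_1 e^{-2\Omega}\sqrt{\sum_a|g_{0a}|^2}$, is the true one and is all that is used downstream (e.g.\ in deriving $\|g^{0j}\|_{L^\infty}\leq Ce^{-(1+q)\Omega}\supgnorm{N}$). You should state and prove that corrected version rather than assert, as you do, that the operator-norm step yields the printed inequality.
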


\begin{proof}
	Lemma \ref{L:ginverseestimates} is proved as Lemma 7 in \cite{hR2008}.
\end{proof}

\section{The Bootstrap Assumption for \texorpdfstring{$\suptotalnorm{N}$ and the Definition of $N, \eta_{min}$ and $q$}{}} \label{S:BootstrapAssumptions}

In this short section, we define the quantities $N, \eta_{min},$ and $q.$ We then introduce some assumptions
that will be used in our derivation of the estimates of Sections \ref{S:BootstrapConsequences} and \ref{S:EnergyNormEquivalence}.

\subsection{The definition of \texorpdfstring{$N$ and the assumption $\suptotalnorm{N} \leq \epsilon$}{}}

For the remainder of the article, we will assume that $N$ is an integer subject to one of the following requirements:

\begin{align}
	N &\geq 3, && (\mbox{this is large enough for all of our results except some of the conclusions of 
		Theorem \ref{T:Asymptotics}}), \\ \label{E:Ndef}
	N &\geq 5, &&	(\mbox{for the full results of Theorem \ref{T:Asymptotics} to be valid}).
\end{align}
We require $N$ to be of this size to ensure that various Sobolev embedding results are valid; see also
Remark \ref{R:Nlarger}.

In our global existence argument, we will make the following bootstrap assumption:

\begin{align} \label{E:fundamentalbootstrap}
	\suptotalnorm{N}[g, \partial \Phi] \leq \epsilon,
\end{align}
where $\suptotalnorm{N}$ is defined in \eqref{E:totalsupnorm}, and $\epsilon$ is a sufficiently 
small positive number. Above we have written $\suptotalnorm{N}[g, \partial \Phi]$ to emphasize the dependence of
$\suptotalnorm{N}$ on $(g, \partial \Phi).$ Observe that $\suptotalnorm{N}[\widetilde{g}, \partial \widetilde{\Phi}] = 0,$ where $\widetilde{g}, \partial \widetilde{\Phi}$ is the background solution from Section \ref{S:backgroundsolution}; i.e., 
$\suptotalnorm{N}[g, \partial \Phi]$ measures how much $(g, \Phi)$ differs from the background solution.

\subsection{The definitions of \texorpdfstring{$\eta_{min}$ and $q$}{}} \label{SS:etamin}

\begin{definition}
	Let us apply Lemma \ref{L:buildingblockmetricenergy} to each of the equations \eqref{E:finalg00equation} - 
	\eqref{E:finalhjkequation}, denoting the constant $\eta$ produced by the lemma in each case by
	 $\eta_{00}, \eta_{0*},$ and $\eta_{**}$ respectively. Furthermore, let $\eta_0$ be the constant from
	 Lemma \ref{L:ginverseestimates}. We now define the positive quantities (recalling that $0 < \decayparameter < 1$ when
	 $0 < \speed < \sqrt{1/3}$) $\eta_{min}$ and $q$ by
	 \begin{align}
	 	\eta_{min} & \eqdef \frac{1}{8} \mbox{min} \big\lbrace 1, \eta_0, \eta_{00}, \eta_{0*}, \eta_{**} \big\rbrace, 
	 		\label{E:etamindef} \\
	 	q & \eqdef \frac{2}{3} \mbox{min}\big\lbrace \eta_{min}, \decayparameter, 1 - \decayparameter \big\rbrace. 
	 		\label{E:qdef}
	 \end{align}
\end{definition}
We remark that $\eta_{min}$ and $q$ have been chosen to be small enough so that the bootstrap argument for global existence given in Section \ref{SS:globalexistencetheorem} will close. In particular, inequality \eqref{E:g00upperplusoneroughestimate}, with $\eta \leq \eta_{min},$ guarantees that the energies $\mathcal{E}_{(\gamma,\delta)}[\cdot,\partial (\cdot)]$ 
for solutions to \eqref{E:finalg00equation} - \eqref{E:finalhjkequation} have the coercive property \eqref{E:mathcalEfirstlowerbound}.

\begin{remark} \label{R:RoughBootstrapAutomatic}

By Sobolev embedding and Lemma \ref{L:ginverseestimates}, if $\epsilon$ is small enough, then the assumption $\suptotalnorm{N} \leq \epsilon$ implies that there exists a constant $C > 0$ such that

\begin{subequations}
\begin{align} \label{E:g00plusonesmall}
	|g_{00} + 1| & \leq C \epsilon, \\
	\sum_{j=1}^3 |g_{0j}| & \leq C \epsilon e^{(1 - q) \Omega}. \label{E:g0jremarkbound}
\end{align}
\end{subequations}

Therefore, if $\epsilon$ is sufficiently small, the inequalities \eqref{E:metricBAeta} and \eqref{E:g0jBALinfinity} 
are an automatic consequence of \eqref{E:gjkBAvsstandardmetric} and the assumption $\suptotalnorm{N} \leq \epsilon.$

\end{remark}

\section{Sobolev Estimates} \label{S:BootstrapConsequences}

In this section, we use the bootstrap assumptions of Sections \ref{S:LinearAlgebra} and \ref{S:BootstrapAssumptions} to derive estimates of all of the terms appearing in the modified equations \eqref{E:finalg00equation} -  \eqref{E:finalfluidequation}
in terms of the norms defined in Section \ref{SS:NormsforgandpartialPhi}. The main goal is to show that the error terms are small compared to the principal terms, which is the main step in closing the bootstrap argument in the proof of Theorem \ref{T:GlobalExistence}. More specifically, in Section \ref{SS:IntegralInequalities}, the estimates of this section will be coupled with the energy inequalities of Corollary  \ref{C:metricfirstdiferentialenergyinequality} and Corollary \ref{C:fluidenergytimederivative}
in order to derive a system of integral inequalities for the solution. We divide the analysis into two propositions: Proposition \ref{P:BoostrapConsequences} provides estimates for $g$ and $\partial_t \Phi,$ while Proposition \ref{P:Nonlinearities} provides estimates for the error terms, the reciprocal acoustical metric, and for $\partial_{\mu} \Phi,$
$(\mu = 0,1,2,3).$ In particular, the latter proposition provides estimates for the ratio $z_j \eqdef \frac{e^{-\Omega} \partial_j \Phi}{\partial_t \Phi},$ $(j=1,2,3),$ which are crucial for closing the bootstrap argument of Theorem \ref{T:GlobalExistence}. The main tools for proving the propositions are standard Sobolev-Moser product-type estimates, which we have collected together in the Appendix for convenience.

\subsection{Estimates of \texorpdfstring{$g_{\mu \nu}, g^{\mu \nu},$ $(\mu, \nu =0,1,2,3),$ $\| e^{\decayparameter \Omega} \partial_t \Phi - \bar{\Psi} \|_{L^{\infty}},$ and $\Big\| \frac{1}{\partial_t \Phi} \Big\|_{L^{\infty}}$}{}} \label{SS:Bootstrapconsequencesg}

In this section, we state and prove the first proposition that will be used to deduce the energy inequalities of Section
\ref{SS:IntegralInequalities}.

\begin{proposition} \label{P:BoostrapConsequences}
Let $N \geq 3$ be an integer, and assume that the bootstrap assumptions \eqref{E:metricBAeta} - \eqref{E:g0jBALinfinity} hold 
on the spacetime slab $[0,T) \times \mathbb{T}^3$ for some constant $c_1 \geq 1$ and for $\eta = \eta_{min}.$
Then there exists a constant $\epsilon' > 0$ and a constant 
$C > 0,$ where $C$ depends on $N, c_1,$ and $\eta_{min},$ such that if and $\suptotalnorm{N}(t) < \epsilon'$ on $[0,T),$
then the following estimates also hold on $[0,T),$ where $h_{jk} = e^{-2 \Omega} g_{jk}:$

\begin{subequations}
\begin{align}
	\| g_{00} \|_{L^{\infty}} & \leq 2, \label{E:g00lowerLinfinity} \\
	\| g_{0j} \|_{L^{\infty}} & \leq C e^{(1 - q) \Omega} \supgnorm{N}, \label{E:g0jlowerLinfinity} \\
	\| g_{jk} \|_{L^{\infty}} & \leq C e^{2 \Omega},  \label{E:gjklowerLinfinity}
\end{align}
\end{subequations}

\begin{subequations}
\begin{align}
	\| \bard g^{00} \|_{H^{N-1}} & \leq C e^{-q \Omega}  \supgnorm{N}, \label{E:partialg00upperHNminusone} \\
	\|g^{00} \|_{L^{\infty}} & \leq 5, \label{E:g00upperLinfinity} \\
	\| g^{00} + 1 \|_{H^N} & \leq C e^{-q \Omega} \supgnorm{N}, \label{E:g00upperplusoneHN} \\
	\| g^{00} + 1 \|_{L^{\infty}} & \leq C e^{-q \Omega} \supgnorm{N}, \label{E:g00upperplusoneLinfinity} \\
	\| g^{jk} \|_{L^{\infty}} & \leq C e^{-2 \Omega},  \label{E:gjkupperLinfinity} \\
	\| \bard g^{jk} \|_{H^{N-1}} & \leq C e^{-2 \Omega}  \supgnorm{N}, \label{E:partialgjkupperHNminusone} \\
	\| \bard g^{jk} \|_{L^{\infty}} & \leq C e^{-2 \Omega}  \supgnorm{N}, \label{E:partialgjkupperLinfinity} \\
	\| g^{0j} \|_{H^N} & \leq C e^{-(1 + q)\Omega}  \supgnorm{N}, \label{E:g0jupperHN} \\
	\| g^{0j} \|_{C_b^1} & \leq C e^{-(1 + q)\Omega} \supgnorm{N}, \label{E:g0jupperCb1} 
\end{align}
\end{subequations}

\begin{subequations}
\begin{align}
	\| \partial_t g_{jk} - 2 \omega g_{jk} \|_{H^N} & \leq C e^{(2 - q) \Omega} \suphstarstarnorm{N}, 
		\label{E:partialtgjkminusomegagjklowerHN} \\
	\| \partial_t g_{jk} - 2 \omega g_{jk} \|_{C_b^1} & \leq C e^{(2 - q) \Omega} \suphstarstarnorm{N}, 
		\label{E:partialtgjkminusomegagjklowerLinfinity} \\
	\| \partial_t g_{jk} \|_{C_b^1} & \leq C e^{2 \Omega}, \label{E:partialtgjklowerC1} 
\end{align}
\end{subequations}

\begin{subequations}
\begin{align}
	\| g^{aj} \partial_t g_{ak} - 2 \omega \delta_k^j \|_{H^N} & \leq C e^{-q \Omega}  \supgnorm{N}, 
		\label{E:gjaupperpartialtgaklowerminus2omegadeltakjHN} \\
	\| g^{aj} \partial_t g_{ak} - 2 \omega \delta_k^j \|_{L^{\infty}} & \leq C e^{-q \Omega}  \supgnorm{N}, 
		\label{E:gjaupperpartialtgaklowerminus2omegadeltakjLinfinity} 
\end{align}
\end{subequations}
	
\begin{subequations}
\begin{align}
	\| \partial_t g^{jk} + 2 \omega g^{jk} \|_{H^N} & \leq C e^{-(2+q)\Omega} \supgnorm{N},  
		\label{E:partialtgjkupperplusomegagjkHN} \\
	\| \partial_t g^{jk} + 2 \omega g^{jk} \|_{L^{\infty}} & \leq C e^{-(2+q)\Omega} \supgnorm{N},  
		\label{E:partialtgjkupperplusomegagjkLinfinity} \\
	\| \partial_t g^{00} \|_{L^{\infty}} & \leq C e^{-q \Omega} \supgnorm{N},  
		\label{E:partialtg00upperLinfinity} \\
	\| \partial_t g^{0j} \|_{L^{\infty}} & \leq C e^{-(1 + q) \Omega} \supgnorm{N}, 
		\label{E:partialtg0jupperLinfinity} \\
	\| \partial_t g^{jk} \|_{L^{\infty}} & \leq C e^{-2 \Omega},  
		\label{E:partialtgjkupperLinfinity} 
\end{align}
\end{subequations}

\begin{subequations}
\begin{align} 
	\| g^{ab}\Gamma_{a j b} \|_{H^N} & \leq C e^{(1-q)\Omega}\suphstarstarnorm{N}, 
		\label{E:gabupperGammaajblowerHN} \\
	\| g^{ab}\Gamma_{a j b} \|_{H^{N-1}} & \leq C \suphstarstarnorm{N}.
		\label{E:gabupperGammaajblowerHNminusone} 
\end{align}
\end{subequations}

Furthermore, the following estimates for $\partial_t \Phi$ hold:

\begin{subequations}
\begin{align}
	\| e^{\decayparameter \Omega} \partial_t \Phi - \bar{\Psi} \|_{L^{\infty}} & \leq C \supfluidnorm{N},
		\label{E:partialtPhiLinfinity} \\
	\Big\| \frac{1}{\partial_t \Phi} \Big\|_{L^{\infty}} & \leq \Big( \frac{1}{\bar{\Psi}} 
		+ C \supfluidnorm{N} \Big) e^{\decayparameter \Omega} \label{E:1overpartialtPhiLinfinity}.
\end{align}
\end{subequations}

In the above estimates, the norms $\suphstarstarnorm{N}, \supgnorm{N},$ and $\supfluidnorm{N}$
are defined in Definition \ref{D:Norms}.
\end{proposition}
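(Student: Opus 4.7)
The collection of estimates splits naturally into three groups whose proofs share a common toolkit: (i) Sobolev/$L^\infty$ bounds on the metric components $g_{\mu\nu}$ and their time derivatives, (ii) the corresponding bounds on $g^{\mu\nu}$ and composite expressions built from it, and (iii) pointwise bounds on $\partial_t\Phi$ and its reciprocal. My plan is to combine the norm definitions of Definition \ref{D:Norms}, the Sobolev embedding $H^{N-1}(\mathbb{T}^3)\hookrightarrow L^\infty(\mathbb{T}^3)$ (valid because $N\geq 3$), the explicit algebraic formulas for the inverse metric from Lemma \ref{L:ginverseformluas}, the rough bounds of Lemma \ref{L:ginverseestimates}, and the Sobolev–Moser product and composition estimates collected in the Appendix. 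Throughout, I will invoke Remark \ref{R:RoughBootstrapAutomatic} to guarantee that for $\epsilon'$ sufficiently small the hypothesis $\suptotalnorm{N}\leq\epsilon'$ implies \eqref{E:metricBAeta}–\eqref{E:g0jBALinfinity} with $\eta=\eta_{min}$, so Lemmas \ref{L:ginverseformluas}–\ref{L:ginverseestimates} apply.

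For the components of $g_{\mu\nu}$, write $g_{00}=-1+(g_{00}+1)$ and $g_{jk}=e^{2\Omega}h_{jk}$. Sobolev embedding combined with $\|g_{00}+1\|_{H^N}\leq e^{-q\Omega}\gzerozeronorm{N}$ yields \eqref{E:g00lowerLinfinity}–\eqref{E:g00upperplusoneLinfinity} after absorbing the small perturbation, and likewise $\|g_{0j}\|_{H^N}\leq e^{(1-q)\Omega}\gzerostarnorm{N}$ gives \eqref{E:g0jlowerLinfinity}, while \eqref{E:gjklowerLinfinity} follows directly from \eqref{E:gjkBAvsstandardmetric}. The time-derivative estimates \eqref{E:partialtgjkminusomegagjklowerHN}–\eqref{E:partialtgjklowerC1} are immediate from the identity $\partial_t g_{jk}-2\omega g_{jk}=e^{2\Omega}\partial_t h_{jk}$ together with the weight $e^{q\Omega}$ on $\|\partial_t h_{jk}\|_{H^N}$ built into \eqref{E:mathfrakSMh**}.

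The inverse-metric estimates are the technical core. Lemma \ref{L:ginverseestimates} supplies \eqref{E:g00upperLinfinity}, \eqref{E:g00upperplusoneLinfinity}, and \eqref{E:gjkupperLinfinity} without further work; for the Sobolev-level bounds I differentiate the explicit formulas of Lemma \ref{L:ginverseformluas} and apply the Appendix Moser product and composition estimates. Specifically, for \eqref{E:partialg00upperHNminusone}, \eqref{E:g00upperplusoneHN}, \eqref{E:partialgjkupperHNminusone}, and \eqref{E:g0jupperHN}, I view $g^{\mu\nu}$ as a smooth rational function of $(g_{00}+1,g_{0*},h_{**})$ with non-vanishing denominator (ensured by \eqref{E:gjkBAvsstandardmetric} and the smallness of $g^{00}+1$), and invoke the composition lemma to transfer Sobolev norms from $g_{\mu\nu}$ to $g^{\mu\nu}$ while tracking the exponential weights. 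The cancellation estimates \eqref{E:gjaupperpartialtgaklowerminus2omegadeltakjHN}–\eqref{E:partialtgjkupperplusomegagjkLinfinity} follow from differentiating $g^{ja}g_{ak}=\delta_k^j$ in time, rearranging to $\partial_t g^{jk}=-g^{ja}g^{bk}\partial_t g_{ab}$, and inserting $\partial_t g_{ab}=2\omega g_{ab}+e^{2\Omega}\partial_t h_{ab}$; the principal terms cancel exactly, leaving $-g^{ja}g^{bk}e^{2\Omega}\partial_t h_{ab}$, which is controlled by Moser. For \eqref{E:gabupperGammaajblowerHN}–\eqref{E:gabupperGammaajblowerHNminusone} I observe that the spatial derivatives appearing in $\Gamma_{ajb}=\tfrac12(\partial_a g_{jb}+\partial_b g_{aj}-\partial_j g_{ab})$ rescale to $e^{2\Omega}\bard h$, so the factor $e^{-2\Omega}$ from \eqref{E:gjkuppercomparetostandard} cancels the $e^{2\Omega}$ and leaves $\|g^{ab}\Gamma_{ajb}\|_{H^{N-1}}\lesssim\|\bard h\|_{H^{N-1}}\leq\suphstarstarnorm{N}$; the $H^N$ version picks up the additional $e^{(1-q)\Omega}$ weight from \eqref{E:mathfrakSMh**}.

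Finally, \eqref{E:partialtPhiLinfinity} is Sobolev embedding applied to $e^{\decayparameter\Omega}\partial_t\Phi-\bar\Psi$, whose $H^N$ norm is precisely the first term in $\fluidnorm{N}$; and \eqref{E:1overpartialtPhiLinfinity} follows by writing
\[
\partial_t\Phi \;=\; e^{-\decayparameter\Omega}\bigl(\bar\Psi+[e^{\decayparameter\Omega}\partial_t\Phi-\bar\Psi]\bigr)
\]
and noting that for $\epsilon'$ small the bracketed term is a small perturbation of $\bar\Psi>0$, so the reciprocal is close to $\bar\Psi^{-1}$. The main obstacle is not depth but systematic bookkeeping: every inverse-metric estimate must be checked to ensure the exponential weights produced by the Moser inequalities match the weights embedded in the definitions of $\gzerozeronorm{N}$, $\gzerostarnorm{N}$, $\hstarstarnorm{N}$, and $\fluidnorm{N}$, and that no unwanted factor of $e^{\Omega}$ leaks through when differentiating the rational expressions for $g^{\mu\nu}$. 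Once this accounting is done consistently, all the bounds follow mechanically.
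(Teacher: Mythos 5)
Your proposal is correct and follows essentially the same route as the paper: Sobolev embedding plus the weighted norm definitions for the direct estimates, Lemmas \ref{L:ginverseformluas}--\ref{L:ginverseestimates} for the pointwise inverse-metric bounds, the Appendix Moser product/composition estimates for the Sobolev-level inverse-metric bounds (the paper organizes these via the inductive product formula for $\partial_{\vec{\alpha}} g^{\mu\nu}$ coming from $\partial_{\lambda} g^{\mu \nu} = - g^{\alpha \mu} g^{\beta \nu} \partial_{\lambda} g_{\alpha \beta},$ but your rational-function/composition viewpoint is the same bookkeeping in different clothing, and the paper itself uses the composition corollary for $g^{00}+1$). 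One small imprecision: the identity is $\partial_t g^{jk} = -g^{\alpha j} g^{\beta k}\partial_t g_{\alpha\beta},$ not $-g^{ja}g^{bk}\partial_t g_{ab},$ so the cancellation against $2\omega g^{jk}$ also produces the $0$-index terms $g^{0j}g^{\beta k}\partial_t g_{0\beta}$ and the correction $2\omega g^{0j}g^{bk}g_{0b}$ (since $g^{ja}g^{bk}g_{ab}=g^{jk}-g^{0j}g^{bk}g_{0b}$); these are all harmless decaying terms but must be carried along, as in the paper's display for $\|\partial_t g^{jk}+2\omega g^{jk}\|_{H^N}.$
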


\begin{proof}

Most of these estimates can be found in the statements and proofs of Lemmas 9, 11, 18, and 20 of \cite{hR2008}. The exceptions
are \eqref{E:partialtg0jupperLinfinity}, \eqref{E:partialtgjkupperLinfinity}, \eqref{E:partialtPhiLinfinity}, and \eqref{E:1overpartialtPhiLinfinity}. For brevity, we do not repeat all of the details of the estimates that are proved in \cite{hR2008}.

\begin{remark} \label{R:ProofsRemark}

Throughout all of the remaining proofs in this article, we freely use the results of Lemma \ref{L:backgroundaoftestimate}, the definitions of the norms from Section \ref{SS:NormsforgandpartialPhi}, the definitions \eqref{E:etamindef}, \eqref{E:qdef} of $\eta_{min}$ and $q,$ and the Sobolev embedding result $H^{M+2}(\mathbb{T}^3) \hookrightarrow C_b^M(\mathbb{T}^3),$ $(M \geq 0)$. \textbf{We also freely use the assumption that $\suptotalnorm{N},$ which is defined in \eqref{E:totalsupnorm}, is sufficiently small without explicitly mentioning it every time. Furthermore, the smallness is adjusted as necessary at each step in the proof.} To avoid overburdening the paper with details, we don't give explicit estimates for how small $\suptotalnorm{N}$ must be. We also remark that as discussed in Section \ref{SS:runningconstants}, the constants $c,C,C_*$ that appear throughout the article can be chosen uniformly (however, they may depend on $N$) as long as $\suptotalnorm{N}$ is sufficiently small. Finally, we prove statements in logical order, rather than the order in which they are stated in the proposition.

\end{remark}
 
Before beginning the proof, we observe the following Counting Principle, which provides a very useful heuristic guideline for the estimates concerning the spacetime metric, its inverse, and its derivatives. This tool is only intended to help guide the reader through the estimates; we will provide full proofs of many of the estimates.

\begin{center}
	\textbf{Counting Principle}
\end{center}	
	
	\emph{Consider a product which contains as factors metric components $g_{\mu \nu},$ the inverse metric components
	$g^{\mu \nu},$ and first derivatives of these quantities. If $u$ denotes the total number of \textbf{upstairs spatial metric} 
	among these factors, and $d$ denotes the total number of \textbf{downstairs spatial metric} indices, then the expected 
	contribution to the rate of growth/decay of the $H^N$ norm of the product coming from these terms is no larger than 
	$e^{\Omega(d-u)}.$ For purposes of counting, a spatial derivative $\partial_j$ is considered to be a downstairs spatial 
	index, while time derivatives are neutral. We remark that by these criteria, $h_{jk},$ $(j,k = 1,2,3),$
	is an order $1$ term that doesn't contribute to the decay rate. Furthermore, each factor in a product, excluding $h_{jk}$ but 
	including $\partial_a h_{jk},$ that is equal to one of quantities
	under an $H^N$ norm in definitions \eqref{E:mathfrakSMg00} - \eqref{E:mathfrakSMh**} contributes an additional decay factor 
	of $e^{-q \Omega}.$} 
	\\
	
	As an example, we remark that $\| g^{ab} \Gamma_{ajb} \|_{H^N}$ grows at a rate no larger than $e^{(1-q) \Omega}$ 
	(compare with inequality \eqref{E:gabupperGammaajblowerHN}). As a second example, we note the identity $g^{00} + 1 = 
	\frac{1}{g_{00}}[(g_{00} + 1) - g^{0a}g_{0a}],$ the right-hand side of which allows us to conclude that 
	$\|g^{00} + 1 \|_{H^N}$ grows at a rate no larger than $e^{-q \Omega}$ 
	(because $g_{00} + 1$ is one of the factors under an $H^N$ norm on the right-hand side of \eqref{E:mathfrakSMg00}).
	The net effect is that with \emph{only a single exception}, the products 
	appearing on the right-hand sides of the energy inequalities of Corollary \ref{C:metricfirstdiferentialenergyinequality} and 
	Corollary \ref{C:fluidenergytimederivative} can
	be bounded by $e^{-q \Omega}$ times one or more of the norms (which, according to our assumption
	$\suptotalnorm{N} \leq \epsilon',$ should be thought of as size $\leq \epsilon'$). The one exception is a term
	in inequality \eqref{E:underlinemathfrakEg0*firstdifferential}, namely the dangerous term 
	$\gzerostarnorm{N} e^{(q-1) \Omega} \| g^{ab} \Gamma_{ajb} \|_{H^N}.$ Consequently, we have to pay special attention to it 
	during the global existence argument of Section \ref{S:GlobalExistence}; see Remark \ref{R:Dangerousterm}.

\begin{remark}
	The Counting Principle applies only to the $H^N$ norms of products of the metric components, the inverse metric components, 
	and first derivatives of these quantities. For example, one should not estimate the growth rate of
	$\| \partial_a \partial_b g_{00} \|_{H^{N-1}}$ by simply observing that there are two spatial indices downstairs.
	The correct rate in this case, which can be discerned from definition \eqref{E:mathfrakSMg00},
	is $e^{(1 - q) \Omega}.$ Furthermore, the Counting Principle is not intended to be used on the fluid quantities. 
\end{remark}

\noindent \emph{Proof of \eqref{E:g00lowerLinfinity} - \eqref{E:gjklowerLinfinity}}:
The estimates \eqref{E:g00lowerLinfinity} and \eqref{E:g0jlowerLinfinity} follow from the definition 
\eqref{E:totalsupnorm} of $\suptotalnorm{N}$ and Sobolev embedding. The estimate \eqref{E:gjklowerLinfinity} follows from the
bootstrap assumption \eqref{E:gjkBAvsstandardmetric}. \\
 
\noindent \emph{Proof of \eqref{E:g00upperLinfinity}, \eqref{E:gjkupperLinfinity}, and 
the preliminary estimate $\| g^{0j} \|_{L^{\infty}} \leq C e^{-(1 + q)\Omega} \supgnorm{N}$}: The $L^{\infty}$ estimates \eqref{E:g00upperLinfinity}, \eqref{E:gjkupperLinfinity}, as  well as the estimate $\| g^{0j} \|_{L^{\infty}} \leq C e^{-(1 + q)\Omega}\supgnorm{N},$ which we will need shortly, follow from the definition \eqref{E:totalsupgnorm} of $\supgnorm{N},$ 
Lemma \ref{L:ginverseformluas}, and Lemma \ref{L:ginverseestimates}. \\
 
\noindent \emph{Proof of \eqref{E:partialg00upperHNminusone}, \eqref{E:partialgjkupperHNminusone}, \eqref{E:partialgjkupperLinfinity}, \eqref{E:g0jupperHN}, and \eqref{E:g0jupperCb1}}:
The proofs of \eqref{E:partialg00upperHNminusone}, \eqref{E:partialgjkupperHNminusone}, and \eqref{E:g0jupperHN}
begin with the fact that $\partial_{\vec{\alpha}} g^{\mu \nu}$ is a linear combination of terms of the form
 
 \begin{align} \label{E:gmunuupperdifferentiated}
 		g^{\lambda_1 \mu} g^{\lambda_2 \kappa_1} \cdots g^{\lambda_n \kappa_{n-1}} g^{\nu \kappa_n} 
 			(\partial_{\vec{\alpha}_1} g_{\lambda_1 \kappa_1}) \cdots (\partial_{\vec{\alpha}_n} g_{\lambda_n \kappa_n}),
 \end{align}
	where $\vec{\alpha}_1 + \cdots \vec{\alpha}_n = \vec{\alpha},$ where each $|\vec{\alpha}_i| > 0.$
We remark that \eqref{E:gmunuupperdifferentiated} can be shown inductively, using the identity
$\partial_{\lambda} g^{\mu \nu} = - g^{\alpha \mu} g^{\beta \nu} \partial_{\lambda} g_{\alpha \beta}.$

 To prove \eqref{E:partialg00upperHNminusone}, we must estimate \eqref{E:gmunuupperdifferentiated} in $L^2$ for $1 \leq 
 |\vec{\alpha}| \leq N,$ with $\mu, \nu$ in \eqref{E:gmunuupperdifferentiated} equal to $0,0.$ To this end, 	
 we first bound the terms $g^{\lambda_1 0} g^{\lambda_2 \kappa_1} \cdots g^{\lambda_n \kappa_{n-1}} g^{0 \kappa_n}$ in 
 $L^{\infty},$ and then estimate the remaining product $(\partial_{\vec{\alpha}_1} g_{\lambda_1 \kappa_1}) \cdots 
 (\partial_{\vec{\alpha}_n} g_{\lambda_n \kappa_n})$ in $L^2$ using Proposition \ref{P:derivativesofF1FkL2}. The $L^{\infty}$ 
 terms are bounded using \eqref{E:g00upperLinfinity}, \eqref{E:gjkupperLinfinity}, and 	
 the estimate $\| g^{0j} \|_{L^{\infty}} \leq C e^{-(1 + q)\Omega}\supgnorm{N}$ shown above. 
 The $L^2$ terms are controlled by $\supgnorm{N}.$ The 
 only difficulty is keeping track of the powers of $e^{\Omega},$ which Ringstr\"{o}m accomplishes inductively
 through a counting argument that is analogous to the Counting Principle stated above;
 the details can be found in the proof of Lemma 9 of \cite{hR2008}. The proofs of \eqref{E:partialgjkupperHNminusone} and 
 \eqref{E:g0jupperHN} are similar. Inequalities \eqref{E:partialgjkupperLinfinity} and \eqref{E:g0jupperCb1}
 now follow from \eqref{E:partialgjkupperHNminusone}, \eqref{E:g0jupperHN}, and Sobolev embedding. 
\\

\noindent \emph{Proof of \eqref{E:g00upperplusoneHN} - \eqref{E:g00upperplusoneLinfinity}}:
The estimate \eqref{E:g00upperplusoneHN} follows from the identity $g^{00} + 1 = \frac{1}{g_{00}}[(g_{00} + 1) - g^{0a}g_{0a}],$
Corollary \ref{C:DifferentiatedSobolevComposition} with $v = g_{00}$ and $F(g_{00}) = \frac{1}{g_{00}}$ in the corollary,  Proposition \ref{P:F1FkLinfinityHN}, the definition \eqref{E:totalsupgnorm} of $\supgnorm{N},$ \eqref{E:g0jupperHN}, and \eqref{E:g0jupperCb1}. Inequality \eqref{E:g00upperplusoneLinfinity} then follows from \eqref{E:g00upperplusoneHN} and Sobolev embedding. 
\\

\noindent \emph{Proof of \eqref{E:partialtgjkminusomegagjklowerHN} - \eqref{E:partialtgjklowerC1}}:
The estimate \eqref{E:partialtgjkminusomegagjklowerHN} follows directly from the definition \eqref{E:mathfrakSMsuph**}
of $\suphstarstarnorm{N}$ and the observation that $\partial_t h_{jk} = e^{-2 \Omega}(\partial_t g_{jk} - 2 \omega g_{jk}).$
Inequality \eqref{E:partialtgjkminusomegagjklowerLinfinity} then follows from \eqref{E:partialtgjkminusomegagjklowerHN} and Sobolev embedding. Inequality \eqref{E:partialtgjklowerC1} now follows from \eqref{E:gjklowerLinfinity},
\eqref{E:partialtgjkminusomegagjklowerLinfinity}, and Sobolev embedding.
\\

\noindent \emph{Proof of \eqref{E:gjaupperpartialtgaklowerminus2omegadeltakjHN} - \eqref{E:gjaupperpartialtgaklowerminus2omegadeltakjLinfinity}}:
Recall that $\delta_k^j = g^{0j} g_{0k} + g^{aj}g_{ak}.$ Using this fact,
the estimate \eqref{E:gjaupperpartialtgaklowerminus2omegadeltakjHN} follows from Proposition \ref{P:F1FkLinfinityHN}, the definition \eqref{E:totalsupgnorm} of $\supgnorm{N},$ and \eqref{E:gjkupperLinfinity} - \eqref{E:partialtgjkminusomegagjklowerLinfinity}. Inequality \eqref{E:gjaupperpartialtgaklowerminus2omegadeltakjLinfinity} then follows from \eqref{E:gjaupperpartialtgaklowerminus2omegadeltakjHN} and Sobolev embedding. \\

\noindent \emph{Proof of \eqref{E:partialtgjkupperplusomegagjkHN} - \eqref{E:partialtgjkupperLinfinity}}:
To prove \eqref{E:partialtgjkupperplusomegagjkHN}, first note the identity $\partial_t g^{jk} = - g^{\alpha j} g^{\beta k} \partial_t g_{\alpha \beta},$ which was mentioned below the statement of \eqref{E:gmunuupperdifferentiated}. We then
use Proposition \ref{P:F1FkLinfinityHN}, the definition \eqref{E:totalsupgnorm} of
$\supgnorm{N},$ Sobolev embedding, \eqref{E:partialg00upperHNminusone}, \eqref{E:g00upperLinfinity},
\eqref{E:gjkupperLinfinity}, \eqref{E:partialgjkupperHNminusone}, \eqref{E:g0jupperHN}, and \eqref{E:gjaupperpartialtgaklowerminus2omegadeltakjHN} to conclude that

\begin{align} \label{E:partialtgjkplus2omegagjkproofinequality}
	\| \partial_t g^{jk} + 2 \omega g^{jk} \|_{H^N} 
		& \leq \|  g^{bk} (g^{aj} \partial_t g_{ab} - 2 \omega \delta_{b}^j)\|_{H^N} 
			+ \| g^{0j} g^{bk} \partial_t g_{0b} \|_{H^N}
			+ \| g^{aj} g^{0k} \partial_t g_{0a} \|_{H^N}
			+ \| g^{0j}g^{0k} \partial_t g_{00} \|_{H^N} \\
		& \leq C e^{-(2 + q) \Omega} \supgnorm{N}, \notag
\end{align}
which gives \eqref{E:partialtgjkupperplusomegagjkHN}. Inequality \eqref{E:partialtgjkupperplusomegagjkLinfinity} now follows from \eqref{E:partialtgjkupperplusomegagjkHN} and Sobolev embedding. Inequality \eqref{E:partialtgjkupperLinfinity} 
then follows from  \eqref{E:gjkupperLinfinity} and \eqref{E:partialtgjkupperplusomegagjkLinfinity}. The proofs of \eqref{E:partialtg00upperLinfinity} and \eqref{E:partialtg0jupperLinfinity} are similar, and we omit the details. 
\\

\noindent \emph{Proof of \eqref{E:gabupperGammaajblowerHN}}:
To prove \eqref{E:gabupperGammaajblowerHN}, we first use Proposition \ref{P:F1FkLinfinityHN} to conclude that

\begin{align} \label{E:gabGammaajbfirstestimate}
	\| g^{ab}\Gamma_{a j b} \|_{H^N} \leq C \big\lbrace \| g^{ab} \|_{L^{\infty}} \| \Gamma_{ajb} \|_{H^N} 
		+ \| \bard g^{ab} \|_{H^{N-1}} \| \Gamma_{ajb} \|_{L^{\infty}} \big\rbrace.
\end{align} 

Recalling that $\Gamma_{ajb} = \frac{1}{2} \big( \partial_a g_{bj} + \partial_b g_{aj} - \partial_j g_{ab} \big)$ and
that $g_{jk} = e^{2 \Omega} h_{jk},$ and using \eqref{E:gjkupperLinfinity}, \eqref{E:partialgjkupperHNminusone}, the definitions \eqref{E:mathfrakSMsuph**} and \eqref{E:totalsupnorm} of $\suphstarstarnorm{N}$ and
$\suptotalnorm{N},$ and Sobolev embedding, it follows that the right-hand side of \eqref{E:gabGammaajbfirstestimate} is bounded from above by $C e^{(1-q) \Omega} \suphstarstarnorm{N}.$ This proves \eqref{E:gabupperGammaajblowerHN}. The proof of
\eqref{E:gabupperGammaajblowerHNminusone} follows similarly. 
\\

\noindent \emph{Proof of \eqref{E:partialtPhiLinfinity} - \eqref{E:1overpartialtPhiLinfinity}}:
Inequality \eqref{E:partialtPhiLinfinity} follows immediately from the definition \eqref{E:Fluidsupnorm}
of $\supfluidnorm{N}$ and Sobolev embedding. Assuming that $\supfluidnorm{N}$ is sufficiently small, inequality \eqref{E:1overpartialtPhiLinfinity} follows trivially from
\eqref{E:partialtPhiLinfinity}. This concludes the proof of Proposition \ref{P:BoostrapConsequences}.

\end{proof}

\subsection{Estimates of the error terms, \texorpdfstring{$m^{\mu \nu},$ and $\partial_{\mu} \Phi,$ $(\mu,\nu=0,1,2,3)$}{}} \label{SS:Nonlinearities}

In this section, we state and prove the second proposition that will be used to deduce the energy inequalities of Section
\ref{SS:IntegralInequalities}.

\begin{proposition} \label{P:Nonlinearities}
	Let $N \geq 3$ be an integer, and assume that the bootstrap assumptions \eqref{E:metricBAeta} - \eqref{E:g0jBALinfinity} hold 
	on the spacetime slab $[0,T) \times \mathbb{T}^3$ for some constant $c_1 \geq 1$ and for $\eta = \eta_{min}.$
	Then there exists a constant $\epsilon'' > 0$ and a constant 
	$C > 0,$ where $C$ depends on $N, c_1,$ and $\eta_{min},$ such that if and $\suptotalnorm{N}(t) < \epsilon''$ on $[0,T),$
	then the following estimates also hold on $[0,T)$ for the 
	quantities $\triangle_{A,\mu \nu}, \triangle_{C,00},$ and $\triangle_{C,0j},$ defined in \eqref{E:triangleA00def} - 
	\eqref{E:triangleAjkdef} and \eqref{E:triangleC00def} - \eqref{E:triangleC0jdef}:
	
	\begin{subequations}
	\begin{align}
		\| \triangle_{A,00} \|_{H^N} & \leq C e^{-2q \Omega} \supgnorm{N}^2, \label{E:triangleA00HN} \\
		\| \triangle_{A,0j} \|_{H^N} & \leq C e^{(1-2q) \Omega} \supgnorm{N}^2, \label{E:triangleA0jHN} \\
		\| \triangle_{A,jk} \|_{H^N} & \leq C e^{(2-2q) \Omega} \supgnorm{N}^2, \label{E:triangleAjkHN} \\
		\| \triangle_{C,00} \|_{H^N} & \leq C e^{-2q \Omega} \supgnorm{N}^2, \label{E:triangleC00HN} \\
		\| \triangle_{C,0j} \|_{H^N} & \leq C e^{(1-2q) \Omega} \supgnorm{N}^2. \label{E:triangleC0jHN}
	\end{align}
	\end{subequations}
	
	Additionally, for the quantities $\triangle_{\mu \nu}$ defined in \eqref{E:triangle00} - \eqref{E:trianglejk}, we have the
	following estimates on $[0,T):$
	
	\begin{subequations}
	\begin{align}
		\| \triangle_{00} \|_{H^N} & \leq C e^{-2q \Omega} \suptotalnorm{N}, \label{E:triangle00HN} \\
		\| \triangle_{0j} \|_{H^N} & \leq C  e^{(1 - 2q)\Omega} \suptotalnorm{N},  \label{E:triangle0jHN} \\
		\| \triangle_{jk} |_{H^N} & \leq C e^{-2q \Omega} \suptotalnorm{N}. \label{E:trianglejkHN}
	\end{align}
	\end{subequations}
	
	For the commutator terms from Corollary \ref{C:metricfirstdiferentialenergyinequality}, 
	we have the following estimates on $[0,T):$
	
	\begin{subequations}	
	\begin{align}
		\| [\hat{\Square}_g, \partial_{\vec{\alpha}} ] (g_{00} + 1) \|_{L^2} & \leq C e^{-2q \Omega} \suptotalnorm{N},  
			&& (|\vec{\alpha}| \leq N), \label{E:g00commutatorL2} \\
		\| [\hat{\Square}_g, \partial_{\vec{\alpha}} ] g_{0j} \|_{L^2} & \leq 
			 C e^{(1-2q) \Omega} \suptotalnorm{N}, && (|\vec{\alpha}| \leq N), \label{E:g0jcommutatorL2} \\
		\| [\hat{\Square}_g, \partial_{\vec{\alpha}} ] h_{jk} \|_{L^2} & \leq  
			 C e^{-2q \Omega} \suptotalnorm{N}, && (|\vec{\alpha}| \leq N).  \label{E:hjkcommutatorL2} 
	\end{align}
	\end{subequations}
	
	For the terms from Corollary \ref{C:metricfirstdiferentialenergyinequality},
	where $\triangle_{\mathcal{E};(\gamma, \delta)}[v,\partial v]$ is defined in \eqref{E:trianglemathscrEdef},
	we have the following estimates on $[0,T):$
	
	\begin{subequations}
	\begin{align}
		e^{2q \Omega} \| \triangle_{\mathcal{E};(\gamma_{00}, \delta_{00})}[\partial_{\vec{\alpha}} (g_{00} + 1)
			,\partial (\partial_{\vec{\alpha}} g_{00})] \|_{L^1}
			& \leq C e^{-q \Omega} \gzerozeronorm{N} \suptotalnorm{N}, && (|\vec{\alpha}| \leq N),   
			\label{E:triangleEgamma00delta00L1} \\
		e^{2(q-1) \Omega} \| \triangle_{\mathcal{E};(\gamma_{0*}, \delta_{0*})}[\partial_{\vec{\alpha}} g_{0j},
			\partial (\partial_{\vec{\alpha}} g_{0j})] \|_{L^1}
			& \leq C e^{-q \Omega} \gzerostarnorm{N} \suptotalnorm{N}, && (|\vec{\alpha}| \leq N),
			 \label{E:triangleEgamma0jdelta0*L1} \\
		e^{2q \Omega} \| \triangle_{\mathcal{E};(0,0)}[0,\partial (\partial_{\vec{\alpha}} h_{jk})] \|_{L^1}
			& \leq C e^{-q \Omega} \hstarstarnorm{N} \suptotalnorm{N}, && (|\vec{\alpha}| \leq N).
			\label{E:triangleEgamma**delta**L1}
	\end{align}
	\end{subequations}

	For the error terms corresponding to the fully raised Christoffel symbols
	of Lemma \ref{L:raisedChristoffeldecomposition} we have the following estimates on $[0,T):$
	
	\begin{subequations}
	\begin{align}
		\| \triangle_{(\Gamma)}^{000} \|_{H^N} & \leq C e^{-q \Omega} \supgnorm{N}, \label{E:triangleGamma000HN} \\
		\| \triangle_{(\Gamma)}^{j00} \|_{H^N} & \leq C e^{-(1 + q)\Omega} \supgnorm{N}, \label{E:triangleGammaj00HN} \\
		\| \triangle_{(\Gamma)}^{0j0} \|_{H^N} & \leq C e^{-(1 + q)\Omega} \supgnorm{N}, \\
		\| \triangle_{(\Gamma)}^{0jk} \|_{H^N} & \leq C e^{-(2 + q) \Omega} \supgnorm{N}, \\
		\| \triangle_{(\Gamma)}^{j0k}  \|_{H^N} & \leq C e^{-(2 + q) \Omega} \supgnorm{N}, \\ 
		\| \triangle_{(\Gamma)}^{ijk}  \|_{H^N} & \leq C e^{-(3 + q) \Omega} \supgnorm{N}. \label{E:triangleGammaijkHN}
 \end{align}
 \end{subequations}
	
	For the reciprocal acoustical metric error terms $\triangle_{(m)},$ $\triangle_{(m)}^{jk},$
	and $\triangle_{\partial \Phi}$ defined in \eqref{E:trianglemdef}, \eqref{E:trianglemjkdef},
	\eqref{E:triangledef} respectively, we have the following estimates on $[0,T):$
	
	\begin{subequations}
	\begin{align}
		\| \triangle_{(m)} \|_{H^N} & \leq C e^{-q \Omega} \suptotalnorm{N}, \label{E:trianglemHN} \\
		\| \triangle_{(m)} \|_{L^{\infty}} & \leq C e^{-q \Omega} \suptotalnorm{N},  \label{E:trianglemLinfinity} \\
			\| \triangle_{(m)}^{jk} \|_{H^N} & \leq C e^{-(2 + q) \Omega} \suptotalnorm{N},
			\label{E:trianglemjkHN} \\
		\| \triangle_{(m)}^{jk} \|_{L^{\infty}} & \leq C e^{-(2 + q) \Omega} \suptotalnorm{N},  
			\label{E:trianglemjkLinfinity} \\
		e^{\decayparameter \Omega} \| \triangle_{\partial \Phi} \|_{H^N} & \leq C e^{-q \Omega} \suptotalnorm{N}. \label{E:triangleHN}
	\end{align}
	\end{subequations}

	For the reciprocal acoustical metric components $m^{jk}$ and $m^{0j}$ defined in \eqref{E:mjkdef} and \eqref{E:m0jdef},
	we have the following estimates on $[0,T):$
	
	\begin{subequations}
	\begin{align}
		\| m^{0j} \|_{H^N} & \leq C e^{- \Omega} \suptotalnorm{N}, \label{E:m0jHN} \\
		\| m^{0j} \|_{C_b^1} & \leq C e^{- \Omega} \suptotalnorm{N},  \label{E:m0jC1} \\
		\Big\| m^{jk} - \frac{1}{2s+1} g^{jk} \Big\|_{L^{\infty}} & \leq C e^{-(2 + q) \Omega} \suptotalnorm{N}, 
		\label{E:mjkuppercomparedtogjk} 
		\\
		\| m^{jk} \|_{L^{\infty}} &	\leq C e^{-2 \Omega}, \label{E:mjkLinfinity} \\
		C^{-1} \delta^{ab}X_{a} X_{b}
			& \leq e^{2 \Omega} m^{ab}X_{a}X_{b} 
			\leq C \delta^{ab}X_{a}X_{b}, && \forall (X_1,X_2,X_3) \in \mathbb{R}^3, \label{E:mjkpositivedefinite} \\
		\| \bard m^{jk} \|_{H^{N-1}} & \leq C e^{-2 \Omega} \suptotalnorm{N}, \label{E:partialmjkHNminusone} \\
		\| \bard m^{jk} \|_{L^{\infty}} & \leq C e^{- 2 \Omega} \suptotalnorm{N}.  \label{E:partialmjkHNLinfinity} 
		\end{align}
	\end{subequations}

	For the commutator terms from Corollary \ref{C:fluidenergytimederivative}, we have the following estimates
	on $[0,T):$
	
	\begin{align} \label{E:SquaremcommutatorL2}
		\| [\hat{\Square}_m, \partial_{\vec{\alpha}} ] \Phi \|_{L^2} & \leq C e^{-(1 + \decayparameter)\Omega} \suptotalnorm{N}^2,
			&& (|\vec{\alpha}| \leq N).
	\end{align}

	For the Square of the enthalpy per particle $\sigma \eqdef - g^{\alpha \beta}(\partial_{\alpha} \Phi)(\partial_{\beta} 
	\Phi),$ we have the following estimates on $[0,T):$
	
	\begin{subequations}
	\begin{align}
		\sigma & > 0, \label{E:sigmaispositive} \\
		\| \sigma \|_{L^{\infty}} & \leq C e^{-2\decayparameter \Omega}, \label{E:sigmasLinfinity} \\
		\Big \| \frac{1}{\sigma} \Big \|_{L^{\infty}} & \leq C e^{2\decayparameter \Omega}. \label{E:1oversigmasLinfinity} 
	\end{align}
	\end{subequations}
	
	For $\bard \Phi$ and the ratio $z_j = \frac{e^{-\Omega} \partial_j \Phi}{\partial_t \Phi},$ $(j=1,2,3),$ 
	defined in \eqref{E:zjdef}, we have the following estimates on $[0,T):$
	
	\begin{subequations}
	\begin{align}
		\| \partial_{\vec{\alpha}} \Phi \|_{L^{\infty}} & \leq C \supfluidnorm{N}, && (1 \leq |\vec{\alpha}| \leq N - 2), 
			\label{E:partialvecalphaPhiLinfinity} \\
		\| \partial_{\vec{\alpha}} z_j \|_{L^{\infty}} & \leq C e^{(\decayparameter - 1) \Omega} \supfluidnorm{N}, 
			&& (|\vec{\alpha}| \leq N - 3), \label{E:partialvecalphazjLinfinity} \\
		\| z_j \|_{H^N} & \leq C \supfluidnorm{N}. &&  \label{E:zjHN} 
	\end{align}
	\end{subequations}

	For the time derivatives of the fluid-related quantities, we have the following estimates on $[0,T):$
	
	\begin{subequations}
	\begin{align}
		\| \partial_t^2 \Phi \|_{L^{\infty}} & \leq C e^{-\decayparameter \Omega}, \label{E:partialtSquaredPhiLinfinity} \\
		\| \partial_t z_j \|_{L^{\infty}} & \leq  C e^{(\decayparameter - 1)\Omega} \supfluidnorm{N},  
			\label{E:partialtzjLinfinity} \\
		\| \partial_t \triangle_{(m)} \|_{L^{\infty}} & \leq C e^{-q \Omega} \suptotalnorm{N} ,
			\label{E:partialttrianglemLinfinity} \\
		\| \partial_t \triangle_{(m)}^{jk} \|_{L^{\infty}} & \leq C e^{-(2+q) \Omega} \suptotalnorm{N},  
			\label{E:partialttrianglemjkLinfinity} \\
	 	\| \partial_t m^{0j} \|_{L^{\infty}} & \leq C e^{-(1+q) \Omega} \suptotalnorm{N}, \label{E:partialtm0jLinfinity} \\
	 	\| \partial_t m^{jk} + 2 \omega m^{jk} \|_{L^{\infty}} & \leq C e^{-(2+q) \Omega} \suptotalnorm{N}
			\label{E:partialtmjkplusomegamjkLinfinity}, \\
		\| \partial_t m^{jk} \|_{L^{\infty}} & \leq C e^{-2 \Omega} \suptotalnorm{N}.
			\label{E:partialtmjkLinfinity}
	\end{align}
	\end{subequations}
	
	In the above estimates, the norms $\gzerozeronorm{N}, \gzerostarnorm{N}, \hstarstarnorm{N},
	\supgzerozeronorm{N}, \supgzerostarnorm{N}, \suphstarstarnorm{N}, 
	\supgnorm{N}, \supfluidnorm{N},$ and $\suptotalnorm{N}$ are defined in Definition \ref{D:Norms}.
\end{proposition}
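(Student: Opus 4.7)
The proof will be a long but organized collection of Sobolev-Moser type estimates, built systematically from the explicit formulas for each error term given in Proposition \ref{P:Decomposition} and Lemmas \ref{L:Amunudecomposition}, \ref{L:AmunuplusImunudecomposition}, \ref{L:raisedChristoffeldecomposition}. The plan is to first exhaustively use the metric estimates of Proposition \ref{P:BoostrapConsequences} together with the product-type Sobolev-Moser estimates in the Appendix (Propositions \ref{P:derivativesofF1FkL2}, \ref{P:F1FkLinfinityHN}, and Corollary \ref{C:DifferentiatedSobolevComposition}) to handle each quantity in sequence, following the order in the statement. The Counting Principle can be used as a bookkeeping device to predict the exponential weights: for each factor in a product, count upstairs and downstairs spatial indices, and count an extra factor of $e^{-q \Omega}$ for each factor that appears under one of the norms in Definition \ref{D:Norms}. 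Most estimates will then amount to applying $\|FG\|_{H^N} \lesssim \|F\|_{L^\infty}\|G\|_{H^N} + \|F\|_{H^N}\|G\|_{L^\infty}$ repeatedly.

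I would begin with \eqref{E:triangleA00HN}-\eqref{E:triangleC0jHN} and \eqref{E:triangleGamma000HN}-\eqref{E:triangleGammaijkHN}: each term in \eqref{E:triangleA00def}-\eqref{E:triangleAjkdef}, \eqref{E:triangleC00def}-\eqref{E:triangleC0jdef}, \eqref{E:triangleGamma000}-\eqref{E:triangleGammaijk} is a product of inverse metric components with first derivatives of metric components, often rewritten (using the identities indicated in the underbraces) to exploit the small combinations $g^{00}+1$, $e^{2\Omega}g^{ab}\partial_t h_{ab}-2\omega g^{0b}g_{0b}$, etc.; each such product receives exactly two decay-producing factors, yielding the claimed $e^{-2q \Omega}$ weight (up to the obvious $e^{\Omega(d-u)}$). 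Next, I would handle the $\triangle_{\mu\nu}$ terms \eqref{E:triangle00HN}-\eqref{E:trianglejkHN} by reducing to the previous bounds and controlling differences such as $\sigma^{s+1}-\widetilde\sigma^{s+1}$ via Corollary \ref{C:DifferentiatedSobolevComposition}, using that $\sigma$ and $\widetilde\sigma$ are both comparable to $e^{-2\decayparameter\Omega}$ on the support. The commutator estimates \eqref{E:g00commutatorL2}-\eqref{E:hjkcommutatorL2} follow by expanding $[g^{\alpha\beta}\partial_\alpha\partial_\beta,\partial_{\vec\alpha}]$ and again applying \eqref{E:partialg00upperHNminusone}-\eqref{E:partialtgjkupperLinfinity}; the $\triangle_{\mathcal E}$ estimates \eqref{E:triangleEgamma00delta00L1}-\eqref{E:triangleEgamma**delta**L1} are pure first-order products, so they follow from the $L^\infty$ bounds on $\partial_t g^{\mu\nu}$, $\partial_a g^{\mu\nu}$, and $\omega - H$.

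The reciprocal acoustical metric block \eqref{E:trianglemHN}-\eqref{E:partialmjkHNLinfinity} requires one small preliminary step: show from \eqref{E:trianglemHN}-\eqref{E:trianglemLinfinity} that $(2s+1)+\triangle_{(m)}$ is bounded below by a positive constant once $\suptotalnorm{N}$ is small, so that one may invoke composition estimates on the reciprocal to bound $m^{\mu\nu}$ and their derivatives; the positive definiteness \eqref{E:mjkpositivedefinite} then follows from \eqref{E:gjkuppercomparetostandard} and \eqref{E:mjkuppercomparedtogjk}. For \eqref{E:triangleHN} one uses the algebraic identity $3/[(2s+1)+\triangle_{(m)}]-\decayparameter = -\decayparameter\triangle_{(m)}/[(2s+1)+\triangle_{(m)}]$ to extract the needed smallness in the $(\decayparameter\omega - \ldots)\partial_t\Phi$ piece; the remaining pieces of $\triangle_{\partial\Phi}$ are just products. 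The estimates \eqref{E:sigmaispositive}-\eqref{E:1oversigmasLinfinity} for $\sigma$ come from comparing $\sigma$ to $\widetilde\sigma = \bar\Psi^2 e^{-2\decayparameter\Omega}$ via \eqref{E:partialtPhiLinfinity}, Lemma \ref{L:ginverseestimates}, and smallness of $\suptotalnorm N$.

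The genuinely delicate step—and the main obstacle—is the $L^\infty$ bound \eqref{E:partialvecalphazjLinfinity} for $z_j$, because the naive route of using energy estimates plus Sobolev embedding on $\bard\Phi$ yields only $\|z_j\|_{L^\infty}\lesssim\epsilon$, which is useless (it does not decay and cannot close the bootstrap). Following the discussion in Section \ref{SS:Commentsonanalysis}, my strategy is: (i) use Sobolev embedding plus \eqref{E:Fluidnorm} to get $\|\partial_t\partial_j\Phi\|_{L^\infty}\le C\epsilon\, e^{-\decayparameter\Omega}$ for $N\ge 3$; (ii) integrate in time, using Lemma \ref{L:backgroundaoftestimate} to evaluate $\int_0^t e^{-\decayparameter\Omega(\tau)}\,d\tau\le C$, to conclude $\|\partial_j\Phi\|_{L^\infty}\le C\epsilon$; (iii) combine with the $L^\infty$ lower bound \eqref{E:1overpartialtPhiLinfinity} on $1/\partial_t\Phi$ to deduce
\[
\|z_j\|_{L^\infty} \;=\; \Big\|\tfrac{e^{-\Omega}\partial_j\Phi}{\partial_t\Phi}\Big\|_{L^\infty} \;\le\; C\epsilon\, e^{(\decayparameter-1)\Omega},
\]
which decays precisely because $\decayparameter<1$, i.e.\ $\speed^2<1/3$. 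Differentiated versions (for $|\vec\alpha|\le N-3$) are handled the same way after commuting $\partial_{\vec\alpha}$ through and using Sobolev embedding on the lower-order terms. Once \eqref{E:partialvecalphazjLinfinity} is in hand, \eqref{E:zjHN} follows directly from \eqref{E:1overpartialtPhiLinfinity} and product estimates, and the time-derivative bounds \eqref{E:partialtSquaredPhiLinfinity}-\eqref{E:partialtmjkLinfinity} follow by solving the modified fluid equation \eqref{E:finalfluidequation} for $\partial_t^2\Phi$ and then differentiating the algebraic definitions of $\triangle_{(m)}$, $\triangle_{(m)}^{jk}$, $m^{0j}$, $m^{jk}$ in $t$ using the spatial estimates already established.
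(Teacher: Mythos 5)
Your proposal is correct and follows essentially the same route as the paper: Sobolev--Moser product and composition estimates applied term by term to the explicit decompositions, comparison of $\sigma$ with $\widetilde{\sigma}$, inversion of $(2s+1)+\triangle_{(m)}$ via a Taylor/composition estimate, and---crucially---the time-integration argument for $\|\partial_j\Phi\|_{L^{\infty}}$ combined with the bound on $1/\partial_t\Phi$ to obtain the decaying $L^{\infty}$ estimate for $z_j$, which is exactly where the restriction $\decayparameter<1$ enters in the paper's proof as well. The only difference is ordering: the paper establishes the $z_j$ and $\sigma$ estimates first since the $\triangle_{(m)}$ and $\triangle_{\partial\Phi}$ bounds depend on them, but you clearly recognize this dependence, so there is no gap.
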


The proof of Proposition \ref{P:Nonlinearities} is located in Section \ref{SSS:ProofofPropositionNonlinearities}.
Before we prove the proposition, we first prove some Lemmas that will be useful for estimating
quantities involving $\partial \Phi.$ We also prove the estimates \eqref{E:sigmaispositive} - \eqref{E:1oversigmasLinfinity},
since we need them for the proofs of some of the Lemmas. See Remark \ref{R:ProofsRemark} for some conventions that we use throughout our proofs.

\subsection{Some preliminary lemmas for estimating \texorpdfstring{$\partial_{\mu} \Phi,$ $(\mu=0,1,2,3),$
and proofs of \eqref{E:sigmaispositive} - \eqref{E:1oversigmasLinfinity}}{}}

\begin{lemma} \label{L:sigmawidetildesigma}
	Under the assumptions of Proposition \ref{P:Nonlinearities} and Remark \ref{R:ProofsRemark}, there exists a constant $C>0$
	such that
	\begin{align} \label{E:sigmawidetildesigma}
		\| \sigma - \widetilde{\sigma} \|_{H^N} & \leq C e^{-2\decayparameter \Omega}\suptotalnorm{N}.
	\end{align}
\end{lemma}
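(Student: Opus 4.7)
The plan is to algebraically decompose $\sigma - \widetilde\sigma$ into a sum of four products, each involving a factor which measures deviation from the background, and then apply the standard Sobolev product inequality (Proposition \ref{P:F1FkLinfinityHN}) together with the pointwise and $H^N$ bounds already established in Proposition \ref{P:BoostrapConsequences}. Since $\widetilde\sigma = (\partial_t \widetilde\Phi)^2$, write
\begin{align*}
	\sigma - \widetilde\sigma
	&= -(g^{00}+1)(\partial_t \Phi)^2
	+ \big[(\partial_t \Phi)^2 - (\partial_t \widetilde\Phi)^2\big]
	- 2 g^{0a}(\partial_t \Phi)(\partial_a \Phi)
	- g^{ab}(\partial_a \Phi)(\partial_b \Phi).
\end{align*}
The goal is then to show that each of the four terms on the right-hand side has $H^N$ norm bounded by $C e^{-2\decayparameter \Omega}\suptotalnorm{N}$.

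For the first term, combine the estimate $\|g^{00}+1\|_{H^N} \leq C e^{-q\Omega}\supgnorm{N}$ from \eqref{E:g00upperplusoneHN} with the fact that $(\partial_t \Phi)^2 = e^{-2\decayparameter \Omega}(e^{\decayparameter \Omega}\partial_t \Phi)^2$ together with the definition \eqref{E:Fluidnorm}, which yields $\|(\partial_t \Phi)^2\|_{H^N} \leq C e^{-2\decayparameter \Omega}$ and $\|(\partial_t \Phi)^2\|_{L^\infty} \leq C e^{-2\decayparameter \Omega}$. For the second term, factor the difference of squares as $(\partial_t \Phi - \partial_t \widetilde\Phi)(\partial_t \Phi + \partial_t \widetilde\Phi)$, noting that $\partial_t \Phi - \partial_t \widetilde\Phi = e^{-\decayparameter \Omega}(e^{\decayparameter \Omega}\partial_t\Phi - \bar\Psi)$, so its $H^N$ norm is bounded by $e^{-\decayparameter \Omega}\supfluidnorm{N}$, while $\|\partial_t \Phi + \partial_t \widetilde\Phi\|_{L^\infty} \leq C e^{-\decayparameter \Omega}$ via \eqref{E:partialtPhiLinfinity}.

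For the third term, use $\|g^{0a}\|_{H^N} \leq C e^{-(1+q)\Omega}\supgnorm{N}$ from \eqref{E:g0jupperHN}, the $L^\infty$ bound on $\partial_t \Phi$, and the bound $\|\bard \Phi\|_{H^N} \leq C e^{(1-\decayparameter)\Omega}\supfluidnorm{N}$ coming from \eqref{E:Fluidnorm}. The product inequality yields a bound of size $C e^{-(q+2\decayparameter)\Omega}\suptotalnorm{N}^2$, which is absorbed since $q>0$ and $\suptotalnorm{N}$ is small. For the fourth term, use $\|g^{ab}\|_{L^\infty} \leq C e^{-2\Omega}$ from \eqref{E:gjkupperLinfinity} together with $\|\bard \Phi\|_{H^N} \leq C e^{(1-\decayparameter)\Omega}\supfluidnorm{N}$ and the Sobolev embedding bound $\|\bard \Phi\|_{L^\infty} \leq C\supfluidnorm{N}$ (which follows from the representation $\partial_j \Phi = e^{\Omega}(\partial_t \Phi) z_j$ combined with \eqref{E:partialvecalphazjLinfinity}); the resulting bound is $C e^{-(1+\decayparameter)\Omega}\suptotalnorm{N}^2$, which is $\leq C e^{-2\decayparameter \Omega}\suptotalnorm{N}$ since $\decayparameter < 1$.

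Summing the four contributions yields the claimed estimate. The only place where one must be careful is ensuring that all exponential weights are indeed $\leq e^{-2\decayparameter\Omega}$, which relies on $0 < \decayparameter < 1$ and $q > 0$, both of which follow from the parameter range $0 < \speed < \sqrt{1/3}$ and the definition \eqref{E:qdef}. No genuine analytic obstacle arises; the computation is essentially a bookkeeping exercise in the Counting Principle extended to the fluid variables via the $z_j$ framework.
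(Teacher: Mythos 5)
Your proof is correct and follows essentially the same route as the paper: the identical algebraic decomposition of $\sigma - \widetilde{\sigma}$ into the four products (the paper merely writes the difference of squares in factored form), followed by Proposition \ref{P:F1FkLinfinityHN} and the bounds of Proposition \ref{P:BoostrapConsequences}. The only cosmetic difference is that for the last term the weaker Sobolev-embedding bound $\| \bard \Phi \|_{L^{\infty}} \leq C e^{(1-\decayparameter)\Omega} \supfluidnorm{N}$ already suffices (since $e^{-2\Omega} e^{2(1-\decayparameter)\Omega} = e^{-2\decayparameter \Omega}$), so you need not invoke the sharper $z_j$-based estimate \eqref{E:partialvecalphazjLinfinity}, which in the paper is only established later in the proof of Proposition \ref{P:Nonlinearities}.
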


\begin{proof}
Recall that $\sigma = - g^{\alpha \beta}(\partial_{\alpha} \Phi)(\partial_{\beta} \Phi)$ and 
$\widetilde{\sigma} = - \widetilde{g}^{\alpha \beta}(\partial_{\alpha} \widetilde{\Phi})(\partial_{\beta} \widetilde{\Phi})
= (\bar{\Psi} e^{-\decayparameter \Omega})^2.$ We therefore can decompose $\sigma - \widetilde{\sigma}$ as follows:

\begin{align}
	\sigma - \widetilde{\sigma} = (\partial_t \Phi - \bar{\Psi} e^{-\decayparameter \Omega})(\partial_t \Phi + \bar{\Psi} e^{-\decayparameter \Omega})
	- (g^{00} + 1) (\partial_t \Phi)^2 - 2g^{0a} (\partial_t \Phi)(\partial_a \Phi) - g^{ab}(\partial_a \Phi)(\partial_b \Phi).
\end{align}
Inequality \eqref{E:sigmawidetildesigma} now follows from Proposition \ref{P:F1FkLinfinityHN}, 
the definition \eqref{E:totalsupnorm} of $\suptotalnorm{N},$ Sobolev embedding, \eqref{E:g00upperplusoneHN}, 
\eqref{E:gjkupperLinfinity}, \eqref{E:partialgjkupperHNminusone}, \eqref{E:g0jupperHN}, and \eqref{E:partialtPhiLinfinity}.

\end{proof}

\begin{lemma} \label{L:sigmas}
Under the assumptions of Proposition \ref{P:Nonlinearities} and Remark \ref{R:ProofsRemark}, there exists a constant $C>0$
such that
	
	\begin{subequations}
	\begin{align}
		\sigma & > 0, \label{E:sigmaispositiveagain} \\
		\| \sigma \|_{L^{\infty}} & \leq C e^{-2\decayparameter \Omega}, \\
		\Big \| \frac{1}{\sigma} \Big \|_{L^{\infty}} & \leq C e^{2 \decayparameter \Omega},  \label{E:1oversigmasLinfinityagain} \\
		\| \bard(\sigma^s) \|_{H^{N-1}} & \leq C e^{-2\decayparameter s \Omega} \suptotalnorm{N}.  \label{E:partialsigmasHNminusone}
	\end{align}
	\end{subequations}
	
\end{lemma}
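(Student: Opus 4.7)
The plan is to extract everything from the identity $\widetilde{\sigma} = \bar{\Psi}^2 e^{-2\decayparameter \Omega}$ together with the comparison estimate $\|\sigma - \widetilde{\sigma}\|_{H^N} \leq C e^{-2\decayparameter \Omega}\suptotalnorm{N}$ already furnished by Lemma~\ref{L:sigmawidetildesigma}. First I would apply Sobolev embedding $H^N \hookrightarrow L^\infty$ (valid since $N \geq 3$) to obtain
\[
\|e^{2\decayparameter \Omega}\sigma - \bar{\Psi}^2\|_{L^\infty} \leq C \suptotalnorm{N}.
\]
For $\suptotalnorm{N}$ sufficiently small (which we are free to assume per Remark~\ref{R:ProofsRemark}), this forces the pointwise sandwich $\tfrac{1}{2}\bar{\Psi}^2 \leq e^{2\decayparameter \Omega}\sigma \leq 2\bar{\Psi}^2$, which immediately yields positivity \eqref{E:sigmaispositiveagain}, the upper bound \eqref{E:sigmasLinfinityagain}, and the reciprocal bound \eqref{E:1oversigmasLinfinityagain}.

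For the remaining estimate \eqref{E:partialsigmasHNminusone}, the natural approach is to rescale. I would introduce $\tau \eqdef e^{2\decayparameter \Omega}\sigma$, so that $\sigma^s = e^{-2s \decayparameter \Omega} \tau^s$ and consequently
\[
\bard(\sigma^s) = s\, e^{-2s \decayparameter \Omega}\, \tau^{s-1}\, \bard \tau.
\]
From the preceding step $\tau$ is uniformly bounded above and below away from zero by constants depending only on $\bar{\Psi}$, and from $\bard \tau = e^{2\decayparameter\Omega}\bard\sigma$ combined with Lemma~\ref{L:sigmawidetildesigma} we have both $\|\bard\tau\|_{H^{N-1}} \leq C\suptotalnorm{N}$ and (via Sobolev embedding) $\|\bard\tau\|_{L^\infty} \leq C\suptotalnorm{N}$. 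The task is then reduced to bounding $\|\tau^{s-1}\bard\tau\|_{H^{N-1}}$.

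I would then invoke the Sobolev-Moser product estimate from Proposition~\ref{P:F1FkLinfinityHN} to split
\[
\|\tau^{s-1}\bard\tau\|_{H^{N-1}} \leq C\bigl(\|\tau^{s-1}\|_{L^\infty}\|\bard\tau\|_{H^{N-1}} + \|\tau^{s-1}\|_{H^{N-1}}\|\bard\tau\|_{L^\infty}\bigr).
\]
The first $L^\infty$ factor follows directly from the pointwise bounds on $\tau.$ To control $\|\tau^{s-1}\|_{H^{N-1}}$ I would apply the composition estimate from Corollary~\ref{C:DifferentiatedSobolevComposition} to $F(u) = u^{s-1}$, which is smooth on any compact subinterval of $(0,\infty)$ and in particular on the range of $\tau$; this yields $\|\tau^{s-1}\|_{H^{N-1}} \leq C(1 + \|\tau - \bar{\Psi}^2\|_{H^{N-1}}) \leq C$. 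Combining these estimates produces \eqref{E:partialsigmasHNminusone}.

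The only subtlety is the composition $\tau \mapsto \tau^{s-1}$ when $s - 1$ is not an integer, but since the values of $\tau$ lie in a compact subinterval of $(0,\infty)$ (away from the singular point $u = 0$), the standard Sobolev composition lemma applies verbatim; the positivity established in the first paragraph is therefore the essential ingredient that unlocks the whole argument.
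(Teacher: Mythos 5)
Your proposal is correct and follows essentially the same route as the paper: the first three estimates are obtained exactly as in the paper from Lemma \ref{L:sigmawidetildesigma}, the identity $\widetilde{\sigma} = \bar{\Psi}^2 e^{-2\decayparameter\Omega},$ and Sobolev embedding, and the fourth rests on the same composition estimate (Corollary \ref{C:DifferentiatedSobolevComposition}). The only difference is cosmetic: you rescale to $\tau = e^{2\decayparameter\Omega}\sigma$ before applying the chain-rule and product estimates, whereas the paper applies Corollary \ref{C:DifferentiatedSobolevComposition} directly to $F(\sigma)=\sigma^s$ and absorbs the exponential weights afterward via the bound $\|\sigma^{s-l}\|_{L^{\infty}}\|\sigma\|_{L^{\infty}}^{l-1}\leq Ce^{-2\decayparameter(s-1)\Omega}.$
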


\begin{proof}
	Using the fact that $\widetilde{\sigma} = (\bar{\Psi} e^{-\decayparameter \Omega})^2,$ 
	\eqref{E:sigmawidetildesigma}, and Sobolev embedding, it follows that if $\suptotalnorm{N}$ is sufficiently small, then
	
	\begin{align}
		\sigma & > 0,  \\
		\| \sigma \|_{L^{\infty}} & \leq C e^{-2\decayparameter \Omega}, \\
		\Big \| \frac{1}{\sigma} \Big \|_{L^{\infty}} & \leq C e^{2 \decayparameter \Omega},
	\end{align}
	which implies \eqref{E:sigmaispositiveagain} - \eqref{E:1oversigmasLinfinityagain}
	(and which are the same as \eqref{E:sigmaispositive} - \eqref{E:1oversigmasLinfinity}).
	
	To prove \eqref{E:partialsigmasHNminusone},
	we first note that because $\widetilde{\sigma}$ is a function of $t$ alone, it follows from 
	\eqref{E:sigmawidetildesigma} that
	
	\begin{align} \label{E:partialsigmaHNminus1}
		\| \bard \sigma \|_{H^{N-1}} = \| \bard (\sigma - \widetilde{\sigma}) \|_{H^{N-1}}
			\leq \| \sigma - \widetilde{\sigma} \|_{H^{N}} \leq C e^{-2\decayparameter \Omega}\suptotalnorm{N}.
	\end{align}
	By Corollary \ref{C:DifferentiatedSobolevComposition}, with $v= \sigma$ and
	$F(\sigma) = \sigma^s$ in the corollary, we have that
	
	\begin{align} \label{E:partialsigmasPropositioninequality}
		\| \bard(\sigma^s) \|_{H^{N-1}} & 
			\leq C \| \bard \sigma \|_{H^{N-1}}
	 		\sum_{l=1}^{N} \| \sigma^{s-l} \|_{L^{\infty}} \| \sigma \|_{L^{\infty}}^{l - 1}.
	\end{align}
	Inequality \eqref{E:partialsigmasHNminusone} now follows from \eqref{E:sigmasLinfinity}, \eqref{E:1oversigmasLinfinity},
	\eqref{E:partialsigmaHNminus1}, and \eqref{E:partialsigmasPropositioninequality}.

\end{proof}

\begin{corollary}	
Under the assumptions of Proposition \ref{P:Nonlinearities} and Remark \ref{R:ProofsRemark}, there exists a constant $C>0$
such that
	
	\begin{align} \label{E:sigmawidetildesigmaHN}
		\| \sigma^s - \widetilde{\sigma}^s \|_{H^N} & \leq C e^{-2\decayparameter s\Omega}\suptotalnorm{N}.
	\end{align}

\end{corollary}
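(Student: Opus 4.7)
The key observation is that $\widetilde{\sigma} = \bar{\Psi}^2 e^{-2\decayparameter \Omega}$ depends on $t$ alone, so $\widetilde{\sigma}^s$ is spatially constant and $\bard(\widetilde{\sigma}^s) \equiv 0$. This lets me split
\[
	\| \sigma^s - \widetilde{\sigma}^s \|_{H^N}^2 = \| \sigma^s - \widetilde{\sigma}^s \|_{L^2}^2 + \sum_{1 \leq |\vec{\alpha}| \leq N} \| \partial_{\vec{\alpha}} \sigma^s \|_{L^2}^2 \leq \| \sigma^s - \widetilde{\sigma}^s \|_{L^2}^2 + C \| \bard \sigma^s \|_{H^{N-1}}^2,
\]
so the higher-order contribution is already controlled: by \eqref{E:partialsigmasHNminusone}, $\| \bard \sigma^s \|_{H^{N-1}} \leq C e^{-2\decayparameter s \Omega} \suptotalnorm{N}$, which matches the target bound. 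It therefore remains only to control the $L^2$ norm of the difference itself.

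For the $L^2$ piece I plan to use the pointwise mean value bound
\[
	| \sigma^s - \widetilde{\sigma}^s | \leq s \max(\sigma,\widetilde{\sigma})^{s-1} |\sigma - \widetilde{\sigma}|,
\]
which is valid because $s > 1$ in our parameter regime (recall $s = (1-\speed^2)/(2\speed^2)$ with $0 < \speed^2 < 1/3$), and because both $\sigma > 0$ by \eqref{E:sigmaispositive} and $\widetilde{\sigma} > 0$, so the real exponent $s-1 > 0$ acts on strictly positive arguments. The $L^\infty$ bound \eqref{E:sigmasLinfinity} (together with the explicit formula for $\widetilde{\sigma}$) then gives $\| \max(\sigma,\widetilde{\sigma})^{s-1} \|_{L^\infty} \leq C e^{-2\decayparameter(s-1)\Omega}$. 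Combining this with \eqref{E:sigmawidetildesigma} (applied at the $L^2$ level via $\| \sigma - \widetilde\sigma \|_{L^2} \leq \| \sigma - \widetilde\sigma \|_{H^N} \leq C e^{-2\decayparameter \Omega} \suptotalnorm{N}$) yields
\[
	\| \sigma^s - \widetilde{\sigma}^s \|_{L^2} \leq C e^{-2\decayparameter(s-1)\Omega} \cdot e^{-2\decayparameter \Omega} \suptotalnorm{N} = C e^{-2\decayparameter s \Omega} \suptotalnorm{N}.
\]
Adding this to the previously controlled higher-order contribution completes the argument.

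There is essentially no main obstacle here: the proof is a clean reduction using the fact that $\widetilde{\sigma}^s$ has no spatial derivatives, followed by the previously established decay rates. The only point requiring mild care is confirming positivity of the linear interpolant (or, equivalently, of both $\sigma$ and $\widetilde{\sigma}$) so that the non-integer power $(\cdot)^{s-1}$ is unambiguously defined; this is immediate from \eqref{E:sigmaispositive} and the explicit positive formula $\widetilde{\sigma} = \bar{\Psi}^2 e^{-2\decayparameter \Omega}$, and exploits our restriction $0 < \speed < \sqrt{1/3}$ (equivalently $s > 1$), which keeps $s - 1 > 0$ and hence avoids any singularity coming from small values of $\sigma$ as $t \to \infty$.
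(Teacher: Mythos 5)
Your proof is correct and follows essentially the same route as the paper: the paper also splits off the purely spatial derivatives (which reduce to $\|\bard(\sigma^s)\|_{H^{N-1}}$ since $\widetilde{\sigma}$ depends on $t$ alone, handled by \eqref{E:partialsigmasHNminusone}) and controls the $L^2$ piece via the mean value theorem, writing $\sigma^s - \widetilde{\sigma}^s = s(\sigma^*)^{s-1}(\sigma - \widetilde{\sigma})$ with $\sigma^*$ between $\sigma$ and $\widetilde{\sigma}$ and bounding $(\sigma^*)^{s-1} \leq C e^{-2\decayparameter(s-1)\Omega}$, exactly as in your $\max(\sigma,\widetilde{\sigma})^{s-1}$ estimate. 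The two arguments are the same up to notation.
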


\begin{proof}
	By Taylor's theorem, we have that
	$\sigma^s - \widetilde{\sigma}^s = s (\sigma^*)^{s-1}(\sigma - \widetilde{\sigma})$ for some
	$\sigma^*$ lying between $\widetilde{\sigma}$ and $\sigma.$ By Lemma \ref{L:sigmas}, it follows that
	$(\sigma^*)^{s-1} \leq C e^{-2\decayparameter (s-1) \Omega}.$ By Lemma \ref{L:sigmawidetildesigma}, it therefore
	follows that
	
	\begin{align} \label{E:sigmaswidetildesigmasL2}
		\| \sigma^s - \widetilde{\sigma}^s  \|_{L^2} \leq C e^{-2\decayparameter (s-1) \Omega} 
			\| \sigma - \widetilde{\sigma} \|_{L^2} 
		\leq C e^{-2\decayparameter s \Omega} \suptotalnorm{N}. 
	\end{align}
	
	To bound the spatial derivatives of $\sigma^s - \widetilde{\sigma}^s$ in $L^2,$ we note that
	$\widetilde{\sigma}$ is a function of $t$ alone. Therefore, by \eqref{E:partialsigmasHNminusone},
	it follows that
	\begin{align} \label{E:partialsigmaswidetildesigmasHNminusone}
		\| \bard(\sigma^s - \widetilde{\sigma}^s) \|_{H^{N-1}} = \| \bard(\sigma^s) \|_{H^{N-1}} 
		\leq C e^{-2\decayparameter s \Omega} \suptotalnorm{N}.
	\end{align}
	Combining \eqref{E:sigmaswidetildesigmasL2} and \eqref{E:partialsigmaswidetildesigmasHNminusone}, we deduce
	 \eqref{E:sigmawidetildesigmaHN}.
	
\end{proof}

\begin{corollary} \label{C:fofPhiminusfofwidetildePhiHN}
Let $f(\partial \Phi)$ be defined as in \eqref{E:ffirstdef}:

\begin{align}
	f(\partial \Phi) \eqdef 2 \sigma^s (\partial_t \Phi)^2 - \frac{s}{s+1} \sigma^{s+1}.
\end{align}
Then under the assumptions of Proposition \ref{P:Nonlinearities} and Remark \ref{R:ProofsRemark}, there exists a constant $C>0$
such that

\begin{align}  \label{E:fofPhiminusfofwidetildePhiHN}
	\| f(\partial \Phi) - f(\partial \widetilde{\Phi}) \|_{H^N} & \leq 
		C e^{-2\decayparameter(s + 1) \Omega}\suptotalnorm{N}.
\end{align}

\end{corollary}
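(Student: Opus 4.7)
The plan is to decompose $f(\partial \Phi) - f(\partial \widetilde{\Phi})$ into pieces each of which matches the target decay rate $e^{-2\decayparameter(s+1)\Omega}$, and then apply the already-established estimates. The algebraic starting point exploits the identity $(\partial_t \widetilde{\Phi})^2 = \widetilde{\sigma}$, which implies
\begin{align*}
f(\partial \widetilde{\Phi}) = 2\widetilde{\sigma}^{s+1} - \frac{s}{s+1}\widetilde{\sigma}^{s+1} = \frac{s+2}{s+1}\widetilde{\sigma}^{s+1}.
\end{align*}
With this in hand, I would write
\begin{align*}
f(\partial \Phi) - f(\partial \widetilde{\Phi})
  = 2\sigma^s\big[(\partial_t \Phi)^2 - (\partial_t \widetilde{\Phi})^2\big]
    + 2(\sigma^s - \widetilde{\sigma}^s)(\partial_t \widetilde{\Phi})^2
    - \frac{s}{s+1}\big(\sigma^{s+1} - \widetilde{\sigma}^{s+1}\big),
\end{align*}
and estimate the three summands separately in $H^N$.

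For the first summand, I would factor $(\partial_t \Phi)^2 - (\partial_t \widetilde{\Phi})^2 = (\partial_t \Phi - \partial_t \widetilde{\Phi})(\partial_t \Phi + \partial_t \widetilde{\Phi})$. From the definition \eqref{E:Fluidnorm} of $\fluidnorm{N}$, $\| \partial_t \Phi - \partial_t \widetilde{\Phi}\|_{H^N} \leq e^{-\decayparameter \Omega}\supfluidnorm{N}$, and from \eqref{E:partialtPhiLinfinity} together with $\partial_t \widetilde{\Phi} = \bar{\Psi}e^{-\decayparameter \Omega}$ one has $\|\partial_t \Phi + \partial_t \widetilde{\Phi}\|_{L^\infty} \leq Ce^{-\decayparameter \Omega}$. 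By Proposition \ref{P:F1FkLinfinityHN} (together with Sobolev embedding to control the other factor in $L^\infty$), this yields $\|(\partial_t \Phi)^2 - (\partial_t \widetilde{\Phi})^2\|_{H^N} \leq Ce^{-2\decayparameter \Omega}\suptotalnorm{N}$. I then multiply by $\sigma^s$, using $\|\sigma^s\|_{L^\infty} \leq Ce^{-2\decayparameter s\Omega}$ from Lemma \ref{L:sigmas} and $\|\bard \sigma^s\|_{H^{N-1}} \leq Ce^{-2\decayparameter s\Omega}\suptotalnorm{N}$ from \eqref{E:partialsigmasHNminusone}, and apply Proposition \ref{P:F1FkLinfinityHN} once more to obtain the bound $Ce^{-2\decayparameter(s+1)\Omega}\suptotalnorm{N}$.

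For the second summand I would appeal directly to \eqref{E:sigmawidetildesigmaHN}, which gives $\|\sigma^s - \widetilde{\sigma}^s\|_{H^N} \leq Ce^{-2\decayparameter s\Omega}\suptotalnorm{N}$, and multiply by the scalar-in-$x$ quantity $(\partial_t \widetilde{\Phi})^2 = \bar{\Psi}^2 e^{-2\decayparameter \Omega}$. For the third summand I would first establish the analogue of \eqref{E:sigmawidetildesigmaHN} with exponent $s+1$ in place of $s$, by repeating verbatim the argument: write $\sigma^{s+1} - \widetilde{\sigma}^{s+1} = (s+1)(\sigma^*)^s(\sigma - \widetilde{\sigma})$ via Taylor's theorem to control the $L^2$ part using Lemma \ref{L:sigmawidetildesigma} and the $L^\infty$ bound $\|(\sigma^*)^s\|_{L^\infty} \leq Ce^{-2\decayparameter s\Omega}$, and control $\bard(\sigma^{s+1})$ in $H^{N-1}$ by the same chain-rule calculation used in \eqref{E:partialsigmasPropositioninequality} (replacing $s$ by $s+1$). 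All three pieces then decay at the required rate $e^{-2\decayparameter(s+1)\Omega}$.

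There is no serious obstacle here; the argument is a routine combination of the already-proved Sobolev-Moser product and composition estimates with the sharp decay rates recorded in Lemma \ref{L:sigmas} and Lemma \ref{L:sigmawidetildesigma}. The only point requiring slight care is bookkeeping the $e^{-\decayparameter \Omega}$ factors so that exactly $2(s+1)$ of them appear in the final estimate, which is why the factorization $(\partial_t \Phi - \partial_t \widetilde{\Phi})(\partial_t \Phi + \partial_t \widetilde{\Phi})$ (rather than estimating $(\partial_t \Phi)^2$ directly) is essential for the first summand.
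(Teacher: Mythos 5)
Your proposal is correct and follows essentially the same route as the paper: the paper's proof uses exactly your three-term decomposition (the terms $\|(\partial_t \Phi - \bar{\Psi}e^{-\decayparameter\Omega})(\partial_t \Phi + \bar{\Psi}e^{-\decayparameter\Omega})\sigma^s\|_{H^N}$, $(\bar{\Psi}e^{-\decayparameter\Omega})^2\|\sigma^s - \widetilde{\sigma}^s\|_{H^N}$, and $\|\sigma^{s+1} - \widetilde{\sigma}^{s+1}\|_{H^N}$) together with the same product estimate, \eqref{E:partialtPhiLinfinity}, \eqref{E:sigmasLinfinity}, \eqref{E:partialsigmasHNminusone}, and \eqref{E:sigmawidetildesigmaHN}. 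Your remark that the $s+1$ analogue of \eqref{E:sigmawidetildesigmaHN} must be supplied by repeating the Taylor/chain-rule argument is a fair observation; the paper invokes that bound without comment.
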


\begin{proof}

Using Proposition \ref{P:F1FkLinfinityHN}, the definition \eqref{E:totalsupnorm} of $\suptotalnorm{N},$ Sobolev embedding,
\eqref{E:partialtPhiLinfinity}, \eqref{E:sigmasLinfinity}, \eqref{E:partialsigmasHNminusone}, and \eqref{E:sigmawidetildesigmaHN}, we estimate

\begin{align}
	\| f(\partial \Phi) - f(\partial \widetilde{\Phi}) \|_{H^N} 
	& \leq C \Big\lbrace
		\| (\partial_t \Phi - \bar{\Psi} e^{-\decayparameter \Omega})(\partial_t \Phi + \bar{\Psi} 
		e^{-\decayparameter\Omega})\sigma^s \|_{H^N} \\
	& \hspace{1in} + (\bar{\Psi} e^{-\decayparameter \Omega})^2 \| \sigma^{s} - \widetilde{\sigma}^s \|_{H^N}
			+ \| \sigma^{s+1} - \widetilde{\sigma}^{s+1} \|_{H^N} \Big\rbrace \notag \\
	& \leq C e^{-2\decayparameter(s + 1) \Omega} \suptotalnorm{N}. \notag
\end{align}
This demonstrates \eqref{E:fofPhiminusfofwidetildePhiHN}.

\end{proof}

\subsection{Proof of Proposition \ref{P:Nonlinearities}} \label{SSS:ProofofPropositionNonlinearities}

\begin{proof} We now provide a proof of Proposition \ref{P:Nonlinearities}. We prove statements in logical order, rather than in the order in which they are listed in the conclusions of the proposition. See Remark \ref{R:ProofsRemark} for some conventions that we use throughout the proof. \ \\

\noindent \emph{Proofs of \eqref{E:partialvecalphaPhiLinfinity} - \eqref{E:zjHN}}:
	By Sobolev embedding and the definition \eqref{E:Fluidnorm} of $\supfluidnorm{|\vec{\alpha}| + 2}$, the following 
	inequality holds for $1 \leq |\vec{\alpha}| \leq N - 2:$
	\begin{align} \label{E:partialtpartialvecalphaPhiLinfinity}
		\|\partial_t \partial_{\vec{\alpha}} \Phi \|_{L^{\infty}} & \leq C e^{-\decayparameter \Omega} 
		\supfluidnorm{|\vec{\alpha}| + 2}.
	\end{align}
	Integrating $\partial_t \partial_{\vec{\alpha}} \Phi$ in time, using \eqref{E:partialtpartialvecalphaPhiLinfinity}, and using 
	that $\supfluidnorm{|\vec{\alpha}| + 2}(t)$ is increasing, we conclude that
	\begin{align} \label{E:partialvecalphaPhiLinfinitysmall}
		\| \partial_{\vec{\alpha}} \Phi (t) \|_{L^{\infty}} \leq C \Big( \supfluidnorm{|\vec{\alpha}| + 2}(0)
			+ \int_{0}^{t} e^{-\decayparameter \Omega(\tau)} \supfluidnorm{|\vec{\alpha}| + 2}(\tau) \, d \tau \Big) & 
			\leq C \supfluidnorm{|\vec{\alpha}| + 2}(t),  
	\end{align}
	which implies inequality \eqref{E:partialvecalphaPhiLinfinity}.

To prove \eqref{E:partialvecalphazjLinfinity} and \eqref{E:zjHN}, we use the definition \eqref{E:Fluidsupnorm} of $\supfluidnorm{N},$
Corollary \ref{C:DifferentiatedSobolevComposition}, with $v = \partial_t \Phi$ and
$F(\partial_t \Phi) \eqdef \frac{1}{\partial_t \Phi}$ in the corollary, together with
\eqref{E:partialtPhiLinfinity} and \eqref{E:1overpartialtPhiLinfinity} to conclude that

\begin{align} \label{E:partial1overpartialtPhiHNminusone}
	\Big \| \bard \Big( \frac{1}{\partial_t \Phi} \Big) \Big\|_{H^{N-1}}
  	& \leq C e^{-\decayparameter \Omega}\supfluidnorm{N} \sum_{l=1}^N e^{\decayparameter(l+1)\Omega} 
    e^{-\decayparameter(l - 1)}  = C e^{\decayparameter \Omega} \supfluidnorm{N}.
\end{align}
By \eqref{E:1overpartialtPhiLinfinity}, Sobolev embedding, and \eqref{E:partial1overpartialtPhiHNminusone}, it follows that
\begin{align} 
	\Big \| \partial_{\vec{\alpha}} \Big( \frac{1}{\partial_t \Phi} \Big) \Big\|_{L^{\infty}} 
	& \leq C e^{\decayparameter \Omega}, && (|\vec{\alpha}| \leq N-2). \label{E:partialvecalpha1overpartialtPhiLinfinity}
\end{align}

Now by the definition $z_j \eqdef \frac{e^{-\Omega} \partial_j \Phi}{\partial_t \Phi},$ the product rule, Proposition \ref{P:F1FkLinfinityHN}, \eqref{E:1overpartialtPhiLinfinity}, \eqref{E:partialvecalphaPhiLinfinity}, \eqref{E:partial1overpartialtPhiHNminusone}, and \eqref{E:partialvecalpha1overpartialtPhiLinfinity}, we have that
\begin{align}
	\| \partial_{\vec{\alpha}} z_j \|_{L^{\infty}} & \leq C e^{(\decayparameter -1) \Omega} 
		\supfluidnorm{N}, && (|\vec{\alpha}| \leq N - 3), \\
	\| z_j \|_{H^N} & \leq C e^{- \Omega} \bigg\lbrace \Big \| \frac{1}{\partial_t \Phi} \Big\|_{L^{\infty}} 
		\| \bard \Phi \|_{H^N} + \| \bard \Phi \|_{L^{\infty}} \Big \| \bard \Big( \frac{1}{\partial_t \Phi} \Big) 
		\Big\|_{H^{N-1}} \bigg\rbrace \leq C \supfluidnorm{N}, && (j=1,2,3),
\end{align}
which proves \eqref{E:partialvecalphazjLinfinity} and \eqref{E:zjHN}.
\\

\noindent \emph{Proofs of \eqref{E:triangleA00HN} - \eqref{E:triangleC0jHN}}: To prove \eqref{E:triangleA00HN}, we first recall equation \eqref{E:triangleA00def}:

\begin{align}
	\triangle_{A,00} & = (g^{00})^2 \Big\lbrace(\partial_t g_{00})^2 - (\Gamma_{000})^2 \Big\rbrace 
			+ g^{00}g^{0a} \Big\lbrace 2 (\partial_t g_{00})(\partial_t g_{0a} + \partial_a g_{00}) 
			- 4 \Gamma_{000} \Gamma_{00a} \Big\rbrace \label{E:triangleA00defagain} \\
		& \ \ \ + g^{00}g^{ab} \Big\lbrace (\partial_t g_{0a})(\partial_t g_{0b}) 
				+ (\partial_a g_{00}) (\partial_b g_{00}) 
				- 2 \Gamma_{00a} \Gamma_{00b} \Big\rbrace \notag \\
		& \ \ \ + g^{0a} g^{0b} \Big\lbrace 2(\partial_t g_{00})(\partial_a g_{0b}) + 2(\partial_t g_{0b})(\partial_a g_{00}) 
				- 2 \Gamma_{000} \Gamma_{a0b} - 2 \Gamma_{00b} \Gamma_{00a} \Big\rbrace \notag \\
		& \ \ \ + g^{ab} g^{0l} \Big\lbrace 2(\partial_t g_{0a})(\partial_l g_{0b}) + 2(\partial_b g_{00})(\partial_a g_{0l}) 
				- 4\Gamma_{00a} \Gamma_{l0b}  \Big\rbrace \notag \\
		& \ \ \ + g^{ab}g^{lm}(\partial_a g_{0l})(\partial_b g_{0m}) 
				+ \frac{1}{2} g^{lm}(\underbrace{g^{ab} \partial_t g_{al} - 2\omega \delta_l^b}_{
				e^{2 \Omega} g^{ab} \partial_t h_{al} - 2 \omega g^{0b}g_{0l}})(\partial_b g_{0m} + \partial_m g_{0b}) \notag \\
		& \ \ \ - \frac{1}{4} g^{ab} g^{lm}(\partial_a g_{0l} + \partial_l g_{0a})(\partial_b g_{0m} + \partial_m g_{0b}) 
			- \frac{1}{4}(\underbrace{g^{ab} \partial_t g_{al} - 2\omega \delta_l^b}_{
				e^{2 \Omega} g^{ab} \partial_t h_{al} - 2 \omega g^{0b}g_{0l}}) 
				(\underbrace{g^{lm} \partial_t g_{bm} - 2 \omega \delta_b^l}_{e^{2 \Omega} g^{lm} \partial_t h_{bm} 
				- 2 \omega g^{0l}g_{0b}}). \notag 
\end{align}

We now use Proposition \ref{P:F1FkLinfinityHN}, the definition \eqref{E:totalsupnorm} of $\suptotalnorm{N},$ Sobolev embedding, 
\eqref{E:g00upperplusoneHN}, \eqref{E:g00upperplusoneLinfinity}, \eqref{E:gjkupperLinfinity}, \eqref{E:partialgjkupperHNminusone}, 
\eqref{E:g0jupperHN}, \eqref{E:g0jupperCb1}, \eqref{E:partialtgjkminusomegagjklowerHN}, \eqref{E:partialtgjklowerC1}, \eqref{E:gjaupperpartialtgaklowerminus2omegadeltakjHN}, \eqref{E:gjaupperpartialtgaklowerminus2omegadeltakjLinfinity}, and the  relation $\Gamma_{\mu \alpha \nu} = \frac{1}{2}(\partial_{\mu} g_{\alpha \nu} + \partial_{\nu} g_{\alpha \mu} - \partial_{\alpha} g_{\mu \nu})$ to conclude that

\begin{align}
	\| \triangle_{A,00} \|_{H^N} & \leq C e^{-2q \Omega} \supgnorm{N}^2.
\end{align}
This proves \eqref{E:triangleA00HN}. The proofs of \eqref{E:triangleA0jHN} - \eqref{E:triangleC0jHN} are similar, and we omit the details. We also remark
that slight variations of these inequalities were proved in Lemma 12 of \cite{hR2008}. 
\\

\noindent \emph{Proofs of \eqref{E:triangle00HN} - \eqref{E:trianglejkHN}}:
To prove \eqref{E:triangle00HN}, we first use equation \eqref{E:triangle00}
and Proposition \ref{P:F1FkLinfinityHN} to arrive at the following estimate:

\begin{align}
	\| \triangle_{00} \|_{H^N} & \leq C\Big\lbrace \| \triangle_{A,00} \|_{H^N} + \| \triangle_{C,00} \|_{H^N}
		+ \| f(\partial \widetilde{\Phi}) - f(\partial \Phi) \|_{H^N}  
		+ |f(\partial \widetilde{\Phi})| \| g_{00} + 1 \|_{H^N} \\
	& \ \ \ + \|g_{00} + 1\|_{H^N} \| \sigma^{s+1} \|_{L^{\infty}}
		+ \| g_{00} + 1 \|_{L^{\infty}} \| \bard(\sigma^{s+1}) \|_{H^{N-1}} 
		+ |\omega - H| \| \partial_t g_{00}\|_{H^N} + |\omega + H||\omega - H| \| g_{00} + 1 \|_{H^N} \Big\rbrace. \notag
\end{align}

We now use \eqref{E:g00upperplusoneHN}, \eqref{E:g00upperplusoneLinfinity}, \eqref{E:triangleA00HN}, \eqref{E:triangleC00HN},  
\eqref{E:partialsigmasHNminusone}, \eqref{E:sigmasLinfinity}, \eqref{E:fofPhiminusfofwidetildePhiHN}, 
the definition \eqref{E:totalsupnorm} of $\suptotalnorm{N},$ the fact that
$f(\partial \widetilde{\Phi}) = \big(\frac{s+2}{s+1}\big)\bar{\Psi}^{2(s+1)} e^{-2(s+1)\decayparameter \Omega},$
and the trivial estimate $e^{-2(s+1)\decayparameter \Omega} \leq e^{-2q\Omega}$ to conclude that

\begin{align}
	\| \triangle_{00} \|_{H^N} < C e^{-2q \Omega}\suptotalnorm{N},
\end{align}
which proves \eqref{E:triangle00HN}.

Inequalities \eqref{E:triangle0jHN} and \eqref{E:trianglejkHN} can be proved using similar reasoning;
we omit the details.
\\

\noindent \emph{Proofs of \eqref{E:triangleEgamma00delta00L1} - \eqref{E:triangleEgamma**delta**L1}}:
To begin, we first recall equation \eqref{E:trianglemathscrEdef}:

\begin{align} \label{E:trianglemathscrEdefagain}
	\triangle_{\mathcal{E};(\gamma, \delta)}[v,\partial v] & = - \gamma H (\partial_a g^{ab}) v \partial_b v
		- 2 \gamma H (\partial_a g^{0a}) v \partial_t v - 2 \gamma H g^{0a}(\partial_a v)(\partial_t v) \\
	& \ \ \ - (\partial_a g^{0a})(\partial_t v)^2 - (\partial_a g^{ab})(\partial_b v)(\partial_t v)
		- \frac{1}{2}(\partial_t g^{00})(\partial_t v)^2 \notag \\
	& \ \ \ + \Big(\frac{1}{2} \partial_t g^{ab} + \omega g^{ab} \Big) (\partial_a v) (\partial_b v)
		+ (H - \omega) g^{ab} (\partial_a v) (\partial_b v) \notag \\
	& \ \ \ - \gamma H (\partial_t g^{00}) v \partial_t v - \gamma H (g^{00} + 1)(\partial_t v)^2. \notag
\end{align}

We now claim that the following inequality holds for any function $v$ for which the right-hand side is finite:
\begin{align} \label{E:trianglegammadeltaL1}
	\| \triangle_{\mathcal{E};(\gamma, \delta)}[v,\partial v] \|_{L^1} & \leq C \Big\lbrace e^{-q \Omega}  
		\| \partial_t v \|_{L^2}^2 + e^{-(2 + q) \Omega} \| \bard v 
		\|_{L^2}^2 + C_{(\gamma)} e^{-q \Omega} \| v \|_{L^2}^2 \Big\rbrace,
\end{align}
where $C_{(\gamma)}$ is defined in \eqref{E:mathcalEfirstlowerbound} To obtain \eqref{E:trianglegammadeltaL1}, we use the Cauchy-Schwarz inequality for integrals, \eqref{E:g00upperplusoneLinfinity}, \eqref{E:gjkupperLinfinity}, \eqref{E:partialgjkupperLinfinity}, \eqref{E:g0jupperCb1}, \eqref{E:partialtgjkupperplusomegagjkLinfinity}, and \eqref{E:partialtg00upperLinfinity}. Inequalities \eqref{E:triangleEgamma00delta00L1} - \eqref{E:triangleEgamma**delta**L1} now easily follow from definitions \eqref{E:mathfrakSMg00} - \eqref{E:totalsupnorm} and \eqref{E:trianglegammadeltaL1}.
\\

\noindent \emph{Proofs of \eqref{E:triangleGamma000HN} - \eqref{E:triangleGammaijkHN}}:
To estimate $\triangle_{(\Gamma)}^{000},$ we first recall equation \eqref{E:triangleGamma000}: 

\begin{align} 
	\triangle_{(\Gamma)}^{000} & = \frac{1}{2} (g^{00})^3 \partial_t g_{00}
		+ \frac{1}{2} g^{00} g^{0a} g^{0b} (\partial_t g_{ab} + 2 \partial_a g_{0b}) \label{E:triangleGamma000again} \\
	& \ \ \ + \frac{1}{2} (g^{00})^2 g^{0a} (2\partial_t g_{0a} + \partial_{a}g_{00})
		+ \frac{1}{2} g^{0a}g^{0b}g^{0l} \partial_a g_{bl}. \notag  
\end{align}
By Proposition \ref{P:F1FkLinfinityHN}, the definition \eqref{E:totalsupgnorm} of $\supgnorm{N},$ Sobolev embedding, 
 \eqref{E:g00upperplusoneHN}, \eqref{E:g0jupperHN}, \eqref{E:g0jupperCb1}, and \eqref{E:partialtgjklowerC1}, it follows that
 
\begin{align}
	\| \triangle_{(\Gamma)}^{000} \|_{H^N} \leq C e^{- q \Omega} \supgnorm{N},
\end{align}
which proves \eqref{E:triangleGamma000HN}. The estimates \eqref{E:triangleGammaj00HN} - \eqref{E:triangleGammaijkHN} can be proved similarly using Propositions \ref{P:BoostrapConsequences} and \ref{P:F1FkLinfinityHN}. 
\\

\noindent \emph{Proofs of \eqref{E:trianglemHN} - \eqref{E:trianglemjkLinfinity}}:
To prove \eqref{E:trianglemHN}, we first recall equation \eqref{E:trianglemdef}:

\begin{align}
	\triangle_{(m)} & = (1 + 2s)(g^{00} + 1)(g^{00} - 1)
		+ 2(1 + 2s) e^{\Omega} g^{00} g^{0a} z_a + e^{2 \Omega}(g^{00}g^{ab} + 2s g^{0a}g^{0b})z_a z_b. 
		\label{E:trianglemdefagain}
\end{align}
By Proposition \ref{P:F1FkLinfinityHN}, the definition \eqref{E:totalsupnorm} of $\suptotalnorm{N},$ Sobolev embedding, \eqref{E:g00upperplusoneHN}, \eqref{E:g00upperplusoneLinfinity},
\eqref{E:gjkupperLinfinity}, \eqref{E:partialgjkupperHNminusone}, \eqref{E:g0jupperHN}, \eqref{E:g0jupperCb1}, 
\eqref{E:partialvecalphazjLinfinity}, and \eqref{E:zjHN}, it follows that

\begin{align}
	\| \triangle_{(m)} \|_{H^N} & \leq C e^{-q \Omega} \suptotalnorm{N},
\end{align}
which proves \eqref{E:trianglemHN}. Inequality \eqref{E:trianglemLinfinity} now follows from \eqref{E:trianglemHN} and Sobolev embedding.

Inequality \eqref{E:trianglemjkHN} follows from equation \eqref{E:trianglemjkdef} using similar reasoning.
Inequality \eqref{E:trianglemjkLinfinity} then follows from \eqref{E:trianglemjkHN} and Sobolev embedding.
\\

\noindent \emph{Proof of \eqref{E:triangleHN}}: To prove \eqref{E:triangleHN}, we first 
multiply each side of equation \eqref{E:triangledef} by $e^{\decayparameter \Omega}$ to deduce that

\begin{multline}
	e^{\decayparameter \Omega}\triangle_{\partial \Phi} = \omega e^{\decayparameter \Omega}
		\big\lbrace 3F(\triangle_{(m)}) - \decayparameter \big\rbrace\partial_t \Phi 
	- e^{\decayparameter \Omega} F(\triangle_{(m)}) 
		\bigg\lbrace 3\omega (g^{00} + 1) \partial_t \Phi  + 6 \omega e^{\Omega} g^{0a}z_a \partial_t \Phi 
		+ (3 - 2s) \omega e^{2\Omega}g^{ab}z_a z_b \partial_t \Phi \label{E:trianglePhidefagain} \\
		+ 2s\Big(\triangle_{(\Gamma)}^{000} \partial_t \Phi
		+ [\triangle_{(\Gamma)}^{a00} + \triangle_{(\Gamma)}^{0a0} + \triangle_{(\Gamma)}^{00a}] \partial_a \Phi 
		+ e^{2 \Omega} [\triangle_{(\Gamma)}^{0ab} + \triangle_{(\Gamma)}^{a0b} + \triangle_{(\Gamma)}^{ab0}] 
			z_a z_b \partial_t \Phi 
		+ e^{2 \Omega} \triangle_{(\Gamma)}^{abc} z_a z_b \partial_c \Phi\Big) \bigg\rbrace,
\end{multline}
where

\begin{align}
	F(\triangle_{(m)}) & \eqdef \big[ (1 + 2s) + \triangle_{(m)} \big]^{-1}, \label{E:Fproofdef}
\end{align}
and $\triangle_{(m)}$ is defined in \eqref{E:trianglemdef}.

We begin by estimating $3F(\triangle_{(m)}) - \decayparameter$ where $\decayparameter = \frac{3}{1 + 2s} = 3F(0)$ (as in \eqref{E:wdef}). Using Corollary \ref{C:SobolevTaylor}, with $v = \triangle_{(m)}$ in the corollary,
and \eqref{E:trianglemHN}, it follows that

\begin{align} \label{E:FoftrianglemminusFof0HN}
	\Big\| 3F(\triangle_{(m)}) - 3F(0) \Big\|_{H^N} \leq C \| \triangle_{(m)} \|_{H^N} 
		\leq C e^{-q \Omega} \suptotalnorm{N}.
\end{align}
As a simple consequence of \eqref{E:FoftrianglemminusFof0HN}, we note that

\begin{align}
	\Big\| F(\triangle_{(m)}) - \overbrace{\frac{1}{1 + 2s}}^{F(0)}\Big\|_{L^{\infty}} & 
		\leq C e^{-q \Omega} \suptotalnorm{N}, \label{E:FLinfinity} \\
	\| \bard \big(F(\triangle_{(m)})\big) \|_{H^{N-1}} & \leq C e^{-q \Omega} \suptotalnorm{N}. \label{E:partialFHNminusone}
\end{align}

We now estimate the terms on the right-hand side of \eqref{E:trianglePhidefagain}. The bound \\ 
$e^{\decayparameter \Omega} \omega \Big\| \Big\lbrace 3F(\triangle_{(m)}) - 3F(0) \Big\rbrace \partial_t \Phi \Big \|_{H^N}  \leq C e^{-q \Omega} \suptotalnorm{N}$ follows from Proposition \ref{P:F1FkLinfinityHN}, the definition \eqref{E:totalsupnorm} of $\suptotalnorm{N},$ Sobolev embedding, \eqref{E:partialtPhiLinfinity}, and \eqref{E:FoftrianglemminusFof0HN}. 
 
The remaining term to be estimated is $e^{\decayparameter \Omega}\Big\| F(\triangle_{(m)}) \Big\lbrace G \Big\rbrace \Big\|_{H^N},$ where $G $ is an abbreviation for the terms in the large braces on the right-hand side of \eqref{E:trianglePhidefagain}. By Proposition \ref{P:F1FkLinfinityHN}, the definition \eqref{E:totalsupnorm} of $\suptotalnorm{N},$ Sobolev embedding, \eqref{E:g00upperplusoneHN}, \eqref{E:gjkupperLinfinity}, \eqref{E:partialgjkupperHNminusone}, \eqref{E:g0jupperHN},
\eqref{E:partialtPhiLinfinity}, \eqref{E:triangleGamma000HN} - \eqref{E:triangleGammaijkHN},
\eqref{E:partialvecalphaPhiLinfinity}, \eqref{E:partialvecalphazjLinfinity}, and \eqref{E:zjHN},
we have that $e^{\decayparameter \Omega} \| G \|_{H^N} \leq C e^{-q\Omega} \suptotalnorm{N}.$ We may therefore again apply Proposition \ref{P:F1FkLinfinityHN}, using \eqref{E:FLinfinity} and \eqref{E:partialFHNminusone} to deduce that $e^{\decayparameter \Omega}\Big\| F(\triangle_{(m)}) \Big\lbrace G \Big\rbrace \Big\|_{H^N} \leq C e^{-q\Omega} \suptotalnorm{N}.$ This concludes our proof of \eqref{E:triangleHN}.
\\

\noindent \emph{Proofs of \eqref{E:m0jHN} - \eqref{E:partialmjkHNLinfinity}}: To prove \eqref{E:mjkuppercomparedtogjk} - \eqref{E:partialmjkHNLinfinity}, we use \eqref{E:mjkdef} to decompose 

\begin{align} \label{E:mjkmultiplicativedecomposition}
 m^{jk} = F(\triangle_{(m)}) \ M^{jk}, 
\end{align} 
where $F(\triangle_{(m)})$ is defined in \eqref{E:Fproofdef}, and

\begin{align} \label{E:Mjkdecomposition}
	M^{jk} & \eqdef g^{jk} - \triangle_{(m)}^{jk}. 
\end{align}
By Proposition \ref{P:F1FkLinfinityHN}, \eqref{E:gjkupperLinfinity}, \eqref{E:partialgjkupperHNminusone},  
\eqref{E:trianglemHN}, and \eqref{E:trianglemLinfinity}, we have that

\begin{align}
	\| M^{jk} \|_{L^{\infty}} & \leq C e^{-2 \Omega}, \\
	\| \bard M^{jk} \|_{H^{N-1}} & \leq C e^{-2 \Omega} \suptotalnorm{N}. \label{E:partialbigmHNminusone}
\end{align}
It therefore follows from Proposition \ref{P:F1FkLinfinityHN}, \eqref{E:gjkupperLinfinity}, and \eqref{E:FLinfinity} - \eqref{E:partialbigmHNminusone} that
\begin{align}
	\Big\|m^{jk} - \frac{1}{2s + 1} g^{jk} \Big\|_{L^{\infty}} & \leq C e^{-(2 + q) \Omega} \suptotalnorm{N}, \\
	\| m^{jk} \|_{L^{\infty}} & \leq C e^{-2 \Omega}, \\
	\| \bard m^{jk} \|_{H^{N-1}} & \leq C e^{-2 \Omega} \suptotalnorm{N},
\end{align}
which proves \eqref{E:mjkuppercomparedtogjk}, \eqref{E:mjkLinfinity},
and \eqref{E:partialmjkHNminusone}. Inequality \eqref{E:partialmjkHNLinfinity} now follows from Sobolev embedding and \eqref{E:partialmjkHNminusone}. Inequality \eqref{E:mjkpositivedefinite} follows from \eqref{E:gjkuppercomparetostandard} and \eqref{E:mjkuppercomparedtogjk}. 

Inequalities \eqref{E:m0jHN} and \eqref{E:m0jC1} can be proved using the ideas similar to the ones we used to prove \eqref{E:mjkuppercomparedtogjk} - \eqref{E:partialmjkHNLinfinity}; we omit the details. 
\\

\noindent \emph{Proofs of \eqref{E:g00commutatorL2} - \eqref{E:hjkcommutatorL2}}:
To prove \eqref{E:g00commutatorL2} - \eqref{E:hjkcommutatorL2}, we first use equations \eqref{E:finalg00equation} - \eqref{E:finalhjkequation} to conclude that

\begin{subequations}
\begin{align} 
	\partial_t^2 g_{00} & = (g^{00})^{-1}\Big(-g^{ab} \partial_a \partial_b g_{00}  
		-2 g^{0a} \partial_a \partial_t g_{00} + 5 H \partial_t g_{00} + 6 H^2 (g_{00} + 1) + \triangle_{00} \Big),
		\label{E:partialtSquaredg00isolated} \\
	\partial_t^2 g_{0j} & = (g^{00})^{-1}\Big(-g^{ab} \partial_a \partial_b g_{0j}  
		-2 g^{0a} \partial_a \partial_t g_{0j} + 3 H \partial_t g_{0j} + 2 H^2 g_{0j} - 2Hg^{ab}\Gamma_{a j b} 
		+ \triangle_{0j} \Big), \label{E:partialtSquaredg0jisolated} \\
	\partial_t^2 h_{jk} & = (g^{00})^{-1}\Big(-g^{ab} \partial_a \partial_b h_{jk}  
		-2 g^{0a} \partial_a \partial_t h_{jk} + 3 H \partial_t h_{jk} + \triangle_{jk} \Big).
		\label{E:partialtSquaredhjkisolated}
\end{align}
\end{subequations}
We will now estimate $\| \partial_t^2 g_{00} \|_{H^{N-1}}, \| \partial_t^2 g_{0j} \|_{H^{N-1}},$ and 
$\| \partial_t^2 h_{jk} \|_{H^{N-1}},$ $(j,k=1,2,3).$ Using \eqref{E:partialtSquaredg00isolated}, Corollary \ref{C:SobolevTaylor}, with $v = g^{00} + 1$ and $F(v) = (v - 1)^{-1} = (g^{00})^{-1}$ in the corollary,
Proposition \ref{P:F1FkLinfinityHN}, the definition \eqref{E:totalsupnorm} of $\suptotalnorm{N},$ Sobolev embedding, 
\eqref{E:gjkupperLinfinity}, \eqref{E:partialgjkupperHNminusone}, \eqref{E:g0jupperHN}, 
and \eqref{E:triangle00HN}, it follows that

\begin{subequations}
\begin{align} \label{E:partialtSquaredg00HNminusone}
	\| \partial_t^2 g_{00} \|_{H^{N-1}} \leq C e^{- q \Omega} \suptotalnorm{N}.
\end{align}
Using similar reasoning, we also deduce that

\begin{align}
	\| \partial_t^2 g_{0j} \|_{H^{N-1}} & \leq C e^{(1 - q) \Omega} \suptotalnorm{N}, && (j=1,2,3), 
		\label{E:partialtSquaredg0jHNminusone} \\
	\| \partial_t^2 h_{jk} \|_{H^{N-1}} & \leq C e^{- q \Omega} \suptotalnorm{N}, && (j,k = 1,2,3). \label{E:partialtSquaredhjkHNminusone}
\end{align}
\end{subequations}
Note that because of inequalities \eqref{E:gabupperGammaajblowerHN} and \eqref{E:triangle0jHN},
the inequality \eqref{E:partialtSquaredg0jHNminusone} has an additional power of $e^{\Omega}$ on its
right-hand side compared to \eqref{E:partialtSquaredg00HNminusone} and \eqref{E:partialtSquaredhjkHNminusone}.

We now use Proposition \ref{P:SobolevMissingDerivativeProposition}, the definition \eqref{E:totalsupnorm} of $\suptotalnorm{N},$ Sobolev embedding, \eqref{E:partialg00upperHNminusone}, \eqref{E:partialgjkupperHNminusone},
\eqref{E:g0jupperHN}, \eqref{E:partialtSquaredg00HNminusone} to obtain the following inequalities, valid for
$|\vec{\alpha}| \leq N:$

\begin{align}
	\| [\hat{\Square}_g, \partial_{\vec{\alpha}}] (g_{00} + 1) \|_{L^2} & \leq 
		\| g^{00}\partial_{\vec{\alpha}} (\partial_t^2g_{00}) 
		- \partial_{\vec{\alpha}}(g^{00} \partial_t^2 g_{00}) \|_{L^2}
		+ \| g^{ab}\partial_{\vec{\alpha}} (\partial_{a} \partial_b g_{00}) 
		- \partial_{\vec{\alpha}}(g^{ab} \partial_{a} \partial_b g_{00}) \|_{L^2} \\
	& \ \ \ + 2\| g^{0a}\partial_{\vec{\alpha}} (\partial_t \partial_a g_{00})
		- \partial_{\vec{\alpha}}(g^{0a} \partial_t \partial_a g_{00}) \|_{L^2} \notag \\
	& \leq C \Big(\|\bard g^{00} \|_{H^{N-1}} \| \partial_t^2 g_{00} \|_{H^{N-1}}
		+ \|\bard g^{ab} \|_{H^{N-1}} \| \partial_a \partial_b g_{00} \|_{H^{N-1}}
		+ \| \bard g^{0a} \|_{H^{N-1}} \| \partial_a \partial_t g_{00} \|_{H^{N-1}} \Big) \notag \\
	& \leq C e^{-2q \Omega} \suptotalnorm{N}.  \notag
\end{align}
This completes the proof of \eqref{E:g00commutatorL2}. Inequalities \eqref{E:g0jcommutatorL2} and 
\eqref{E:hjkcommutatorL2} can be proved similarly; we omit the remaining details.
\\

\noindent \emph{Proof of \eqref{E:SquaremcommutatorL2}}:
To prove \eqref{E:SquaremcommutatorL2}, we use Proposition \ref{P:SobolevMissingDerivativeProposition}, 
the definition \eqref{E:totalsupnorm} of $\suptotalnorm{N},$ Sobolev embedding, \eqref{E:partialmjkHNminusone}, 
and \eqref{E:m0jHN} to obtain

\begin{align}
	\| [\hat{\Square}_m, \partial_{\vec{\alpha}} ] \Phi \|_{L^2} & \leq 
		\| m^{ab}\partial_{\vec{\alpha}} (\partial_{a} \partial_b \Phi) 
		- \partial_{\vec{\alpha}}(m^{ab} \partial_{a} \partial_b \Phi) \|_{L^2}
		+ 2\| m^{0a}\partial_{\vec{\alpha}} (\partial_t \partial_a \Phi)
		- \partial_{\vec{\alpha}}(m^{0a} \partial_t \partial_a \Phi) \|_{L^2} \\
	& \leq C \Big(\|\bard m^{ab} \|_{H^{N-1}} \| \partial_a \partial_b \Phi \|_{H^{N-1}}
		+ \| \bard m^{0a} \|_{H^{N-1}} \| \partial_a \partial_t \Phi \|_{H^{N-1}} \Big) \notag \\
	& \leq C e^{-(1 + \decayparameter)\Omega} \suptotalnorm{N}^2. \notag
\end{align}
This demonstrates \eqref{E:SquaremcommutatorL2}.
\\

\noindent \emph{Proof of \eqref{E:partialtSquaredPhiLinfinity}}:
To prove \eqref{E:partialtSquaredPhiLinfinity}, we first solve for $\partial_t^2 \Phi$ 
using \eqref{E:finalfluidequation}:
\begin{align} \
	\partial_t^2 \Phi = m^{jk} \partial_j \partial_k \Phi + 2m^{0j} \partial_t \partial_j \Phi 
		- \decayparameter \omega \partial_t \Phi - \triangle_{\partial \Phi}.	
\end{align}
By the definition \eqref{E:totalsupnorm} of $\suptotalnorm{N},$ Sobolev embedding, \eqref{E:partialtPhiLinfinity}, 
\eqref{E:triangleHN}, \eqref{E:mjkLinfinity}, \eqref{E:m0jC1}, we have that

\begin{align}
	\| \partial_t^2 \Phi \|_{L^{\infty}} & \leq C e^{-\decayparameter \Omega}.
\end{align}
This proves \eqref{E:partialtSquaredPhiLinfinity}.
\\

\noindent \emph{Proof of \eqref{E:partialtzjLinfinity}}:
By the definition \eqref{E:zjdef} of $z_j,$ we have that

\begin{align} \label{E:partialtzj}
	\partial_t z_j = -\omega z_j + e^{-\Omega} \frac{\partial_j \partial_t \Phi}{\partial_t \Phi}
		- e^{- \Omega} \frac{(\partial_j \Phi) (\partial_t^2 \Phi)}{(\partial_t \Phi)^2}.
\end{align}
Using the definition \eqref{E:totalsupnorm} of $\suptotalnorm{N},$ Sobolev embedding, \eqref{E:1overpartialtPhiLinfinity},
\eqref{E:partialvecalphaPhiLinfinity}, \eqref{E:partialvecalphazjLinfinity}, \eqref{E:partialtSquaredPhiLinfinity}, \eqref{E:partialtzj}, and Sobolev embedding,
we have that

\begin{align}
	\| \partial_t z_j \|_{L^{\infty}} & \leq C e^{(\decayparameter - 1)\Omega} \supfluidnorm{N}.
\end{align}
This demonstrates \eqref{E:partialtzjLinfinity}.
\\

\noindent \emph{Proofs of \eqref{E:partialttrianglemLinfinity} - \eqref{E:partialttrianglemjkLinfinity}}:
We first recall equation \eqref{E:trianglemdef}:

\begin{align} \label{E:trianglemdefagainII}
	\triangle_{(m)} & = (1 + 2s)(g^{00} + 1)(g^{00} - 1)
		+ 2(1 + 2s) e^{\Omega} g^{00} g^{0a} z_a + e^{2 \Omega}(g^{00}g^{ab} + 2s g^{0a}g^{0b})z_a z_b.
\end{align}
Differentiating \eqref{E:trianglemdefagainII} with respect to $t$ and using  
\eqref{E:g00upperplusoneLinfinity}, \eqref{E:gjkupperLinfinity}, \eqref{E:g0jupperCb1},
\eqref{E:partialtg00upperLinfinity}, \eqref{E:partialtg0jupperLinfinity}, \eqref{E:partialtgjkupperLinfinity}, \eqref{E:partialvecalphazjLinfinity}, \eqref{E:partialtzjLinfinity}, it follows that

\begin{align}
	\| \partial_t \triangle_{(m)} \|_{L^{\infty}} & \leq C e^{-q \Omega} \suptotalnorm{N}.
\end{align}
This concludes the proof of \eqref{E:partialttrianglemLinfinity}.

The proof of \eqref{E:partialttrianglemjkLinfinity} follows similarly, and we omit the details.
\\

\noindent \emph{Proof of \eqref{E:partialtm0jLinfinity}}: 
To prove \eqref{E:partialtm0jLinfinity}, we first recall equation \eqref{E:m0jdef}:

\begin{align}
	m^{0j} & = - \frac{g^{00}g^{0j}(1 + 2s) + 2e^{\Omega}\big[(s + 1)g^{0j}g^{0a} + sg^{00}g^{aj}\big]z_a   
	+ e^{2 \Omega}(g^{0j}g^{ab} + 2sg^{aj}g^{0b})z_a z_b}{(1 + 2s) + \triangle_{(m)}}.  \label{E:m0jdefagain}
\end{align}
Differentiating \eqref{E:m0jdefagain} with $\partial_t$ and using
\eqref{E:g00upperplusoneLinfinity}, \eqref{E:gjkupperLinfinity}
\eqref{E:g0jupperCb1},  \eqref{E:partialtg00upperLinfinity}, \eqref{E:partialtg0jupperLinfinity}, \eqref{E:partialtgjkupperLinfinity}, 
\eqref{E:trianglemLinfinity}, \eqref{E:partialvecalphazjLinfinity},
\eqref{E:partialtzjLinfinity}, and \eqref{E:partialttrianglemLinfinity},
it follows that $\| \partial_t m^{0j} \|_{L^{\infty}} \leq C e^{-(1+q)\Omega} \suptotalnorm{N}.$ This proves \eqref{E:partialtm0jLinfinity}.
\\

\noindent \emph{Proofs of \eqref{E:partialtmjkplusomegamjkLinfinity} - \eqref{E:partialtmjkLinfinity}}:
First, using \eqref{E:trianglemLinfinity} and \eqref{E:partialttrianglemLinfinity}, it follows that

\begin{align} \label{E:partialtFLinfinity}
	\| \partial_t \big(F(\triangle_{(m)})\big) \|_{L^{\infty}} & \leq C e^{-q \Omega} \suptotalnorm{N},
\end{align}
where $F(\triangle_{(m)})$ is defined in \eqref{E:Fproofdef}. Next, using the decompositions \eqref{E:mjkmultiplicativedecomposition} and \eqref{E:Mjkdecomposition}, together with 
simple triangle inequality estimates, we have that

\begin{align} \label{E:partialtmjkplus2omegamjkexpression}
	\| \partial_t m^{jk} + 2\omega m^{jk} \|_{L^{\infty}} & \leq 
		\| \partial_t \big(F(\triangle_{(m)})\big) \|_{L^{\infty}}  \| g^{jk} \|_{L^{\infty}}
		+ \| \partial_t \big(F(\triangle_{(m)})\big) \|_{L^{\infty}}  \| \triangle_{(m)}^{jk} \|_{L^{\infty}} \\
	& \ \ + \| F(\triangle_{(m)}) \|_{L^{\infty}}  \| \partial_t \triangle_{(m)}^{jk} \|_{L^{\infty}}
		+ \| F(\triangle_{(m)}) \|_{L^{\infty}} \| \partial_t g^{jk} + 2 \omega g^{jk} \|_{L^{\infty}} \notag \\
	& \ \ + 2 \omega \| F(\triangle_{(m)}) \|_{L^{\infty}} \| \triangle_{(m)}^{jk} \|_{L^{\infty}}. \notag
\end{align}
Finally, using \eqref{E:gjkupperLinfinity}, \eqref{E:partialtgjkupperplusomegagjkLinfinity}, \eqref{E:trianglemjkLinfinity}, \eqref{E:partialttrianglemjkLinfinity}, \eqref{E:FLinfinity}, and \eqref{E:partialtFLinfinity} to bound
the right-hand side of \eqref{E:partialtmjkplus2omegamjkexpression}, we conclude that

\begin{align}
	\| \partial_t m^{jk} + 2\omega m^{jk} \|_{L^{\infty}} & \leq C e^{-(2 + q)\Omega} \suptotalnorm{N}.
\end{align}
This proves \eqref{E:partialtmjkplusomegamjkLinfinity}. 

Inequality \eqref{E:partialtmjkLinfinity} follows
from \eqref{E:mjkLinfinity} and \eqref{E:partialtmjkplusomegamjkLinfinity}. This concludes our proof of Proposition \ref{P:Nonlinearities}.

\end{proof}

\section{The Equivalence of Sobolev and Energy Norms} \label{S:EnergyNormEquivalence}
\setcounter{equation}{0}

As is typical in the theory of hyperbolic PDEs, our global existence proof is based on showing that the energies of 
Section \ref{S:NormsandEnergies} remain finite (they happen to be uniformly bounded for $t \geq 0$ in the problem studied here). However, the boundedness of the energies does not in itself preclude the possibility of blow-up; to show that the blow-up scenario (4) from the conclusions of Theorem \ref{T:ContinuationCriterion} does \emph{not} occur, we
will control appropriate Sobolev norms of $\partial g$ and $\partial \Phi.$ In this short section, we supply the bridge between the energies and the norms. More specifically, in the following proposition, we prove that under appropriate bootstrap assumptions, the Sobolev-type norms and energies defined in Section \ref{S:NormsandEnergies} are equivalent.

\begin{proposition} \label{P:energynormomparison} \textbf{(Equivalence of Sobolev norms and energy norms)}
Let $N \geq 3$ be an integer, and assume that the bootstrap assumptions \eqref{E:metricBAeta} - \eqref{E:g0jBALinfinity} hold 
on the spacetime slab $[0,T) \times \mathbb{T}^3$ for some constant $c_1 \geq 1$ and for $\eta = \eta_{min}.$ Let $(\delta, \gamma)$ be any of the pairs of constants given in Definition \ref{D:energiesforg}, and let $C_{(\gamma)}$ be the corresponding
constant from Lemma \ref{L:buildingblockmetricenergy}. There exist constants $\epsilon''' > 0$ and $C > 0$ 
depending on $N,$ $c_1,$ $\eta_{min},$ $\gamma,$ and $\delta,$ such that if $\suptotalnorm{N} \leq \epsilon''',$ then
the following inequalities hold on the interval $[0,T)$ for the norms and energies defined in \eqref{E:mathfrakSMsupg00} - \eqref{E:totalsupnorm}, \eqref{E:mathcalEdef}, \eqref{E:g00supenergydef} - \eqref{E:gtotalsupenergydef}, \eqref{E:fluidsupenergydef}, and \eqref{E:totalenergy}:
	
	\begin{subequations}
	\begin{align}
		C^{-1}  \big\lbrace \| \partial_t v \|_{L^2} + C_{(\gamma)} \| v \|_{L^2} + e^{- \Omega} \| \bard v \|_{L^2} \big\rbrace 
			\leq \mathcal{E}_{(\gamma,\delta)}[v,\partial v] & \leq 	C  \big\lbrace \| \partial_t v \|_{L^2} 
			+  C_{(\gamma)} \| v \|_{L^2} + e^{- \Omega} \| \bard v 
			\|_{L^2} \big\rbrace, \label{E:mathcalEcomparison} \\
		C^{-1} \supgzerozeroenergy{N} \leq \supgzerozeronorm{N} & \leq C \supgzerozeroenergy{N}, 
			\label{E:mathfrakENg00mathfrakSMcomparison} \\
		C^{-1} \supgzerostarenergy{N} \leq \supgzerostarnorm{N} & \leq C \supgzerostarenergy{N}, \\
		C^{-1} \suphstarstarenergy{N} \leq \suphstarstarnorm{N} & \leq C \suphstarstarenergy{N}, 
			\label{E:mathfrakENh**mathfrakSMcomparison} \\
		C^{-1} \supgenergy{N} \leq \supgnorm{N} & \leq C \supgenergy{N}, \label{E:mathfrakENmathfrakSMcomparison} \\
		C^{-1} \supfluidenergy{N} \leq \supfluidnorm{N} & \leq C \supfluidenergy{N}, \label{E:ENSNcomparison} \\
		C^{-1} \suptotalenergy{N}  \leq \suptotalnorm{N} & \leq C \suptotalenergy{N}.  \label{E:mathcalQNQNcomparison}
	\end{align}
	\end{subequations}
	
	Analogous inequalities hold if we make the replacements $(\supgzerozeroenergy{N},
	\supgzerostarenergy{N}, \suphstarstarenergy{N}, \supgenergy{N}, \supfluidenergy{N})$ \\
	$\rightarrow (\gzerozeroenergy{N},
	\gzerostarenergy{N}, \hstarstarenergy{N}, \genergy{N}, \fluidenergy{N})$ 
	and $(\supgzerozeronorm{N},
	\supgzerostarnorm{N}, \suphstarstarnorm{N}, \supgnorm{N}, \supfluidnorm{N})$ \\
	$\rightarrow (\gzerozeronorm{N},
	\gzerostarnorm{N}, \hstarstarnorm{N}, \gnorm{N}, \fluidnorm{N}).$
\end{proposition}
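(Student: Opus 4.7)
\medskip

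\noindent\textbf{Proof Proposal.}
The plan is to reduce everything to the building-block comparison \eqref{E:mathcalEcomparison} plus the pointwise metric estimates already at our disposal, and then to propagate these through the definitions of the various energies and norms by differentiating with $\partial_{\vec{\alpha}}$ and bookkeeping the exponential weights. Throughout, we will assume $\suptotalnorm{N} \leq \epsilon'''$ so that Lemma \ref{L:ginverseestimates} and Proposition \ref{P:BoostrapConsequences} apply with $\eta = \eta_{min}$; this will control $-g^{00}, g^{ab},$ and $m^{ab}$ pointwise.

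First I would prove \eqref{E:mathcalEcomparison}. The upper bound follows directly from the definition \eqref{E:mathcalEdef} by Cauchy--Schwarz together with $|g^{00}| \leq 5$ \eqref{E:g00upperLinfinity}, the upper bound $e^{2\Omega} g^{ab}X_{a}X_{b} \leq \tfrac{3c_1}{2}|X|^{2}$ from \eqref{E:gjkuppercomparetostandard}, and an AM-GM splitting of the cross term $-2\gamma H g^{00} v\,\partial_t v$ into contributions to $\|\partial_t v\|_{L^2}^{2}$ and $\|v\|_{L^2}^{2}$ (which is why the factor $C_{(\gamma)}$ is needed on the right when $\gamma > 0$). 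For the lower bound, the key fact is already \eqref{E:mathcalEfirstlowerbound} from Lemma \ref{L:buildingblockmetricenergy}: $\mathcal{E}_{(\gamma,\delta)}^{2} \gtrsim \int (\partial_t v)^{2} + g^{ab}(\partial_a v)(\partial_b v) + C_{(\gamma)} v^{2}\,d^{3}x$. Using the coercive bound $e^{2\Omega} g^{ab}X_{a}X_{b} \geq \tfrac{2}{3c_1}|X|^{2}$ from \eqref{E:gjkuppercomparetostandard} turns the middle term into $e^{-2\Omega}\|\bard v\|_{L^2}^{2}$, which, after extracting a square root, gives the left inequality in \eqref{E:mathcalEcomparison}.

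Next I would lift \eqref{E:mathcalEcomparison} to the component-wise equivalences \eqref{E:mathfrakENg00mathfrakSMcomparison}--\eqref{E:mathfrakENh**mathfrakSMcomparison}. For $\gzerozeroenergy{N}$, applying \eqref{E:mathcalEcomparison} with $v = \partial_{\vec{\alpha}}(g_{00}+1)$ and multiplying by $e^{2q\Omega}$ yields, after summing over $|\vec{\alpha}|\leq N$,
\begin{align*}
\gzerozeroenergy{N}^{2} \ \sim\ e^{2q\Omega}\|\partial_t g_{00}\|_{H^N}^{2} + e^{2q\Omega}\|g_{00}+1\|_{H^N}^{2} + e^{2(q-1)\Omega}\|\bard g_{00}\|_{H^N}^{2},
\end{align*}
which is precisely $\gzerozeronorm{N}^{2}$ up to constants (noting that $(\gamma_{00},\delta_{00})$ are the admissible pair with $\gamma_{00}>0$ produced by Lemma \ref{L:buildingblockmetricenergy}, so the $\|v\|_{L^2}$ term is present as needed). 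The identical argument with the weight $e^{(q-1)\Omega}$ handles $\gzerostarenergy{N}$ versus $\gzerostarnorm{N}$. For $\hstarstarenergy{N}$ one has $(\gamma_{**},\delta_{**})=(0,0)$, so the $v^{2}$ term is absent from the building block; the missing $L^{2}$-type control on $h_{jk}$ itself (matching the $\|\bard h_{jk}\|_{H^{N-1}}$ piece in \eqref{E:mathfrakSMh**}) is supplied by the explicit $c_{\vec{\alpha}}H^{2}\int(\partial_{\vec{\alpha}}h_{jk})^{2}$ addendum in \eqref{E:h**energydef}, which I would match term by term against the $\|\bard h_{jk}\|_{H^{N-1}}$ summand in $\hstarstarnorm{N}$ after noting that this summand controls exactly the modes with $|\vec{\alpha}|\geq 1$. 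Taking suprema in $t$ preserves all bounds, yielding the $\supgzerozero,\supgzerostar,\suphstarstar$ versions; summing gives \eqref{E:mathfrakENmathfrakSMcomparison}.

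Finally I would handle the fluid case \eqref{E:ENSNcomparison}. Here the point is that for $|\vec{\alpha}|\geq 1$ one has $\partial_{\vec{\alpha}}(e^{\decayparameter\Omega}\partial_t \Phi - \bar{\Psi}) = e^{\decayparameter\Omega}\partial_t \partial_{\vec{\alpha}}\Phi$ and $\partial_{\vec{\alpha}}\bard\Phi = \bard\partial_{\vec{\alpha}}\Phi$, so by \eqref{E:mjkpositivedefinite} the integrand in \eqref{E:fluidenergydef} is pointwise equivalent to $(e^{\decayparameter\Omega}\partial_t\Phi-\bar{\Psi})^{2} + e^{2(\decayparameter-1)\Omega}|\bard\Phi|^{2}$ at the base level, and analogously at each higher level. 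Summing over $|\vec{\alpha}|\leq N$ and taking suprema matches $\fluidnorm{N}^{2}$ and hence gives \eqref{E:ENSNcomparison}. The total equivalence \eqref{E:mathcalQNQNcomparison} then follows by adding \eqref{E:mathfrakENmathfrakSMcomparison} and \eqref{E:ENSNcomparison} in view of \eqref{E:totalsupnorm} and \eqref{E:totalenergy}.

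The only mildly delicate point, and the one I would expect to require the most care, is verifying that the weights $e^{q\Omega}, e^{(q-1)\Omega}, e^{(\decayparameter-1)\Omega}$ built into \eqref{E:mathfrakSMg00}--\eqref{E:Fluidnorm} mesh exactly with the factors $e^{2q\Omega}, e^{2(q-1)\Omega}, e^{2\decayparameter\Omega}$ built into \eqref{E:g00energydef}--\eqref{E:fluidenergydef} once the factor $e^{-\Omega}$ produced by $\sqrt{g^{ab}}$ (respectively $\sqrt{m^{ab}}$) is extracted; the check is a direct matching of exponents in each of the three metric blocks and the fluid block, and no additional smallness of $\suptotalnorm{N}$ beyond what \eqref{E:mathcalEcomparison}, \eqref{E:gjkuppercomparetostandard}, and \eqref{E:mjkpositivedefinite} already require is introduced.
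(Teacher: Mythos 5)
Your proposal is correct and follows essentially the same route as the paper's proof: the lower bound in \eqref{E:mathcalEcomparison} comes from the coercivity estimate \eqref{E:mathcalEfirstlowerbound} combined with \eqref{E:gjkuppercomparetostandard}, the upper bound from the definition \eqref{E:mathcalEdef} together with the pointwise bounds on $g^{00}$ and $g^{ab}$, and the remaining equivalences follow by matching the exponential weights in the definitions of the norms and energies (using \eqref{E:mjkpositivedefinite}, equivalently \eqref{E:gjkuppercomparetostandard} plus \eqref{E:mjkuppercomparedtogjk}, for the fluid block). The extra care you take with the $C_{(\gamma)}$ bookkeeping and the $c_{\vec{\alpha}}H^{2}$ addendum in \eqref{E:h**energydef} is exactly the content the paper leaves implicit.
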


\begin{proof}
	The proof of the proposition will follow quite easily from the estimates already derived. The inequalities
	in \eqref{E:mathcalEcomparison} follow from the definition \eqref{E:mathcalEdef} of $\mathcal{E}_{(\gamma,\delta)}[v, 
	\partial v],$ the definition \eqref{E:totalsupnorm} of $\suptotalnorm{N},$ \eqref{E:mathcalEfirstlowerbound},  
	and \eqref{E:gjkuppercomparetostandard}. The inequalities in \eqref{E:mathfrakENg00mathfrakSMcomparison} - 
	\eqref{E:mathfrakENh**mathfrakSMcomparison} then follow from definitions \eqref{E:mathfrakSMsupg00} - 
	\eqref{E:mathfrakSMsuph**}, definitions \eqref{E:g00supenergydef} - \eqref{E:h**supenergydef}, 
	and \eqref{E:mathcalEcomparison}. The inequalities in \eqref{E:ENSNcomparison} follow
	from definitions \eqref{E:Fluidsupnorm} and \eqref{E:fluidsupenergydef}, and from \eqref{E:gjkuppercomparetostandard} plus
	\eqref{E:mjkuppercomparedtogjk}. Finally, \eqref{E:mathfrakENmathfrakSMcomparison} and 
	\eqref{E:mathcalQNQNcomparison} follow trivially from definitions \eqref{E:totalsupgnorm}, \eqref{E:totalsupnorm}, 
	\eqref{E:gtotalsupenergydef}, and \eqref{E:totalenergy}, and from the previous inequalities.
	
\end{proof}

\section{Global Existence} \label{S:GlobalExistence}

In this section, we use the estimates derived in Sections \ref{S:BootstrapConsequences} and \ref{S:EnergyNormEquivalence}
to prove two main theorems. In the first theorem, we show that the modified system \eqref{E:finalg00equation} - \eqref{E:finalfluidequation} has global solutions for initial data near that of the background solution $(\widetilde{g},\partial \widetilde{\Phi})$ on $[0,\infty) \times \mathbb{T}^3,$ which is described in Section \ref{S:backgroundsolution}. As described in Section \ref{SS:PreservationofHarmonicGauge}, if the Einstein constraint equations and the wave coordinate condition $Q_{\mu} = 0,$ $(\mu = 0,1,2,3),$ are both satisfied along the Cauchy hypersurface $\Sigma = \lbrace x \in \mathcal{M} \ | \ t = 0 \rbrace,$ then the solution to the modified equations is also a solution to the irrotational Euler-Einstein system. The main idea of the proof is to show that the energies satisfy a system of integral inequalities that forces them (via Gronwall-type estimates) to remain uniformly small on the time interval of existence. Since Proposition \ref{P:energynormomparison} shows that the norms of the solution must also remain small, Theorem \ref{T:ContinuationCriterion} can be applied to conclude that the solution exists globally in time. In the second theorem, we provide for convenience a proof of Propositions $3$ and $4$ of \cite{hR2008}, which provide criteria for the initial data that are sufficient to ensure that the spacetime they launch is a future geodesically complete solution to irrotational Euler-Einstein system.

\subsection{Integral inequalities for the energies} \label{SS:IntegralInequalities}

In this section, we derive the system of integral inequalities that was mentioned in the previous paragraph.

\begin{proposition} \label{P:IntegralEnergyInequalities} \textbf{(Integral inequalities)}
	Let $N \geq 3$ be an integer. Assume that on the spacetime slab $[0,T) \times \mathbb{T}^3,$
	$(g_{\mu \nu}, \partial_{\mu} \Phi),$ $(\mu, \nu = 0,1,2,3),$ is a classical solution to 
	the modified system \eqref{E:finalg00equation} - \eqref{E:finalfluidequation}, and that the 
	bootstrap assumptions \eqref{E:metricBAeta} - \eqref{E:g0jBALinfinity} hold for 
	some constant $c_1 \geq 1$ and for $\eta = \eta_{min}.$ Then there exist constants $\epsilon'''' > 0$ and  
	$C > 0,$ where $C$ depends on $N, c_1,$ and $\eta_{min},$ such that if $\suptotalnorm{N}(t) \leq \epsilon''''$ on $[0,T)$ and 
	$t_1 \in [0,T),$ then the following system of integral inequalities is also satisfied for $t \in [t_1,T):$ 	
	
	\begin{subequations}
	\begin{align}
		\fluidenergy{N}^2(t) & \leq \fluidenergy{N}^2(t_1) + C \int_{t_1}^t e^{-q H \tau} \suptotalenergy{N}^2(\tau) \, d \tau, 
			\label{E:ENintegral}  \\
		\gzerozeroenergy{N}^2(t) 
			& \leq \gzerozeroenergy{N}^2(t_1) 
			+ C \int_{t_1}^t e^{-q H \tau} \suptotalenergy{N}^2(\tau) \, d \tau, \label{E:mathfrakENg00integral} \\
		\gzerostarenergy{N}^2(t)  
			& \leq \gzerostarenergy{N}^2(t_1) + \int_{t_1}^t -4qH \gzerostarenergy{N}^2(\tau) 
				+ C \suphstarstarenergy{N}(\tau)\gzerostarenergy{N}(\tau)
				+ C e^{-q H \tau} \suptotalenergy{N}(\tau) \gzerostarenergy{N}(\tau) \, d \tau, 
				\label{E:mathfrakENg0*integral} \\
		\hstarstarenergy{N}^2(t) & \leq \hstarstarenergy{N}^2(t_1)	
			+ C \int_{t_1}^t e^{-q H \tau} \suptotalenergy{N}^2(\tau) \, d \tau. \label{E:mathfrakENh**integral}
	\end{align}
	\end{subequations}

\end{proposition}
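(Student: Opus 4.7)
The plan is to start from the preliminary differential energy inequalities provided by Corollary \ref{C:metricfirstdiferentialenergyinequality} (for $\gzerozeroenergy{N}$, $\gzerostarenergy{N}$, $\hstarstarenergy{N}$) and Corollary \ref{C:fluidenergytimederivative} (for $\fluidenergy{N}$), and then substitute in the Sobolev estimates of Propositions \ref{P:BoostrapConsequences} and \ref{P:Nonlinearities} to bound every inhomogeneous term appearing on the right-hand sides. The norm/energy equivalences of Proposition \ref{P:energynormomparison} will then allow us to convert the resulting bounds, which are phrased in terms of the Sobolev-type norms $\gzerozeronorm{N}, \gzerostarnorm{N}, \hstarstarnorm{N}, \fluidnorm{N}$, into bounds phrased in terms of $\suptotalenergy{N}$. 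Finally, using the asymptotic relation $\Omega(t) \sim Ht$ from Lemma \ref{L:backgroundaoftestimate} to convert $e^{-q\Omega}$ factors into $e^{-qH\tau}$ factors, and integrating in time, will yield the four inequalities.

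For $\fluidenergy{N}^2$, I would substitute the estimates \eqref{E:m0jC1}, \eqref{E:partialmjkHNLinfinity}, \eqref{E:triangleHN}, \eqref{E:SquaremcommutatorL2}, and \eqref{E:partialtmjkplusomegamjkLinfinity} into the right-hand side of \eqref{E:fluidenergytimederivativeCauchySchwarz}. Every term except the last (which carries the prefactor $(\decayparameter-1)\omega<0$ on a non-negative quadratic form in $\partial\Phi$, so may simply be discarded from above) acquires a decay factor of at least $e^{-q\Omega}$, and pairs with $\fluidnorm{N}\lesssim\suptotalnorm{N}\lesssim\suptotalenergy{N}$. This gives $\frac{d}{dt}\fluidenergy{N}^2 \leq C e^{-qH\tau}\suptotalenergy{N}^2$, and integration yields \eqref{E:ENintegral}. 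For $\gzerozeroenergy{N}^2$ and $\hstarstarenergy{N}^2$, the corresponding approach works cleanly: the leading factor $(2q-\eta_{00})H$, respectively $(2q-\eta_{**})H$, is strictly negative by the definition \eqref{E:qdef} of $q$ and is discarded as a favorable sign; the $2q(\omega-H)$ terms decay like $e^{-PH\tau}$ (with $P\geq q$) by Lemma \ref{L:backgroundaoftestimate} and absorb into $Ce^{-qH\tau}\suptotalenergy{N}^2$; and the error, commutator, and $\triangle_{\mathcal{E}}$ terms are estimated using \eqref{E:triangle00HN}, \eqref{E:trianglejkHN}, \eqref{E:g00commutatorL2}, \eqref{E:hjkcommutatorL2}, \eqref{E:triangleEgamma00delta00L1}, and \eqref{E:triangleEgamma**delta**L1}, each yielding an $e^{-q\Omega}\suptotalenergy{N}^2$ bound.

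The genuinely delicate case is $\gzerostarenergy{N}^2$, because of the ``dangerous term'' $2H\gzerostarnorm{N}\sum_j e^{(q-1)\Omega}\|g^{ab}\Gamma_{ajb}\|_{H^N}$ appearing in \eqref{E:underlinemathfrakEg0*firstdifferential}. After applying the sharp bound \eqref{E:gabupperGammaajblowerHN}, this term becomes $C\gzerostarenergy{N}\suphstarstarenergy{N}$ with no residual decay factor; it cannot be absorbed by $Ce^{-qH\tau}\suptotalenergy{N}^2$. The remedy is to \emph{retain} the dissipative contribution from the principal term $[2(q-1)-\eta_{0*}]H\gzerostarenergy{N}^2$ on the right-hand side of \eqref{E:underlinemathfrakEg0*firstdifferential}: by the definition \eqref{E:qdef} and \eqref{E:etamindef} of $q$ and $\eta_{min}\leq \eta_{0*}/8$, one has $2(q-1)-\eta_{0*}\leq -4q$ for $q$ sufficiently small, so this term is $\leq -4qH\gzerostarenergy{N}^2$. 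All remaining contributions to $\tfrac{d}{dt}\gzerostarenergy{N}^2$ (namely the $\triangle_{0j}$ term via \eqref{E:triangle0jHN}, the commutator via \eqref{E:g0jcommutatorL2}, the $\triangle_{\mathcal E}$ term via \eqref{E:triangleEgamma0jdelta0*L1}, and the fast-decaying $2(q-1)(\omega-H)$ term) contribute $Ce^{-qH\tau}\gzerostarenergy{N}\suptotalenergy{N}$. Integrating gives \eqref{E:mathfrakENg0*integral}.

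The main obstacle is precisely the handling of this dangerous coupling of $\gzerostarenergy{N}$ with $\suphstarstarenergy{N}$: it is exactly the structural reason why the integral inequality for $\gzerostarenergy{N}^2$ must retain the explicit dissipative term $-4qH\gzerostarenergy{N}^2$ rather than collapsing to the simpler form of \eqref{E:mathfrakENg00integral} or \eqref{E:mathfrakENh**integral}, and later (in the global existence argument) this dissipation will be balanced against the $C\suphstarstarenergy{N}\gzerostarenergy{N}$ contribution by a Gronwall-type analysis exploiting the uniform smallness of $\suphstarstarenergy{N}$. Every other step consists of bookkeeping the powers of $e^{\Omega}$ through the Counting Principle and invoking the already-established Sobolev estimates, which is routine but must be carried out term by term.
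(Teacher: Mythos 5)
Your proposal is correct and follows essentially the same route as the paper's proof: differentiate the energies via Corollaries \ref{C:metricfirstdiferentialenergyinequality} and \ref{C:fluidenergytimederivative}, insert the Sobolev estimates of Propositions \ref{P:BoostrapConsequences} and \ref{P:Nonlinearities}, discard the sign-definite term $(\decayparameter-1)\omega\int m^{ab}(\partial_a\partial_{\vec\alpha}\Phi)(\partial_b\partial_{\vec\alpha}\Phi)$ using \eqref{E:mjkpositivedefinite}, convert norms to energies via Proposition \ref{P:energynormomparison}, and in the $g_{0*}$ case retain the dissipation $[2(q-1)-\eta_{0*}]H\leq -4qH$ to offset the non-decaying $C\suphstarstarenergy{N}\gzerostarenergy{N}$ term coming from \eqref{E:gabupperGammaajblowerHN}. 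This is exactly the paper's argument, including the treatment of the dangerous term noted in Remark \ref{R:Dangerousterm}.
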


\begin{proof}
	
	We apply Corollary \ref{C:fluidenergytimederivative}, using \eqref{E:triangleHN}, \eqref{E:partialmjkHNLinfinity},  
	\eqref{E:m0jC1}, \eqref{E:SquaremcommutatorL2}, and \eqref{E:partialtmjkplusomegamjkLinfinity}
	to estimate the terms on the right-hand side of \eqref{E:fluidenergytimederivativeCauchySchwarz}, using 
	Proposition \ref{P:energynormomparison} to replace the norms with corresponding energies, and dropping the term
	$(\decayparameter - 1) \omega \sum_{|\vec{\alpha}| \leq N} \int_{\mathbb{T}^3} e^{2\decayparameter \Omega} m^{ab} (\partial_a \partial_{\vec{\alpha}} 
	\Phi) (\partial_b \partial_{\vec{\alpha}} \Phi) \,d^3 x$ on the right-hand side of \eqref{E:fluidenergytimederivativeCauchySchwarz}, which 
	by \eqref{E:mjkpositivedefinite} is non-positive for $\decayparameter < 1,$ thereby arriving at the following inequality:
	
	\begin{align} 
		\frac{d}{dt} \Big(\fluidenergy{N}^2(t) \Big) \leq C e^{-qHt}\suptotalenergy{N}^2.
	\end{align}	
Integrating from $t_1$ to $t$ gives \eqref{E:ENintegral}.

To prove \eqref{E:mathfrakENg0*integral}, we apply Corollary \ref{C:metricfirstdiferentialenergyinequality},
using \eqref{E:gabupperGammaajblowerHN}, \eqref{E:triangle0jHN}, \eqref{E:g0jcommutatorL2}, and \eqref{E:triangleEgamma0jdelta0*L1} to estimate the terms on the right-hand side of \eqref{E:underlinemathfrakEg0*firstdifferential}, using 
Proposition \ref{P:energynormomparison} to replace the norms with corresponding energies, and using 
definition \eqref{E:qdef} to deduce that $2(q-1) - \eta_{0*} \leq - 4q,$ thereby arriving at the following inequality:

\begin{align} \label{E:g0*energyfinaldifferentialinequality}
	\frac{d}{dt} \Big(\gzerostarenergy{N}^2(t) \Big)
	& \leq -4qH	\gzerostarenergy{N}^2(t) + 
	C \suphstarstarenergy{N} \gzerostarenergy{N}(t) +  C e^{-q H t} \suptotalenergy{N}(t)	
	\gzerostarenergy{N}(t).
\end{align}
Inequality \eqref{E:mathfrakENg0*integral} now follows by integrating from $t_1$ to $t.$ Inequalities \eqref{E:mathfrakENg00integral} and \eqref{E:mathfrakENh**integral} can be proved similarly; we omit the details.
\end{proof}

\begin{remark} \label{R:Dangerousterm}
	The term $C \suphstarstarenergy{N} \gzerostarenergy{N}$
	in inequality \eqref{E:mathfrakENg0*integral} arises from the $\gzerostarnorm{N} \sum_{j=1}^3 e^{(q-1) 
	\Omega} \| g^{ab}\Gamma_{a j b} \|_{H^N}$ term on the right-hand side of \eqref{E:underlinemathfrakEg0*firstdifferential}. 
	This term is dangerous in the sense that it does not contain an exponentially decaying factor, and looks like it could lead 
	to the growth of $\gzerostarenergy{N}.$ However, as we shall see in 
	the proof of Theorem \ref{T:GlobalExistence}, there is a partial decoupling in the integral inequalities in the sense that
	the $C \suphstarstarenergy{N}$ factor in the dangerous term can be controlled from inequality \eqref{E:mathfrakENh**integral}
	alone. We will then insert this information into inequality \eqref{E:mathfrakENg0*integral}, and also make use of 
	the negative term $-4qH \gzerostarenergy{N}^2$ to obtain a bound for $\gzerostarenergy{N}.$
\end{remark}

For completeness, we state the following version of Gronwall's inequality; we omit the simple proof. We will use it in Section \ref{SS:globalexistencetheorem}.

\begin{lemma} \label{L:Gronwall}
	Let $b(t) \geq 0$ be a continuous function on the interval $[t_1,T],$ and let $B(t)$ be an anti-derivative of $b(t).$
	Suppose that $A \geq 0$ and that $y(t) \geq 0$ is a continuous function satisfying the inequality
	\begin{align}
		y(t) \leq A + \int_{t_1}^t b(\tau)y(\tau) \, d \tau
	\end{align}
	for $t \in [t_1,T].$ Then for $t \in [t_1,T],$
	we have
	
	\begin{align}
		y(t) \leq A \exp \Big[B(t) - B(t_1)\Big].
	\end{align}
	
\end{lemma}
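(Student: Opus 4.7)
The plan is to reduce the integral inequality to a linear ODE-style comparison argument via an auxiliary function and an integrating factor. First I would introduce
\begin{align*}
z(t) \eqdef A + \int_{t_1}^t b(\tau) y(\tau) \, d\tau,
\end{align*}
so that by hypothesis $y(t) \leq z(t)$ on $[t_1,T]$, with $z(t_1) = A$ and $z$ continuous (in fact $C^1$, since $b$ and $y$ are continuous). The fundamental theorem of calculus then gives $z'(t) = b(t) y(t)$, and combining this with $b(t) \geq 0$ and $y(t) \leq z(t)$ yields the pointwise differential inequality $z'(t) \leq b(t) z(t)$.

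Next I would multiply through by the integrating factor $e^{-B(t)}$ (which is positive), obtaining
\begin{align*}
\frac{d}{dt}\bigl[ z(t) e^{-B(t)} \bigr] = \bigl( z'(t) - b(t) z(t) \bigr) e^{-B(t)} \leq 0.
\end{align*}
Integrating this inequality over $[t_1,t]$ gives $z(t) e^{-B(t)} \leq z(t_1) e^{-B(t_1)} = A e^{-B(t_1)}$, whence
\begin{align*}
z(t) \leq A \exp\bigl[ B(t) - B(t_1) \bigr].
\end{align*}
The conclusion then follows immediately from $y(t) \leq z(t)$.

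Since this is a classical result, there is no genuine obstacle. The only mild subtlety worth flagging is that one is propagating an \emph{inequality} through a differentiation step, rather than solving an equation; the integrating factor trick is precisely what makes this rigorous, because it turns the differential inequality $z' - bz \leq 0$ into the monotonicity statement that $z e^{-B}$ is nonincreasing, which integrates cleanly without any need to invoke an existence theory for the associated linear ODE. The case $A = 0$ is recovered as a limit (or directly, giving $y \equiv 0$), and no regularity beyond continuity of $b$ and $y$ is required.
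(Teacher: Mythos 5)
Your proof is correct and is the standard integrating-factor argument for Gronwall's inequality; the paper itself omits the proof entirely (``we omit the simple proof''), so there is nothing to compare against, but your argument is exactly the classical one that the authors had in mind. The auxiliary function $z$, the observation $z' = by \leq bz$, and the monotonicity of $z e^{-B}$ are all handled cleanly, including the remark that one is propagating an inequality rather than solving an ODE.
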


In addition, in Section \ref{SS:globalexistencetheorem}, we will apply the following integral inequality to \eqref{E:mathfrakENg0*integral} in order to estimate the energy $\supgzerostarenergy{N}(t).$ 

\begin{lemma} \label{L:integralinequality}
	Let $b(t) > 0$ be a continuous \textbf{non-decreasing} function on the interval $[0,T],$ and let $\epsilon > 0.$
	Suppose that for each $t_1 \in [0,T],$ $y(t) \geq 0$ is a continuous function satisfying the inequality
	\begin{align} \label{E:integralinequality}
		y^2(t) \leq y^2(t_1) + \int_{\tau = t_1}^t - b(\tau)y^2(\tau) + \epsilon y(\tau) \, d \tau 
	\end{align}
	for $t \in [t_1,T].$ Then for any $t_1, t \in [0,T]$ with $t_1 \leq t,$
	we have that
	
	\begin{align} \label{E:integralinequalityconclusion}
		y(t) \leq y(t_1) + \frac{\epsilon}{b(t_1)}.
	\end{align}
	
\end{lemma}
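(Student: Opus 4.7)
The plan is to argue by contradiction, exploiting the dissipative structure of the integrand: the expression $-b(\tau)y^2(\tau) + \epsilon y(\tau) = y(\tau)\bigl[\epsilon - b(\tau)y(\tau)\bigr]$ is strictly negative whenever $y(\tau) > \epsilon/b(\tau)$, which is the mechanism forcing $y$ to remain close to the equilibrium $\epsilon/b$. Since $b$ is non-decreasing, $\epsilon/b(\tau) \leq \epsilon/b(t_1)$ for $\tau \geq t_1$, so the good sign of the integrand is easy to verify once $y$ exceeds $\epsilon/b(t_1)$.

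Fix $t_1 \in [0,T]$ and suppose for contradiction that there exists some $t \in (t_1,T]$ with $y(t) > y(t_1) + \epsilon/b(t_1)$. By continuity of $y$, I would define
\[
t_2 \eqdef \sup\Bigl\{\tau \in [t_1,t] : y(\tau) \leq y(t_1) + \epsilon/b(t_1) \Bigr\},
\]
which is well defined (since $t_1$ lies in the set) and satisfies $t_2 < t$ with $y(t_2) = y(t_1) + \epsilon/b(t_1)$ by continuity. On $(t_2,t]$ we then have $y(\tau) > y(t_1) + \epsilon/b(t_1) \geq \epsilon/b(t_1) \geq \epsilon/b(\tau)$, the last inequality using that $b$ is non-decreasing.

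The next step is to apply the hypothesis \eqref{E:integralinequality} with $t_2$ playing the role of the base point (the hypothesis is assumed to hold for every $t_1 \in [0,T]$, which is precisely what makes such a restart legal). This yields
\[
y^2(t) \leq y^2(t_2) + \int_{t_2}^{t} y(\tau)\bigl[\epsilon - b(\tau) y(\tau)\bigr] \, d\tau \leq y^2(t_2) = \bigl(y(t_1) + \epsilon/b(t_1)\bigr)^2,
\]
since the integrand is pointwise $\leq 0$ on $(t_2,t]$ by the preceding paragraph. Taking square roots (both sides being non-negative) contradicts $y(t) > y(t_1) + \epsilon/b(t_1)$, and therefore \eqref{E:integralinequalityconclusion} holds.

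The argument is short, and the only technically delicate point is the selection of $t_2$ and the appeal to the fact that \eqref{E:integralinequality} is an \emph{assumption for every} starting time. Without that quantifier, one could not reset the base point and absorb the $y^2(t_2)$ term cleanly; the monotonicity of $b$ is what allows the threshold $\epsilon/b(t_1)$ chosen at $t_1$ to remain an upper barrier for $\epsilon/b(\tau)$ at later times, which is the main subtle ingredient in verifying the sign of the integrand on $(t_2,t]$.
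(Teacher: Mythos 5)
Your proof is correct and rests on the same mechanism as the paper's: the integrand $y(\tau)\bigl[\epsilon - b(\tau)y(\tau)\bigr]$ is strictly negative above the curve $y = \epsilon/b(\tau)$, and the hypothesis may be restarted at an arbitrary base point. The only difference is cosmetic --- the paper locates the maximum of $y$ on $[t_1,T]$ and splits into cases, whereas you take the last crossing of the barrier $y(t_1) + \epsilon/b(t_1)$; your formalization is, if anything, slightly tighter, since it makes explicit the restart argument behind the paper's informal claim that $y$ is ``strictly decreasing'' above the curve.
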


\begin{proof}
	Let $\mathcal{C}$ be the ``highest'' curve in the $(t,y)$ plane on which the integrand in \eqref{E:integralinequality} 
	vanishes; i.e. $\mathcal{C} = \lbrace (t,y) | y = \frac{\epsilon}{b(t)} \rbrace.$  Then by 
	\eqref{E:integralinequality},
	above $\mathcal{C}$ (i.e. for larger $y$ values), $y(t)$ is \emph{strictly} decreasing. Let $y(t)$ have achieve
	its maximum at $t_{max} \in [t_1,T].$ We separate the proof of \eqref{E:integralinequalityconclusion} into two cases. 
	Case i) assume that $t_{max} = t_1.$ Then $y(t) \leq y(t_{max}) = y(t_1)$ for $t \in 
	[t_1, T],$ which implies \eqref{E:integralinequalityconclusion}. Case ii) assume that $t_{max} \in (t_1,T].$ We claim that 
	$y(t_{max}) \leq \frac{\epsilon}{b(t_{max})}.$ For otherwise, the point 
	$\big(t_{max}, y(t_{max})\big)$ lies above $\mathcal{C}.$ Since $y(t)$ is then strictly decreasing in a neighborhood of 
	$t_{max},$ it follows that there are times $t_* < t_{max},$ with $t_* \in (t_1,T),$ at which $y(t_*) < y(t_{max}).$ This 
	contradicts the 
	definition of $t_{max}.$ Using also the fact that $\frac{1}{b(t)}$ is non-increasing, it follows that $y(t) 
	\leq y(t_{max}) \leq \frac{\epsilon}{b(t_{max})} \leq \frac{\epsilon}{b(t_1)};$ this concludes the proof of 
	\eqref{E:integralinequalityconclusion}.
\end{proof}

\subsection{The global existence theorem} \label{SS:globalexistencetheorem}

In this section, we state and prove our main theorem, which provides global existence criteria for the 
modified system \eqref{E:finalg00equation} - \eqref{E:finalfluidequation}.

\begin{theorem} \label{T:GlobalExistence} \textbf{(Global existence)}
Let $N \geq 3$ be an integer, and assume that $0 < \speed < \sqrt{1/3},$ where $\speed$ denotes the speed of sound. 
Let $(\mathring{g}_{\mu \nu}, \mathring{K}_{\mu \nu}, \bard \mathring{\Phi}, \mathring{\Psi}),$ $(\mu, \nu = 0,1,2,3),$ be initial data (not necessarily satisfying the wave coordinate condition or the Einstein constraints) on the manifold $\mathbb{T}^3$ for the modified Euler-Einstein system \eqref{E:finalg00equation} - \eqref{E:finalfluidequation}, and let $\suptotalnorm{N} \eqdef \supgzerozeronorm{N} + \supgzerostarnorm{N} + \suphstarstarnorm{N} + \supfluidnorm{N}$ be the norm defined in \eqref{E:totalsupnorm}. Assume that there is a positive constant $c_1 \geq 2$ such that 

\begin{align} \label{E:mathringgjklowerequivalenttostandardmetric}
	\frac{2}{c_1} \delta_{ab} X^a X^b \leq \mathring{g}_{ab}X^a X^b \leq \frac{c_1}{2} \delta_{ab} X^a X^b,
		&& \forall(X^1,X^2,X^3) \in \mathbb{R}^3.
\end{align}
Then there exists a small constant $\epsilon_0,$ with $0 < \epsilon_0 < 1,$ 
and a large constant $C_*,$ both depending on $c_1$ and $N,$ such that if $\epsilon \leq \epsilon_0$ and $\suptotalnorm{N}(0) \leq C_*^{-1} \epsilon,$ then the classical solution $(g_{\mu \nu}, \partial_{\mu} \Phi)$ provided by Theorem \ref{T:LocalExistence} exists on $[0,\infty) \times \mathbb{T}^3,$ and

\begin{align} \label{E:QNgloballessthanepsilon}
	\suptotalnorm{N}(t) & \leq \epsilon 
\end{align}
holds for all $t \geq 0.$  Furthermore, the time $T_{max}$ from the hypotheses of Theorem \ref{T:ContinuationCriterion} is infinite.  
\end{theorem}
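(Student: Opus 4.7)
\medskip

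\noindent \textbf{Proof proposal for Theorem \ref{T:GlobalExistence}.}

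The plan is a standard bootstrap/continuity argument, coupling Theorem \ref{T:LocalExistence}, Theorem \ref{T:ContinuationCriterion}, Proposition \ref{P:IntegralEnergyInequalities}, and the norm/energy equivalence Proposition \ref{P:energynormomparison}. Fix $\epsilon>0$ to be chosen small (smaller than $\min\lbrace \epsilon',\epsilon'',\epsilon''',\epsilon''''\rbrace$ above), choose $C_*$ large (to be determined by the energy estimates), and assume $\suptotalnorm{N}(0)\leq C_*^{-1}\epsilon$. Let $T_{\max}$ be the maximal time of existence furnished by Theorem \ref{T:LocalExistence}, and set
\[
T_{\mathrm{boot}} \eqdef \sup\bigl\lbrace T\in[0,T_{\max}) \ \big|\ \suptotalnorm{N}(t)\leq \epsilon \text{ for all } t\in[0,T]\bigr\rbrace.
\]
By continuity and the smallness of the data, $T_{\mathrm{boot}}>0$. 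The goal is to show: (i) on $[0,T_{\mathrm{boot}})$ one can improve the bootstrap to $\suptotalnorm{N}(t)\leq \epsilon/2$, and (ii) the continuation criteria of Theorem \ref{T:ContinuationCriterion} fail at $T_{\max}$, which together force $T_{\max}=\infty$ and \eqref{E:QNgloballessthanepsilon}.

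For step (i), Remark \ref{R:RoughBootstrapAutomatic} shows that the rough bootstrap assumptions \eqref{E:metricBAeta}--\eqref{E:g0jBALinfinity} hold on $[0,T_{\mathrm{boot}})$ with $\eta=\eta_{\min}$ (after enlarging $c_1$ slightly from the initial value in \eqref{E:mathringgjklowerequivalenttostandardmetric}, using continuity). Hence Proposition \ref{P:IntegralEnergyInequalities} applies. The integral inequalities \eqref{E:ENintegral}, \eqref{E:mathfrakENg00integral}, \eqref{E:mathfrakENh**integral} all have the schematic form $y^2(t)\leq y^2(0)+C\int_0^t e^{-qH\tau}\suptotalenergy{N}^2(\tau)\,d\tau$. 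Since the weight $e^{-qH\tau}$ is integrable on $[0,\infty)$, summing these three inequalities and applying Gronwall (Lemma \ref{L:Gronwall}) yields a bound of the shape
\begin{align*}
\supfluidenergy{N}^2(t)+\supgzerozeroenergy{N}^2(t)+\suphstarstarenergy{N}^2(t) \leq \bigl(\suptotalenergy{N}^2(0)\bigr)\exp\!\Bigl(C\!\!\int_0^\infty e^{-qH\tau}\,d\tau\Bigr)\leq C\,\suptotalenergy{N}^2(0).
\end{align*}

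The delicate piece is the $g_{0*}$ estimate \eqref{E:mathfrakENg0*integral}, which contains the dangerous undecayed term $C\suphstarstarenergy{N}\gzerostarenergy{N}$ flagged in Remark \ref{R:Dangerousterm}. The key observation is the partial decoupling: the just-obtained bound on $\suphstarstarenergy{N}$ makes the coefficient of $\gzerostarenergy{N}$ in \eqref{E:mathfrakENg0*integral} of size $C\suptotalenergy{N}(0)\leq CC_*^{-1}\epsilon$, and this constant, undecaying factor is absorbed by the dissipative term $-4qH\gzerostarenergy{N}^2$. Precisely, for each $t_1\in[0,T_{\mathrm{boot}})$ we rewrite \eqref{E:mathfrakENg0*integral} on $[t_1,t]$ in the form
\begin{align*}
\gzerostarenergy{N}^2(t)\leq \gzerostarenergy{N}^2(t_1)+\int_{t_1}^t\Bigl[-4qH\gzerostarenergy{N}^2(\tau)+\widetilde\epsilon(\tau)\,\gzerostarenergy{N}(\tau)\Bigr]d\tau,
\end{align*}
with $\widetilde\epsilon(\tau)\eqdef C\suphstarstarenergy{N}(\tau)+Ce^{-qH\tau}\suptotalenergy{N}(\tau)$. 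Applying Lemma \ref{L:integralinequality} with $b(\tau)=4qH$ and the (non-decreasing) majorant of $\widetilde\epsilon$ yields
\begin{align*}
\supgzerostarenergy{N}(t)\leq \supgzerostarenergy{N}(0)+\frac{C}{4qH}\sup_{\tau\in[0,t]}\widetilde\epsilon(\tau)\leq C\suptotalenergy{N}(0).
\end{align*}
Combining all four estimates and invoking Proposition \ref{P:energynormomparison} gives $\suptotalnorm{N}(t)\leq C_0\,\suptotalnorm{N}(0)$ for an absolute constant $C_0=C_0(N,c_1,\eta_{\min})$. Choosing $C_*\geq 2C_0$ improves the bootstrap to $\suptotalnorm{N}(t)\leq \epsilon/2$ on $[0,T_{\mathrm{boot}})$.

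Finally, by continuity this strict improvement forces $T_{\mathrm{boot}}=T_{\max}$. It remains to rule out the four breakdown scenarios in Theorem \ref{T:ContinuationCriterion}. Scenarios (1)--(3) are excluded by the $L^\infty$-bounds \eqref{E:g00upperLinfinity}, \eqref{E:gjkBAvsstandardmetric}, and \eqref{E:mjkpositivedefinite} of Propositions \ref{P:BoostrapConsequences} and \ref{P:Nonlinearities}, which hold on $[0,T_{\max})$ by the uniform smallness of $\suptotalnorm{N}$; scenario (4) is excluded because $\supgnorm{N}$ and $\supfluidnorm{N}$ control $\|g_{\mu\nu}\|_{C_b^1}$, $\|m^{\mu\nu}\|_{C_b^1}$ and $\|\partial_\mu\Phi\|_{L^\infty}$ via the $L^\infty$ estimates of Proposition \ref{P:BoostrapConsequences}, Sobolev embedding, and the bound $N\geq 3$. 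Hence $T_{\max}=\infty$, completing the proof. The main obstacle throughout is the single non-decaying term in \eqref{E:mathfrakENg0*integral}; the entire architecture (the weight exponents in $\gzerostarnorm{N}$ relative to $\hstarstarnorm{N}$, the choice of $q$ in \eqref{E:qdef}, and the sign $-4qH$ produced by the dissipation coefficient $2(q-1)-\eta_{0*}<-4q$) is what allows Lemma \ref{L:integralinequality} to absorb it.
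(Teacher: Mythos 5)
Your overall architecture (bootstrap on $\suptotalnorm{N}\leq\epsilon$, partial decoupling of the $g_{0*}$ energy, Lemma \ref{L:integralinequality} to absorb the dangerous term, continuation criterion at the end) matches the paper's, but there is a genuine gap in the first Gronwall step, and it is not cosmetic. The inequalities \eqref{E:ENintegral}, \eqref{E:mathfrakENg00integral}, \eqref{E:mathfrakENh**integral} have right-hand sides containing $C\int_0^t e^{-qH\tau}\suptotalenergy{N}^2(\tau)\,d\tau$, where $\suptotalenergy{N}$ is the \emph{full} energy, including $\supgzerostarenergy{N}$. Summing the three and applying Lemma \ref{L:Gronwall} to $y^2=\supfluidenergy{N}^2+\supgzerozeroenergy{N}^2+\suphstarstarenergy{N}^2$ therefore does not close: the integrand is not $b(\tau)y^2(\tau)$. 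If you instead bound the integrand by the bootstrap assumption, you get $y^2(t)\leq y^2(0)+C\epsilon^2\int_0^\infty e^{-qH\tau}\,d\tau = y^2(0)+C\epsilon^2/(qH)$, i.e.\ $y(t)\leq y(0)+C'\epsilon$ with a \emph{fixed, not small} constant $C'$ — not the claimed $y(t)\leq C\suptotalenergy{N}(0)$. Feeding this into \eqref{E:mathfrakENg0*integral} via Lemma \ref{L:integralinequality} then gives $\supgzerostarenergy{N}\leq\supgzerostarenergy{N}(0)+C''\epsilon$, again with a non-small $C''$. The net conclusion is $\suptotalnorm{N}(t)\leq C\epsilon$ with $C$ possibly much larger than $1$, which does not improve the bootstrap to $\epsilon/2$, so the continuity argument does not close.

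The missing device is the intermediate time $t_1$. The paper first derives the crude, exponentially growing bound $\suptotalenergy{N}(t)\leq C\mathring{\epsilon}e^{ct}$ (a non-optimal Gronwall on the full coupled system, which is all one can get globally because of the undecayed term $C\suphstarstarenergy{N}\gzerostarenergy{N}$). Then, for $t\geq t_1$, it runs the integral inequalities starting from $t_1$ rather than $0$: the tail $\int_{t_1}^{\infty}e^{-qH\tau}\,d\tau=O(e^{-qHt_1})$ supplies a genuinely small factor, and the data at time $t_1$ are controlled by $C\mathring{\epsilon}e^{ct_1}$. One then chooses $t_1$ so that $Ce^{-qHt_1/2}\leq\tfrac14$ and afterwards chooses $C_*$ (depending on $t_1$) so that $C\mathring{\epsilon}e^{ct_1}\leq\tfrac14\epsilon$. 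Without this two-stage splitting, the non-small constant $C/(qH)$ cannot be beaten. A secondary, more minor omission: you should also strictly improve the second bootstrap assumption \eqref{E:gjproofBAkvsstandardmetric}; this is done by integrating $\|\partial_t h_{jk}\|_{L^\infty}\leq C\epsilon e^{-qHt}$ in time against the hypothesis \eqref{E:mathringgjklowerequivalenttostandardmetric}, not merely by continuity.
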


\begin{proof}
	See Remark \ref{R:ProofsRemark} for some conventions that we use throughout this proof.
	Our proof relies upon a standard bootstrap-style argument that ultimately relies on Theorem \ref{T:ContinuationCriterion};
	i.e., we will make assumptions concerning the size of the energies and concerning $g_{\mu \nu}$
	on a spacetime slab of the form $[0,T) \times \mathbb{T}^3,$ and we will use these assumptions, together with assumptions on 
	the data, to deduce an improvement on the same slab. In effect, we will avoid the four breakdown possibilities of Theorem 
	\ref{T:ContinuationCriterion}, which will allow us to conclude global existence.
	
	Let $\mathring{\epsilon} \eqdef \suptotalnorm{N}(0),$ and let $\eta_{min}$ be the constant defined in \eqref{E:etamindef}.
	We are assuming that $\mathring{\epsilon} \leq C_*^{-1} \epsilon,$ where $C_*$ and $\epsilon$ will be adjusted throughout the 
	proof. To begin the detailed analysis, we first note that if $\epsilon$ is 
	sufficiently small and $C_*$ is sufficiently large, then by \eqref{E:g00plusonesmall} and
	\eqref{E:mjkpositivedefinite}, it follows that all of the hypotheses of
	Theorem \ref{T:LocalExistence} are satisfied. Therefore, there is a local solution $(g, \partial \Phi)$ existing on a 
	(non-trivial) \emph{maximal interval} $[0,T)$ on which the following bootstrap assumptions hold:
	
	\begin{subequations}
	\begin{align} 
		\suptotalnorm{N}(t) & \leq \epsilon, && \label{E:proofbootstrapQN} \\
		c_1^{-1} \delta_{ab}X^{a}X^{b} & \leq e^{-2 \Omega} g_{ab} X^{a}X^{b} 
			\leq c_1 \delta_{ab}X^{a}X^{b}, && \forall (X^1,X^2,X^3) \in \mathbb{R}^3. 
			\label{E:gjproofBAkvsstandardmetric} 
	\end{align}
	\end{subequations}
	We remark that if $\epsilon$ is sufficiently small, then \eqref{E:proofbootstrapQN} -  
	\eqref{E:gjproofBAkvsstandardmetric} imply that the rough bootstrap assumptions 
	\eqref{E:metricBAeta} - \eqref{E:g0jBALinfinity}, with $\eta = \eta_{min}$ in \eqref{E:metricBAeta},
	hold on $[0,T) \times \mathbb{T}^3;$ these assumptions were used the proofs of the two propositions
	in Section \ref{S:BootstrapConsequences}.  
	
	By maximal interval, we mean that 
	\begin{align}
		T \eqdef \sup \big\lbrace t \geq 0 \ | \ \mbox{The solution exists classically on} \ [0,t) \times \mathbb{T}^3, \ 
		\mbox{and} \ \eqref{E:proofbootstrapQN} - \eqref{E:gjproofBAkvsstandardmetric} \ \mbox{hold} \big\rbrace.
	\end{align}
	The remainder of this proof is dedicated to showing that 
	$T = \infty$ if $\epsilon$ is small enough and $C_*$ is large enough. Throughout the proof, we 
	assume that $\epsilon$ is small enough so that Propositions \ref{P:BoostrapConsequences}, 
	\ref{P:Nonlinearities}, \ref{P:energynormomparison}, and \ref{P:IntegralEnergyInequalities} are valid on $[0,T).$ We will 
	make repeated use of Proposition \ref{P:energynormomparison} throughout this proof without explicitly mentioning it each time.

	We now address the bootstrap assumption \eqref{E:gjproofBAkvsstandardmetric},
	with the intent of showing an improvement. First, we note that the assumption $\suptotalnorm{N} 
	\leq \epsilon$ implies that
	\begin{align} \label{E:partialthnksmall}
		\| \partial_t (e^{-2 \Omega}g_{jk}) \|_{L^{\infty}}= \| \partial_t h_{jk} \|_{L^{\infty}} & \leq C \epsilon e^{-qHt}.
	\end{align}
	We then use \eqref{E:partialthnksmall} to integrate in time from $t=0,$ concluding that
	
	\begin{align} \label{E:partialthnksmallintegrated}
		\| e^{-2 \Omega} g_{jk}(t,\cdot) - \mathring{g}_{jk}(\cdot) \|_{L^{\infty}}
		\leq C \epsilon. 
	\end{align}
	By \eqref{E:mathringgjklowerequivalenttostandardmetric} and \eqref{E:partialthnksmallintegrated}, 
	it follows that if $\epsilon$ is small enough, then on $[0,T) \times \mathbb{T}^3,$ we have that
	
	\begin{align} \label{E:gabequivalentbootstrapimprovement}
		\frac{3}{2c_1} \delta_{ab} X^a X^b \leq e^{-2 \Omega} g_{ab}X^a X^b \leq \frac{2c_1}{3} \delta_{ab} X^a X^b,
			&& \forall (X^1,X^2,X^3) \in \mathbb{R}^3.
	\end{align}
	Inequality \eqref{E:gabequivalentbootstrapimprovement}
	shows that if $\epsilon$ is small enough, then the bootstrap assumption \eqref{E:gjproofBAkvsstandardmetric} can be 
	\emph{strictly improved} on the interval $[0,T).$

	To complete our proof of the theorem, we will show that if $\epsilon$ is small enough and $C_*$ is large enough, then the 
	bootstrap assumption \eqref{E:proofbootstrapQN} can be improved by replacing $\epsilon$ with $\epsilon/2;$ the primary tool 
	for deducing an improvement is of course Proposition \ref{P:IntegralEnergyInequalities}. To begin our proof of an improvement 
	of \eqref{E:proofbootstrapQN}, we use a very non-optimal application of Proposition 
	\ref{P:IntegralEnergyInequalities} with $t_1 = 0,$ deducing that on $[0,T),$ we have that
	
	\begin{align} \label{E:mathcalQNGronwallready}
		\suptotalenergy{N}(t) & \leq \suptotalenergy{N}(0) + \int_{\tau=0}^{t} c \suptotalenergy{N}(\tau) \, d\tau.
	\end{align}
	Applying Lemma \ref{L:Gronwall} (Gronwall's inequality) to \eqref{E:mathcalQNGronwallready}, using
	$\suptotalnorm{N}(0) = \mathring{\epsilon},$ and using Proposition \ref{P:energynormomparison}, we 
	conclude that the following preliminary inequalities hold on $[0,T):$
	
	\begin{align} 
		\suptotalnorm{N}(t) & \leq C \mathring{\epsilon} e^{ct}, \label{E:QNweakenergyinequality} \\
		\suptotalenergy{N}(t) & \leq C \mathring{\epsilon} e^{ct}. \label{E:mathcalQNweakenergyinequality}
	\end{align}
	
	\begin{remark} \label{R:weaklifespan}
		By modifying the argument in the last paragraph of this proof, Theorem \ref{T:ContinuationCriterion} and inequality 
		\eqref{E:QNweakenergyinequality} can be used to deduce that the time 
		of existence is at least of order $c^{-1} |\mbox{ln}(C \mathring{\epsilon})|,$ if $\mathring{\epsilon}$ is sufficiently 
		small.
	\end{remark}

	We now fix a time $t_1 \in [0,T);$ $t_1$ will be adjusted at the end of the proof. Roughly speaking, it will play the role of
	a time that is large enough so that the exponentially damped terms on the right-hand sides of the inequalities of
	Proposition \ref{P:IntegralEnergyInequalities} are of size $\ll \epsilon.$ To estimate $\supfluidenergy{N}(t)$ on $[t_1,T),$ 
	we simply use \eqref{E:proofbootstrapQN} and \eqref{E:mathcalQNweakenergyinequality} to estimate the two terms on the 
	right-hand side of \eqref{E:ENintegral}:
	
	\begin{align} \label{E:ENdifferentialtwoterms}
		\supfluidenergy{N}^2(t) \leq \supfluidenergy{N}^2(t_1) + C \epsilon^2 \int_{\tau=t_1}^{t} e^{-q H \tau}  \, d\tau
			\leq C \lbrace \mathring{\epsilon} e^{ct_1} + \epsilon e^{-qHt_1/2} \rbrace^2. 
	\end{align}
	Using \eqref{E:mathcalQNweakenergyinequality} again to estimate $\supfluidenergy{N}(t)$ on $[0,t_1],$ 
	we thus conclude that the following inequality is valid on $[0,T):$
	
	\begin{align} \label{E:ENGronwall}
		\supfluidenergy{N}(t) & \leq C \lbrace \mathring{\epsilon} e^{ct_1} + \epsilon e^{-qHt_1/2} \rbrace.
	\end{align}
	
	Applying similar reasoning to inequalities \eqref{E:mathfrakENg00integral} and
	\eqref{E:mathfrakENh**integral}, we also have the following inequalities on $[0,T):$

	\begin{align} 
		\supgzerozeroenergy{N}(t) & \leq C \lbrace \mathring{\epsilon} e^{ct_1} + \epsilon e^{-qHt_1 /2} \rbrace, 
			\label{E:mathfrakEg00Gronwall} \\
		\suphstarstarenergy{N}(t) & \leq C \lbrace \mathring{\epsilon} e^{ct_1} + \epsilon e^{-qHt_1/2} \rbrace.
			\label{E:mathfrakEh**Gronwall}
	\end{align}
	
	To estimate $\supgzerostarenergy{N}(t),$ we use \eqref{E:mathfrakENg0*integral}, the bootstrap assumption 
	\eqref{E:proofbootstrapQN}, and \eqref{E:mathfrakEh**Gronwall} to arrive at the 
	following inequality valid for $t \in [t_1,T)$:
	
	\begin{align} \label{E:mathfrakE0*Gronwallready}
		\gzerostarenergy{N}^2(t) & \leq 
			\gzerostarenergy{N}^2(t_1) 
			+ \int_{\tau = t_1}^t -4qH \gzerostarenergy{N}^2(\tau) + C \lbrace \mathring{\epsilon} e^{ct_1} 
			+ \epsilon e^{-qHt_1 /2} \rbrace \gzerostarenergy{N}(\tau)  \, d \tau. 
	\end{align}
	Applying Lemma \ref{L:integralinequality} to \eqref{E:mathfrakE0*Gronwallready}, with 	
	$y(\tau) = \gzerostarenergy{N}(\tau)$ and 
	$b(\tau) = 4 q H$ in the lemma, and also using
	\eqref{E:mathcalQNweakenergyinequality}, we conclude that the following inequality holds on $[0,T):$
	
	\begin{align} \label{E:mathfrakE0*Gronwall}
		\supgzerostarenergy{N}(t) & \leq C \lbrace \mathring{\epsilon} e^{ct_1} + \epsilon e^{-qHt_1/2} \rbrace. 
	\end{align}

	Adding  \eqref{E:ENGronwall},
	\eqref{E:mathfrakEg00Gronwall}, \eqref{E:mathfrakEh**Gronwall}, 
	and \eqref{E:mathfrakE0*Gronwall}, referring to definition 
	\eqref{E:totalsupnorm}, and using Proposition \ref{P:energynormomparison}, it follows that on $[0,T),$
	we have
	
	\begin{align} \label{E:QNGronwall}
		\suptotalnorm{N}(t) & \leq C \lbrace \mathring{\epsilon} e^{ct_1} + \epsilon e^{-qHt_1/2} \rbrace.
	\end{align}
	We now choose $t_1$ such that $C e^{-qHt_1/2} \leq \frac{1}{4},$ and $\mathring{\epsilon}$ such that
	$C \mathring{\epsilon} e^{c t_1} \leq \frac{1}{4} \epsilon,$ which implies that the following inequality holds
	on $[0,T):$ 
	
	\begin{align}
		\suptotalnorm{N}(t) & \leq \frac{1}{2} \epsilon. \label{E:QNbootstrapimprovement}
	\end{align}
	We remark that in order to guarantee that the solution exists long enough (i.e. that $T$ is large enough) 
	so that $t_1 \in [0,T),$ we may have to further shrink $\epsilon_0;$ see Remark \ref{R:weaklifespan}. We also
	remark that the constant $C_*$ from the conclusions of the theorem can be chosen to be $4 C e^{c t_1},$
	where $C$ is from the right-hand side of \eqref{E:QNGronwall}.
	
	We now claim that $T = \infty.$ We argue by contradiction, assuming that $T < \infty.$ Then by
	combining \eqref{E:gabequivalentbootstrapimprovement} and \eqref{E:QNbootstrapimprovement},
	it follows that none of the four existence-breakdown scenarios stated in the conclusions of Theorem 
	\ref{T:ContinuationCriterion} occur: $(1)$ is ruled out by the Sobolev embedding result 
	$\|g_{00} + 1 \|_{L^{\infty}} \leq C e^{-q \Omega} \suptotalnorm{N};$ $(2)$ is ruled out by 
	\eqref{E:gabequivalentbootstrapimprovement};
	$(3)$ is ruled out by \eqref{E:mjkpositivedefinite}; and $(4)$ is ruled out by the Sobolev embedding results 
	$\|g_{00} + 1 \|_{C_b^1} + \|\partial_t g_{00} \|_{L^{\infty}} \leq C e^{-q \Omega} \suptotalnorm{N},$ 
	$ \sum_{j=1}^3 \big( \|g_{0j}\|_{C_b^1} + \|\partial_t g_{0j} \|_{L^{\infty}}\big) \leq 
	C e^{(1-q)\Omega}\suptotalnorm{N},$ $\sum_{j,k=1}^3 \| \bard g_{jk} \|_{L^{\infty}} \leq C e^{2 \Omega}\suptotalnorm{N},$
	$\|\bard \Phi \|_{L^{\infty}} \leq C e^{(1 - \decayparameter)\Omega} \suptotalnorm{N},$ together with 
	inequalities \eqref{E:gjklowerLinfinity}, \eqref{E:partialtgjklowerC1}, \eqref{E:partialtPhiLinfinity},
	\eqref{E:m0jC1}, \eqref{E:mjkLinfinity}, \eqref{E:partialmjkHNLinfinity},
	\eqref{E:partialtm0jLinfinity}, and \eqref{E:partialtmjkLinfinity}. Also using the continuity of $\suptotalnorm{N}(t),$ it 
	thus follows from Theorem \ref{T:ContinuationCriterion} that there exists a $\delta > 0$ such that the solution can be 
	extended to the interval $[0, T + \delta)$ on which the bootstrap assumptions \eqref{E:proofbootstrapQN} - 
	\eqref{E:gjproofBAkvsstandardmetric} hold. This contradicts the definition of $T,$ which shows that $T$ is not finite.

\end{proof}

\subsection{On the breakdown of the proof for \texorpdfstring{$\speed \geq 1/3$ (i.e. $s \leq 1, \decayparameter \geq 1$)}{}} \label{SS:Breakdown}
In this short section, we give a brief example of how our proof breaks down when $\speed \geq 1/3.$ If $\decayparameter \geq 1,$ we cannot use our previous reasoning to bound the following term, which is the last term on the right-hand side of 
\eqref{E:fluidenergytimederivative}:
\begin{align} \label{E:abadterm}
	\frac{1}{2} \sum_{|\vec{\alpha}| \leq N} \int_{\mathbb{T}^3} 
	e^{2\decayparameter \Omega}(\partial_t m^{ab} + 2 \decayparameter \omega m^{ab})(\partial_a \partial_{\vec{\alpha}} \Phi)
	(\partial_b \partial_{\vec{\alpha}} \Phi) \,d^3 x.
\end{align}

Previously, we had split this term into two pieces, one of which is 
$(\decayparameter - 1)\omega \sum_{|\vec{\alpha}| \leq N} \int_{\mathbb{T}^3} m^{ab}(\partial_a \partial_{\vec{\alpha}} \Phi) (\partial_b \partial_{\vec{\alpha}} \Phi) \,d^3 x,$ which, as is explained in our proof of Proposition \ref{P:IntegralEnergyInequalities}, could be 
discarded from the energy inequality \eqref{E:ENintegral} because it is non-positive when $\decayparameter < 1.$ Obviously, we can no longer discard this
term when $\decayparameter > 1.$ Furthermore, even in the case $\decayparameter = 1,$ inequality \eqref{E:partialtmjkplusomegamjkLinfinity} must be replaced with $\| \partial_t m^{jk} + 2\omega m^{jk} \|_{L^{\infty}} \leq C \lbrace e^{-2\Omega} \suptotalnorm{N}^2 + \mbox{Positive terms} \rbrace.$ Ultimately, this fact can be traced to the fact that the $L^{\infty}$ estimate for $z_j$ from \eqref{E:partialvecalphazjLinfinity} must be replaced with $\| z_j \|_{L^{\infty}} \leq C \suptotalnorm{N}.$ 

Using Proposition \ref{P:energynormomparison}, it can be shown that the net result in both the case $\decayparameter = 1$ and the case $\decayparameter > 1$ is that the term \eqref{E:abadterm} leads to (as in our proof of Proposition \ref{P:IntegralEnergyInequalities}) an integral inequality of the form 
\begin{align} \label{E:breakdown}
	\supfluidenergy{N}^2(t) \leq \supfluidenergy{N}^2(t_1) + \mbox{positive terms} +  C \epsilon \int_{\tau = t_1}^t 
	\supfluidenergy{N}^2(\tau)\, d \tau.
\end{align}
Inequality \eqref{E:breakdown} allows for the possibility of growth of $\supfluidenergy{N}(t);$
i.e., unlike in the cases $0 < \speed^2 < 1/3,$ there is no $e^{-qH\tau}$ factor with $q>0$ in the integrand. Therefore, 
this inequality provides no means to improve the bootstrap assumption $\suptotalnorm{N}(t) \leq \epsilon.$

\subsection{Future geodesic completeness}

In this section, we prove our second main theorem, which provides criteria for the initial data under which the global solutions
provided by Theorem \ref{T:GlobalExistence} are future geodesically complete. The theorem and its proof are based on Propositions 3 and 4 of \cite{hR2008}.  

\begin{theorem} \label{T:FutureComplete} \textbf{(Future geodesic completeness)}
Let $N \geq 3$ be an integer, and assume that $0 < \speed < \sqrt{1/3},$ where $\speed$ denotes the speed of sound.
Let $([0,\infty) \times \mathbb{T}^3,\widetilde{g}_{\mu \nu}, \partial_{\mu} \widetilde{\Phi}),$ $(\mu, \nu = 0,1,2,3),$ be one of the FLRW background solutions described in Section \ref{S:backgroundsolution}. Note that the background is a solution to both the un-modified system \eqref{E:EulerEinsteinIrrotational} - \eqref{E:irrotationalfluid} and the modified Euler-Einstein system \eqref{E:finalg00equation} - \eqref{E:finalfluidequation}. Let $(\mathring{g}_{\mu \nu}, \mathring{K}_{\mu \nu}, \bard \mathring{\Phi}, \mathring{\Psi}),$ $(\mu, \nu = 0,1,2,3),$ be initial data for the modified Euler-Einstein system on the manifold $\mathbb{T}^3$ that are constructed from initial data $(\mathbb{T}^3, \bar{g}_{jk}, \bar{K}_{jk}, \bard \mathring{\Phi}, \mathring{\Psi}),$ $(j,k=1,2,3),$ for the un-modified system \eqref{E:EulerEinsteinIrrotational} - \eqref{E:irrotationalfluid} (and which are viewed as a perturbation of the data for the FLRW solution) as described in Section \ref{SSS:InitialDataOriginalSystem}. Let $\suptotalnorm{N}$ be the norm defined in \eqref{E:totalsupnorm}, and let $\supgnorm{N},$ a norm for the metric components, be the quantity defined in \eqref{E:totalsupgnorm}. Assume that the data 
for the modified system are near that of the FLRW solution
in the sense that $\suptotalnorm{N}(0) \leq C_*^{-1} \epsilon_0,$
where $\epsilon_0$ and $C_*$ are the constants from the conclusion of Theorem \ref{T:GlobalExistence}
(note that $\suptotalnorm{N} \equiv 0$ for the FLRW solution). Also assume that the perturbed data satisfy the inequality \eqref{E:mathringgjklowerequivalenttostandardmetric}, so that all of the hypotheses of Theorem \ref{T:GlobalExistence} are satisfied. Let $(\mathcal{M} \eqdef [0,\infty) \times \mathbb{T}^3,g_{\mu \nu}, \partial_{\mu} \Phi)$ be the global (to the future) solution (which by Proposition \ref{P:Preservationofgauge} is a solution to both the un-modified system and the modified system) guaranteed by Theorem \ref{T:GlobalExistence}, and let $\gamma$ be a future-directed causal curve in $\mathcal{M}$ with domain $[s_0, s_{max})$ such that $\gamma(s_0) = 0.$ Let $\gamma^{\mu}$ denote the coordinates of this curve in the universal covering space of the spacetime (i.e. $[0,\infty) \times \mathbb{R}^3).$ Then there exist constants $C > 0$ and $\epsilon_1,$ where $0 < \epsilon_1 \leq \epsilon_0,$ such that if $\supgnorm{N}(t) \leq \epsilon_1$ for all $t \geq 0,$ then $\dot{\gamma}^0 > 0,$ and furthermore, the length of the spatial part of the curve as measured by the metric $\mathring{g}_{jk} = \bar{g}_{jk}$ satisfies 

\begin{align} \label{E:lengthestimate}
	\int_{s_0}^{s_{max}} \sqrt{\bar{g}_{ab}(\pi \circ \gamma) \dot{\gamma}^a \dot{\gamma}^b} \, ds \leq C,
\end{align}
where $\pi$ denotes projection onto spatial indices; i.e., $\pi^j \circ{\gamma} \eqdef \gamma^j.$ We remark that the constants
$C$ and $\epsilon_1$ \textbf{can be chosen to be independent of} $\gamma.$ Additionally, if $\gamma$ is future-inextendible, then $\gamma^{0}(s) \uparrow \infty$ as $s \uparrow s_{max}.$ Finally, the spacetime $(\mathcal{M},g_{\mu \nu})$ is future causally geodesically complete. 

\end{theorem}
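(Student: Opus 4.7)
The plan is to exploit the exponential expansion encoded in $\Omega(t)\sim Ht$, together with the $L^{\infty}$ smallness of $g_{\mu\nu}-\widetilde{g}_{\mu\nu}$ that follows from $\supgnorm{N}\le\varepsilon_{1}$ via Sobolev embedding (and Proposition~\ref{P:BoostrapConsequences}), to control causal curves. The starting point is the causal inequality $g_{\alpha\beta}\dot{\gamma}^{\alpha}\dot{\gamma}^{\beta}\le 0$. Writing it out and substituting the bounds $|g_{00}+1|\le C\varepsilon_{1}e^{-q\Omega}$, $|g_{0j}|\le C\varepsilon_{1}e^{(1-q)\Omega}$, and $|e^{-2\Omega}g_{jk}-\bar{g}_{jk}|\le C\varepsilon_{1}$ (this last one from \eqref{E:partialthnksmallintegrated} in the proof of Theorem~\ref{T:GlobalExistence}), I would complete the square on the quadratic form $g_{\alpha\beta}\dot{\gamma}^{\alpha}\dot{\gamma}^{\beta}$ to obtain, for $\varepsilon_{1}$ small,
\begin{equation*}
    (1-C\varepsilon_{1})(\dot{\gamma}^{0})^{2}\ge (1-C\varepsilon_{1})\,e^{2\Omega}\bar{g}_{ab}\dot{\gamma}^{a}\dot{\gamma}^{b}.
\end{equation*}
Combined with $\dot{\gamma}^{0}>0$ (which follows because the future-directed condition $g_{0\alpha}\dot{\gamma}^{\alpha}<0$ and the smallness of $g_{0j}$ force $\dot{\gamma}^{0}>0$), this yields the key pointwise bound $\sqrt{\bar{g}_{ab}\dot{\gamma}^{a}\dot{\gamma}^{b}}\le C e^{-\Omega(\gamma^{0})}\dot{\gamma}^{0}$.

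For the length bound \eqref{E:lengthestimate}, I would reparametrize by $t=\gamma^{0}(s)$ (legitimate since $\dot{\gamma}^{0}>0$) and compute
\begin{equation*}
    \int_{s_{0}}^{s_{\max}}\sqrt{\bar{g}_{ab}\dot{\gamma}^{a}\dot{\gamma}^{b}}\,ds
    =\int_{0}^{\gamma^{0}(s_{\max})}\sqrt{\bar{g}_{ab}\tfrac{d\gamma^{a}}{dt}\tfrac{d\gamma^{b}}{dt}}\,dt
    \le C\int_{0}^{\infty}e^{-\Omega(t)}\,dt<\infty,
\end{equation*}
where the last integral is finite by Lemma~\ref{L:backgroundaoftestimate} since $\Omega(t)\sim Ht$. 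For the statement that $\gamma^{0}(s)\uparrow\infty$ when $\gamma$ is future-inextendible, I would argue by contradiction: if $\gamma^{0}(s)\le T^{*}<\infty$, then the projected curve $\pi\circ\gamma$ stays in a compact ball of $\mathbb{R}^{3}$ (by the length estimate just proved, applied in the universal cover), so $(\gamma^{0},\pi\circ\gamma)$ lies in a compact set on which $g$ is smooth and uniformly non-degenerate; a standard ODE/continuation argument then extends $\gamma$ past $s_{\max}$, contradicting inextendibility.

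For future causal geodesic completeness, let $\gamma$ be an affinely parametrized future-directed causal geodesic with maximal domain $[s_{0},s_{\max})$. For timelike geodesics normalized by $g_{\alpha\beta}\dot{\gamma}^{\alpha}\dot{\gamma}^{\beta}=-1$, the inequality from the first paragraph immediately yields $\dot{\gamma}^{0}\ge c>0$, so $\gamma^{0}(s)\ge c(s-s_{0})$, and combined with the inextendibility argument above this forces $s_{\max}=\infty$. For null geodesics, I would analyze the geodesic equation $\ddot{\gamma}^{0}=-\Gamma^{0}{}_{\alpha\beta}\dot{\gamma}^{\alpha}\dot{\gamma}^{\beta}$. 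Extracting the principal contribution via Lemma~\ref{L:raisedChristoffeldecomposition} (with one index lowered, so the dominant term is $\Gamma^{0}{}_{jk}=\omega e^{2\Omega}\delta_{jk}+\text{error}$) and using the null relation $(\dot{\gamma}^{0})^{2}\approx e^{2\Omega}\bar{g}_{ab}\dot{\gamma}^{a}\dot{\gamma}^{b}$, I get the ODE
\begin{equation*}
    \frac{d}{ds}\!\left(\frac{1}{\dot{\gamma}^{0}}\right)=\omega(\gamma^{0})+O(\varepsilon_{1})e^{-q\Omega(\gamma^{0})},
\end{equation*}
from which $1/\dot{\gamma}^{0}$ grows linearly in $s$ (with rate approaching $H$) and hence $\gamma^{0}(s)\ge (1/H)\log(1+cHs)$. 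In particular $\dot{\gamma}^{0}$ never blows up and $\gamma^{0}\to\infty$ as $s\to\infty$, so by the inextendibility argument $s_{\max}=\infty$.

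The hard part will be the null case: one must control the error Christoffel terms along the curve (not just in spacetime $L^{\infty}$) well enough to justify the Bernoulli-type reduction, and then invert the ODE carefully to extract a quantitative lower bound on the affine lifetime. The spacetime estimates of Proposition~\ref{P:BoostrapConsequences} provide uniform pointwise bounds on the $\Gamma^{\mu}{}_{\alpha\beta}$ with the correct $e^{\alpha\Omega}$ weights, so the algebra should close; the delicate point is ensuring that the $O(\varepsilon_{1})e^{-q\Omega}$ error does not swamp the principal term $\omega\sim H$ at any finite $s$, which follows because $\omega\ge H>0$ is bounded away from zero by Lemma~\ref{L:backgroundaoftestimate}.
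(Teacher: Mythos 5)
Your treatment of the positivity of $\dot{\gamma}^0$, the length estimate \eqref{E:lengthestimate}, and the divergence of $\gamma^0$ along inextendible curves tracks the paper's argument closely and is essentially correct (one small caveat: for a general causal curve there is no ODE to continue, so the contradiction at the end of your third paragraph should instead be that convergence of all four coordinates $\gamma^{\mu}(s)$ means $\lim_{s\uparrow s_{max}}\gamma(s)$ exists, directly contradicting the definition of future-inextendibility).

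The genuine gap is in the geodesic completeness step, specifically the timelike case. You deduce $\dot{\gamma}^0 \geq c > 0$ from the normalization and claim this, ``combined with the inextendibility argument,'' forces $s_{max}=\infty$. It does not: a \emph{lower} bound on $\dot{\gamma}^0$ is the wrong direction. The danger in an exponentially expanding spacetime is not that the geodesic stalls, but that $\gamma^0$ races to $+\infty$ in finite affine parameter, which is perfectly consistent with $\gamma^0(s)\geq c(s-s_0)$ and would constitute incompleteness. What is needed is an \emph{upper} bound on $\dot{\gamma}^0$, and obtaining it is the entire content of the paper's argument: one writes $\ddot{\gamma}^{0} = -\Gamma_{0\ 0}^{\ 0}(\dot{\gamma}^0)^2 - 2\Gamma_{0\ a}^{\ 0}\dot{\gamma}^0\dot{\gamma}^a - \Gamma_{a\ b}^{\ 0}\dot{\gamma}^a\dot{\gamma}^b$ and observes that the principal part of the last term, $\omega g_{ab}\dot{\gamma}^a\dot{\gamma}^b \geq 0$, enters $\ddot{\gamma}^0$ with a \emph{favorable} sign and can simply be discarded, leaving only remainders bounded by $C\epsilon_1 e^{-qH\gamma^0}(\dot{\gamma}^0)^2$ via the Christoffel estimates $|\Gamma_{0\ 0}^{\ 0}|\leq C\epsilon_1 e^{-qHt}$, $|\Gamma_{0\ j}^{\ 0}|\leq C\epsilon_1 e^{(1-q)Ht}$, $|\Gamma_{j\ k}^{\ 0}-\omega g_{jk}|\leq C\epsilon_1 e^{(2-q)Ht}$. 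Dividing by $\dot{\gamma}^0>0$ and integrating, the right-hand side is a total derivative of $-\tfrac{C\epsilon_1}{qH}e^{-qH\gamma^0}$, so $\ln\dot{\gamma}^0$ is bounded, hence $\dot{\gamma}^0\leq C$ and $\gamma^0$ grows at most linearly in $s$; since $\gamma^0\uparrow\infty$ as $s\uparrow s_{max}$, this forces $s_{max}=\infty$ for \emph{all} causal geodesics at once. Your null-geodesic Bernoulli reduction $\tfrac{d}{ds}(1/\dot{\gamma}^0)=\omega+O(\epsilon_1)e^{-q\Omega}$ is a sharper version of this for the null case and could be made to work, but it is unnecessary, it does not cover the timelike case, and as you concede it requires error control you have not supplied; the one-sided ``drop the good term'' argument avoids all of this.
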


\begin{remark}
	Our smallness assumption concerns $\supgnorm{N}$ because this is the quantity 
	that controls $g_{\mu \nu};$ i.e., the influence of the fluid plays no direct role in the conclusions of this theorem.
\end{remark}

\begin{proof}
	See Remark \ref{R:ProofsRemark} for some conventions that we use throughout this proof,
	but as noted above in the statement of the theorem, the smallness assumption on $\suptotalnorm{N}$ mentioned 
	in the remark can be replaced with a smallness assumption on $\supgnorm{N}.$ We also note that $\gamma^0(s)$ can be 
	identified with the spacetime coordinate $t.$ 
	
	We begin by showing that $\dot{\gamma}^0 > 0,$ where $\cdot \eqdef \frac{d}{ds}.$ Since $\gamma$ is causal and 
	future-directed, and since $\partial_t$ is future-directed and timelike, we have that
	\begin{align}
		g_{\alpha \beta} \dot{\gamma}^{\alpha} \dot{\gamma}^{\beta} & \leq 0, \label{E:causal} \\
		g_{0 \alpha} \dot{\gamma}^{\alpha} & < 0. \label{E:futuredirected}
	\end{align}
	
	Our first goal is to prove that if $\epsilon_1$ is small enough, then the following estimates hold:
	\begin{align} 
		g_{ab} \dot{\gamma}^{a} \dot{\gamma}^{b} & \leq C_{\eta} (\dot{\gamma}^0)^2, \label{E:gabdotgammaadotgammabvsdotgamma0} \\
		\delta_{ab} \dot{\gamma}^{a} \dot{\gamma}^{b} & \leq C e^{-2 \Omega} (\dot{\gamma}^0)^2, 
			\label{E:deltaabdotgammaadotgammabvsdotgamma0}
	\end{align}
	where $\eta \eqdef C \epsilon_1$ is from the right-hand side of 
	\eqref{E:metricBAeta}. To this end, we use the Cauchy-Schwarz inequality, \eqref{E:g0jBALinfinity}, 
	the Cauchy-Schwarz inequality again, and \eqref{E:gjkBAvsstandardmetric} to conclude that

	\begin{align} \label{E:g0adotgamma0firstestimate}
		|2g_{0a} \dot{\gamma}^0 \dot{\gamma}^{a}| & \leq \eta^{1/2} (\dot{\gamma}^0)^2 + 
			\eta^{-1/2} |g_{0a} \dot{\gamma}^a|^2 \leq \eta^{1/2} (\dot{\gamma}^0)^2 
			+ c_1^{-1} \eta^{1/2} e^{2(1-q)\Omega} \delta_{ab} \dot{\gamma}^a \dot{\gamma}^b \\
		& \leq \eta^{1/2} (\dot{\gamma}^0)^2 
			+ \eta^{1/2} e^{-2q \Omega} g_{ab} \dot{\gamma}^a \dot{\gamma}^b. \notag
	\end{align}
	Combining \eqref{E:causal}, \eqref{E:g0adotgamma0firstestimate}, and the estimate
	$- g_{00} < 1 + \eta,$ which follows from \eqref{E:metricBAeta}, we have that
	\begin{align} \label{E:gabdotgammaadotgammabvsdotgamma0firstestimate}
		g_{ab} \dot{\gamma}^{a} \dot{\gamma}^{b} \leq - g_{00} (\dot{\gamma}^0)^2 + |2g_{0a} \dot{\gamma}^0 \dot{\gamma}^{a}|
			\leq  (1 + \eta) (\dot{\gamma}^0)^2 + \eta^{1/2} (\dot{\gamma}^0)^2 
			+ \eta^{1/2} e^{-2q \Omega} g_{ab} \dot{\gamma}^a \dot{\gamma}^b.
	\end{align}
	It now easily follows from \eqref{E:gabdotgammaadotgammabvsdotgamma0firstestimate} that 
	if $\epsilon_1$ is small enough, then there exists a constant $C_{\eta} > 0$ such that 
	\eqref{E:gabdotgammaadotgammabvsdotgamma0} holds. Inequality \eqref{E:deltaabdotgammaadotgammabvsdotgamma0}
	then follows from \eqref{E:gjproofBAkvsstandardmetric} and \eqref{E:gabdotgammaadotgammabvsdotgamma0}.
	
	We now claim that $|\dot{\gamma}^{0}| > 0.$ For if $|\dot{\gamma}^{0}| = 0,$ then inequality 
	\eqref{E:deltaabdotgammaadotgammabvsdotgamma0} shows that $\sum_{\alpha = 0}^3 |\dot{\gamma}^{\alpha}| = 0,$
	which contradicts \eqref{E:futuredirected}. We may therefore divide each side of \eqref{E:g0adotgamma0firstestimate} by 
	$|\dot{\gamma}^0| > 0$ and use \eqref{E:gabdotgammaadotgammabvsdotgamma0} to arrive at the following inequality:
	
	\begin{align} \label{E:g0agammaainequality}
		|g_{0a} \dot{\gamma}^a| \leq \eta^{1/2}(1 + C_{\eta} e^{-2q \Omega}) |\dot{\gamma}^0|.
	\end{align}
	Using \eqref{E:g0agammaainequality} and the estimate $|g_{00}| > 1 - \eta,$ we conclude that if $\epsilon_1$ is
	sufficiently small, then
	
	\begin{align} \label{E:sgnequality}
		\mbox{sgn}(g_{0 \alpha} \dot{\gamma}^{\alpha}) = \mbox{sgn} (g_{00} \dot{\gamma}^0),
	\end{align}
	where $\mbox{sgn}(y) = 1$ if $y > 0,$ and $\mbox{sgn}(y) = -1$ if $y < 0.$ Since \eqref{E:futuredirected} implies that the 
	left-hand side of \eqref{E:sgnequality} is negative, and since $g_{00} < 0,$ we conclude that $\dot{\gamma}^0 > 0.$ 
	
	We now show the estimate \eqref{E:lengthestimate}. First, by the hypothesis 
	\eqref{E:mathringgjklowerequivalenttostandardmetric}, the fact that $\bar{g}_{jk} = \mathring{g}_{jk},$
	and the estimate \eqref{E:gjkBAvsstandardmetric}, it follows that
	
	\begin{align} \label{E:bargjkestimate}
		\bar{g}_{ab} \dot{\gamma}^a \dot{\gamma}^b \leq \frac{c_1}{2} \delta_{ab} \dot{\gamma}^a \dot{\gamma}^b 
			\leq \frac{c_1^2}{2} e^{-2 \Omega} g_{ab} \dot{\gamma}^a \dot{\gamma}^b.
	\end{align}
	Integrating the Square root of each side of \eqref{E:bargjkestimate} from $s_0$ to $s_{max},$ 
	recalling that $e^{-\Omega(\gamma^{0}(s))} \leq C e^{-H \gamma^{0}(s)},$
	and using \eqref{E:gabdotgammaadotgammabvsdotgamma0}, we have that
	
	\begin{align}
		\int_{s_0}^{s_{max}} \sqrt{\bar{g}_{ab}({\pi \circ \gamma}) \dot{\gamma}^a \dot{\gamma}^b} \, ds
			\leq \int_{s_0}^{s_{max}} C e^{- H \gamma^0} \dot{\gamma}^0 \, ds 
			= - \frac{C}{H} \int_{s_0}^{s_{max}} \frac{d}{ds} e^{- H \gamma^0} \, ds \leq C,
	\end{align}
	which proves \eqref{E:lengthestimate}.
	
	We now show that $\gamma^0$ converges to infinity as $s \uparrow s_{max}$ if $\gamma$ is a causal future-inextendible
	curve. Since $\dot{\gamma}^0 > 0,$ it follows that either $\gamma^0$ converges to infinity as $s \uparrow s_{max},$ which 
	is the desired result, or that $\gamma^0(s)$ converges to a finite number. In the latter case, by 
	\eqref{E:gabdotgammaadotgammabvsdotgamma0}, we also conclude 
	 the $\gamma^{j}(s)$ converge to finite numbers as 
	$s \uparrow s_{max},$ which contradicts the definition of future-inextendibility. 
	
	To show that the spacetime is future causally geodesically complete, we consider a future-directed causal
	geodesic $\gamma.$ We recall that the geodesic equations are $\ddot{\gamma}^{\mu} + \Gamma_{\alpha \ \beta}^{\ \mu} 
	\dot{\gamma}^{\alpha} \dot{\gamma}^{\beta} = 0,$ which in the case of $\mu=0$ reads
	
	\begin{align} \label{E:gamma0geodesicequation}
		\ddot{\gamma}^{0} + \Gamma_{\alpha \ \beta}^{\ 0} \dot{\gamma}^{\alpha} \dot{\gamma}^{\beta} = 0.
	\end{align}
	To analyze equation \eqref{E:gamma0geodesicequation}, we will use the following estimates for the Christoffel symbols:
	
	\begin{align}
		|\Gamma_{0 \ 0}^{\ 0}| & \leq C \epsilon_1 e^{-qHt}, \label{E:Gamma0down0up0downLinfinity} \\
		|\Gamma_{0 \ j}^{\ 0}| &  \leq C \epsilon_1 e^{(1-q)Ht}, \label{E:Gamma0down0upjdownLinfinity} \\
		|\Gamma_{j \ k}^{\ 0} - \omega g_{jk}| & \leq C \epsilon_1 e^{(2-q)Ht}. \label{E:Gammajdown0upkdownLinfinity}
	\end{align}
	We will prove the estimate \eqref{E:Gammajdown0upkdownLinfinity}; the estimates \eqref{E:Gamma0down0up0downLinfinity} - 
	\eqref{E:Gamma0down0upjdownLinfinity} can be shown similarly. To begin, we use the definition \eqref{E:EMBIChristoffeldef}
	of $\Gamma_{j \ k}^{\ 0}$ and the triangle inequality to obtain the following estimate:
		
		\begin{align} \label{E:Gammajdown0upkdowntriangleinequality}
			|\Gamma_{j \ k}^{\ 0} - \omega g_{jk}| & \leq 
				\frac{1}{2} |g^{00}||\partial_{j} g_{0k}
			+ \partial_{k} g_{0j}| + \frac{1}{2} |g^{00} + 1| |\partial_t g_{jk}|
			+ \frac{1}{2} |\partial_t g_{jk} - 2\omega g_{jk}|
			+ \frac{1}{2} |g^{0 a}||\partial_{j} g_{a k} 
			+ \partial_{k} g_{j a} - \partial_{a} g_{j k}|.
		\end{align}
		Inequality \eqref{E:Gammajdown0upkdownLinfinity} now follows easily from \eqref{E:Gammajdown0upkdowntriangleinequality},
		the assumption $\supgnorm{N} < \epsilon_1,$ Sobolev embedding, \eqref{E:g00upperplusoneLinfinity}, \eqref{E:g0jupperCb1}, 
		\eqref{E:partialtgjkminusomegagjklowerLinfinity}, and \eqref{E:partialtgjklowerC1}.

	We now use \eqref{E:gabdotgammaadotgammabvsdotgamma0}, \eqref{E:deltaabdotgammaadotgammabvsdotgamma0}, and 
	\eqref{E:Gamma0down0up0downLinfinity} - \eqref{E:Gammajdown0upkdownLinfinity}
	to arrive at the following inequality:
	
	\begin{align} \label{E:Christoffelgeodesicinequality}
		|\Gamma_{0 \ 0}^{\ 0}|(\dot{\gamma}^0)^2 + 2 |\Gamma_{0 \ a}^{\ 0}\dot{\gamma}^0 \dot{\gamma}^a|
			+ |\Gamma_{a \ b}^{\ 0} - \omega g_{ab}| \dot{\gamma}^a \dot{\gamma}^b 
			& \leq C \epsilon_1 e^{-qH \gamma^0(s)} (\dot{\gamma}^0)^2,
	\end{align}
	where it is understood that both sides of \eqref{E:Christoffelgeodesicinequality} 
	are evaluated along the curve $\gamma(s).$ Using \eqref{E:gamma0geodesicequation},
	\eqref{E:Christoffelgeodesicinequality}, and the positive-definiteness of the $3 \times 3$ matrix $g_{jk},$ it 
	follows that
	
	\begin{align} \label{E:gamma0ddotestimate}
		\ddot{\gamma}^{0} & = - \Gamma_{0 \ 0}^{\ 0}(\dot{\gamma}^0)^2
			- 2 \Gamma_{0 \ a}^{\ 0}\dot{\gamma}^0 \dot{\gamma}^a
			- \Gamma_{a \ b}^{\ 0} \dot{\gamma}^a \dot{\gamma}^b
			\leq |\Gamma_{0 \ 0}^{\ 0}|(\dot{\gamma}^0)^2 + 2 |\Gamma_{0 \ a}^{\ 0}\dot{\gamma}^0 \dot{\gamma}^a|
			+ |\Gamma_{a \ b}^{\ 0} - \omega g_{ab}| \dot{\gamma}^a \dot{\gamma}^b \\
		& \leq C \epsilon_1 e^{-qH \gamma^0(s)} (\dot{\gamma}^0)^2. \notag 
	\end{align}
	Since we have already shown that $\dot{\gamma}^0$ is positive if $\epsilon_1$ is small enough, we may divide inequality 
	\eqref{E:gamma0ddotestimate} by it and integrate:
	\begin{align}
		\mbox{ln} \bigg( \frac{\dot{\gamma}^0(s)}{\dot{\gamma}^0(s_0)} \bigg) 
			& = \int_{s_0}^s \frac{\ddot{\gamma}^0(s')}{\dot{\gamma}^0(s')}  \, ds' 
			\leq C \epsilon_1 \int_{s_0}^s e^{-qH \gamma^0(s')} \dot{\gamma}^0(s')  \, ds'
			= -\frac{C}{qH} \epsilon_1 \int_{s_0}^s \frac{d}{ds'}e^{-qH\gamma^0(s')} \, ds' \\
			& \leq \frac{C}{qH} e^{-qH \gamma^0(s)}. \notag
	\end{align}
	We therefore conclude that $\dot{\gamma}^0$ is bounded from above:
	\begin{align} \label{E:dotgamma0bounded}
		\dot{\gamma}^0 \leq C.
	\end{align}
	Integrating \eqref{E:dotgamma0bounded} from $s_0$ to $s,$ we have that
	\begin{align} \label{E:gamma0sminusgamma0s0boundfrombelow}
		\gamma^0(s) - \gamma^0(s_0) = \int_{s_0}^s \dot{\gamma}^0(s') \, ds' \leq C |s - s_0|.
	\end{align}
	Finally, since we have already shown that 
	$\gamma^0 \uparrow \infty$ as $s \uparrow s_{max},$ it follows from \eqref{E:gamma0sminusgamma0s0boundfrombelow} that $s_{max} = \infty.$
	
\end{proof}

\section{Asymptotics} \label{S:Asymptotics}
\setcounter{equation}{0} 	

In this section, we strengthen the conclusions of Theorem \ref{T:GlobalExistence} by showing that
$g_{\mu \nu}, g^{\mu \nu}, \partial_{\mu} \Phi,$ $(\mu, \nu = 0,1,2,3),$ and various coordinate derivatives of these quantities converge as $t \rightarrow \infty.$ Because our strategy is to integrate bounds for time derivatives, we will
lose at least one order of differentiability in our convergence estimates. Furthermore, we note that although our bootstrap assumptions were sufficient to close the global existence argument, they are far from optimal from the point of view of decay rates. Thus, at the cost of a few more derivatives, we will also revisit the modified equations and derive improved rates of decay compared to what can be directly concluded from the estimate $\suptotalnorm{N} \leq \epsilon.$ In particular, the Counting Principle of Section \ref{S:BootstrapConsequences} is not precise enough to detect these refinements. These results should be viewed as an initial investigation of the asymptotics; it is clear that more information could be extracted at the expense of more work. This theorem is analogous to \cite[Proposition 2]{hR2008}.

\begin{theorem} \label{T:Asymptotics} \textbf{(Asymptotics)}
 Assume that initial data $(\mathring{g}_{\mu \nu}, \mathring{K}_{\mu \nu}, \bard \mathring{\Phi}, \mathring{\Psi}),$
 $(\mu, \nu = 0,1,2,3),$ for the modified system \eqref{E:finalg00equation} - \eqref{E:finalfluidequation} satisfy the 
 assumptions of Theorem \ref{T:GlobalExistence}, including the smallness assumption 
 ${\suptotalnorm{N} \leq C_*^{-1} \epsilon},$ where 
 $0 \leq \epsilon \leq \epsilon_0.$ 
 Let $\mathring{g}^{\mu \nu}$ denote the inverse of $\mathring{g}_{\mu \nu}.$
 Assume in addition that $N \geq 5,$ and let $(g_{\mu \nu}, \partial_{\mu} \Phi)$ 
 be the global solution launched by the data. Then there exists a constant $\epsilon_2$ satisfying
 $0 < \epsilon_2 \leq \epsilon_0$ such that if $\epsilon < \epsilon_2,$ then there exists a Riemann metric $g_{jk}^{(\infty)},$ 
 ($j,k = 1,2,3$), with corresponding Christoffel symbols $\Gamma_{ijk}^{(\infty)},$ $(i,j,k = 1,2,3),$
 and inverse $g_{(\infty)}^{jk}$ on $\mathbb{T}^3,$ and (time independent) functions $\bard \Phi_{(\infty)}, \Psi_{(\infty)}$ 
 on $\mathbb{T}^3$ such that $g_{jk}^{(\infty)} - \mathring{g}_{jk} \in H^{N},$ 
 $g_{(\infty)}^{jk} - \mathring{g}^{jk} \in H^N,$ $\bard \Phi_{(\infty)} \in H^{N-1},$ and
 $\Psi_{(\infty)} - \bar{\Psi} \in H^{N-1}$ where $\bar{\Psi}$ is defined 
 in \eqref{E:barPsidef}, and such that the following estimates hold for all $t \geq 0:$

	\begin{subequations}
		\begin{align}
			\| g_{jk}^{(\infty)} - \mathring{g}_{jk} \|_{H^N} & \leq C \epsilon, \label{E:gjklowerinfinityHN} \\
			\| g_{(\infty)}^{jk} - \mathring{g}^{jk} \|_{H^N} & \leq C \epsilon, \label{E:gjkupperinfinityHN} 
		\end{align}
	\end{subequations}
	
	\begin{subequations}
	\begin{align}
		\| e^{-2 \Omega} g_{jk} - g_{jk}^{(\infty)} \|_{H^N} & \leq C \epsilon e^{-qHt},  \label{E:gjklowerconvergence} \\
		\| e^{-2 \Omega} g_{jk} - g_{jk}^{(\infty)} \|_{H^{N-2}} & \leq C \epsilon e^{-2Ht},  
			\label{E:improvedgjklowerconvergence} \\
		\| e^{2 \Omega} g^{jk} - g_{(\infty)}^{jk} \|_{H^N} & \leq C \epsilon e^{-qHt},  \label{E:gjkupperconvergence} \\
		\| e^{2 \Omega} g^{jk} - g_{(\infty)}^{jk} \|_{H^{N-2}} & \leq C \epsilon e^{-2Ht},  
			\label{E:improvedgjkupperconvergence} \\
		\| e^{-2 \Omega} \partial_t g_{jk} - 2 \omega g_{jk}^{(\infty)} \|_{H^N} & \leq C \epsilon e^{-qHt}, 
			\label{E:e2Omegapartialgjkminusrhojklowerconvergence} \\
		\| e^{-2 \Omega} \partial_t g_{jk} - 2 \omega g_{jk}^{(\infty)} \|_{H^{N-2}} & \leq C \epsilon e^{-2Ht},
			\label{E:improvede2Omegapartialgjkminusrhojklowerconvergence} \\
			\| e^{2 \Omega} \partial_t g^{jk} + 2 \omega g_{(\infty)}^{jk}\|_{H^N} & \leq C \epsilon e^{-qHt}, 
			\label{E:e2Omegapartialgjkminusrhojkupperconvergence} \\
		\| e^{2 \Omega} \partial_t g^{jk} + 2 \omega g_{(\infty)}^{jk} \|_{H^{N-2}} & \leq C \epsilon e^{-2Ht},
			\label{E:improvede2Omegapartialgjkminusrhojkupperconvergence}
	\end{align}
	\end{subequations}

	\begin{subequations}
	\begin{align}
		\| g_{0j} - H^{-1} g_{(\infty)}^{ab} \Gamma_{ajb}^{(\infty)} \|_{H^{N-3}} & \leq C \epsilon e^{-qHt},
			\label{E:g0jlowerconvergence} \\
		\| \partial_t g_{0j} \|_{H^{N-3}} & \leq C \epsilon e^{-qHt},
			\label{E:partialtg0jconvergence} 
	\end{align}
	\end{subequations}
	
	\begin{subequations}
	\begin{align}
		\| g_{00} + 1\|_{H^N} & \leq C \epsilon e^{-qHt}, \label{E:g00plusoneconvergence} \\
		\| g_{00} + 1\|_{H^{N-2}} & \leq C \epsilon (1 + t) e^{-2Ht}, \label{E:improvedg00plusoneconvergence} \\
		\| \partial_t g_{00} \|_{H^N} & \leq C \epsilon e^{-qHt}, \label{E:partialtg00convergence} \\
		\| \partial_t g_{00} + 2 \omega(g_{00} + 1) \|_{H^{N-2}} & \leq C \epsilon e^{-2Ht}, 
			\label{E:improvedpartialtg00convergence}
	\end{align}
	\end{subequations}
	
	\begin{subequations}
	\begin{align} 
		\| e^{-2 \Omega} K_{jk} - \omega g_{jk}^{(\infty)} \|_{H^{N-1}} & \leq C \epsilon e^{-qHt}, \label{E:Kjkconvergence} \\
		\| e^{-2 \Omega} K_{jk} - \omega g_{jk}^{(\infty)} \|_{H^{N-2}} & \leq C (1+t) \epsilon e^{-2Ht}. 				
			\label{E:improvedKjkconvergence}
	\end{align}
	\end{subequations}
	In the above inequalities, $K_{jk}$ is the second fundamental form of the hypersurface $t= const.$
	
	Furthermore, we have that
	\begin{subequations}
	\begin{align}
		\| e^{\decayparameter \Omega} \partial_t \Phi - \Psi_{(\infty)} \|_{H^{N-1}} & \leq C \epsilon e^{-qHt}, 
			\label{E:partialtPhiconvergence} \\
		\| e^{\decayparameter \Omega} \partial_t \Phi - \Psi_{(\infty)} \|_{H^{N-2}} 
			& \leq C \epsilon e^{-2(1 - \kappa)Ht}, \label{E:improvedpartialtPhiconvergence} \\
		\| \bard \Phi - \bard \Phi_{(\infty)} \|_{H^{N-1}} & \leq C \epsilon e^{- \decayparameter Ht}, 
			\label{E:partialPhiconvergence} \\
		\| \bard \Phi_{(\infty)} \|_{H^{N-1}} & \leq C \epsilon.
			\label{E:partialPhiinfinityHNminusone} 
	\end{align}
	\end{subequations}
	
\end{theorem}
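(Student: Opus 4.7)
\medskip

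The strategy is to leverage the global bound $\suptotalnorm{N}(t) \le \epsilon$ from Theorem \ref{T:GlobalExistence}, which by the definitions \eqref{E:mathfrakSMg00}--\eqref{E:Fluidnorm} already encodes exponentially decaying $H^N$ estimates for many time derivatives (most notably $\|\partial_t h_{jk}\|_{H^N} \le C\epsilon e^{-qHt}$, $\|\partial_t g_{00}\|_{H^N}\le C\epsilon e^{-qHt}$, $\|g_{0j}\|_{H^N}\le C\epsilon e^{-(1-q)\Omega}\cdot e^{-\Omega}$, and $\|\bard\partial_t\Phi\|_{H^{N-1}}\le C\epsilon e^{-\kappa H t}$). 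The general pattern is: if $\|\partial_t F(t)\|_X \le C\epsilon e^{-\mu H t}$ in a Banach space $X$, then $\int_0^\infty \partial_t F(\tau)\,d\tau$ converges absolutely in $X$, so $F_{(\infty)} := \mathring F + \int_0^\infty \partial_t F(\tau)\,d\tau$ exists, $\|F(t) - F_{(\infty)}\|_X \le \mu^{-1}CH^{-1}\epsilon e^{-\mu H t}$, and $\|F_{(\infty)}-\mathring F\|_X \le C\epsilon$.

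First I would apply this pattern to $h_{jk} = e^{-2\Omega}g_{jk}$ to obtain $g_{jk}^{(\infty)}$, \eqref{E:gjklowerinfinityHN}, and \eqref{E:gjklowerconvergence}. The inverse limit $g_{(\infty)}^{jk}$ and the estimates \eqref{E:gjkupperinfinityHN}, \eqref{E:gjkupperconvergence} come from the same argument applied to $e^{2\Omega}g^{jk}$, using inequality \eqref{E:partialtgjkupperplusomegagjkHN} to bound $\|\partial_t(e^{2\Omega}g^{jk})\|_{H^N}$ by $C\epsilon e^{-qHt}$; alternatively one can note that matrix inversion of $h_{jk}$ commutes with the limit and use Corollary \ref{C:SobolevTaylor}. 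The estimates \eqref{E:e2Omegapartialgjkminusrhojklowerconvergence} and \eqref{E:e2Omegapartialgjkminusrhojkupperconvergence} then follow algebraically from the identities $e^{-2\Omega}\partial_t g_{jk} = 2\omega h_{jk} + \partial_t h_{jk}$ and its inverse analogue. For the second fundamental form, in our gauge $K_{jk} = \frac12\partial_t g_{jk}$ modulo terms involving $g_{0j}$ which decay, which gives \eqref{E:Kjkconvergence}. The estimates \eqref{E:g00plusoneconvergence}, \eqref{E:partialtg00convergence}, \eqref{E:partialtg0jconvergence} are immediate from the definitions of $\gzerozeronorm{N}$ and $\gzerostarnorm{N}$. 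The non-Cauchy limit \eqref{E:g0jlowerconvergence} is slightly subtler: I would use equation \eqref{E:finallg0jequation} in the form $2H^2 g_{0j} = -\hat\Square_g g_{0j} + 3H\partial_t g_{0j} - 2H g^{ab}\Gamma_{ajb} + \triangle_{0j}$ and observe via \eqref{E:partialtSquaredg0jHNminusone}, \eqref{E:triangle0jHN}, and the preceding decay estimates that every term except $-2H g^{ab}\Gamma_{ajb}$ decays like $e^{-qHt}$ in $H^{N-3}$, while $g^{ab}\Gamma_{ajb}$ converges to $g_{(\infty)}^{ab}\Gamma_{ajb}^{(\infty)}$ at rate $e^{-qHt}$ in $H^{N-3}$ by the already-established metric convergence. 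For the fluid, $\Psi_{(\infty)}$ is constructed by writing $\partial_t(e^{\kappa\Omega}\partial_t\Phi) = e^{\kappa\Omega}(\partial_t^2\Phi + \kappa\omega\partial_t\Phi)$ and substituting the fluid equation \eqref{E:finalfluidequation} to cancel the $\kappa\omega\partial_t\Phi$ term, leaving a remainder $e^{\kappa\Omega}(m^{ab}\partial_a\partial_b\Phi + 2m^{0a}\partial_t\partial_a\Phi - \triangle_{\partial\Phi})$ bounded in $H^{N-1}$ by $C\epsilon e^{-qHt}$ using Proposition \ref{P:Nonlinearities}; time integration then gives \eqref{E:partialtPhiconvergence}. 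For \eqref{E:partialPhiconvergence}--\eqref{E:partialPhiinfinityHNminusone} one integrates $\bard\partial_t\Phi$, whose $H^{N-1}$ norm is bounded by $C\epsilon e^{-\kappa H t}$ via $\supfluidnorm{N}\le\epsilon$ and \eqref{E:partialtPhiLinfinity}-type interpolation.

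The main obstacle is obtaining the improved $H^{N-2}$ convergence rates \eqref{E:improvedgjklowerconvergence}, \eqref{E:improvede2Omegapartialgjkminusrhojklowerconvergence}, \eqref{E:improvedg00plusoneconvergence}, \eqref{E:improvedpartialtg00convergence}, \eqref{E:improvedKjkconvergence}, \eqref{E:improvede2Omegapartialgjkminusrhojkupperconvergence}, \eqref{E:improvedpartialtPhiconvergence}, which demand a bootstrap-style return trip to the modified equations. The key observation is that once the first-round convergence is established, the error terms can be estimated at rate $e^{-2Ht}$ rather than only $e^{-qHt}$: for instance, in \eqref{E:trianglejk} the factors $\widetilde\sigma^{s+1}\sim e^{-2(s+1)\kappa H t}$, $e^{-2\Omega}\sigma^s(\partial_j\Phi)(\partial_k\Phi)$, and $\omega g^{0a}\partial_a h_{jk}$ all decay at least as fast as $e^{-2Ht}$ in $H^{N-2}$ (the first factor in fact decays faster and drops out of the leading order). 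Treating equation \eqref{E:finalhjkequation} as a scalar ODE in $t$ for each mode of $h_{jk}$ with damping coefficient $-3H$ on the first derivative and spatial term of size $e^{-2\Omega}$, the Duhamel representation against a source of size $\epsilon e^{-2Ht}$ yields $\|\partial_t h_{jk}\|_{H^{N-2}}\le C\epsilon e^{-2Ht}$ and hence \eqref{E:improvedgjklowerconvergence} by time integration. The resonance-type factor $(1+t)$ in \eqref{E:improvedg00plusoneconvergence} and \eqref{E:improvedKjkconvergence} will emerge because the effective ODE for $g_{00}+1$ from \eqref{E:finalg00equation} has characteristic decay rate $e^{-2Ht}$ matching the source decay rate, and likewise for $K_{jk}$; in such a resonant situation Duhamel produces a logarithmic loss $(1+t)$. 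Handling \eqref{E:improvedpartialtPhiconvergence} proceeds by the analogous argument applied to the fluid equation, using the improved metric estimates to refine the bound on $\triangle_{\partial\Phi}$ in $H^{N-2}$. This is why the theorem requires $N\ge 5$: we lose two derivatives in each of the two rounds of passing through the PDE.
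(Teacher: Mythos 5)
Your overall architecture — integrate exponentially decaying time derivatives to construct the limits, then revisit the modified equations to extract improved $H^{N-2}$ rates — is the same as the paper's, and your treatment of $g_{jk}^{(\infty)}$, $g_{(\infty)}^{jk}$, $\Psi_{(\infty)}$, $\bard \Phi_{(\infty)}$, of the estimates \eqref{E:gjklowerinfinityHN}--\eqref{E:e2Omegapartialgjkminusrhojkupperconvergence} at the $e^{-qHt}$ level, of \eqref{E:g00plusoneconvergence}, \eqref{E:partialtg00convergence}, \eqref{E:Kjkconvergence}, and \eqref{E:partialtPhiconvergence}--\eqref{E:partialPhiinfinityHNminusone} goes through essentially as the paper does it; your resonance explanation for the $(1+t)$ factors is also the correct heuristic.

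There is, however, a genuine gap in your treatment of \eqref{E:g0jlowerconvergence}--\eqref{E:partialtg0jconvergence}. You propose to isolate $2H^2 g_{0j}$ from \eqref{E:finallg0jequation} and assert that every other term decays like $e^{-qHt}$ in $H^{N-3}$, citing \eqref{E:partialtSquaredg0jHNminusone} and \eqref{E:triangle0jHN}. Those estimates give \emph{growth}, not decay: $\| \partial_t^2 g_{0j} \|_{H^{N-1}} \leq C \epsilon e^{(1-q)Ht}$ and $\| \triangle_{0j} \|_{H^N} \leq C \epsilon e^{(1-2q)Ht},$ and the norm $\gzerostarnorm{N}$ only controls $e^{(q-1)\Omega} \| \partial_t g_{0j} \|_{H^N},$ so the term $3H \partial_t g_{0j}$ in your rearrangement is not even known to be bounded — indeed \eqref{E:partialtg0jconvergence} is one of the conclusions you are trying to establish, so the argument as written is circular. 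Closing this requires the two-stage energy improvement the paper carries out: first introduce the unrescaled energy $\underline{\mathscr{E}}_{g_{0*};N-1}$ and, using the identity $g^{0j} = - g^{aj} g_{0a}/g_{00}$ to upgrade the bounds on $\triangle_{A,0j}$, $\triangle_{C,0j}$, the commutators, and $\triangle_{\mathcal{E};(\gamma_{0*},\delta_{0*})}$, prove the preliminary improvement $\| g_{0j} \|_{H^{N-1}} + \| \partial_t g_{0j} \|_{H^{N-1}} \leq C \epsilon$ (a gain of $e^{(q-1)Ht}$ over the bootstrap norm); then set $v_j \eqdef g_{0j} - H^{-1} g_{(\infty)}^{ab} \Gamma_{ajb}^{(\infty)},$ derive its wave equation whose source $\triangle_j$ now decays like $\epsilon e^{-qHt}$ in $H^{N-3}$, and run the weighted energy $e^{2qHt} \mathcal{E}_{(\gamma_{0*},\delta_{0*})}^2$, exploiting $2q < \eta_{0*}$, to extract the $e^{-qHt}$ decay. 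This omission also propagates into your second round: the improved $H^{N-2}$ bounds on $\triangle_{00}'$ and $\triangle_{jk}'$ (e.g. on the term $\omega g^{0a} \partial_a h_{jk}$) require the upgraded $g_{0j}$ and $g^{0j}$ estimates, so without the preliminary step the sources cannot be shown to decay at the rate $\epsilon e^{-2Ht}$ that your Duhamel argument needs.
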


\begin{remark} \label{R:Nlarger}
	We assume that $N \geq 5$ so that we can use standard Sobolev-Moser estimates during our proofs the improved
	rates of decay.
\end{remark}

\begin{proof}
	See Remark \ref{R:ProofsRemark} for some conventions that we use throughout this proof. Furthermore, in our proofs below,
	we will introduce new energies, and the differential inequalities that we will derive are valid only under the assumption 
	that the energies are sufficiently small; we don't explicitly mention the smallness assumption each time we make it. 
	In the interest of brevity, we will only  sketch the proofs of the estimates involving the improved decay rates. We also 
	remind the reader of the conclusion \eqref{E:QNgloballessthanepsilon} of Theorem 
	\ref{T:GlobalExistence}, which is that $\suptotalnorm{N} \eqdef \supgzerozeronorm{N} + \supgzerostarnorm{N} + 
	\suphstarstarnorm{N} + \supfluidnorm{N}$ satisfies $\suptotalnorm{N}(t) \leq \epsilon$ for $t \geq 0.$ 
	\\

\noindent \emph{Proofs of \eqref{E:gjklowerinfinityHN}, \eqref{E:gjkupperinfinityHN},
\eqref{E:gjklowerconvergence}, \eqref{E:gjkupperconvergence}, \eqref{E:e2Omegapartialgjkminusrhojklowerconvergence}, and \eqref{E:e2Omegapartialgjkminusrhojkupperconvergence}}:
	It follows from the definition \eqref{E:totalsupnorm} of $\suptotalnorm{N}$ that
	\begin{align} \label{E:partialthjkdecays}
		\| \partial_t h_{jk} \|_{H^N} \leq C \epsilon e^{-q H t}.		
	\end{align}
	Integrating $\partial_t h_{jk}$ and using \eqref{E:partialthjkdecays}, it follows that for $t_1 \leq t_2,$ we have that
	
	\begin{align} \label{E:Cauchysequence}
		\| h_{jk}(t_2) - h_{jk}(t_1) \|_{H^N} & \leq C \epsilon e^{-q H t_1}.
	\end{align}
	Using \eqref{E:Cauchysequence} and the fact that $h_{jk} = e^{-2 \Omega} g_{jk}$, 
	it easily follows that there exist functions $g_{jk}^{(\infty)}(x^1,x^2,x^3)$ such that
	\begin{align} \label{E:gjklowerconvergenceproof}
		\| e^{-2 \Omega} g_{jk} - g_{jk}^{(\infty)} \|_{H^N} \leq C \epsilon e^{-qHt},
	\end{align}
	and such that
	
	\begin{align} \label{E:rhominusmathringg}
		\| g_{jk}^{(\infty)} - \mathring{g}_{jk} \|_{H^N} \leq C \epsilon.
	\end{align}
	We have thus shown \eqref{E:gjklowerinfinityHN} and \eqref{E:gjklowerconvergence} . Inequality 
	\eqref{E:e2Omegapartialgjkminusrhojklowerconvergence} follows from  
	\eqref{E:gjklowerconvergence} and \eqref{E:partialthjkdecays}.

	To obtain the asymptotics for $g^{jk},$ we use \eqref{E:partialtgjkupperplusomegagjkHN}, which implies that
	\begin{align} \label{E:partialte2Omegagjkupperdecays}
		\| \partial_t \big(e^{2 \Omega} g^{jk} \big) \|_{H^N} \leq C \epsilon e^{-q H t}.
	\end{align}
	From \eqref{E:partialte2Omegagjkupperdecays} and the estimates \eqref{E:gjkupperLinfinity}, 
	\eqref{E:partialgjkupperHNminusone} at $t=0,$ it follows as in the previous argument that there exist functions 
	$g_{(\infty)}^{jk}(x^1,x^2,x^3)$ such that
	
	\begin{align} 
		\| e^{2 \Omega} g^{jk} - g_{(\infty)}^{jk} \|_{H^N} \leq C \epsilon e^{-q Ht},
	\end{align}
	and such that
	
	\begin{align} 
		\| g_{(\infty)}^{jk} - \mathring{g}^{jk} \|_{H^N} & \leq C \epsilon, \label{E:rhominusmathringgupper} \\
		\| \bard g_{(\infty)}^{jk} \|_{H^{N-1}} & \leq C \epsilon, \label{E:partialrhojkupperHNminusone} \\
		\| g_{(\infty)}^{jk} \|_{L^{\infty}} & \leq C,  \label{E:partialrhojkupperLinfinity}
	\end{align}
	where $\mathring{g}^{jk} \eqdef g^{jk}|_{t=0}.$ This proves 
	\eqref{E:gjkupperinfinityHN} and \eqref{E:gjkupperconvergence}. 
	\eqref{E:e2Omegapartialgjkminusrhojkupperconvergence} then follows from \eqref{E:gjkupperconvergence} and 
	\eqref{E:partialte2Omegagjkupperdecays}. Furthermore, since
	
	\begin{align}
		g^{aj}g_{ak} + g^{0j}g_{0k} = \delta_k^j,
	\end{align}
	and since Proposition \ref{P:F1FkLinfinityHN}, $\suptotalnorm{N} \leq \epsilon,$ \eqref{E:g0jupperHN}, 
	and \eqref{E:g0jupperCb1} imply that
	
	\begin{align}
		\| g^{0j}g_{0k} \|_{H^N} \leq C \epsilon^2 e^{-2 q Ht},
	\end{align}
	it follows that $g_{jk}^{(\infty)}$ are the components of a Riemannian metric $g^{(\infty)}$ and that
	$g_{(\infty)}^{jk}$ are the components of its inverse $g_{(\infty)}.$ 
	\\

\noindent \emph{Proofs of \eqref{E:g00plusoneconvergence} and \eqref{E:partialtg00convergence}}:
	The estimates \eqref{E:g00plusoneconvergence} and \eqref{E:partialtg00convergence} follow trivially from
	definition \eqref{E:mathfrakSMsupg00}.
	\\

\noindent \emph{Proofs of \eqref{E:partialtPhiconvergence}, \eqref{E:partialPhiconvergence}, and \eqref{E:partialPhiinfinityHNminusone}}:
To prove \eqref{E:partialtPhiconvergence}, we first recall equation \eqref{E:finalfluidequation},
which can be re-expressed as follows:
	
	\begin{align} \label{E:partialtewOmegapartialtPhidecay}
		\partial_t \Big( e^{\decayparameter \Omega} \partial_t \Phi - \bar{\Psi} \Big) = e^{\decayparameter \Omega} 
		\triangle'_{\partial \Phi},		
	\end{align}
	where $\bar{\Psi}$ is defined in \eqref{E:barPsidef} and 
	
	\begin{align} 
		\triangle'_{\partial \Phi} = -m^{ab} \partial_a \partial_b \Phi - 2 m^{0a} \partial_a \partial_t \Phi + 
		\triangle_{\partial \Phi}.
	\end{align}
	Using Proposition \ref{P:F1FkLinfinityHN}, the definition \eqref{E:totalsupnorm} of $\suptotalnorm{N},$ Sobolev embedding,
	\eqref{E:triangleHN}, \eqref{E:mjkLinfinity}, \eqref{E:partialmjkHNminusone}, 	
	and \eqref{E:m0jHN}, it follows that
	
	\begin{align} \label{E:ewOmegatrianglePhiHNminusone}
		\| e^{\decayparameter \Omega} \triangle'_{\partial \Phi} \|_{H^{N-1}} \leq C \epsilon e^{-qHt}.
	\end{align}
	As in our proof of \eqref{E:gjklowerconvergenceproof}, it easily follows from \eqref{E:partialtewOmegapartialtPhidecay}, 
	\eqref{E:ewOmegatrianglePhiHNminusone}, and the initial condition 
	$\| e^{\decayparameter \Omega(0)} \partial_t \Phi(0,\cdot) - \bar{\Psi} \|_{H^N} \leq C \epsilon$ that there exists a 
	function $\Psi_{(\infty)}(x^1,x^2,x^3)$ with $\Psi_{(\infty)} - 
	\bar{\Psi} \in H^{N-1}$ such that
	
	\begin{align}
		\| e^{\decayparameter \Omega} \partial_t \Phi(t,\cdot) - \Psi_{(\infty)} \|_{H^{N-1}} & \leq C \epsilon e^{-qHt}, \\
		\| \Psi_{(\infty)} - \bar{\Psi} \|_{H^{N-1}} & \leq C \epsilon, \label{E:PsiinfinityHNminusone}
	\end{align}
	which proves \eqref{E:partialtPhiconvergence}.
	
	Furthermore, the bound $\suptotalnorm{N}(t) \leq \epsilon$ implies that
	
	\begin{align} \label{E:partialtpartialPhidecay}
		\| \partial_t \bard \Phi \|_{H^{N-1}} \leq C \epsilon e^{-\decayparameter Ht}.
	\end{align}
	It follows easily from \eqref{E:partialtpartialPhidecay} and the initial condition 
	$\| \bard \Phi(0,\cdot) \|_{H^N} \leq C \epsilon$ that there exists a function $\bard \Phi_{(\infty)}(x^1,x^2,x^3)$ 
	(see Remark \ref{R:Phiremark}) satisfying $\bard \Phi_{(\infty)} \in H^{N-1}$ such that 
	
	\begin{align}
		\| \bard \Phi(t,\cdot) - \bard \Phi_{(\infty)} \|_{H^{N-1}} & \leq C \epsilon e^{- \decayparameter H t}, \\
		\| \bard \Phi_{(\infty)} \|_{H^{N-1}} & \leq C \epsilon,
	\end{align}
	which proves \eqref{E:partialPhiconvergence} and \eqref{E:partialPhiinfinityHNminusone}.
	\\

\noindent \emph{Proof of \eqref{E:Kjkconvergence}}:
	We first observe that $\hat{N},$ the future-directed normal
	to the surface $\lbrace t=const \rbrace,$ can be expressed in components as
	\begin{align}
		\hat{N}^{\mu} = - (-g^{00})^{-1/2} g^{0\mu}.
	\end{align}
	By the definition of $K,$ it thus follows that
	
	\begin{align} \label{E:kijexpression}
		K_{jk} = g_{\alpha k} D_j \hat{N}^{\alpha} = 
			- \big(\partial_j \big[ (-g^{00})^{-1/2} g^{0\alpha} \big] \big) g_{\alpha k} - (-g^{00})^{-1/2} g^{0\alpha} 
			\Gamma_{jk\alpha}.
	\end{align}
	
	The dominant term on the right-hand side of \eqref{E:kijexpression} is the one involving $\Gamma_{jk0}:$ 
	it follows from Corollary \ref{C:SobolevTaylor}, with $v= g^{00} + 1$ and $F(v) = (1-v)^{-1/2} = (-g^{00})^{-1/2}$ in the 
	Corollary, Proposition \ref{P:F1FkLinfinityHN},
	 the definition \eqref{E:totalsupnorm} of $\suptotalnorm{N},$ Sobolev embedding, 
	\eqref{E:gjklowerLinfinity}, \eqref{E:g00upperplusoneHN}, \eqref{E:g0jupperHN}, \eqref{E:partialtgjkminusomegagjklowerHN},
	and definition \eqref{E:Christoffelglowered} that 
	
	\begin{align}
		e^{-2 \Omega} \big\|K_{jk} + (-g^{00})^{-1/2} g^{00} \Gamma_{jk0} \big\|_{H^{N-1}} 
			& = e^{-2 \Omega} \big\| - \big(\partial_j \big[ (-g^{00})^{-1/2} g^{0\alpha} \big]\big) g_{\alpha k} 
			- (-g^{00})^{-1/2} g^{0a} \Gamma_{jka} \big\|_{H^{N-1}}
			\leq C \epsilon e^{- q Ht}, \label{E:Kjktriangle1} \\
		e^{-2 \Omega} \Big\| (-g^{00})^{-1/2} g^{00} \Gamma_{jk0} + \frac{1}{2}\partial_t g_{jk} \Big \|_{H^{N-1}}
		 	& \leq \frac{1}{2} e^{-2 \Omega} \big\| (-g^{00})^{-1/2} g^{00}(\partial_j g_{0k} - \partial_k g_{0j}) \big\|_{H^{N-1}}
		 	\label{E:Kjktriangle2} \\
		 	& \ \ \ + \frac{1}{2} e^{-2 \Omega} \big\| [(-g^{00})^{-1/2} g^{00} + 1 ] \partial_t g_{jk}  \big\|_{H^{N-1}}
		 		\notag \\
		 	& \leq C \epsilon e^{-qHt}. \notag  
	\end{align}
	Inequality \eqref{E:Kjkconvergence} now follows from combining \eqref{E:e2Omegapartialgjkminusrhojklowerconvergence}, 
	\eqref{E:Kjktriangle1}, and \eqref{E:Kjktriangle2}.

\end{proof}
\ \\

\noindent \emph{Proofs of \eqref{E:g0jlowerconvergence} - \eqref{E:partialtg0jconvergence}}:
	Our proofs of \eqref{E:g0jlowerconvergence} - \eqref{E:partialtg0jconvergence} are based on two refined versions of the 
	energy inequality \eqref{E:mathfrakENg0*integral}. The main point is that the even though the
	energy $\gzerostarenergy{N}$ defined in \eqref{E:g0*energydef} allows us to efficiently
	close the bootstrap argument of Theorem \ref{T:GlobalExistence}, there is room for improvement. In particular, Theorem
	\ref{T:GlobalExistence} only allows us to conclude that  $\gzerostarenergy{N} \leq \suptotalnorm{N} \leq C \epsilon,$
	which implies that $\|\partial_t g_{0j}\|_{H^N} \leq \epsilon e^{(1 - q) Ht}$ and 
	$\|g_{0j}\|_{H^N} \leq \epsilon e^{(1 - q) Ht}.$ As we will see, it is possible to improve these estimates
	by a factor of $e^{(q-1) Ht}.$ This is a preliminary step that we will need in our remaining proofs. 
	This improvement is based on the following simple identity:
	
	\begin{align} \label{E:g0jupperidentity}
		g^{0j} = - \frac{1}{g_{00}} g^{aj} g_{0a}.
	\end{align}
	
	We begin our proof of the improvement by defining a new energy $\underline{\mathscr{E}}_{g_{0*};N-1}$ 
	for the $g_{0j},$ $(j=1,2,3),$ by
	
	\begin{align} \label{E:rescaledg0*energy}
		\underline{\mathscr{E}}_{g_{0*};N-1}^2 & \eqdef \sum_{|\vec{\alpha}| \leq N-1} \sum_{j=1}^3
			\mathcal{E}_{(\gamma_{0*},\delta_{0*})}^2[\partial_{\vec{\alpha}} g_{0j}, \partial (\partial_{\vec{\alpha}} g_{0j})],
	\end{align}
	where $\mathcal{E}_{(\gamma_{0*},\delta_{0*})}^2[\partial_{\vec{\alpha}} g_{0j}, \partial (\partial_{\vec{\alpha}} g_{0j})]$
	is defined in \eqref{E:mathcalEdef}, and 
	the constants $\gamma_{0*}, \delta_{0*}$ are defined in Definition \ref{D:energiesforg}. Note that 
	the scaling in \eqref{E:rescaledg0*energy} differs from the scaling used
	in definition \eqref{E:g0*energydef} by a factor of $e^{(1-q)\Omega}.$ Furthermore, we are using an ${N-1}^{st}$ 
	order energy, rather than an $N^{th}$ order energy, because we will make use of the improved rates of decay 
	for lower derivatives that are already discernible from the fact that $\suptotalnorm{N} \leq \epsilon$
	(consider e.g. inequality \eqref{E:gabupperGammaajblowerHNminusone}). Now using 
	\eqref{E:mathcalEcomparison}, we have the following comparison estimate:
	
	\begin{align} \label{E:g0*newenergynormequivalence}
		C^{-1} \underline{\mathscr{E}}_{g_{0*};N-1} \leq \sum_{j=1}^3 \| \partial_t g_{0j} \|_{H^{N-1}} +
			e^{- \Omega} \| \bard g_{0j} \|_{H^{N-1}} + C_{(\gamma_{0*})} \| g_{0j} \|_{H^{N-1}} 
			\leq C \underline{\mathscr{E}}_{g_{0*};N-1}.
	\end{align}
	
	Using the definition of $\underline{\mathscr{E}}_{g_{0*};N-1}$ and the comparison estimate, 
	it follows that the energy inequality \eqref{E:underlinemathfrakEg0*firstdifferential} can be replaced
	with	
	\begin{align} 
		\frac{d}{dt}(\underline{\mathscr{E}}_{g_{0*};N-1}^2) 
			& \leq - \eta_{0*}H\underline{\mathscr{E}}_{g_{0*};N-1}^2 
			+ C \underline{\mathscr{E}}_{g_{0*};N-1} \sum_{j=1}^3 \| g^{ab} \Gamma_{ajb} \|_{H^{N-1}}
		  + C \underline{\mathscr{E}}_{g_{0*};N-1} \sum_{j=1}^3 \| \triangle_{0j} \|_{H^{N-1}}
			\label{E:newunderlinemathfrakEg0*firstdifferential} \\ 
		&	\ \ \ + C \underline{\mathscr{E}}_{g_{0*};N-1} 
			\sum_{|\vec{\alpha}| \leq N-1} \sum_{j=1}^3 \| [\hat{\Square}_g ,\partial_{\vec{\alpha}}] g_{0j} \|_{L^2}
			+ \sum_{|\vec{\alpha}| \leq N-1} \sum_{j=1}^3 \| \triangle_{\mathcal{E};(\gamma_{0*}, 
			\delta_{0*})}[\partial_{\vec{\alpha}} g_{0j} ,\partial(\partial_{\vec{\alpha}} g_{0j})] \|_{L^1}. \notag 
	\end{align}

	We now claim that the following improvements of \eqref{E:triangleA0jHN}, \eqref{E:triangleC0jHN}, \eqref{E:triangle0jHN}, 
	\eqref{E:g0jcommutatorL2}, and \eqref{E:triangleEgamma0jdelta0*L1} hold:
	
	\begin{align}
		\| \triangle_{A,0j} \|_{H^{N-1}} & \leq C \epsilon e^{-qHt} \underline{\mathscr{E}}_{g_{0*};N-1}
			+ C \epsilon e^{-qHt} , && \label{E:improvedtriangleA0jHN} \\
		\| \triangle_{C,0j} \|_{H^{N-1}} & \leq C \epsilon e^{-qHt} \underline{\mathscr{E}}_{g_{0*};N-1}, &&
			\label{E:improvedtriangleC0jHN} \\
		\| \triangle_{0j} \|_{H^{N-1}} & \leq C e^{-qHt} \underline{\mathscr{E}}_{g_{0*};N-1}
			+ C \epsilon e^{-qHt}, && \label{E:triangle0jnewbound} \\	
		\| [\hat{\Square}_g, \partial_{\vec{\alpha}} ] g_{0j} \|_{L^2} & \leq 
			 C \epsilon e^{-qHt}\underline{\mathscr{E}}_{g_{0*};N-1}
			 + C \epsilon e^{-qHt}, && (|\vec{\alpha}| \leq N-1), \label{E:newg0jcommutatorL2} \\					\| 
			 \triangle_{\mathcal{E};(\gamma_{0*}, \delta_{0*})}[\partial_{\vec{\alpha}}g_{0j},
			\partial (\partial_{\vec{\alpha}} g_{0j})] \|_{L^1}
			& \leq C \epsilon e^{-q Ht} \underline{\mathscr{E}}_{g_{0*};N-1}
				+ C e^{-qHt} \underline{\mathscr{E}}_{g_{0*};N-1}^2, && (|\vec{\alpha}| \leq N-1). \label{E:newtriangleEgamma0jdelta0*L1}	
	\end{align}
	These improved estimates can be derived using the 
	same methods as in our proofs of the original estimates, together with the identity \eqref{E:g0jupperidentity}. More 
	specifically, in our proofs of \eqref{E:triangleA0jHN}, \eqref{E:triangleC0jHN}, \eqref{E:triangle0jHN}, 
	\eqref{E:g0jcommutatorL2}, and \eqref{E:triangleEgamma0jdelta0*L1}, we used the estimates 
	$\|\partial_t g_{0j}\|_{H^N} \leq e^{(1-q)\Omega} \gzerostarnorm{N},$
	$\|g_{0j}\|_{H^N} \leq e^{(1-q)\Omega} \gzerostarnorm{N},$ and 
	$\|\bard g_{0j}\|_{H^N} \leq e^{(2-q)\Omega} \gzerostarnorm{N},$ which follow directly from the definition
	of $\gzerostarnorm{N},$ together with \eqref{E:g0jupperHN}, which reads $\|g^{0j}\|_{H^N} \leq C e^{-(1 + 
	q)\Omega} \supgnorm{N}.$ However, whenever it is convenient, these estimates can be replaced with 
	
	\begin{align}
			\| \partial_t g_{0j} \|_{H^{N-1}} & \leq C \underline{\mathscr{E}}_{g_{0*};N-1}, 
			\label{E:partialtg0jlowerHNimprovement} \\
		\|g_{0j}\|_{H^{N-1}} & \leq C \underline{\mathscr{E}}_{g_{0*};N-1}, 
			\label{E:g0jlowerHNimprovement} \\
		\|\bard g_{0j}\|_{H^{N-1}} & \leq C e^{\Omega} \underline{\mathscr{E}}_{g_{0*};N-1}, 
			\label{E:barpartialg0jlowerHNimprovement} \\ 
		\|g^{0j}\|_{H^{N-1}} & \leq C e^{-2\Omega} \underline{\mathscr{E}}_{g_{0*};N-1} 
			\label{E:g0jupperHNimprovement}
	\end{align}
	respectively, where \eqref{E:partialtg0jlowerHNimprovement} - \eqref{E:barpartialg0jlowerHNimprovement}
	follow from \eqref{E:g0*newenergynormequivalence}, while \eqref{E:g0jupperHNimprovement} follows from applying  
	\eqref{E:g0*newenergynormequivalence}, Proposition \ref{P:F1FkLinfinityHN}, Sobolev embedding,
	\eqref{E:gjkupperLinfinity}, and \eqref{E:partialgjkupperHNminusone} to the identity \eqref{E:g0jupperidentity}.
	We remark that the $C e^{-qHt} \underline{\mathscr{E}}_{g_{0*};N-1}$ term on the right-hand
	side of \eqref{E:triangle0jnewbound} arises from from e.g.
	the $(\omega - H) \partial_t g_{0j}$ term on the right-hand side of \eqref{E:triangle0j}. Similarly, the
	$C e^{-qHt} \underline{\mathscr{E}}_{g_{0*};N-1}^2$ term on the right-hand side of \eqref{E:newtriangleEgamma0jdelta0*L1}
	arises from the e.g. the $(H - \omega) g^{ab} (\partial_a v) (\partial_b v)$ term
	on the right-hand side of \eqref{E:trianglemathscrEdef}.
	See \cite[Section 14]{hR2008} for additional details on these improved estimates.

	Now using \eqref{E:gabupperGammaajblowerHNminusone}, \eqref{E:newunderlinemathfrakEg0*firstdifferential} and 
	\eqref{E:triangle0jnewbound} - \eqref{E:newtriangleEgamma0jdelta0*L1}, we argue as in our proof of 
	\eqref{E:g0*energyfinaldifferentialinequality} to deduce the following inequality:
	
	\begin{align}  \label{E:newg0*energyfinaldifferentialinequality}
		\frac{d}{dt} \Big(\underline{\mathscr{E}}_{g_{0*};N-1}^2 \Big)
			& \leq - \eta_{0*}H \underline{\mathscr{E}}_{g_{0*};N-1}^2  
			+ C e^{-qHt} \underline{\mathscr{E}}_{g_{0*};N-1}^2
			+ C \epsilon \underline{\mathscr{E}}_{g_{0*};N-1}, 
	\end{align}
	where the $C \epsilon \underline{\mathscr{E}}_{g_{0*};N-1}$ term on the right-hand side of
	\eqref{E:newg0*energyfinaldifferentialinequality} arises from applying
	inequality \eqref{E:gabupperGammaajblowerHNminusone} to the second term on the right-hand side of 
	\eqref{E:newunderlinemathfrakEg0*firstdifferential}. Integrating \eqref{E:newg0*energyfinaldifferentialinequality} 
	from $0$ to $t,$ using the smallness condition $\underline{\mathscr{E}}_{g_{0*};N-1}(0) \leq C \epsilon$
	for small $t,$ and applying Lemma \ref{L:integralinequality} for large $t$ (the 
	$C e^{-qHt} \underline{\mathscr{E}}_{g_{0*};N-1}^2$ term is dominated by the  
	$- \eta_{0*}H \underline{\mathscr{E}}_{g_{0*};N-1}^2 $ term for large $t$),
	we conclude that the following bound holds for all $t \geq 0:$
	
	\begin{align} \label{E:rescaledg0*energybounded}
		\underline{\mathscr{E}}_{g_{0*};N-1} & \leq C \epsilon.
	\end{align}
	This completes the proof of our preliminary improved estimate.
	
	We are now ready for the proofs of 
	\eqref{E:g0jlowerconvergence} and \eqref{E:partialtg0jconvergence}. Defining 
	
	\begin{align}
		v_j \eqdef g_{0j} - H^{-1} g_{(\infty)}^{ab} \Gamma_{ajb}^{(\infty)},
	\end{align}
	and recalling that $g_{0j}$ is a solution to 
	
	\begin{align}
		\hat{\Square}_{g} g_{0j} & = 3 H \partial_t g_{0j} + 2 H^2 g_{0j} - 2Hg^{ab}\Gamma_{a j b} + \triangle_{0j},	
	\end{align}
	it follows that $v_j$ is a solution to
	
	\begin{align} \label{E:finallg0jequationagain}
		\hat{\Square}_{g} v_j = 3H \partial_t v_j + 2H^2 v_j + \triangle_j,
	\end{align}
	where
	
	\begin{align} \label{E:newtrianglejdef}
		\triangle_j = \triangle_{0j} + 2H \big(g_{(\infty)}^{ab} \Gamma_{ajb}^{(\infty)} - g^{ab}\Gamma_{ajb}\big) 
			- H^{-1} g^{lm} \partial_l \partial_m \big(g_{(\infty)}^{ab} \Gamma_{ajb}^{(\infty)} \big).
	\end{align}
	
	To estimate $v_j, \ (j = 1,2,3),$ we will use the energy
	
	\begin{align} \label{E:rescaledv*energy}
		\underline{\mathscr{E}}_{v_{*};N-3}^2 & \eqdef e^{2qHt} \sum_{|\vec{\alpha}| \leq N-3} \sum_{j=1}^3
			\mathcal{E}_{(\gamma_{0*},\delta_{0*})}^2[\partial_{\vec{\alpha}} v_j, \partial (\partial_{\vec{\alpha}} v_j)],
	\end{align}
	where $\mathcal{E}_{(\gamma_{0*},\delta_{0*})}^2[\partial_{\vec{\alpha}} v_j, \partial (\partial_{\vec{\alpha}} v_j)]$ is 
	defined in \eqref{E:mathcalEdef}, and the constants $\gamma_{0*}, \delta_{0*}$ are defined in Definition 
	\ref{D:energiesforg}. Note that in response to the last term on the right-hand side of 
	\eqref{E:newtrianglejdef}, we have further reduced the number
	of derivatives in the definition of our energy by two. Using \eqref{E:mathcalEcomparison}, it follows that 
	
	\begin{align} \label{E:v*newenergynormequivalence}
		C^{-1} \underline{\mathscr{E}}_{v_*;N-3} \leq \sum_{j=1}^3 e^{qHt} \| \partial_t v_j \|_{H^{N-3}}
			 + C_{(\gamma_{0*})} e^{qHt} \| v_j \|_{H^{N-3}} + e^{(q-1)Ht} \| \bard v_j \|_{H^{N-3}}
			\leq C \underline{\mathscr{E}}_{v_*;N-3}.
	\end{align}	
	Arguing as in our proof of \eqref{E:newunderlinemathfrakEg0*firstdifferential}, we have that
	
\begin{align}  \label{E:newunderlinemathfrakEv*firstdifferential}
		\frac{d}{dt}(\underline{\mathscr{E}}_{v_{*};N-3}^2) 
			& \leq (2q - \eta_{0*})H \underline{\mathscr{E}}_{v_{*};N-3}^2 
			+ C e^{qHt} \underline{\mathscr{E}}_{v_{*};N-3} \sum_{j=1}^3 \| \triangle_{j} \|_{H^{N-3}}
			 \\ 
		&	\ \ \ + C e^{qHt} \underline{\mathscr{E}}_{v_{*};N-3} 
			\sum_{|\vec{\alpha}| \leq N-3} \sum_{j=1}^3 \| [\hat{\Square}_g ,\partial_{\vec{\alpha}}] v_j \|_{L^2}
			+ e^{2qHt} \sum_{|\vec{\alpha}| \leq N-3} \sum_{j=1}^3 \| \triangle_{\mathcal{E};(\gamma_{0*}, 
			\delta_{0*})}[\partial_{\vec{\alpha}} v_j ,\partial(\partial_{\vec{\alpha}} v_j)] \|_{L^1}. \notag 
	\end{align}	
	Note that the inequality \eqref{E:newunderlinemathfrakEv*firstdifferential} does \emph{not} have a term corresponding to the 
	$\| g^{ab} \Gamma_{ajb} \|_{H^{N-1}}$ term in \eqref{E:newunderlinemathfrakEg0*firstdifferential}; the remnants of
	this term are present in $\| \triangle_{j} \|_{H^{N-3}}.$ 	
	
	We will now estimate $\| \triangle_j \|_{H^{N-3}},$ our goal being to show that it decays like
	$\epsilon e^{-qHt}.$ We begin by estimating the term $2Hg^{ab}\Gamma_{a j b}$ from the right-hand side of \eqref{E:newtrianglejdef}.
	Let us first recall the definitions of the lowered Christoffel symbols 
	$\Gamma_{ijk}^{(\infty)}$ and $\Gamma_{\mu \alpha \nu}$ corresponding to $g^{(\infty)}$ and $g$ respectively:
	
	\begin{align} 
		\Gamma_{ijk}^{(\infty)} & \eqdef \frac{1}{2} (\partial_i g_{jk}^{(\infty)} + \partial_k g_{ij}^{(\infty)} 
			- \partial_j g_{ik}^{(\infty)}), \label{E:Christoffelrholowered} \\
		\Gamma_{\mu \alpha \nu} & \eqdef \frac{1}{2} (\partial_{\mu} g_{\alpha \nu} + \partial_{\nu} g_{\mu \alpha} 
			- \partial_{\alpha} g_{\mu \nu}). \label{E:Christoffelglowered}
	\end{align}
	Using Proposition \ref{P:F1FkLinfinityHN}, the definition \eqref{E:totalsupnorm} of $\suptotalnorm{N},$ Sobolev embedding,
	\eqref{E:gjklowerinfinityHN},
	\eqref{E:gjkupperinfinityHN}, \eqref{E:gjklowerconvergence}, \eqref{E:gjkupperconvergence}, 
	\eqref{E:partialrhojkupperLinfinity}, \eqref{E:Christoffelrholowered}, and \eqref{E:Christoffelglowered}, we conclude that 
	
	\begin{align} 
		\| g^{ab}(t,\cdot) \Gamma_{ajb}(t,\cdot) - g_{(\infty)}^{ab} \Gamma_{ajb}^{(\infty)} \|_{H^{N-1}} 
			& \leq || e^{2 \Omega} g^{ab}(t,\cdot) - g_{(\infty)}^{ab}||_{H^{N-1}} 
				\| e^{-2 \Omega} \Gamma_{ajb}(t,\cdot) \|_{H^{N-1}} 
			\label{E:gabGammaajbminusrhoabGammaajbHN}  \\
			& \ \ + \| g_{(\infty)}^{ab} \|_{L^{\infty}} 
				\|  e^{-2 \Omega} \Gamma_{ajb}(t,\cdot) - \Gamma_{ajb}^{(\infty)} 
				\|_{H^{N-1}} \notag \\
			& \ \ + \| \bard g_{(\infty)}^{ab} \|_{H^{N-2}} 
				\| e^{-2 \Omega} \Gamma_{ajb}(t,\cdot) - \Gamma_{ajb}^{(\infty)} \|_{L^{\infty}}	\notag \\
		& \leq C \epsilon e^{-q Ht}, \notag \\
		\| g_{(\infty)}^{ab} \Gamma_{ajb}^{(\infty)} \|_{H^{N-1}} & \leq C \epsilon. \label{E:rhoabGammaajbHN}
	\end{align}

	We now estimate the term $H^{-1} g^{lm} \partial_l \partial_m \big(g_{(\infty)}^{ab} \Gamma_{ajb}^{(\infty)} \big)$ 
	from the right-hand side of \eqref{E:newtrianglejdef}.
	By Proposition \ref{P:F1FkLinfinityHN}, \eqref{E:gjkupperLinfinity},
	\eqref{E:partialgjkupperHNminusone}, and \eqref{E:rhoabGammaajbHN}, it follows that
	\begin{align} \label{E:hatSquarerhoabGammaajbHNminus2}
		\| g^{lm} \partial_l \partial_m (g_{(\infty)}^{ab} \Gamma_{ajb}^{(\infty)}) \|_{H^{N-3}} \leq C \epsilon e^{-2 Ht}.
	\end{align}
	Applying the estimates \eqref{E:triangle0jnewbound},
	\eqref{E:rescaledg0*energybounded},
	\eqref{E:gabGammaajbminusrhoabGammaajbHN}, and \eqref{E:hatSquarerhoabGammaajbHNminus2}
	to the terms in \eqref{E:newtrianglejdef}, we conclude the
	desired estimate for $\| \triangle_j \|_{H^{N-3}}:$
	
	\begin{align} \label{E:trianglejHNminus3}
		\| \triangle_j \|_{H^{N-3}} \leq C \epsilon e^{-q Ht}.
	\end{align}
	
	In addition, arguing as in our proof of \eqref{E:g0jcommutatorL2} and \eqref{E:newtriangleEgamma0jdelta0*L1}, 
	and in particular making use of the improved estimates \eqref{E:partialtg0jlowerHNimprovement} - \eqref{E:g0jupperHNimprovement} and \eqref{E:rescaledg0*energybounded}, it can be shown that
	
	\begin{align}
		\| [\hat{\Square}_g, \partial_{\vec{\alpha}} ] v_j \|_{L^2} 
			& \leq C Squareon e^{-q Ht} \underline{\mathscr{E}}_{v_*;N-3}
				+ C \epsilon e^{-q Ht}, && (|\vec{\alpha}| \leq N - 3), \label{E:newvjcommutatorL2} 
			\\	
		\| \triangle_{\mathcal{E};(\gamma_{0*}, \delta_{0*})}[\partial_{\vec{\alpha}} v_j,
		\partial (\partial_{\vec{\alpha}} v_j)] \|_{L^1}
			& \leq C \epsilon e^{-2qHt} \underline{\mathscr{E}}_{v_*;N-3}
			+ C e^{-3qHt} \underline{\mathscr{E}}_{v_*;N-3}^2, && (|\vec{\alpha}| \leq N - 3).
			\label{E:newtriangleEgammajdelta0*L1} 
	\end{align}
	
	Using \eqref{E:newunderlinemathfrakEv*firstdifferential}, \eqref{E:trianglejHNminus3}, \eqref{E:newvjcommutatorL2}, 
	and \eqref{E:newtriangleEgammajdelta0*L1}, it follows that
	
	\begin{align}  \label{E:v*energyfinaldifferentialinequality}
		\frac{d}{dt} \Big(\underline{\mathscr{E}}_{v_*;N-3}^2 \Big) 
		& \leq (2q - \eta_{0*})H \underline{\mathscr{E}}_{v_*;N-3}^2 
			+ C e^{-qHt} \underline{\mathscr{E}}_{v_*;N-3}^2
			+ C \epsilon \underline{\mathscr{E}}_{v_*;N-3}.
	\end{align}
	Using the fact that $2q - \eta_{0*} < 0,$
	we argue as in our proof of \eqref{E:rescaledg0*energybounded} to conclude that the following inequality
	holds for all $t \geq 0:$
	
	\begin{align} \label{E:v*newenergynormboundedbyepsilon}
		\underline{\mathscr{E}}_{v_{*};N-3} \leq C \epsilon.
	\end{align}
	Inequalities \eqref{E:g0jlowerconvergence} and \eqref{E:partialtg0jconvergence} now follow from
	the comparison estimate \eqref{E:v*newenergynormequivalence} and from \eqref{E:v*newenergynormboundedbyepsilon}.
	\\

\noindent \emph{Proofs of \eqref{E:improvedgjklowerconvergence},
\eqref{E:improvedgjkupperconvergence},
\eqref{E:improvede2Omegapartialgjkminusrhojklowerconvergence}, 
\eqref{E:improvede2Omegapartialgjkminusrhojkupperconvergence}, 
\eqref{E:improvedg00plusoneconvergence}, and \eqref{E:improvedpartialtg00convergence}}:

We begin by using equations \eqref{E:finalg00equation} and \eqref{E:finalhjkequation}, 
together with the identity $g^{00} + 1 = g^{00}(g_{00} + 1) + g^{0a}g_{0a}$
to derive the following equations:

\begin{align}
		\partial_t^2\big[e^{2\Omega}(g_{00} + 1) \big]& = - \omega \partial_t \big[e^{2\Omega}(g_{00} + 1)\big] 
			+ 2\dot{\omega} e^{2\Omega} (g_{00} + 1) + e^{2\Omega} \triangle'_{00}, \label{E:improvedfinalg00equation} \\
		\partial_t [e^{2\Omega}h_{jk}] & = - \omega e^{2\Omega} \partial_t h_{jk} + e^{2\Omega} \triangle'_{jk},
			\label{E:improvedfinalhjkequation}
\end{align}
where

\begin{align}
		\triangle'_{00} & = (g^{00})^{-1} \Big\lbrace -2 g^{0a} \partial_a \partial_t g_{00} -  g^{ab} \partial_a \partial_b g_{00} 
			+ 5 \omega \big[ g^{00}(g_{00} + 1) + g^{0a}g_{0a} \big] \partial_t g_{00} 
			+ 6 \omega^2 \big[ g^{00}(g_{00} + 1)^2 + g^{0a}g_{0a}(g_{00} + 1) \big]
				\Big\rbrace \label{E:triangle00prime} \\
		& \ \ \ + 2(g^{00})^{-1} \Big\lbrace \triangle_{A,00} + \triangle_{C,00}+ 
			[f(\partial \widetilde{\Phi}) - f(\partial \Phi)] 
			- (g_{00} + 1)  f(\partial \widetilde{\Phi}) - \frac{s}{s+1}(g_{00} + 1) \sigma^{s+1} \Big\rbrace, \notag \\
		 \triangle'_{jk} & = (g^{00})^{-1} \Big\lbrace - 2 g^{0a} \partial_a \partial_t h_{jk} -  g^{ab} \partial_a \partial_b 	
				h_{jk} + 3\omega \big[ g^{00}(g_{00} + 1) + g^{0a}g_{0a} \big] \partial_t h_{jk}  \Big\rbrace 
				\label{E:trianglejkprime} \\
		& \ \ \ + 2(g^{00})^{-1} \Big\lbrace e^{-2 \Omega} \triangle_{A,jk} 
			- \big[ g^{00}(g_{00} + 1) + g^{0a}g_{0a} \big] \widetilde{\sigma}^{s+1} h_{jk} 
			- 2 \omega g^{0a} \partial_{a} h_{jk} - 2e^{-2 \Omega} \sigma^s (\partial_j \Phi)(\partial_k \Phi) \notag \\ 
		& \hspace{4in}  + \frac{s}{s+1} (\widetilde{\sigma}^{s+1} - \sigma^{s+1}) h_{jk} \Big\rbrace. \notag 
\end{align}

Using the fact that $\suptotalnorm{N} \leq \epsilon,$ the improved estimates (whenever they are convenient)
\eqref{E:partialtg0jlowerHNimprovement} - \eqref{E:g0jupperHNimprovement}, \eqref{E:rescaledg0*energybounded},
and the Sobolev-Moser type inequalities in the Appendix, we derive the following inequalities: 

\begin{align}
	\| \triangle'_{00} \|_{H^{N-2}} & \leq C \epsilon e^{-2\Omega}
		+ C \epsilon e^{-qHt}\big\lbrace \| g_{00} + 1 \|_{H^{N-2}}
		+ \| \partial_t g_{00} \|_{H^{N-2}}  
		+ \| \partial_t h_{jk} \|_{H^{N-2}}\big\rbrace, 
		 \label{E:triangleprime00firstestimate} \\
	\| \triangle'_{jk} \|_{H^{N-2}} & \leq C \epsilon e^{-2\Omega} + C \epsilon e^{-qHt} \| \partial_t h_{jk} \|_{H^{N-2}}. 
		\label{E:triangleprimejkfirstestimate}
\end{align}
Since the derivation of the above inequalities is similar to many other estimates
proved in this article, we have left the tedious details up to the reader. However, we remark that
the $\epsilon e^{-qHt}  \| g_{00} + 1 \|_{H^{N-2}}$  term on the right-hand side of \eqref{E:triangleprime00firstestimate}
arises from e.g. the first term on the right-hand side of \eqref{E:triangleC00def}, that
the $\epsilon e^{-qHt}  \| \partial_t g_{00} \|_{H^{N-2}}$  term on the right-hand side of \eqref{E:triangleprime00firstestimate} arises from e.g. the first term on the right-hand side of \eqref{E:triangleA00def}, that
the $\epsilon e^{-qHt} \| \partial_t h_{jk} \|_{H^{N-2}}$ term on the right-hand side of \eqref{E:triangleprime00firstestimate}
arises from e.g. the last term on the right-hand side of \eqref{E:triangleA00def}, and that the 
$\epsilon e^{-qHt} \| \partial_t h_{jk} \|_{H^{N-2}}$ term on the right-hand side of \eqref{E:triangleprimejkfirstestimate}
arises from e.g. terms on the fourth and fifth lines of \eqref{E:triangleAjkdef}. Furthermore, we remark that we are using
$H^{N-2}$ norms because of the presence of the terms on the right-hand sides of \eqref{E:triangle00prime} - \eqref{E:trianglejkprime} that contain second derivatives, e.g. the term $\partial_a \partial_b g_{00};$ by examining e.g. the definition \eqref{E:mathfrakSMg00}, we see that the $H^{N-2}$ norm of such a term has a more favorable rate of decay than its
$H^{N-1}$ norm. We need this additional decay to deduce \eqref{E:triangleprime00firstestimate} - \eqref{E:triangleprimejkfirstestimate}.

Now in order to derive our desired inequalities, it is convenient to introduce the following 
non-negative energies $\underline{\mathscr{E}}_{e^{2\Omega}(g_{00}+1);N-2},$ $\underline{\mathscr{E}}_{\partial_t [e^{2\Omega}(g_{00}+1)];N-2},$ and $\underline{\mathscr{E}}_{e^{2 \Omega}\partial_t h_{**};N-2},$ which are defined by

\begin{subequations}
\begin{align}
	\underline{\mathscr{E}}_{e^{2\Omega}(g_{00}+1);N-2}^2 & \eqdef \sum_{|\vec{\alpha}|\leq N-2} 
		\int_{\mathbb{T}^3} \big(\partial_{\vec{\alpha}} \big(e^{2 \Omega}(g_{00} + 1)]\big)^2 \, d^3 x 
		= \| e^{2 \Omega}(g_{00} + 1) \|_{H^{N-2}}^2, \label{E:newg00energy}  \\
	\underline{\mathscr{E}}_{\partial_t [e^{2\Omega}(g_{00}+1)];N-2}^2 & \eqdef \sum_{|\vec{\alpha}|\leq N-2}  
		\int_{\mathbb{T}^3} \big(\partial_{\vec{\alpha}} \partial_t [e^{2 \Omega}(g_{00} + 1)]\big)^2 \, d^3 x
		= \| \partial_t [e^{2\Omega}(g_{00} + 1)] \|_{H^{N-2}}^2, 
		\label{E:newpartialtg00energy} \\
	\underline{\mathscr{E}}_{e^{2 \Omega}\partial_t h_{**};N-2}^2 & \eqdef  \sum_{|\vec{\alpha}|\leq N-2} \sum_{j,k=1}^3 
		\int_{\mathbb{T}^3} e^{4 \Omega}(\partial_{\vec{\alpha}} \partial_t h_{jk})^2 \, d^3 x
			= \sum_{j,k=1}^3 \| e^{2 \Omega} \partial_t h_{jk} \|_{H^{N-2}}^2.
		\label{E:newpartialthjkenergy}
\end{align}
\end{subequations}
Observe the following simple consequence of the definitions \eqref{E:newg00energy} and \eqref{E:newpartialtg00energy}:

\begin{align} \label{E:e2Omegapartialtg00newenergiesboundHNminus3}
	e^{2 \Omega} \| \partial_t g_{00} \|_{H^{N-2}} \leq C \big\lbrace \underline{\mathscr{E}}_{e^{2\Omega}(g_{00}+1);N-2} + 
		\underline{\mathscr{E}}_{\partial_t [e^{2\Omega}(g_{00}+1)];N-2} \big\rbrace.	
\end{align}

Now from \eqref{E:triangleprime00firstestimate} and \eqref{E:triangleprimejkfirstestimate}, and \eqref{E:e2Omegapartialtg00newenergiesboundHNminus3}, it follows that

\begin{align}
	e^{2\Omega} \| \triangle'_{00} \|_{H^{N-2}} & \leq C \epsilon 
		+ C \epsilon e^{-qHt} \big\lbrace \underline{\mathscr{E}}_{e^{2\Omega}(g_{00}+1);N-2} + \underline{\mathscr{E}}_{\partial_t [e^{2\Omega}(g_{00}+1)];N-2} + \underline{\mathscr{E}}_{e^{2 \Omega}\partial_t h_{**};N-2} \big\rbrace, 
		\label{E:triangleprime00HNminus3} \\
	e^{2\Omega} \| \triangle'_{jk} \|_{H^{N-2}} & \leq C \epsilon + C \epsilon e^{-qHt}\underline{\mathscr{E}}_{e^{2 \Omega}\partial_t h_{**};N-2}.
	\label{E:triangleprimejkHNminus3}
\end{align}
We may therefore use equations \eqref{E:improvedfinalg00equation} - \eqref{E:improvedfinalhjkequation},  together with \eqref{E:triangleprime00HNminus3} - \eqref{E:triangleprimejkHNminus3} to derive the following system of differential inequalities, which is valid if 
$\underline{\mathscr{E}}_{e^{2\Omega}(g_{00}+1);N-2}, \underline{\mathscr{E}}_{\partial_t [e^{2\Omega}(g_{00}+1)];N-2},
$ and $\underline{\mathscr{E}}_{e^{2 \Omega}\partial_t h_{**};N-2}$ are sufficiently small:

\begin{subequations}
\begin{align}
	\frac{d}{dt} \big(\underline{\mathscr{E}}_{e^{2\Omega}(g_{00}+1);N-2}^2 \big) & \leq 
		2 \underline{\mathscr{E}}_{e^{2\Omega}(g_{00}+1);N-2} \underline{\mathscr{E}}_{\partial_t [e^{2\Omega}(g_{00}+1)];N-2}, 
		\label{E:newg00energyfinaldifferentialinequality} \\
	\frac{d}{dt} \big(\underline{\mathscr{E}}_{\partial_t [e^{2\Omega}(g_{00}+1)];N-2}^2 \big) & \leq - 2 \omega \underline{\mathscr{E}}_{\partial_t [e^{2\Omega}(g_{00}+1)];N-2}^2
	 + 2 e^{2 \Omega} \underline{\mathscr{E}}_{\partial_t [e^{2\Omega}(g_{00}+1)];N-2} \| \triangle'_{00} \|_{H^{N-2}} 
	  \label{E:newpartialtg00energyfinaldifferentialinequality} \\
	 & \ \ + 4 |\dot{\omega}| \underline{\mathscr{E}}_{e^{2\Omega}(g_{00}+1);N-2} 
	  \underline{\mathscr{E}}_{\partial_t [e^{2\Omega}(g_{00}+1)];N-2} \notag \\
	  & \leq - 2 \omega \underline{\mathscr{E}}_{\partial_t [e^{2\Omega}(g_{00}+1)];N-2}^2 
	 	+ C e^{-qHt} \underline{\mathscr{E}}_{e^{2\Omega}(g_{00}+1);N-2}\underline{\mathscr{E}}_{\partial_t
	 	[e^{2\Omega}(g_{00}+1)];N-2} 
	 	+ C \epsilon \underline{\mathscr{E}}_{\partial_t [e^{2\Omega}(g_{00}+1)];N-2} \notag \\
	 & \ \ \ + C \epsilon e^{-qHt} \underline{\mathscr{E}}_{\partial_t [e^{2\Omega}(g_{00}+1)];N-2}
	 	\big\lbrace \underline{\mathscr{E}}_{e^{2\Omega}(g_{00}+1);N-2} + \underline{\mathscr{E}}_{\partial_t 
	 	[e^{2\Omega}(g_{00}+1)];N-2} + \underline{\mathscr{E}}_{e^{2 \Omega}\partial_t h_{**};N-2} \big\rbrace, \notag \\
	 \frac{d}{dt} \big(\underline{\mathscr{E}}_{e^{2 \Omega}\partial_t h_{**};N-2}^2 \big) & \leq - 2 \omega 
		\underline{\mathscr{E}}_{e^{2 \Omega}\partial_t h_{**};N-2}^2
	 	+ 2 e^{2 \Omega} \underline{\mathscr{E}}_{e^{2 \Omega}\partial_t h_{**};N-2} \| \triangle'_{jk} \|_{H^{N-2}} 
	 	\label{E:newpartialthjkenergyfinaldifferentialinequality} \\
	 	& \leq - 2 \omega \underline{\mathscr{E}}_{e^{2 \Omega}\partial_t h_{**};N-2}^2 + C \epsilon \underline{\mathscr{E}}_{e^{2 
	 	\Omega}\partial_t h_{**};N-2} +  C \epsilon e^{-qHt}\underline{\mathscr{E}}_{e^{2 \Omega}\partial_t h_{**};N-2}^2. \notag
\end{align}	 
\end{subequations}
We remark that \eqref{E:newg00energyfinaldifferentialinequality} follows from a simple application of the Cauchy-Schwarz inequality, after bringing the time derivative under the integral over $\mathbb{T}^3.$ Using the initial conditions $\underline{\mathscr{E}}_{e^{2\Omega}(g_{00}+1);N-2}(0) \leq C \epsilon, 
\underline{\mathscr{E}}_{\partial_t [e^{2\Omega}(g_{00}+1)];N-2}(0) \leq C \epsilon,$ 
and $\underline{\mathscr{E}}_{e^{2 \Omega}\partial_t h_{**};N-2}(0) \leq C \epsilon,$
and assuming that $\epsilon$ is sufficiently small, we apply a Gronwall-type inequality to the system \eqref{E:newg00energyfinaldifferentialinequality} - \eqref{E:newpartialthjkenergyfinaldifferentialinequality}, concluding that the following inequalities hold for all $t \geq 0:$

\begin{subequations}
\begin{align}
	\underline{\mathscr{E}}_{e^{2\Omega}(g_{00}+1);N-2} & \leq C \epsilon (1 + t), 
		\label{E:newg00energybound} \\
	\underline{\mathscr{E}}_{\partial_t [e^{2\Omega}(g_{00}+1)];N-2} & \leq C \epsilon,  
	\label{E:newpartialtg00energybound} \\
\underline{\mathscr{E}}_{e^{2 \Omega}\partial_t h_{**};N-2} & \leq C \epsilon. \label{E:newpartialthjkenergybound}
\end{align}
\end{subequations}
Inequalities \eqref{E:improvedg00plusoneconvergence} and \eqref{E:improvedpartialtg00convergence} now follow from
\eqref{E:newg00energybound} and \eqref{E:newpartialtg00energybound}. Similarly, \eqref{E:newpartialthjkenergybound} 
implies that

\begin{align} \label{E:partialthjkimprovedboundproof}
	\|\partial_t h_{jk} \|_{H^{N-2}} & \leq C \epsilon e^{-2 \Omega},
\end{align}
from which \eqref{E:improvedgjklowerconvergence} and
\eqref{E:improvede2Omegapartialgjkminusrhojklowerconvergence} easily follow.

To obtain \eqref{E:improvedgjkupperconvergence}, we first use the improved estimates
\eqref{E:partialtg0jlowerHNimprovement} - \eqref{E:g0jupperHNimprovement}, \eqref{E:rescaledg0*energybounded}, and \eqref{E:partialthjkimprovedboundproof} to upgrade the inequality \eqref{E:partialtgjkplus2omegagjkproofinequality} as
follows:

\begin{align} \label{E:improvedpartialtgupperjkplus2omegagupperjk}
	\| \partial_t g^{jk} + 2 \omega g^{jk} \|_{H^{N-2}} & \leq C \epsilon e^{-4 \Omega}.
\end{align}
\eqref{E:improvedgjkupperconvergence} then follows from \eqref{E:improvedpartialtgupperjkplus2omegagupperjk} as in our proof of \eqref{E:gjkupperconvergence}. \eqref{E:improvede2Omegapartialgjkminusrhojkupperconvergence} then follows from
\eqref{E:improvedgjkupperconvergence} and \eqref{E:improvedpartialtgupperjkplus2omegagupperjk}. 
\\

\noindent \emph{Proof of \eqref{E:improvedKjkconvergence}}:
To prove \eqref{E:improvedKjkconvergence}, we use the identity $g^{00} + 1 = \frac{1}{g_{00}}[(g_{00} + 1) - g^{0a}g_{0a}]$
and the improved estimates \eqref{E:improvedg00plusoneconvergence}, \eqref{E:improvedpartialtg00convergence},
\eqref{E:partialtg0jlowerHNimprovement} - \eqref{E:g0jupperHNimprovement}, \eqref{E:rescaledg0*energybounded}, and \eqref{E:partialthjkimprovedboundproof} to upgrade inequalities \eqref{E:Kjktriangle1} - \eqref{E:Kjktriangle2} as follows:

\begin{align}
		e^{-2 \Omega} \big\|K_{jk} + (-g^{00})^{-1/2} g^{00} \Gamma_{jk0} \big\|_{H^{N-2}} 
			& \leq C \epsilon e^{-2Ht}, \label{E:improvedKjktriangle1} \\
		e^{-2 \Omega} \Big\| (-g^{00})^{-1/2} g^{00} \Gamma_{jk0} + \frac{1}{2}\partial_t g_{jk} \Big \|_{H^{N-2}}
		 	& \leq C \epsilon (1 + t) e^{-2Ht}. \label{E:improvedKjktriangle2} 
	\end{align}
	Combining \eqref{E:improvede2Omegapartialgjkminusrhojklowerconvergence}, \eqref{E:improvedKjktriangle1}, and
	\eqref{E:improvedKjktriangle2}, we deduce \eqref{E:improvedKjkconvergence} (as in our proof of \eqref{E:Kjkconvergence}).
	\\
	
\noindent \emph{Proof of \eqref{E:improvedpartialtPhiconvergence}}:
	To prove \eqref{E:improvedpartialtPhiconvergence}, we use the improved estimates \eqref{E:improvedg00plusoneconvergence},
	\eqref{E:improvedpartialtg00convergence},
	\eqref{E:partialtg0jlowerHNimprovement} - 
	\eqref{E:g0jupperHNimprovement}, \eqref{E:rescaledg0*energybounded}, and \eqref{E:partialthjkimprovedboundproof}
	to upgrade inequality \eqref{E:ewOmegatrianglePhiHNminusone} to 
	
	\begin{align} \label{E:improvedewOmegatrianglePhiHNminusone}
		\| e^{\decayparameter \Omega} \triangle'_{\partial \Phi} \|_{H^{N-2}} \leq C \epsilon e^{-2(1 - \decayparameter)Ht}.
	\end{align}
	\eqref{E:improvedpartialtPhiconvergence} then follows from \eqref{E:improvedewOmegatrianglePhiHNminusone},
	as in our proof of \eqref{E:partialtPhiconvergence}.

\section*{Acknowledgments}
JS thanks Princeton University, where the bulk of this project was completed, for its support. The authors would like to thank Alan Rendall for offering some helpful comments, and for pointing out the reference \cite{uBaRoR1994}.

\begin{center}
	\textbf{\huge{Appendices}}
\end{center}
\setcounter{section}{0}
   \setcounter{subsection}{0}
   \setcounter{subsubsection}{0}
   \setcounter{paragraph}{0}
   \setcounter{subparagraph}{0}
   \setcounter{figure}{0}
   \setcounter{table}{0}
   \setcounter{equation}{0}
   \setcounter{theorem}{0}
   \setcounter{definition}{0}
   \setcounter{remark}{0}
   \setcounter{proposition}{0}
   \renewcommand{\thesection}{\Alph{section}}
   \renewcommand{\theequation}{\Alph{section}.\arabic{equation}}
   \renewcommand{\theproposition}{\Alph{section}-\arabic{proposition}}
   \renewcommand{\thecorollary}{\Alph{section}.\arabic{corollary}}
   \renewcommand{\thedefinition}{\Alph{section}.\arabic{definition}}
   \renewcommand{\thetheorem}{\Alph{section}.\arabic{theorem}}
   \renewcommand{\theremark}{\Alph{section}.\arabic{remark}}
   \renewcommand{\thelemma}{\Alph{section}-\arabic{lemma}}

\section{Sobolev-Moser Inequalities} \label{A:SobolevMoser}
		In this Appendix, we provide some standard Sobolev-Moser type estimates that play a fundamental role in
		our analysis of the nonlinear terms in our equations. The propositions and corollaries stated below can be proved
		using methods similar to those used in \cite[Chapter 6]{lH1997} and in \cite{sKaM1981}. The proofs given in the 
		literature are commonly based on a version of the Gagliardo-Nirenberg inequality \cite{lN1959},
		which we state as Lemma \ref{L:GN}, together with repeated use of H\"{o}lder's inequality and/or Sobolev embedding. 
		Throughout this appendix, we abbreviate $L^p=L^p(\mathbb{T}^3),$ and $H^M=H^M(\mathbb{T}^3).$ 

\begin{lemma}                                                                                           \label{L:GN}
    If $M,N$ are integers such that $0 \leq M \leq N,$ and $v$ is a function on $\mathbb{T}^3$ such that $v \in
    L^{\infty}, \|\bard^{(N)} v \|_{L^2} < \infty,$ then
    \begin{align}
        \| \bard^{(M)} v \|_{L^{2N/M}} \leq C(M,N) \| v \|_{L^{\infty}}^{1 -
        \frac{M}{N}}\|\bard^{(N)} v \|_{L^2}^{\frac{M}{N}}.
    \end{align}
\end{lemma}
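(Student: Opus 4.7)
The plan is to reduce the proof to the interior case $0 < M < N$, since both endpoint cases are trivial: when $M=0$, the inequality reads $\|v\|_{L^\infty} \leq C\|v\|_{L^\infty}$, and when $M=N$, it reads $\|\bard^{(N)} v\|_{L^2} \leq C\|\bard^{(N)} v\|_{L^2}$. By a standard mollification/density argument I will also reduce to the case $v \in C^\infty(\mathbb{T}^3)$, so that all integrations by parts below are permissible, and extend to the general case by passing to the limit.

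The central idea for the interior case is to establish a single ``descent'' inequality by integration by parts and then to iterate it. Specifically, for any $u \in C^\infty(\mathbb{T}^3)$, any coordinate index $i$, and any even integer $p \geq 2$, I will show
\begin{align*}
    \int_{\mathbb{T}^3} (\partial_i u)^p \, dx
    &= -(p-1)\int_{\mathbb{T}^3} u \,(\partial_i u)^{p-2}\, \partial_i^2 u \, dx
    \leq (p-1)\|u\|_{L^\infty}\|\partial_i u\|_{L^p}^{p-2}\|\partial_i^2 u\|_{L^{p/2}},
\end{align*}
using integration by parts followed by H\"older's inequality with exponents $p/(p-2)$ and $p/2$. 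Dividing by $\|\partial_i u\|_{L^p}^{p-2}$ yields the core bound
\begin{align*}
    \|\partial_i u\|_{L^p}^{2} \leq (p-1)\|u\|_{L^\infty}\|\partial_i^2 u\|_{L^{p/2}},
\end{align*}
which exchanges one derivative for halving the integrability exponent. A symmetrization/absolute-value argument (or working with $(\epsilon + (\partial_i u)^2)^{p/2}$ and sending $\epsilon \to 0$) handles odd $p$ and general real $p \geq 2$.

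Next, I will iterate this base inequality to reach the claimed exponents. Applying the bound with $u = \partial_{\vec{\beta}} v$ for multi-indices $\vec{\beta}$ of increasing order $M-1, M-2, \ldots, 0$, and with $p = 2N/M, 2N/(M+1), \ldots$ (noting that $p/2 = 2N/(M+k)$ corresponds to stepping from order $M+k-1$ to $M+k$), I obtain a chain of inequalities relating $\|\bard^{(M)} v\|_{L^{2N/M}}$ to $\|\bard^{(M+1)} v\|_{L^{2N/(M+1)}}$, then to $\|\bard^{(M+2)} v\|_{L^{2N/(M+2)}}$, and so on, each step introducing a factor of $\|v\|_{L^\infty}^{1/\text{something}}$. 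After $N-M$ iterations, the top-order term becomes $\|\bard^{(N)} v\|_{L^2}$, and the accumulated exponents of $\|v\|_{L^\infty}$ will sum to $1 - M/N$ while the exponent on $\|\bard^{(N)} v\|_{L^2}$ will be $M/N$, once I take appropriate powers of the successive inequalities.

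The main obstacle will be the exponent bookkeeping: verifying that the telescoping product of the base inequalities (each with its own powers and H\"older splittings) rearranges into precisely the claimed powers $1 - M/N$ on $\|v\|_{L^\infty}$ and $M/N$ on $\|\bard^{(N)} v\|_{L^2}$. This is a purely algebraic calculation, cleanest to execute by induction on $N - M$ at fixed $N$: at each induction step one raises the previously-known inequality to the power $(2N/(M+1))/2 = N/(M+1)$ and multiplies by the base inequality, and a short computation confirms that $(1 - (M+1)/N) \cdot (M/(M+1)) + 1/(M+1) = 1 - M/N$, matching the exponent claimed for $\|v\|_{L^\infty}$. A secondary nuisance is that some of the intermediate exponents $2N/k$ may be non-integer, which is handled either by the $\epsilon$-regularization mentioned above or by appealing directly to the version of integration by parts valid for arbitrary $p \geq 2$.
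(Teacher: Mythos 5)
The paper does not prove this lemma at all: it quotes it as the standard Gagliardo--Nirenberg interpolation inequality and cites Nirenberg's 1959 paper, so your proposal has to be judged on its own. Your base identity and the two endpoint cases are fine: on $\mathbb{T}^3$ the integration by parts $\int (\partial_i u)^p = -(p-1)\int u\,(\partial_i u)^{p-2}\partial_i^2 u$ is valid for smooth $u$, and for the first rung ($u=v$, $M=1$) your H\"older split is exactly right, since it yields $\|\partial_i v\|_{L^{2N}}^2 \leq C\|v\|_{L^\infty}\|\partial_i^2 v\|_{L^{N}}$, and $L^N = L^{2N/2}$ is precisely the exponent the statement assigns to second derivatives.

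The gap is in the iteration. Your descent inequality places the undifferentiated factor $u$ in $L^\infty$, so when you apply it with $u=\partial_{\vec\beta}v$, $|\vec\beta|=k-1\geq 1$, the right-hand side contains $\|\partial_{\vec\beta}v\|_{L^\infty}$ --- a sup norm of a \emph{derivative} of $v$, which is not among the hypotheses and is not what appears in the target inequality. Relatedly, the exponent chain you describe does not close: with $p=2N/(M+k-1)$ one gets $p/2=N/(M+k-1)$, not $2N/(M+k)$ as you assert, so repeated halving does not land on $L^2$ at order $N$ except when $2^{N-M}=N/M$. Your fallback induction on $N-M$ has correct exponent arithmetic, but it silently requires the intermediate estimate $\|\bard^{(M)}v\|_{L^{2N/M}}\leq C\|v\|_{L^\infty}^{1/(M+1)}\|\bard^{(M+1)}v\|_{L^{2N/(M+1)}}^{M/(M+1)}$, which is itself a Gagliardo--Nirenberg inequality of the same type and is not delivered by your base inequality in one step; as structured the induction is circular. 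The fix is to use a three-factor H\"older split in the same integration by parts: placing $u=\partial_{\vec\beta}v$ in $L^{2N/(k-1)}$, the factor $(\partial_i u)^{p-2}$ in $L^{p/(p-2)}$, and $\partial_i^2 u$ in $L^{2N/(k+1)}$ (the exponents sum to $1$ since $p=2N/k$), which yields the ladder $a_k^2\leq C\,a_{k-1}a_{k+1}$ for $a_k\eqdef \|\bard^{(k)}v\|_{L^{2N/k}}$, with $a_0=\|v\|_{L^\infty}$. Every factor then stays inside the controlled family, and the elementary log-convexity lemma for sequences satisfying $a_k^2\leq Ca_{k-1}a_{k+1}$ gives $a_M\leq C' a_0^{1-M/N}a_N^{M/N}$, which is the claim.
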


\begin{proposition} \label{P:derivativesofF1FkL2}
	Let $M \geq 0$ be an integer. If $\lbrace v_a \rbrace_{1 \leq a \leq l}$ are functions such that $v_a \in
    L^{\infty}, \|\bard^{(M)} v_a \|_{L^2} < \infty$ for $1 \leq a \leq l,$ and
	$\vec{\alpha}_1, \cdots \vec{\alpha}_l$ are spatial derivative multi-indices with 
	$|\vec{\alpha}_1| + \cdots |\vec{\alpha}_l| = M,$ then
	\begin{align}
		\| (\partial_{\vec{\alpha}_1}v_1) (\partial_{\vec{\alpha}_2}v_2) \cdots (\partial_{\vec{\alpha}_l}v_l)\|_{L^2}
		& \leq C(l,M) \sum_{a=1}^l \Big( \| \bard^{(M)} v_a  \|_{L^2} \prod_{b \neq a} \|v_{b} \|_{L^{\infty}} \Big).
	\end{align}
\end{proposition}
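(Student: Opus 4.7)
My plan is to prove the inequality by the classical Hölder-plus-Gagliardo-Nirenberg interpolation scheme, followed by a weighted arithmetic-geometric mean step to convert a product bound into the required sum bound.

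Set $m_a \eqdef |\vec{\alpha}_a|$, so that $m_1 + \cdots + m_l = M$. I may assume $M \geq 1$, since the case $M=0$ reduces to a direct application of Hölder's inequality on the compact space $\mathbb{T}^3$ (where $\|v_a\|_{L^2} \leq C \|v_a\|_{L^\infty}$). For each $a$ with $m_a \geq 1$, I assign the Lebesgue exponent $p_a = 2M/m_a$, and for each $a$ with $m_a = 0$, I assign $p_a = \infty$; then $\sum_a 1/p_a = \sum_{a:m_a\geq 1} m_a/(2M) = 1/2$, so Hölder's inequality yields
\begin{align*}
\bigl\| (\partial_{\vec{\alpha}_1} v_1) \cdots (\partial_{\vec{\alpha}_l} v_l) \bigr\|_{L^2}
\leq \prod_{a=1}^{l} \bigl\| \partial_{\vec{\alpha}_a} v_a \bigr\|_{L^{p_a}}.
\end{align*}

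Next, for each factor with $m_a \geq 1$, I invoke the Gagliardo-Nirenberg inequality of Lemma \ref{L:GN} (applied to $v_a$ with $M$ in the lemma equal to our $m_a$ and $N$ in the lemma equal to our $M$) to obtain
\begin{align*}
\bigl\| \partial_{\vec{\alpha}_a} v_a \bigr\|_{L^{2M/m_a}}
\leq \bigl\| \bard^{(m_a)} v_a \bigr\|_{L^{2M/m_a}}
\leq C(M)\, \|v_a\|_{L^\infty}^{1 - m_a/M}\, \bigl\| \bard^{(M)} v_a \bigr\|_{L^2}^{m_a/M},
\end{align*}
while for $m_a = 0$ the factor is just $\|v_a\|_{L^\infty}$, which is consistent with the formula above under the convention $0^0 = 1$. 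Multiplying these bounds together produces
\begin{align*}
\bigl\| (\partial_{\vec{\alpha}_1} v_1) \cdots (\partial_{\vec{\alpha}_l} v_l) \bigr\|_{L^2}
\leq C(l,M) \prod_{a=1}^{l} \|v_a\|_{L^\infty}^{1 - m_a/M}\, \bigl\| \bard^{(M)} v_a \bigr\|_{L^2}^{m_a/M}.
\end{align*}

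The final (and only mildly delicate) step is to convert this product into the asserted sum. Setting $x_a \eqdef \bigl\| \bard^{(M)} v_a \bigr\|_{L^2} \prod_{b \neq a} \|v_b\|_{L^\infty}$ and $\theta_a \eqdef m_a/M$, one checks by a direct exponent count that $\prod_a x_a^{\theta_a}$ equals exactly the right-hand side of the previous display (the exponent of $\|v_b\|_{L^\infty}$ coming out to $\sum_{a \neq b} m_a/M = 1 - m_b/M$). Since $\sum_a \theta_a = 1$ and $\theta_a \geq 0$, the weighted AM-GM (equivalently, Young's inequality) gives $\prod_a x_a^{\theta_a} \leq \sum_a \theta_a x_a \leq \sum_a x_a$, which is the claim. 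The only genuine subtlety is verifying the exponent arithmetic in this last bookkeeping step; every other ingredient is a direct invocation of Hölder's inequality, Lemma \ref{L:GN}, or AM-GM.
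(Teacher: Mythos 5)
Your proof is correct and follows exactly the route the paper indicates for its Appendix estimates: H\"older's inequality with exponents $2M/|\vec{\alpha}_a|$, the Gagliardo--Nirenberg interpolation of Lemma \ref{L:GN}, and a weighted AM--GM step to pass from the interpolated product to the stated sum. The exponent bookkeeping in your final step checks out ($\sum_{a\neq b} m_a/M = 1 - m_b/M$), and the degenerate cases $m_a = 0$ and $M = 0$ are handled properly.
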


\begin{corollary}                                                  \label{C:DifferentiatedSobolevComposition}
    Let $M \geq 1$ be an integer, let $\mathfrak{K}$ be a compact set, and let $F \in C_b^M(\mathfrak{K})$ be a 
    function. Assume that $v$ is a function such that $v(\mathbb{T}^3) \subset \mathfrak{K}$ and $ \bard v \in H^{M-1}.$
    Then $\bard (F \circ v) \in H^{M-1},$ and
    \begin{align} 														\label{E:DifferentiatedModifiedSobolevEstimate}
    	\| \bard (F \circ v) \|_{H^{M-1}} 
    		& \leq C(M) \| \bard v \|_{H^{M-1}} \sum_{l=1}^M |F^{(l)}|_{\mathfrak{K}} 
    		\| v \|_{L^{\infty}}^{l - 1}.
    \end{align}
\end{corollary}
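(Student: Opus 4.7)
The plan is to prove this composition estimate via the Faà di Bruno-type chain rule followed by the product estimate from Proposition \ref{P:derivativesofF1FkL2}. First, I would fix a spatial multi-index $\vec{\alpha}$ with $1 \leq |\vec{\alpha}| \leq M$ and apply induction on $|\vec{\alpha}|$ (or simply iterate the ordinary chain rule) to obtain a pointwise identity of the schematic form
\[
\partial_{\vec{\alpha}}(F \circ v) = \sum_{l=1}^{|\vec{\alpha}|} \sum_{\substack{\vec{\alpha}_1 + \cdots + \vec{\alpha}_l = \vec{\alpha} \\ |\vec{\alpha}_i| \geq 1}} c_{\vec{\alpha}_1,\ldots,\vec{\alpha}_l}\, F^{(l)}(v) \, (\partial_{\vec{\alpha}_1} v)(\partial_{\vec{\alpha}_2} v) \cdots (\partial_{\vec{\alpha}_l} v),
\]
where the $c_{\vec{\alpha}_1,\ldots,\vec{\alpha}_l}$ are combinatorial constants depending only on $|\vec{\alpha}|$ and $l$. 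This is exactly Faà di Bruno; the point is that every derivative that falls on $F^{(k)}(v)$ produces one more factor of $\partial_{\cdot} v$ and increments $k$ by one, and all intermediate factors are just derivatives of $v$.

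Next, I would estimate each summand in $L^2$. For the factor $F^{(l)}(v),$ the hypotheses $v(\mathbb{T}^3) \subset \mathfrak{K}$ and $F \in C_b^M(\mathfrak{K})$ give the pointwise bound $|F^{(l)}(v)| \leq |F^{(l)}|_{\mathfrak{K}}$, so this factor can be pulled out in $L^{\infty}.$ For the remaining product $(\partial_{\vec{\alpha}_1} v) \cdots (\partial_{\vec{\alpha}_l} v),$ I would apply Proposition \ref{P:derivativesofF1FkL2} with the choices $v_a = v$ for every $a$ and multi-indices $\vec{\beta}_a$ obtained by shifting: since each $|\vec{\alpha}_i| \geq 1,$ I rewrite $\partial_{\vec{\alpha}_i} v = \partial_{\vec{\beta}_i}(\bard v)$ for some spatial multi-index $\vec{\beta}_i$ with $|\vec{\beta}_i| = |\vec{\alpha}_i| - 1,$ so that $|\vec{\beta}_1| + \cdots + |\vec{\beta}_l| = |\vec{\alpha}| - l \leq M - 1.$ Applying Proposition \ref{P:derivativesofF1FkL2} (with $\bard v$ playing the role of each $v_a$) then yields
\[
\big\| (\partial_{\vec{\alpha}_1} v) \cdots (\partial_{\vec{\alpha}_l} v) \big\|_{L^2} \leq C(l,M)\, \|\bard v\|_{H^{M-1}} \, \|\bard v\|_{L^{\infty}}^{l-1}.
\]

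Finally, I would sum these bounds over $l$ and over the partitions of $\vec{\alpha}$, and then over $\vec{\alpha}$ with $|\vec{\alpha}| \leq M,$ absorbing combinatorial constants into $C(M).$ The factor $\|\bard v\|_{L^{\infty}}^{l-1}$ is then controlled by $C\, \|v\|_{L^{\infty}}^{l-1}$ up to replacing the constant --- or more cleanly, one applies Lemma \ref{L:GN} (Gagliardo--Nirenberg) to interpolate $\|\bard v\|_{L^{\infty}}^{l-1}$ between $\|v\|_{L^{\infty}}$ and $\|\bard v\|_{H^{M-1}}$, which produces the claimed bound with $\|v\|_{L^{\infty}}^{l-1}$ and one factor of $\|\bard v\|_{H^{M-1}}.$ The main obstacle is bookkeeping: verifying that the Gagliardo--Nirenberg interpolation is balanced so that exactly one factor of $\|\bard v\|_{H^{M-1}}$ survives in each term and only $L^{\infty}$-norms of $v$ (not of $\bard v$) appear on the right-hand side, matching the precise form of \eqref{E:DifferentiatedModifiedSobolevEstimate}. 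Once these exponents are checked, the desired inequality follows immediately.
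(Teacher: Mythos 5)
Your skeleton --- Fa\`{a} di Bruno, pulling $F^{(l)}(v)$ out in $L^{\infty}$ via $v(\mathbb{T}^3)\subset\mathfrak{K}$, then a Moser product estimate --- is the standard route (and the one the paper points to), but the way you execute the product estimate fails. By rewriting each $\partial_{\vec{\alpha}_i} v$ as $\partial_{\vec{\beta}_i}(\bard v)$ and applying Proposition \ref{P:derivativesofF1FkL2} to the functions $\bard v$, you produce the factor $\|\bard v\|_{L^{\infty}}^{l-1}$. That quantity is not controlled by the hypotheses: the corollary assumes only $\bard v\in H^{M-1}$ and $v\in L^{\infty}$, and for $M=2$ on $\mathbb{T}^3$ the space $H^{1}$ does not embed in $L^{\infty}$. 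Neither of your two repairs works. The assertion that $\|\bard v\|_{L^{\infty}}^{l-1}\leq C\|v\|_{L^{\infty}}^{l-1}$ is simply false (a sup bound on a function gives no sup bound on its derivative), and Lemma \ref{L:GN} controls $\|\bard^{(M)}v\|_{L^{2N/M}}$, not $\|\bard v\|_{L^{\infty}}$; moreover, any multiplicative interpolation of $\|\bard v\|_{L^{\infty}}$ between $\|v\|_{L^{\infty}}$ and a Sobolev norm would leave extra powers of $\|\bard v\|_{H^{M-1}}$ on the right-hand side, which cannot be reconciled with \eqref{E:DifferentiatedModifiedSobolevEstimate}, which is linear in $\|\bard v\|_{H^{M-1}}$.

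The repair is to skip the shift entirely: apply Proposition \ref{P:derivativesofF1FkL2} to $(\partial_{\vec{\alpha}_1}v)\cdots(\partial_{\vec{\alpha}_l}v)$ with every $v_a$ equal to $v$ itself and total order $K=|\vec{\alpha}|$. Its conclusion then reads
\begin{align*}
\big\| (\partial_{\vec{\alpha}_1} v)\cdots(\partial_{\vec{\alpha}_l} v)\big\|_{L^2} \leq C(l,K)\,\|\bard^{(K)} v\|_{L^2}\,\|v\|_{L^{\infty}}^{l-1},
\end{align*}
and since every $|\vec{\alpha}_i|\geq 1$ forces $K\geq 1$, one has $\|\bard^{(K)} v\|_{L^2}\leq\|\bard v\|_{H^{K-1}}\leq\|\bard v\|_{H^{M-1}}$. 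Summing over $l$, over the partitions of $\vec{\alpha}$, and over $1\leq|\vec{\alpha}|\leq M$ gives \eqref{E:DifferentiatedModifiedSobolevEstimate} with no further interpolation. The balancing you were worried about is already built into Proposition \ref{P:derivativesofF1FkL2}: its proof is the Gagliardo--Nirenberg--H\"{o}lder argument applied to $v$ (each factor estimated in $L^{2K/|\vec{\alpha}_i|}$), not to $\bard v$.
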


\begin{corollary}                                                                                             \label{C:SobolevTaylor}
     Let $M \geq 1$ be an integer, let $\mathfrak{K}$ be a compact, convex set, and let $F \in C_b^M(\mathfrak{K})$ be a 
     function. Assume that $v$ is a function such that $v(\mathbb{T}^3) \subset \mathfrak{K}$ and $v - \bar{v} \in H^M,$
    where $\bar{v} \in \mathfrak{K}$ is a constant. Then $F \circ v - F \circ \bar{v} \in H^M,$ and
    \begin{align} 														\label{E:ModifiedSobolevEstimateConstantArray}
    	\|F \circ v - F \circ \bar{v} \|_{H^M} 
    		\leq C(M) \Big\lbrace |F^{(1)}|_{\mathfrak{K}}\| v - \bar{v} \|_{L^2} 
    		+ \| \bard v \|_{H^{M-1}} \sum_{l=1}^M  |F^{(l)}|_{\mathfrak{K}} 
    		\| v \|_{L^{\infty}}^{l - 1} \Big\rbrace.
    \end{align}
\end{corollary}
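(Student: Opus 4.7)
The plan is to split the $H^M$ norm into its $L^2$ piece and its homogeneous derivative piece, and to handle each separately. By definition,
\begin{align*}
	\| F \circ v - F \circ \bar{v} \|_{H^M}^2
	= \| F \circ v - F \circ \bar{v} \|_{L^2}^2 + \sum_{1 \leq |\vec{\alpha}| \leq M} \| \partial_{\vec{\alpha}}(F \circ v - F \circ \bar{v}) \|_{L^2}^2,
\end{align*}
so up to the constant $C(M)$ it suffices to bound each summand separately and take square roots.

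First, I would dispense with the $L^2$ piece. Since $\mathfrak{K}$ is convex, for each $x \in \mathbb{T}^3$ the segment joining $\bar v$ to $v(x)$ lies in $\mathfrak{K}$, and the one-variable fundamental theorem of calculus gives
\begin{align*}
	|F(v(x)) - F(\bar v)| \leq \Big(\sup_{y \in \mathfrak{K}} |F'(y)|\Big) \, |v(x) - \bar v| = |F^{(1)}|_{\mathfrak{K}} \, |v(x) - \bar v|.
\end{align*}
Squaring and integrating over $\mathbb{T}^3$ yields $\| F \circ v - F \circ \bar v \|_{L^2} \leq |F^{(1)}|_{\mathfrak{K}} \| v - \bar v \|_{L^2}$, which accounts for the first term on the right-hand side of the claimed inequality.

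For the higher derivatives, I would exploit the crucial observation that $F \circ \bar v$ is a constant, so $\partial_{\vec{\alpha}}(F \circ v - F \circ \bar v) = \partial_{\vec{\alpha}}(F \circ v)$ whenever $|\vec{\alpha}| \geq 1$. Hence
\begin{align*}
	\sum_{1 \leq |\vec{\alpha}| \leq M} \| \partial_{\vec{\alpha}}(F \circ v - F \circ \bar v) \|_{L^2}^2 = \| \bard (F \circ v) \|_{H^{M-1}}^2,
\end{align*}
and I may apply Corollary \ref{C:DifferentiatedSobolevComposition} directly to the right-hand side. This gives
\begin{align*}
	\| \bard (F \circ v) \|_{H^{M-1}} \leq C(M) \| \bard v \|_{H^{M-1}} \sum_{l=1}^M |F^{(l)}|_{\mathfrak{K}} \| v \|_{L^{\infty}}^{l-1},
\end{align*}
which is exactly the second term on the right-hand side of the desired inequality. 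Combining the two estimates and invoking the inequality $\sqrt{a^2 + b^2} \leq a + b$ (for $a, b \geq 0$) completes the bound.

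There is no real obstacle here: the only subtlety is the need for convexity of $\mathfrak{K}$ in order to apply the mean value estimate on the $L^2$ piece, and the minor observation that differentiating kills the constant $F \circ \bar v$, which lets the previous corollary be applied to the full $\bard(F \circ v)$ rather than to a difference. The fact that $F \circ v - F \circ \bar v \in H^M$ (and not merely an a priori estimate on a formal expression) follows a posteriori from the finiteness of the right-hand side together with a standard mollification or density argument, which I would either invoke tacitly or sketch in one line.
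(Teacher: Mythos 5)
Your proposal is correct and is exactly the intended derivation: the paper omits the proof (deferring to standard Sobolev--Moser arguments), and the natural route is precisely yours --- the convexity of $\mathfrak{K}$ plus the fundamental theorem of calculus handles the $L^2$ piece with the $|F^{(1)}|_{\mathfrak{K}}\|v-\bar{v}\|_{L^2}$ term, while the derivative piece reduces, after killing the constant $F\circ\bar{v}$, to a direct application of Corollary \ref{C:DifferentiatedSobolevComposition}. The only cosmetic point is that $\sum_{1\leq|\vec{\alpha}|\leq M}\|\partial_{\vec{\alpha}}(F\circ v)\|_{L^2}^2$ equals $\|\bard(F\circ v)\|_{H^{M-1}}^2$ only up to a combinatorial constant depending on $M$, which is harmless since the conclusion carries a $C(M)$.
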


\begin{proposition} \label{P:F1FkLinfinityHN}
	Let $M \geq 1, l \geq 2$ be integers. Suppose that $\lbrace v_a \rbrace_{1 \leq a \leq l}$ are functions such that $v_a \in
    L^{\infty}$ for $1 \leq a \leq l,$ that $v_l \in H^M,$ and that
	$\bard v_a \in H^{M-1}$ for $1 \leq a \leq l - 1.$
	Then
	\begin{align}
		\| v_1 v_2 \cdots v_l \|_{H^M} \leq C(l,M) \Big\lbrace \| v_l \|_{H^M} \prod_{a=1}^{l-1} \| v_a \|_{L^{\infty}}  
		+ \sum_{a=1}^{l-1} \| \bard v_a \|_{H^{M-1}} \prod_{b \neq a} \| v_b \|_{L^{\infty}} \Big\rbrace.
	\end{align}	
\end{proposition}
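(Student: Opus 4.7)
The plan is to reduce the $H^M$ estimate on the product to pointwise $L^2$ estimates on each summand produced by the Leibniz rule, and then to invoke Proposition \ref{P:derivativesofF1FkL2} on each such summand. Concretely, for any spatial multi-index $\vec{\alpha}$ with $|\vec{\alpha}| = M_0 \leq M$, the Leibniz formula expresses $\partial_{\vec{\alpha}}(v_1 v_2 \cdots v_l)$ as a finite linear combination, with combinatorial coefficients depending only on $l$ and $M_0$, of terms of the form $(\partial_{\vec{\alpha}_1} v_1)(\partial_{\vec{\alpha}_2} v_2) \cdots (\partial_{\vec{\alpha}_l} v_l)$, where $|\vec{\alpha}_1| + \cdots + |\vec{\alpha}_l| = M_0$. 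Summing the resulting $L^2$ bounds over all $\vec{\alpha}$ with $|\vec{\alpha}| \leq M$ and adding the trivial $\| v_1 \cdots v_l\|_{L^2}$ estimate will give the desired bound on $\|v_1 \cdots v_l\|_{H^M}$.

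Next I would distinguish two cases according to $M_0 = |\vec{\alpha}|$. When $M_0 \geq 1$, Proposition \ref{P:derivativesofF1FkL2} applied with the parameter $M_0$ gives
\begin{align*}
    \|(\partial_{\vec{\alpha}_1} v_1) \cdots (\partial_{\vec{\alpha}_l} v_l)\|_{L^2}
    \leq C(l,M_0) \sum_{a=1}^{l} \| \bard^{(M_0)} v_a \|_{L^2} \prod_{b \neq a} \|v_b\|_{L^\infty}.
\end{align*}
For each index $a \in \{1,\dots,l-1\}$, the factor $\| \bard^{(M_0)} v_a \|_{L^2}$ is controlled by $\| \bard v_a \|_{H^{M-1}}$ since $1 \leq M_0 \leq M$, while for $a = l$ the same factor is controlled by $\|v_l\|_{H^M}$. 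Summing the resulting bounds over $a$ and reindexing puts each such contribution into one of the two families appearing on the right-hand side of the proposition. The remaining case $M_0 = 0$ is exactly the $L^2$ norm of $v_1 \cdots v_l$, which I would bound by placing $L^2$ on $v_l$ and $L^\infty$ on the other factors, yielding $\|v_l\|_{L^2} \prod_{a=1}^{l-1} \|v_a\|_{L^\infty} \leq \|v_l\|_{H^M} \prod_{a=1}^{l-1} \|v_a\|_{L^\infty}$, which is absorbed into the first term of the asserted inequality.

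The only mildly delicate point is the handling of the zero-derivative case, because Proposition \ref{P:derivativesofF1FkL2} applied naively to the $L^2$ norm of the undifferentiated product would require control of $\|v_a\|_{L^2}$ for indices $a \in \{1,\dots,l-1\}$, which the hypotheses do not provide. This is the only step that genuinely uses the distinguished role of $v_l$, and it is resolved by the ad hoc placement of the $L^2$ norm on $v_l$ described above. Everything else is routine bookkeeping, and combining the two cases and summing over $|\vec{\alpha}| \leq M$ yields the stated estimate with a constant $C(l,M)$ obtained from the finitely many constants $C(l,M_0)$ for $0 \leq M_0 \leq M$ together with the Leibniz combinatorial coefficients.
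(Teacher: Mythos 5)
Your proof is correct, and since the paper states Proposition \ref{P:F1FkLinfinityHN} without proof (deferring to the standard Gagliardo--Nirenberg/H\"older machinery), your reduction via the Leibniz rule to Proposition \ref{P:derivativesofF1FkL2} is exactly the intended derivation: the case $1 \leq |\vec{\alpha}| \leq M$ is handled by that proposition with the observation $\| \bard^{(M_0)} v_a \|_{L^2} \leq \| \bard v_a \|_{H^{M-1}}$ for $a \leq l-1$, and you correctly isolate the zero-derivative term, which is the one place the asymmetric role of $v_l$ matters. No gaps.
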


\begin{remark}
	The significance of this proposition is that only one of the functions, namely $v_l,$ is estimated in $L^2.$
\end{remark}

\begin{proposition}                                                                             \label{P:SobolevMissingDerivativeProposition}
    Let $M \geq 1$ be an integer, let $\mathfrak{K}$ be a compact, convex set, and
    let $F \in C_b^M(\mathfrak{K})$ be a function.
    Assume that $v_1$ is a function such that $v_1(\mathbb{T}^3) \subset \mathfrak{K},$ that $\bard v_1 \in 
    L^{\infty},$ and that $\bard^{(M)} v_1 \in L^2.$ Assume that $v_2 \in L^{\infty},$ that $\bard^{(M-1)} v_2 \in L^2,$ 
    and let $\vec{\alpha}$ be a spatial derivative multi-index with with $|\vec{\alpha}| = M.$ Then 
    $\partial_{\vec{\alpha}} \left((F \circ v_1 )v_2\right) - (F \circ v_1)\partial_{\vec{\alpha}} v_2 \in L^2,$ 
    and
        
 			\begin{align}       \label{E:SobolevMissingDerivativeProposition}
      	\|\partial_{\vec{\alpha}} & \left((F \circ v_1 )v_2\right) - (F \circ v_1)\partial_{\vec{\alpha}} v_2\|_{L^2} \notag \\
        	& \leq C(M) \Big\lbrace |F^{(1)}|_{\mathfrak{K}} \|\bard v_1 \|_{L^{\infty}} \| \bard^{(M-1)} v_2 \|_{L^2} 
					+ \| v_2 \|_{L^{\infty}} \| \bard v_1 \|_{H^{M-1}} \sum_{l=1}^M |F^{(l)}|_{\mathfrak{K}} 
    			\| v_1 \|_{L^{\infty}}^{l - 1} \Big\rbrace.
       \end{align}
\end{proposition}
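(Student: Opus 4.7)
My plan is to prove the commutator estimate \eqref{E:SobolevMissingDerivativeProposition} by a standard Leibniz-plus-interpolation argument, treating separately the ``extreme'' and ``intermediate'' terms of the expansion. To begin, I would expand
\[
\partial_{\vec{\alpha}}\big((F \circ v_1) v_2\big) = \sum_{\vec{\beta} + \vec{\gamma} = \vec{\alpha}} \binom{\vec{\alpha}}{\vec{\beta}} \partial_{\vec{\beta}}(F \circ v_1)\, \partial_{\vec{\gamma}} v_2
\]
via the Leibniz rule. The $\vec{\beta} = 0$ term is exactly $(F \circ v_1)\partial_{\vec{\alpha}} v_2$, which cancels with the subtracted term on the left-hand side of \eqref{E:SobolevMissingDerivativeProposition}, leaving only terms with $|\vec{\beta}| \geq 1$. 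Since there are a bounded number of multi-index contributions, it suffices to estimate each such product in $L^2$ by the right-hand side of the claimed inequality.

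Next, I would split the remaining sum into three regimes according to $j := |\vec{\beta}|$. In the regime $j = 1$ (so $|\vec{\gamma}| = M-1$), I note that $\partial_{\vec{\beta}}(F \circ v_1) = F'(v_1)\, \partial_{\vec{\beta}} v_1$, which is bounded in $L^\infty$ by $|F^{(1)}|_{\mathfrak{K}} \|\bard v_1\|_{L^\infty}$; pairing this with $\|\partial_{\vec{\gamma}} v_2\|_{L^2} \leq \|\bard^{(M-1)} v_2\|_{L^2}$ produces the first term on the right-hand side of \eqref{E:SobolevMissingDerivativeProposition}. In the regime $j = M$ (so $\vec{\gamma} = 0$), I simply pull out $\|v_2\|_{L^\infty}$ and apply Corollary \ref{C:DifferentiatedSobolevComposition} to $\bard(F \circ v_1)$ (writing $\partial_{\vec{\alpha}}(F \circ v_1) = \bard^{(M-1)}(\bard(F \circ v_1))$), which directly yields the second summand on the right-hand side of \eqref{E:SobolevMissingDerivativeProposition}.

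The main technical step is the intermediate regime $2 \leq j \leq M-1$, $1 \leq k := M - j \leq M-2$, which I would handle by H\"older plus Gagliardo--Nirenberg. Setting $\theta := (j-1)/(M-1)$, the conjugate exponents $p = 2(M-1)/(j-1)$, $q = 2(M-1)/k$ satisfy $1/p + 1/q = 1/2$. Applying Lemma \ref{L:GN} to $\bard(F \circ v_1)$ with parameters $(m,N) = (j-1, M-1)$ and to $v_2$ with parameters $(m,N) = (k, M-1)$ gives
\[
\|\partial_{\vec{\beta}}(F \circ v_1)\|_{L^p} \leq C \|\bard(F\circ v_1)\|_{L^\infty}^{1-\theta}\, \|\bard(F\circ v_1)\|_{H^{M-1}}^{\theta},
\qquad
\|\partial_{\vec{\gamma}} v_2\|_{L^q} \leq C \|v_2\|_{L^\infty}^{\theta}\, \|\bard^{(M-1)} v_2\|_{L^2}^{1-\theta}.
\]
Since $\|\bard(F \circ v_1)\|_{L^\infty} \leq |F^{(1)}|_{\mathfrak{K}} \|\bard v_1\|_{L^\infty}$, multiplying these two bounds and applying Young's inequality in the form $a^{1-\theta} b^\theta \leq (1-\theta) a + \theta b$ produces a sum whose two pieces are precisely of the form appearing in the two summands on the right-hand side of \eqref{E:SobolevMissingDerivativeProposition}; in particular, the factor $\|\bard(F\circ v_1)\|_{H^{M-1}}$ is then bounded using Corollary \ref{C:DifferentiatedSobolevComposition}.

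The main obstacle is simply bookkeeping: one must verify that the Gagliardo--Nirenberg exponents $p,q$ are admissible (which requires $j \geq 2$ and $k \geq 1$, hence the need to single out the $j=1$ and $j=M$ cases), and that the interpolation weights conspire via Young's inequality to produce an additive bound with the \emph{correct} $L^\infty$ factor attached to each norm---namely $\|\bard v_1\|_{L^\infty}$ paired with $\|\bard^{(M-1)} v_2\|_{L^2}$, and $\|v_2\|_{L^\infty}$ paired with $\|\bard v_1\|_{H^{M-1}}$. Once the endpoint cases $j=1$ and $j=M$ are aligned with the interior interpolation, summing over all $\vec{\beta}$ with $1 \leq |\vec{\beta}| \leq M$ and absorbing combinatorial constants into $C(M)$ completes the proof.
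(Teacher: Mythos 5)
Your proof is correct. The paper does not actually supply a proof of this proposition---the Appendix merely remarks that such estimates follow from the Gagliardo--Nirenberg inequality together with H\"older's inequality and Sobolev embedding---and your argument (Leibniz expansion with cancellation of the $\vec{\beta}=0$ term, the endpoint cases $|\vec{\beta}|=1$ and $|\vec{\beta}|=M$ handled by an $L^{\infty}$--$L^2$ pairing and Corollary \ref{C:DifferentiatedSobolevComposition} respectively, and the intermediate terms handled by Gagliardo--Nirenberg interpolation applied to $\bard(F\circ v_1)$ and $v_2$ followed by Young's inequality, with the exponents checked to be admissible) is precisely the standard argument the authors have in mind.
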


\begin{remark}
	The significance of this proposition is that the $M^{th}$ order derivatives of $v_2$ do not play a role in the conclusions.
\end{remark}

\bibliographystyle{siam}
\bibliography{JBib}

\def\cprime{$'$}
\begin{thebibliography}{10}

\bibitem{mA2005}
{\sc M.~T. Anderson}, {\em Existence and stability of even-dimensional
  asymptotically de {S}itter spaces}, Ann. Henri Poincar\'e, 6 (2005),
  pp.~801--820.

\bibitem{lBnZ2009}
{\sc L.~Bieri and N.~Zipser}, eds., {\em Extensions of the Stability Theorem of
  the Minkowski Space in General Relativity}, American Mathematical Society,
  Providence, RI, 2009.

\bibitem{nBjMrW1992}
{\sc N.~Boggess, J.~Mather, R.~Weiss, C.~Bennett, E.~Cheng, E.~Dwek, S.~Gulkis,
  M.~Hauser, M.~Janssen, T.~Kelsall, S.~Meyer, S.~Moseley, T.~Murdock,
  R.~Shafer, R.~Silverberg, G.~Smoot, D.~Wilkinson, and E.~Wright}, {\em The
  {COBE} mission: Its design and performance two years after the launch},
  Astrophysical Journal, 392 (1992), pp.~420--429.

\bibitem{uBaRoR1994}
{\sc U.~Brauer, A.~Rendall, and O.~Reula}, {\em The cosmic no-hair theorem and
  the non-linear stability of homogeneous {N}ewtonian cosmological models},
  Classical Quantum Gravity, 11 (1994), pp.~2283--2296.

\bibitem{sC2001}
{\sc S.~M. Carroll}, {\em The cosmological constant}, Living Rev. Relativ., 4
  (2001), pp.~2001--1, 80 pp. (electronic).

\bibitem{sC1961}
{\sc S.~Chandrasekhar}, {\em Hydrodynamic and Hydromagnetic Stability}, The
  International Series of Monographs on Physics, Clarendon Press, Oxford, 1961.

\bibitem{cBgR1969}
{\sc Y.~Choquet-Bruhat and R.~Geroch}, {\em Global aspects of the {C}auchy
  problem in general relativity}, Comm. Math. Phys., 14 (1969), pp.~329--335.

\bibitem{cB1952}
{\sc Y.~F. (Choquet)-Bruhat}, {\em {Th\'{e}or\`{e}me d'existence pour certains
  syst\`{e}mes d'\'{e}quations aux d\'{e}riv\'{e}es partielles non
  lin\'{e}aires}}, Acta Mathematica, 88 (1952), pp.~141--225.

\bibitem{dC2007b}
{\sc D.~Christodoulou}, {\em The {E}uler equations of compressible fluid flow},
  Bull. Amer. Math. Soc. (N.S.), 44 (2007), pp.~581--602 (electronic).

\bibitem{dC2007}
\leavevmode\vrule height 2pt depth -1.6pt width 23pt, {\em The formation of
  shocks in 3-dimensional fluids}, EMS Monographs in Mathematics, European
  Mathematical Society (EMS), Z\"urich, 2007.

\bibitem{dC2008}
\leavevmode\vrule height 2pt depth -1.6pt width 23pt, {\em Mathematical
  Problems of General Relativity. {I}}, Zurich Lectures in Advanced
  Mathematics, European Mathematical Society (EMS), Z\"urich, 2008.

\bibitem{dCsK1993}
{\sc D.~Christodoulou and S.~Klainerman}, {\em The Global Nonlinear Stability
  of the {M}inkowski Space}, vol.~41 of Princeton Mathematical Series,
  Princeton University Press, Princeton, NJ, 1993.

\bibitem{tD1921}
{\sc T.~de~Donder}, {\em La gravifique einsteinienne}, Gauthier-Villars, Paris,
  1921.

\bibitem{aE1917}
{\sc A.~Einstein}, {\em Kosmologische {Betrachtungen} zur allgemeinen
  {Relativit\"{a}tstheorie}}, Sitzungsberichte der K{\"{o}}niglich
  Preu{\ss}ische Akademie der Wissenschaften (Berlin), 142-152 (1917),
  pp.~235--237.

\bibitem{hF1986a}
{\sc H.~Friedrich}, {\em On the existence of {$n$}-geodesically complete or
  future complete solutions of {E}instein's field equations with smooth
  asymptotic structure}, Comm. Math. Phys., 107 (1986), pp.~587--609.

\bibitem{hF1991}
\leavevmode\vrule height 2pt depth -1.6pt width 23pt, {\em On the global
  existence and the asymptotic behavior of solutions to the
  {E}instein-{M}axwell-{Y}ang-{M}ills equations}, J. Differential Geom., 34
  (1991), pp.~275--345.

\bibitem{hFaR2000}
{\sc H.~Friedrich and A.~Rendall}, {\em The {C}auchy problem for the {E}instein
  equations}, in Einstein's field equations and their physical implications,
  vol.~540 of Lecture Notes in Phys., Springer, Berlin, 2000, pp.~127--223.

\bibitem{yGsTZ1998}
{\sc Y.~Guo and A.~S. Tahvildar-Zadeh}, {\em Formation of singularities in
  relativistic fluid dynamics and in spherically symmetric plasma dynamics}, in
  Nonlinear partial differential equations ({E}vanston, {IL}, 1998), vol.~238
  of Contemp. Math., Amer. Math. Soc., Providence, RI, 1999, pp.~151--161.

\bibitem{lH1997}
{\sc L.~H{\"o}rmander}, {\em Lectures on nonlinear hyperbolic differential
  equations}, vol.~26 of Math\'ematiques \& Applications (Berlin) [Mathematics
  \& Applications], Springer-Verlag, Berlin, 1997.

\bibitem{eH1929}
{\sc E.~Hubble}, {\em A relation between distance and radial velocity among
  extra-galactic nebulae}, Proceedings of the National Academy of Science, 15
  (1929), pp.~168--173.

\bibitem{sKaM1981}
{\sc S.~Klainerman and A.~Majda}, {\em Singular limits of quasilinear
  hyperbolic systems with large parameters and the incompressible limit of
  compressible fluids}, Comm. Pure Appl. Math., 34 (1981), pp.~481--524.

\bibitem{sKfN2003}
{\sc S.~Klainerman and F.~Nicol{\`o}}, {\em The Evolution Problem in General
  Relativity}, vol.~25 of Progress in Mathematical Physics, Birkh\"auser Boston
  Inc., Boston, MA, 2003.

\bibitem{hLiR2004}
{\sc H.~Lindblad and I.~Rodnianski}, {\em The global stability of the
  {Minkowski} space-time in harmonic gauge}, 2004.

\bibitem{hLiR2005}
{\sc H.~Lindblad and I.~Rodnianski}, {\em Global existence for the {E}instein
  vacuum equations in wave coordinates}, Comm. Math. Phys., 256 (2005),
  pp.~43--110.

\bibitem{jL2008}
{\sc J.~Loizelet}, {\em Probl\`{e}ms globaux en relativit\'{e} general\'{e}},
  {PhD} dissertation, Universit\`{e} Francois Rabelais, Tours, France, 2008.

\bibitem{lN1959}
{\sc L.~Nirenberg}, {\em On elliptic partial differential equations}, Ann.
  Scuola Norm. Sup. Pisa (3), 13 (1959), pp.~115--162.

\bibitem{bO1983}
{\sc B.~O'Neill}, {\em Semi-{R}iemannian geometry}, vol.~103 of Pure and
  Applied Mathematics, Academic Press Inc. [Harcourt Brace Jovanovich
  Publishers], New York, 1983.
\newblock With applications to relativity.

\bibitem{hR2008}
{\sc H.~Ringstr{\"o}m}, {\em Future stability of the {E}instein-non-linear
  scalar field system}, Invent. Math., 173 (2008), pp.~123--208.

\bibitem{rSaW1967}
{\sc R.~Sachs and A.~Wolfe}, {\em Perturbations of a cosmological model and
  angular variations of the microwave background}, Astrophys. J., 147 (1967),
  pp.~73--90.

\bibitem{jSmS1998}
{\sc J.~Shatah and M.~Struwe}, {\em Geometric wave equations}, vol.~2 of
  Courant Lecture Notes in Mathematics, New York University Courant Institute
  of Mathematical Sciences, New York, 1998.

\bibitem{cS1995}
{\sc C.~D. Sogge}, {\em Lectures on Nonlinear Wave Equations}, Monographs in
  Analysis, II, International Press, Boston, MA, 1995.

\bibitem{js2008a}
{\sc J.~Speck}, {\em Well-posedness for the {E}uler-{N}ordstr\"om system with
  cosmological constant}, J. Hyperbolic Differ. Equ., 6 (2009), pp.~313--358.

\bibitem{jS1937}
{\sc J.~L. Synge}, {\em Relativistic hydrodynamics}, Gen. Relativity
  Gravitation, 34 (2002), pp.~2171--2175, 2177--2216.
\newblock Reprinted from Proc. London Math. Soc. (2) {{\bf{4}}3} (1937),
  376--416, With an editor's note by J{\"u}rgen Ehlers and a short biography of
  Synge by Peter Hogan and Ehlers.

\bibitem{mT1997III}
{\sc M.~E. Taylor}, {\em Partial differential equations. {III}}, vol.~117 of
  Applied Mathematical Sciences, Springer-Verlag, New York, 1997.
\newblock Nonlinear equations, Corrected reprint of the 1996 original.

\bibitem{rW1984}
{\sc R.~M. Wald}, {\em General Relativity}, University of Chicago Press,
  Chicago, IL, 1984.

\bibitem{nZ2000}
{\sc N.~Zipser}, {\em The Gobal Nonlinear Stability of the Trivial Solution of
  the Einstein-Maxwell Equations}, PhD thesis, Harvard University, Cambridge,
  Massachusetts, 2000.

\end{thebibliography}

\bigskip

\noindent {Igor Rodnianski \\
Department of Mathematics, Princeton University \ \\
Fine Hall, Washington Road \ \\
Princeton, NJ 08544-1000, USA} \ \\
\texttt{irod@math.princeton.edu}

\medskip

\noindent {Jared Speck \\
Department of Mathematics, Princeton University\footnote{This article was
finalized while J. Speck was a postdoctoral researcher at the University of Cambridge.} \ \\
Fine Hall, Washington Road \ \\
Princeton, NJ 08544-1000, USA} \ \\
\texttt{jspeck@math.princeton.edu}

\end{document}